\documentclass[11pt, letterpaper]{article}
\usepackage[utf8]{inputenc}
\usepackage{fullpage,times}

\usepackage{amsmath,amsfonts,amsthm,amssymb,multirow}
\usepackage{thm-restate}
\usepackage{times}
\usepackage{graphicx}
\usepackage{floatpag}
\usepackage{algorithm}
\usepackage[noend]{algpseudocode}
\usepackage{bbold}
\usepackage{enumitem}
\usepackage{subcaption} 
\usepackage{calc}
\usepackage{tikz}
\usetikzlibrary{decorations.markings,decorations.pathreplacing,arrows, automata, positioning, quotes, shapes,backgrounds, arrows.meta}
\tikzstyle{vertex}=[circle, draw, inner sep=0pt, minimum size=4pt, fill = black]

\usepackage{graphicx}
\usepackage{tabularx}
\usepackage[colorlinks,citecolor=blue,linkcolor=blue,urlcolor=red,pagebackref]{hyperref}
\usepackage{comment}
\usepackage[T1]{fontenc}

\makeatletter
\newcommand{\multiline}[1]{%
  \begin{tabularx}{\dimexpr\linewidth-\ALG@thistlm}[t]{@{}X@{}}
    #1
  \end{tabularx}
}
\makeatother
\usepackage{mathtools}
\usepackage{diagbox}

\makeatletter
\def\BState{\State\hskip-\ALG@thistlm}
\makeatother

\usepackage[compact]{titlesec}
\titlespacing{\section}{0pt}{3ex}{2ex}
\titlespacing{\subsection}{0pt}{2ex}{1ex}
\titlespacing{\subsubsection}{0pt}{0.5ex}{0ex}

\newtheorem{theorem}{Theorem}[section]

\newtheorem{remark}[theorem]{Remark}

\newtheorem{corollary}[theorem]{Corollary}

\newtheorem{lemma}[theorem]{Lemma}
\newtheorem{claim}[theorem]{Claim}
\newtheorem{hypothesis}{Hypothesis}

\newtheorem{problem}[theorem]{Problem}

\makeatletter
\let\c@fconjecture\c@conjecture
\makeatother

\makeatletter
\let\c@fconj\c@conj
\makeatother

\newenvironment{customhypo}[1]
{\innercustomhypo}
{\endinnercustomhypo}

\def \eps {\varepsilon}

\def \poly {\mathop{\rm poly}} %

\def\tO{\widetilde{O}}

\newcommand{\Var}{\mathrm{Var}}

\newcommand{\negbigskip}{\vspace{-\bigskipamount}}
\newcommand{\OO}{\widetilde{O}}
\newcommand{\OOmega}{\widetilde{\Omega}}
\newcommand{\Ex}{\mathbb{E}}

\newcommand{\OV}{\mbox{\sf OV}}
\newcommand{\OVCount}{\mbox{\sf \#OV}}
\newcommand{\RealAPSP}{\mbox{\sf Real-APSP}}
\newcommand{\IntAPSP}{\mbox{\sf Int-APSP}}
\newcommand{\APSP}{\mbox{\sf APSP}}
\newcommand{\APSPCount}{\mbox{\sf \#APSP}}
\newcommand{\APSPCountMod}[1]{\mbox{\sf \#$_{\scriptsize {\rm mod}\, #1}$APSP}}

\newcommand{\MinPlus}{\mbox{\sf Min-Plus-Product}}
\newcommand{\MinPlusCount}{\mbox{\sf \#Min-Plus-Product}}
\newcommand{\Equality}{\mbox{\sf Equality-Product}}
\newcommand{\ExistEquality}{\mbox{\sf $\exists$Equality-Product}}
\newcommand{\Dominance}{\mbox{\sf Dominance-Product}}
\newcommand{\ExactTri}{\mbox{\sf Exact-Tri}}
\newcommand{\AEExactTri}{\mbox{\sf AE-Exact-Tri}}
\newcommand{\ExactTriCount}{\mbox{\sf \#Exact-Tri}}
\newcommand{\AEExactTriCount}{\mbox{\sf \#AE-Exact-Tri}}
\newcommand{\AEExactTriCountReal}{\mbox{\sf \#AE-Real-Exact-Tri}}
\newcommand{\NegTri}{\mbox{\sf Neg-Tri}}
\newcommand{\AENegTri}{\mbox{\sf AE-Neg-Tri}}
\newcommand{\AENegTriCount}{\mbox{\sf \#AE-Neg-Tri}}
\newcommand{\AENegTriCountReal}{\mbox{\sf \#AE-Real-Neg-Tri}}
\newcommand{\NegTriCount}{\mbox{\sf \#Neg-Tri}}

\newcommand{\AEMonoTri}{\mbox{\sf AE-Mono-Tri}}
\newcommand{\ExactKClique}{\mbox{\sf Exact-$k$-Clique}}
\newcommand{\ExactKCliqueCount}{\mbox{\sf \#Exact-$k$-Clique}}
\newcommand{\MinKClique}{\mbox{\sf Min-$k$-Clique}}
\newcommand{\MinKCliqueCount}{\mbox{\sf \#Min-$k$-Clique}}
\newcommand{\MinPlusConv}{\mbox{\sf Min-Plus-Convolution}}
\newcommand{\MinPlusConvCount}{\mbox{\sf \#Min-Plus-Convolution}}
\newcommand{\MinEqualityConv}{\mbox{\sf Min-Equal-Convolution}}

\newcommand{\MinWitness}{\mbox{\sf Min-Witness-Prod}}%
\newcommand{\MinWitnessEq}{\mbox{\sf Min-Witness-Eq-Prod}}
\newcommand{\MinEqualityProd}{\mbox{\sf Min-Equal-Prod}}
\newcommand{\MaxMin}{\mbox{\sf Max-Min-Prod}}

\newcommand{\uAPSP}{\mbox{\sf u-dir-APSP}}
\newcommand{\APSLP}{\mbox{\sf undir-APSLP$_{1, 2}$}}
\newcommand{\APSLPIntro}{\mbox{\sf undir-APSLP}}
\newcommand{\APLSP}{\mbox{\sf undir-APLSP$_{1, 2}$}}
\newcommand{\APLSPIntro}{\mbox{\sf undir-APLSP}}

\newcommand{\BatchMode}{\mbox{\sf Batch-Range-Mode}}

\newcommand{\ThreeSUM}{\mbox{\sf 3SUM}}
\newcommand{\ThreeSUMCount}{\mbox{\sf \#3SUM}}
\newcommand{\AllThreeSUM}{\mbox{\sf All-Nums-3SUM}}
\newcommand{\AllThreeSUMCount}{\mbox{\sf \#All-Nums-3SUM}}
\newcommand{\AllThreeSUMCountReal}{\mbox{\sf \#All-Nums-Real-3SUM}}
\newcommand{\ThreeSUMConv}{\mbox{\sf 3SUM-Convolution}}
\newcommand{\ThreeSUMConvCount}{\mbox{\sf \#3SUM-Convolution}}
\newcommand{\AllThreeSUMConv}{\mbox{\sf All-Nums-3SUM-Convolution}}
\newcommand{\AllThreeSUMConvCount}{\mbox{\sf \#All-Nums-3SUM-Convolution}}

\newcommand{\MonoConv}{\mbox{\sf Mono-Convolution}}
\newcommand{\MonoConvCount}{\mbox{\sf \#Mono-Convolution}}
\newcommand{\BMM}{\mbox{\sf BMM}}

\newcommand{\StrongAPSP}{Strong APSP Hypothesis}
\newcommand{\StrongAPSPPDF}{Strong APSP Hypothesis}
\newcommand{\StrongAPSPShort}{Strong APSP}
\newcommand{\StrongConv}{Strong Min-Plus Convolution Hypothesis}
\newcommand{\StrongConvShort}{Strong Min-Plus Convolution}
\newcommand{\uAPSPH}{u-dir-APSP Hypothesis}

\newcommand{\TunwtdirAPSP}{\textsc{{u}-dir-apsp}}%
\newcommand{\Tminwit}{\textsc{min-witness-prod}}

\newcommand{\TundirAPSLP}{\textsc{undir-apslp}_{1,2}}
\newcommand{\Trangemode}{\textsc{batched-range-mode}}%
\newcommand{\Tminwiteq}{\textsc{min-witness-eq-prod}}
\newcommand{\Tgeneq}{\textsc{gen-eq-prod}}
\newcommand{\Teq}{\textsc{eq-prod}}

\newcommand{\up}[1]{\left\lceil#1\right\rceil}

\newcommand{\lam}{\lambda}

\DeclareMathOperator*{\argmax}{arg\,max}
\DeclareMathOperator*{\argmin}{arg\,min}

\setlist[itemize]{leftmargin=*}
\setlist[enumerate]{leftmargin=*}

\title{Fredman's Trick Meets Dominance Product: Fine-Grained Complexity of Unweighted APSP, 3SUM Counting, and More}

\author{Timothy M. Chan\thanks{Supported by NSF Grant CCF-2224271.}\\UIUC\\tmc@illinois.edu \and Virginia {Vassilevska Williams}\thanks{Supported by an NSF CAREER Award, NSF Grant CCF-2129139 and BSF Grant BSF:2012338, a Google Research Fellowship and a Sloan Research Fellowship.}\\MIT\\virgi@mit.edu \and Yinzhan Xu\thanks{Partially supported by NSF Grant CCF-2129139.}\\MIT\\xyzhan@mit.edu}
\date{}

\begin{document}
\pagenumbering{gobble} 
\maketitle

\begin{abstract}
In this paper we carefully combine Fredman's trick [SICOMP'76] and Matou\v{s}ek's approach for dominance product [IPL'91] to obtain powerful results in fine-grained complexity:
\begin{itemize}
    \item 
    Under the hypothesis that APSP for undirected graphs with edge weights in $\{1, 2, \ldots, n\}$ requires $n^{3-o(1)}$ time (when $\omega=2$), we show a variety of conditional lower bounds, including an $n^{7/3-o(1)}$ lower bound for unweighted directed APSP and an $n^{2.2-o(1)}$ lower bound for computing the Minimum Witness Product between two $n \times n$ Boolean matrices, even if $\omega=2$, improving upon their trivial $n^2$ lower bounds. Our techniques can also be used to reduce the unweighted directed APSP problem to other problems. In particular, we show that (when $\omega = 2$), if unweighted directed APSP requires $n^{2.5-o(1)}$ time, then Minimum Witness Product requires $n^{7/3-o(1)}$ time. 
    
    \item We show that, surprisingly, many central problems in fine-grained complexity are equivalent to their natural counting versions. In particular, we show that Min-Plus Product and Exact Triangle are subcubically equivalent to their counting versions, and 3SUM is subquadratically equivalent to its counting version. 

    \item We obtain new algorithms using new variants of the Balog-Szemer\'edi-Gowers theorem from additive combinatorics.  For example, we  get an $O(n^{3.83})$ time deterministic algorithm for exactly counting the number of shortest paths in an arbitrary weighted graph, improving the textbook $\widetilde{O}(n^{4})$ time algorithm. We also get  faster algorithms for 3SUM in preprocessed universes, and deterministic algorithms for 3SUM on monotone sets in $\{1, 2, \ldots, n\}^d$. 
\end{itemize}
\end{abstract}

\newpage

\tableofcontents
\newpage

\pagenumbering{arabic} 
\section{Introduction}
There are many examples of computational problems for which a slight variant can become much harder than the original: 2-SAT (in P) vs.\ 3-SAT (NP-complete), perfect matching (in P) vs.\ its counting version (\#P-complete), and so on. Fine-grained complexity (FGC) has provided explanations for such differences for many natural problems -- e.g.\ finding a minimum weight triangle in a node-weighted graph is ``easy'' (in $O(n^\omega)$ time \cite{CzumajL09}, where $\omega<2.373$ \cite{alman2021refined} is the matrix multiplication exponent) while finding a minimum weight triangle in an edge-weighted graph is ``hard'' (subcubically equivalent to All-Pairs Shortest Paths (\APSP{}) \cite{focsy,focsyj}). Nevertheless, many fundamental questions remain unanswered:

\begin{itemize}
\item Seidel \cite{seidel1995} showed that \APSP{} in unweighted {\em undirected}  graphs can be solved  in $\tilde{O}(n^\omega)$ time\footnote{The notation $\tilde{O}(f(n))$ denotes $f(n)\poly\log(n)$.}. The fastest algorithm for \APSP{} in unweighted {\em directed}  graphs by Zwick~\cite{zwickbridge} runs in $O(n^{2.53})$ for the current bound of rectangular matrix multiplication~\cite{legallurr}. If $\omega=2$, then undirected \APSP{} would have an essentially optimal $\tilde{O}(n^2)$ time algorithm, whereas Zwick's algorithm for directed \APSP{}
would run in, slower, $\tilde{O}(n^{2.5})$ time. 

Assuming that the runtime of Zwick's algorithm is the best possible for unweighted directed  APSP (abbreviated \uAPSP{}) has been used as a hardness hypothesis (see e.g. \cite{lincoln2020monochromatic,williamsxumono}). However, so far there has been {\em no explanation} for why \uAPSP{} should be harder than its undirected counterpart.
\begin{center}
{\em \hypertarget{question:Q1}{Q1}: Does \APSP{} in directed unweighted graphs require superquadratic time, even if $\omega=2$?}
\end{center}

\item Boolean matrix multiplication (\BMM{}) asks to compute, for two $n\times n$ Boolean matrices $A$ and $B$, for every $i,j\in \{1,\ldots,n\}$ whether there is some $k$ so that $A[i,k]=B[k,j]=1$. The Min-Witness product (\MinWitness{}) is a well-studied \cite{KowalukLingas, VassilevskaWY10,shapira2011all,CohenY14,KowalukL21} generalization of \BMM{} in which for every $i,j$, one needs to instead compute the {\em minimum} $k$ for which $A[i,k]=B[k,j]=1$.  Similarly to \uAPSP{}, the fastest algorithm for \MinWitness{}  runs in $O(n^{2.53})$ time \cite{CzumajKL07}, and would run in $\tilde{O}(n^{2.5})$ time if $\omega=2$, whereas \BMM{} would have an essentially optimal algorithm in that case.

The assumption that \MinWitness{} requires $n^{2.5-o(1)}$ time has been used as a hardness hypothesis (e.g. \cite{lincoln2020monochromatic}), but similar to \uAPSP{}, so far there has been no explanation as to why \MinWitness{} should be harder than \BMM{}.
\begin{center}
{\em \hypertarget{question:Q2}{Q2}: Does \MinWitness{} require superquadratic time, even if $\omega=2$?}
\end{center}

\item For a large number of problems of interest within FGC, the known algorithms for finding a solution can also compute the {\em number} of solutions in the same time. This is true for triangle detection in graphs, \BMM{}, Min-Plus product, \ThreeSUM{}, Exact Triangle, Negative Triangle, and many more. Is this merely a coincidence, or are the decision variants of these problems equivalent to the counting variants?

A natural and important question is:
\begin{center}
{\em
\hypertarget{question:Q3}{Q3}: Are the core FGC problems like \ThreeSUM{}, Min-Plus product and Exact Triangle easier than their exact counting variants?
}
\end{center}

A recent line of work \cite{DellL21,DellLM20} gives fine-grained reductions from {\em approximate counting} to decision for several core problems of FGC, showing that if the decision versions have improved algorithms, then one can also obtain fast approximation schemes. As an approximate count can solve the decision problem, we get that
 for many key problems {\em approximate counting} and decision are equivalent.

This does not imply that {\em exact counting} would be equivalent to decision however. In fact, quite often the decision and counting versions of computational problems can have vastly different complexities. There are many examples of polynomial time decision problems (e.g.\ perfect matching \cite{Valiant79}) whose counting variants become $\#$P-complete. 
For many of these problems, polynomial time approximation schemes are possible (see e.g.\ \cite{JerrumSV04} for perfect matching), however obtaining a fast {\em exact} counting algorithm is considered infeasible.

Within FGC, there are more examples. Consider for example the case of induced subgraph isomorphism for pattern graphs of constant size $k$.  It is known (see e.g.\  \cite{CurticapeanDM17}) that for {\em every} $k$-node pattern graph $H$, {\em counting} the induced copies of $H$ in an $n$-node graph is fine-grained equivalent to {\em counting} the $k$-cliques in an $n$-node graph.
Meanwhile, the work of \cite{DellL21,DellLM20} implies that {\em approximately} counting the induced $k$-paths in a graph is fine-grained equivalent to detecting a single $k$-path. However, induced $k$-paths can be found (and hence can be approximately counted) {\em faster} than counting $k$-cliques: combinatorially, there's an $O(n^{k-2})$ time algorithm \cite{BlaserKS18}, whereas $k$-clique counting is believed to require $n^{k-o(1)}$ time; with the use of fast matrix multiplication, if $k<7$, $k$-paths can be found and approximately counted in the best known running time for $(k-1)$-clique detection \cite{WilliamsWWY15,BlaserKS18}, faster than $k$-clique.

To reiterate Question \hyperlink{question:Q3}{Q3} in this context, it asks whether the core FGC problems are like induced $k$-path above, or perhaps surprisingly, are equivalent to their counting versions?

The fine-grained problems of interest such as \ThreeSUM{}, Exact-Triangle, Orthogonal Vectors and more, admit efficient self-reductions.
For such problems, prior work~\cite{focsyj} has given generic techniques to show that the problems are fine-grained equivalent to the problem of {\em listing} any ``small'' number of solutions.
For example, \ThreeSUM{} is subquadratically equivalent to listing any truly subquadratic\footnote{Truly subquadratic means $O(n^{2-\eps})$ for some constant $\eps>0$; truly subcubic means $O(n^{3-\eps})$ for constant $\eps>0$, etc.} number of \ThreeSUM{} solutions.
 Thus, as long as the count is small, the listing problem is equivalent to the decision problem. This technique has become an important technique in FGC, used in many subsequent works (e.g.~\cite{HenzingerKNS15,backurs2017better,KunnemannPS17,cygan2019problems}).
The issue is, however, that when the count is actually ``large'', say $\Omega(n^2)$ for \ThreeSUM{}, the listing approach is too expensive. 
The question becomes, how do we count faster than listing when the count is large? Until now (more than 10 years after the conference version of~\cite{focsyj}) there has been no technique to do this.

\item
Recently, wide classes of structured instances of Min-Plus
matrix products, Min-Plus convolutions, and \ThreeSUM{} have been
identified which can be solved faster.
Chan and Lewenstein~\cite{ChanLewenstein} obtained the first truly 
subquadratic algorithm for Min-Plus convolution for monotone 
sequences over $[n]:=\{1,\ldots,n\}$ (and also integer sequences with bounded differences),
and Bringmann, Grandoni, Saha and Vassilevska W.~\cite{BringmannGSW16} 
obtained the first truly subcubic algorithm for Min-Plus 
product for integer matrices with bounded differences. 
The latter result is in some sense more general, as bounded-difference Min-Plus convolution
reduces to (row/column) bounded-difference Min-Plus matrix product.  
Both
results have subsequently been generalized or improved \cite{WilliamsX20, GuPWX21, ChiDXsoda22, ChiDXZstoc22}.  Interestingly, Chan and Lewenstein's approach made use of a
famous result in additive combinatorics known as the \emph{Balog--Szemer\'edi--Gowers Theorem} (the ``BSG Theorem''), whereas \cite{BringmannGSW16}'s approach
and all  subsequent algorithms use more direct techniques
without the need for additive combinatorics.   So far, there has been no explanation for why
there are two seemingly different approaches.
This leads to our last main question:

\begin{center}\em
\hypertarget{question:Q4}{Q4}: What is the relationship between the BSG-based approach and the direct approach to monotone Min-Plus
product/convolution, and could they be unified?
\end{center}

Question \hyperlink{question:Q4}{Q4} might appear unrelated to the preceding questions, and is more vague
or conceptual.  But the hope is that by understanding the relationship better,
we may obtain new improved algorithms, since Chan and Lewenstein's approach
has several applications, e.g.\ to \ThreeSUM{} with preprocessed universes and \ThreeSUM{} for 
$d$-dimensional monotone sets in $[n]^d$, which are not handled by the subsequent approaches.

\end{itemize}

\subsection{Summary of Our Contributions and New Tool}

\begin{enumerate}

\item {\bf Conditional hardness for \uAPSP{}.}  One of the main hypotheses of fine-grained complexity, known as the ``APSP Hypothesis'', is that \APSP{} in $n$-node graphs with polynomial integer edge weights requires $n^{3-o(1)}$ time. Meanwhile, the fastest algorithms for \APSP{} \cite{seidel1995, zwickbridge, Williams18} still fail to solve the problem in truly subcubic time when the edge weights are in $[n]$. In fact, even Min-Plus product on $n\times n$ matrices with entries in $[n]$ is not known to be solvable in truly subcubic time: the known algorithm for small integer entries runs in $\tilde{O}(Mn^\omega)$ time \cite{ALONGM1997}, and even if $\omega=2$, this is no better than the brute-force cubic time algorithm when $M=n$. Moreover, Zwick \cite{zwickbridge} explicitly asks whether there is a truly subcubic time algorithm for \APSP{} with weights in $[n]$.

We formulate a version of the APSP Hypothesis, which we call the \StrongAPSP{}, stating (for $\omega=2$) that \APSP{} in graphs with edge weights in  $[n]$ requires $n^{3-o(1)}$ time. Under this hypothesis we show that  \uAPSP{} requires $n^{7/3-o(1)}$ time, even if $\omega=2$. 

Thus, we conditionally resolve question \hyperlink{question:Q1}{Q1}: either {\em unweighted directed} APSP requires super-quadratic time and is harder than unweighted undirected \APSP{} (when $\omega=2$), or \APSP{} in weighted graphs with weights in $[n]$ is in truly subcubic time.

This is the first fine-grained connection between unweighted and weighted \APSP{}.

\item {\bf Conditional hardness for \MinWitness{}.} 
We present the first fine-grained reduction from \APSP{} in unweighted directed graphs to \MinWitness{} (which was  %
left open by \cite{lincoln2020monochromatic}). 
Our reduction implies that, when $\omega = 2$, \MinWitness{} requires $n^{7/3 - o(1)}$ time unless the current best algorithm for \uAPSP{}~\cite{zwickbridge} can be improved. 
Alternatively, working under the \StrongAPSP{},
we also obtain an $n^{11/5-o(1)}$ lower bound for \MinWitness{}.

Thus, either \MinWitness{} is truly harder than \BMM{} (if $\omega=2$), or there is a breakthrough in unweighted or weighted \APSP{} algorithms. This gives an answer to question \hyperlink{question:Q2}{Q2}.

See Section~\ref{sec:interm} for further discussion and details of our results on \MinWitness{} and \uAPSP{}.

\item  {\bf More hardness results.} We give many more fine-grained lower bounds under the \StrongAPSP{} for problems such as Batched Range Mode, All Pairs Shortest Lightest Paths, Min Witness Equality Product, dynamic shortest paths in unweighted planar graphs and more.

\item {\bf Counting is equivalent to detection.} We resolve question \hyperlink{question:Q3}{Q3} for several core problems in FGC\@. 
We show that the \MinPlus{} problem is subcubically equivalent to its counting version, the \ThreeSUM{}  problem is subquadratically equivalent to its counting version, and  the Exact-Weight Triangle  problem (\ExactTri) is subcubically equivalent to its counting version. These are the first  fine-grained equivalences between exact counting and decision problem variants, to our knowledge.
See Section~\ref{sec:count} for further results and discussion.

\item {\bf New variants of the BSG Theorem and new algorithms.}
We formulate a new decomposition theorem for zero-weight triangles of weighted graphs,
which may be viewed as a substitute to the BSG Theorem but has
a simple direct proof, providing an answer to question \hyperlink{question:Q4}{Q4}.
Besides being applicable to monotone Min-Plus convolution/products,
\ThreeSUM{} with preprocessed universes and \ThreeSUM{} with $d$-dimensional monotone sets,
the theorem yields the first truly subquartic algorithm for the counting version of the 
general weighted \APSP{} problem.  Our ideas also lead to a new bound on
the BSG Theorem itself which is an improvement for a certain range of parameters.
As a result, we obtain an improved new algorithm for
\ThreeSUM{} with preprocessed universes.
See Sections \ref{sec:newalgs} and \ref{sec:bsg} for more details.

\end{enumerate}

Surprisingly, we are able to achieve all of these results with a {\bf single new tool},
 a careful combination of two known techniques in tackling shortest paths problems in graphs: Fredman's trick \cite{fredman1976new} and Matou\v{s}ek's approach for dominance product~\cite{MatIPL}.

It has long been known that \APSP{} in general $n$-node graphs is equivalent to computing the Min-Plus product of two $n\times n$ matrices $A$ and $B$ (\MinPlus{}), defined as the matrix $C$ with $C_{ij}=\min_k (A_{ik}+B_{kj})$.
Fredman \cite{fredman1976new} introduced the following powerful ``trick'' for dealing with the above minimum: 
 to determine if $A_{ik}+B_{kj}\leq A_{i\ell}+B_{\ell j}$, we simply need to check if $$A_{ik}-A_{i\ell}\leq B_{\ell j}-B_{kj}.$$

While seemingly trivial, this idea of comparing a left-hand side that is purely in terms of entries of $A$ to a right-hand side that is purely in terms of entries of $B$, leads to a variety of amazing results. Fredman used it to show that the decision tree complexity of \MinPlus{} is $O(n^{2.5})$, and not cubic as previously thought. Practically all subcubic algorithms for \APSP{} (e.g.\  \cite{fredman1976new,Takaoka98,Chan10,Williams18}) including the current fastest $n^3/\exp(\Theta(\sqrt{\log n}))$ time algorithm by Williams \cite{Williams18} use Fredman's trick. Several new truly subcubic time algorithms for variants of \APSP{} (e.g. \cite{ChiDXsoda22} and, implicitly, \cite{BringmannGSW16}) and recent algorithms for \ThreeSUM{} (e.g.\ \cite{gronlund2014,chan3sum}) also use it.

A completely different technique is Matou\v{s}ek's truly subcubic time algorithm \cite{MatIPL} for Dominance Product (\Dominance{}). The dominance product of two $n\times n$ matrices $A$ and $B$ is the $n\times n$ matrix $C$ such that for all $i,j\in [n]$,
$C_{ij}= |\{k \in [n]: A_{ik}\leq B_{kj}\}|$. Matou\v{s}ek gave an approach that combined fast matrix multiplication with brute-force to obtain an $\OO(n^{(3+\omega)/2})$ time
algorithm for \Dominance{}. \Dominance{} is known to be equivalent to the so-called Equality Product (\Equality{}) problem\footnote{The equivalence was proven e.g.\ by \cite{labib2019hamming,vnotes}, but also Matou\v{s}ek's algorithm almost immediately works for \Equality{}.} which asks to compute $C_{ij}= |\{k \in [n]: A_{ik}=B_{kj}\}|$, so we sometimes refer to \Equality{} instead.

Matou\v{s}ek's subcubic time techniques have been used to obtain truly subcubic time algorithms for All-Pairs Bottleneck Paths \cite{VassilevskaWY07,duanpettiebott}, All-Pairs Nondecreasing Paths \cite{nondecreasingv,DuanJW19}, \APSP{} in node-weighted graphs \cite{Chan10} and more. Unfortunately, the technique has fallen short when applied directly to the general \APSP{} problem.

In this paper we give a combination of these two techniques that allows us to obtain reductions that exploit fast matrix multiplication in a new way, thus allowing us to overcome many difficulties, such as counting solutions when the number of solutions is large.
The main ideas involve:
(i)~division into ``few-witnesses'' and ``many-witnesses'' cases,
(ii)~standard witness-finding techniques to handle the ``few-witnesses'' case,
(iii)~hitting sets to hit the ``many-witnesses'',
(iv)~Fredman's trick (the obvious equivalence of $a+b=a'+b'$ with $a-a'=b'-b$), and lastly 
(v) ~Matou\v sek's technique for dominance or equality products (which also involves a division into ``low-frequency'' and
``high-frequency'' cases). Steps (iv) and (v) crucially allow us to handle the ``many-witnesses'' case, delicately exploiting the small hitting set from (iii).  Individually, each of these ideas is simple and has appeared before.  But the novelty lies in how they are pieced together (see Sections 
\ref{sec:intapsp-lower-bound},
\ref{sec:counting:preview}, and
\ref{sec:decomposition-zero-tri}), and the realization that these ideas are powerful enough to yield all the new results in the above bullets!

In the remainder of the introduction we give more details on each of the bullets above. 

\subsection{Conditional Lower Bounds for Unweighted Directed APSP, Min-Witness Product and Other Problems with Intermediate Complexity}
\label{sec:interm}
\uAPSP{}  and \MinWitness{} are both ``intermediate'' problems as dubbed by Lincoln, Polak and Vassilevska W.~\cite{lincoln2020monochromatic}, a class of matrix product and all-pairs graph problems whose running times are $\OO(n^{2.5})$ when $\omega = 2$ (and hence right in the middle between the brute-force $n^3$ and and the desired optimal $n^2$), and for which no $O(n^{2.5-\eps})$ time algorithms are known.
More examples of intermediate problems include 
Min-Equality Product and Max-Min Product.
A similar class of ``intermediate'' {\em convolution} problems that are known to be solvable in $\OO(n^{1.5})$ time but not much faster~\cite{lincoln2020monochromatic} includes Max-Min Convolution, Minimum-Witness Convolution,  Min-Equality Convolution, %
and pattern-to-text Hamming distances.
For instance, $\OO(n^{1.5})$-time algorithms were known since the 1980s for Max-Min convolution~\cite{Kosaraju89a} and 
pattern-to-text Hamming distances~\cite{Abrahamson87}, and these remain the fastest algorithms for these problems.

None of these intermediate problems currently have nontrivial conditional lower bounds under standard hypotheses in FGC.  (Two exceptions are All-Edges Monochromatic Triangle (\AEMonoTri{}) and Monochromatic Convolution (\MonoConv{}), where
a near-$n^{2.5}$ lower bound is known for the former under the 3SUM or the APSP Hypothesis~\cite{lincoln2020monochromatic,williamsxumono} and a near-$n^{1.5}$ lower bound is known for the latter under the 3SUM Hypothesis~\cite{lincoln2020monochromatic}, but one may argue that these monochromatic problems are not true matrix product or convolution problems since 
their inputs involve {\em three} matrices or sequences rather than two.)

Thus, an important research direction is to prove super-quadratic conditional lower bounds for intermediate matrix product problems, and super-linear lower bounds for intermediate convolution problems. 
Some relationships are known between intermediate problems, some of which are illustrated in  Figure~\ref{fig:previous}.
However, many questions remain; for instance, it was open whether \uAPSP{} and \MinWitness{} are related.

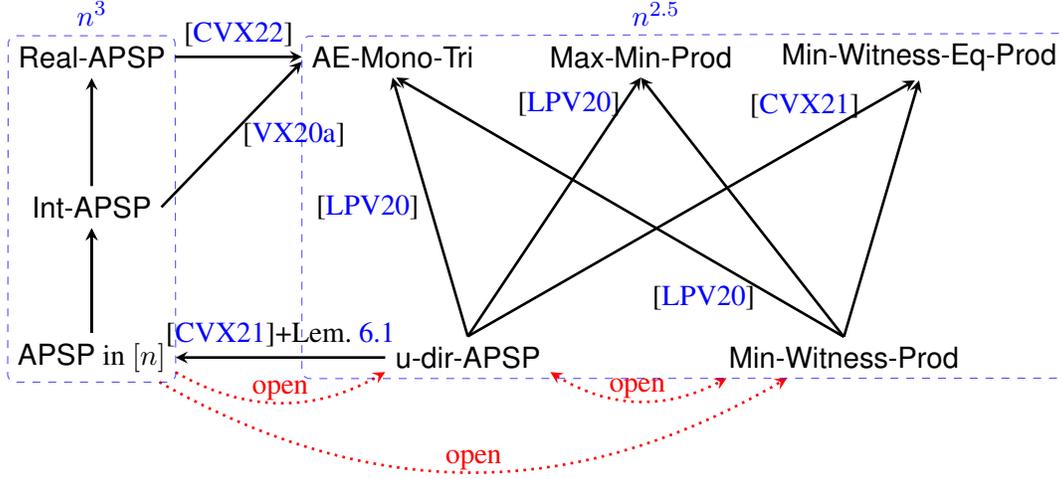
\begin{figure}[ht]
    \centering
    \begin{tikzpicture}
    
    \node at(0, 0)  [anchor=center] (apspn){\APSP{} in $[n]$};
    \node at(0, 2)  [anchor=center] (intapsp){\IntAPSP{}};
    \node at(0, 4)  [anchor=center] (realapsp){\RealAPSP{}};
    \node at(realapsp.north)  [anchor=south] (){\textcolor{blue}{$n^3$}};

    \node at(5, 0)  [anchor=center] (uapsp){\uAPSP{}};
    \node at(10, 0)  [anchor=center] (minwitness){\MinWitness{}};
    
    \node at(4, 4)  [anchor=center] (aemonotri){\AEMonoTri{}};
    \node at(7.3, 4)  [anchor=center] (maxmin){\MaxMin{}};
    \node at(11, 4)  [anchor=center] (minwitnesseq){\MinWitnessEq{}};
    \node at(7.5,0 |- maxmin.north)  [anchor=south] (){\textcolor{blue}{$n^{2.5}$}};
	
	\draw[-stealth,line width=1pt] (apspn.north) to[]  node[right] {} (intapsp.south);
	\draw[-stealth,line width=1pt] (intapsp.north) to[]  node[right] {} (realapsp.south);
	\draw[-stealth,line width=1pt] (apspn.north) to[]  node[right] {} (intapsp.south);
	\draw[-stealth,line width=1pt] (minwitness.north) to[]  node[pos=0.15,  left, xshift=-4] {\cite{lincoln2020monochromatic}} (aemonotri.280);
	\draw[-stealth,line width=1pt] (minwitness.north) to[]  node[right] {} (maxmin.south);
	\draw[-stealth,line width=1pt] (minwitness.north) to[]  node[right] {} (minwitnesseq.south);
	\draw[-stealth,line width=1pt] (uapsp.north) to[]  node[left] {\cite{lincoln2020monochromatic}} (aemonotri.south);
	\draw[-stealth,line width=1pt] (uapsp.north) to[]  node[left, pos = 0.9, xshift=3] {\cite{lincoln2020monochromatic}} (maxmin.south);
	\draw[-stealth,line width=1pt] (uapsp.north) to[]  node[left, pos = 0.9] {\cite{CVXicalp21}} (minwitnesseq.260);

	\draw[-stealth,line width=1pt] (realapsp.east) to[]  node[above] {\cite{CVXstoc22}} (aemonotri.west);
	\draw[-stealth,line width=1pt] (intapsp.east) to[]  node[right] {\cite{williamsxumono}} (aemonotri.183);
	\draw[-stealth,line width=1pt] (uapsp.west) to[]  node[above] {\cite{CVXicalp21}+Lem.~\ref{lem:down}} (apspn.east);

	\draw[-stealth,line width=1pt, dotted, color = red, bend right] (apspn.350) to[]  node[above, yshift=-3] {open} (uapsp.190);
	\draw[stealth-stealth,line width=1pt, dotted, color = red, bend right] (uapsp.350) to[]  node[above, yshift=-3] {open} (minwitness.190);
	\draw[-stealth,line width=1pt, dotted, color = red, bend right] (apspn.340) to[]  node[above, yshift=-3] {open} (minwitness.200);

    \draw[opacity=0.7, dashed, rounded corners=3, color=blue] (realapsp.north east) -- (realapsp.north west) -- (apspn.south west) -- (apspn.south east) -- cycle;
    \draw[opacity=0.7, dashed, rounded corners=3, color=blue] (minwitnesseq.north east) -- (aemonotri.north west) -- (aemonotri.north west |- uapsp.south west) -- (minwitnesseq.north east |- minwitness.south east) -- cycle;

    \end{tikzpicture}
    \vspace*{-2ex}
    \caption{Previous work on intermediate matrix product problems, assuming $\omega=2$. Here,  \APSP{} in $[n]$ denotes %
    APSP with edge weights in $[n]$, \IntAPSP{} denotes APSP with arbitrary $O(\log n)$-bit integer weights, and \RealAPSP{} denotes APSP with real weights.    All unlabelled arrows follow by trivial reductions.}
    \label{fig:previous}
\end{figure}

As \uAPSP{} and \MinWitness{} are among the ``lowest'' problems in this class, proving conditional lower bounds for these two problems is especially fundamental.
(Besides, \MinWitness{} has been extensively studied, arising in many applications \cite{KowalukLingas, VassilevskaWY10,shapira2011all,CohenY14,KowalukL21}.)
The precise time bounds of the current best algorithms for \uAPSP{} by Zwick~\cite{zwickbridge} and 
\MinWitness{} by Czumaj, Kowaluk and Lingas~\cite{CzumajKL07} are both
$\OO(n^{2+\rho})$, where $\rho\in [1/2, 0.529)$ satisfies\footnote{
$\omega(a, b, c)$ denotes the rectangular matrix multiplication exponent between an $n^a \times n^b$ matrix and an $n^b \times n^c$ matrix.
} $\omega(1,\rho,1)=1+2\rho$.

To explain the hardness of \uAPSP{}, \MinWitness{}  and more, we 
introduce a strong version of the APSP Hypothesis: %
\begin{hypothesis}[\StrongAPSP]
In the Word-RAM model with $O(\log n)$-bit words, computing APSP for an undirected\footnote{
The version of this hypothesis for directed graphs is equivalent: see Remark~\ref{rmk:strongapsp:dir}.
} graph with edge weights in $[n^{3-\omega}]$ requires $n^{3-o(1)}$ randomized time. 
\end{hypothesis}

For $\omega=2$, the above asserts that the standard  textbook cubic time algorithms for \APSP{} are near-optimal when the edge weights are integers in $[n]$.  
Due to a tight equivalence~\cite{ShoshanZwick} between undirected \APSP{} with edge weights in $[M]$ and \MinPlus{} of two matrices with entries in $[O(M)]$, 
the above hypothesis is equivalent to the following:

\begin{customhypo}{1$'$}[\StrongAPSP{}, restated]
In the Word-RAM model with $O(\log n)$-bit words, computing the Min-Plus product between two $n \times n$ matrices with entries in $[n^{3-\omega}]$ requires $n^{3-o(1)}$ randomized time.
\end{customhypo}

The current upper bound for \MinPlus{} between two $n\times n$ matrices with entries in $[M]$ is $\OO(\min\{Mn^\omega,n^3\})$~\cite{ALONGM1997}, which is cubic for $M=n^{3-\omega}$, and this has not been improved for over 30 years.

The above strengthening of the APSP Hypothesis is reminiscent of a strengthening of the 3SUM Hypothesis proposed by Amir, Chan, Lewenstein and Lewenstein~\cite{AmirCLL14}, which they called the ``Strong 3SUM-hardness assumption'',
asserting that the 3SUM Convolution problem requires near-quadratic time for integers in $[n]$.  %

We prove the following results:

\begin{theorem}
\label{thm:intro:strong-apsp}
Under the \StrongAPSP{}, 
\uAPSP{} requires $n^{7/3-o(1)}$ time, and
    \MinWitness{} requires $n^{11/5-o(1)}$ time (on a Word-RAM with $O(\log n)$-bit words).
\end{theorem}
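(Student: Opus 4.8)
The plan is to prove both lower bounds by fine-grained reductions \emph{from} \MinPlus{} with entries in $[M]$, where $M:=n^{3-\omega}$ and the restated \StrongAPSP{} asserts this needs $n^{3-o(1)}$ randomized time, \emph{to} \uAPSP{} and to \MinWitness{}; a hypothetical $O(N^{7/3-\eps})$-time algorithm for \uAPSP{} on $N$-vertex graphs (resp.\ an $O(N^{11/5-\eps})$-time algorithm for \MinWitness{} on $N\times N$ matrices) is then turned into a truly subcubic \MinPlus{} algorithm. Fix $A,B\in[M]^{n\times n}$ and write $C_{ij}=\min_k(A_{ik}+B_{kj})$. Via $O(\log n)$ adaptive rounds it suffices to solve the all-edges decision problem ``for each $(i,j)$, is there a \emph{witness} $k$, i.e.\ $A_{ik}+B_{kj}\le T_{ij}$?'' for a threshold matrix $T$, where we maintain each current estimate $T_{ij}$ as $A_{i\ell}+B_{\ell j}$ for a known index $\ell$ (the best column found so far) --- this ``product form'' of the threshold is what later lets Fredman's trick apply. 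Fix a parameter $\tau$; call a pair \emph{heavy} if it has $\ge\tau$ witnesses and \emph{light} otherwise, treat the two classes separately, and tune $\tau$ at the end.

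\emph{Heavy pairs.} A random $S\subseteq[n]$ of size $\tilde\Theta(n/\tau)$ contains, w.h.p., a witness of every heavy pair, so it suffices to compute $\min_{k\in S}(A_{ik}+B_{kj})$. For the \uAPSP{} reduction we encode this directly as an unweighted digraph: a source $i$ for each row, a sink $j$ for each column, and for each $k\in S$ a ``column gadget'' made of an in-path and an out-path of length $M$ joined by one edge, with an edge from source $i$ into the in-path at distance $A_{ik}$ from its far end and an edge from the out-path at distance $B_{kj}$ from its start into sink $j$; then the distance from $i$ to $j$ equals $2+\min_{k\in S}(A_{ik}+B_{kj})$, and the graph has $N=\tilde\Theta(nM/\tau)$ vertices. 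For \MinWitness{} we cannot afford length $M$ per column, so this thin bounded-entry Min-Plus over $S$ is instead rewritten via Fredman's identity $A_{ik}+B_{kj}\le A_{i\ell}+B_{\ell j}\iff A_{ik}-A_{i\ell}\le B_{\ell j}-B_{kj}$: for each ordered pair $(k,\ell)\in S\times S$ the test ``$k$ beats $\ell$'' becomes a dominance condition between an $i$-indexed and a $j$-indexed number, and Matou\v{s}ek's partition of the values in the relevant rows/columns into low- and high-frequency parts (brute force on the rare values, rectangular matrix multiplication on the common ones) lets the whole family be resolved by a \MinWitness{} call whose index set is polynomially smaller than the naive $O(M^2|S|)$, with the rank of the optimum threaded through the witness coordinate.

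\emph{Light pairs and balancing.} Scanning all $n$ columns per pair would cost $n^3$; instead we randomly color $[n]$ into $\mathrm{poly}(\tau)$ classes so that every light pair has a witness isolated in some class w.h.p., and detect such isolated witnesses with the same Fredman-plus-Matou\v{s}ek machinery: since $T_{ij}=A_{i\ell}+B_{\ell j}$ is in product form, ``$\exists k$ in class $c$ with $A_{ik}+B_{kj}\le T_{ij}$'' becomes (via Fredman) a dominance condition over that class, which rectangular matrix multiplication batches over all $n^2$ pairs at once; this, together with the $\tilde O(n^2\tau)$ naive bookkeeping to identify each pair's reference index $\ell$, is truly subcubic for $\tau$ in a suitable polynomial range. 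Now balance: for \uAPSP{} with $\omega=2$, taking $\tau=\tilde\Theta(n^{5/7})$ gives $|S|=\tilde\Theta(n^{2/7})$ and $N=\tilde\Theta(n^{9/7})$, so the assumed $O(N^{7/3-\eps})$ algorithm runs in $n^{3-\Omega(\eps)}$ on our instance while all other steps run in $n^{3-\Omega(1)}$, yielding a truly subcubic \MinPlus{} algorithm over the $O(\log n)$ rounds and contradicting the \StrongAPSP{} (an analogous choice of $\tau$ works for every $\omega\in[2,3]$ with $M=n^{3-\omega}$). For \MinWitness{}, the Fredman/Matou\v{s}ek encoding of the heavy-pair subproblem blows up the instance more than the plain \uAPSP{} gadget, so re-optimizing $\tau$ against the three relevant thresholds (namely $\tau$, Matou\v{s}ek's block size, and the frequency cutoff) lands the balance at exponent $11/5$ instead of $7/3$.

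The main obstacle is making both the light-pair detection and (for \MinWitness{}) the heavy-pair encoding truly subcubic despite the $n^2\times n$ ``work per pair'' barrier: this is exactly where Fredman's identity is indispensable, since only after rewriting $A_{ik}+B_{kj}\le(\text{threshold})$ as a comparison between an $i$-term and a $j$-term can Matou\v{s}ek's dominance/equality-product technique (with its own low-/high-frequency split) bring rectangular matrix multiplication to bear; keeping $\tau$, the coloring granularity, the block size, and the frequency cutoff simultaneously in the regime where the \uAPSP{}/\MinWitness{} call, the matrix-multiplication work, and the naive bookkeeping are \emph{all} truly subcubic is the delicate calculation that pins the exponents at $7/3$ and $11/5$.
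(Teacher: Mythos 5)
Your proposal shares the paper's ingredients (Fredman's trick, Matou\v sek's low/high-frequency split, a few-/many-witnesses dichotomy, and the path-gadget reduction of thin bounded-entry \MinPlus{} to \uAPSP{}), but the architecture is different and two of its steps fail as stated. The most important missing idea is \emph{entry-range reduction}. The paper's key reduction (Theorem~\ref{thm:main}) writes $A_{ik}=A'_{ik}g+A''_{ik}$ with $g=\lceil \ell/t\rceil$, computes the high-order product $A'\star B'$ exactly via rectangular instances $M^*(n,s,n\mid t)$ of the \emph{target} problem, and then computes $\min_{k\in W_{ij}}(A''_{ik}+B''_{kj})$ over the exact witness set $W_{ij}$ of the high-order product. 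This simultaneously shrinks the inner dimension to $n^{\beta}$ \emph{and} the integer range to $[n^{2\beta}]$, which is exactly what the \MinWitness{} gadget (Lemma~\ref{lem:reduce:from:minplus:minwit}) needs: it expands each column $k$ into $xy^2$ value-indexed columns and requires $xy^2\le n$, forcing $\beta\le 1/5$ and giving $n^{11/5}$. In your scheme the entries stay in $[M]=[n^{3-\omega}]$ throughout; for \uAPSP{} you can partially compensate by letting the gadget have $N\gg n$ vertices, but for \MinWitness{} the value-enumeration blowup $|S|\cdot M^2\gg n$ is fatal, and your proposed substitute (rewriting the heavy-pair subproblem ``via Fredman for each pair $(k,\ell)\in S\times S$'' into a single \MinWitness{} call) is not a construction --- nothing in it pins the exponent at $11/5$.

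The second gap is in the light-pair step. Fredman's trick turns $A_{ik}+B_{kj}\le A_{i\ell}+B_{\ell j}$ into $A_{ik}-A_{i\ell}\le B_{\ell j}-B_{kj}$, but this is a dominance \emph{product} only when the reference index $\ell$ is \emph{fixed}: with $\ell=\ell(i,j)$ (``the best column found so far''), the left-hand side depends on $j$ through $\ell$, so it is not the $(i,k)$ entry of any matrix and cannot be batched over all $n^2$ pairs by one rectangular multiplication. This is precisely the problem the paper's hitting set solves, and note that the paper applies it to the \emph{many}-witnesses pairs (where an $\OO(s)$-size random set hits every witness set, giving $\OO(s)$ fixed anchors $k_0$ and hence $\OO(s)$ generalized \emph{equality} products); for few-witnesses pairs no small hitting set exists, so the paper instead lists all witnesses explicitly via $\OO(n/s)$ repetitions of unique-witness finding on rectangular small-entry \MinPlus{}. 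Your proposal inverts these roles, and the inverted version does not go through: the light-pair coloring still leaves you with a pair-dependent reference, and your heavy-pair ``min over a random sample'' only refines the threshold rather than computing the true minimum, so the claimed $O(\log n)$-round convergence of the adaptive scheme is also unjustified. To repair the argument you essentially need the paper's Theorem~\ref{thm:main} as stated: high/low split, exact high-order product, witness listing for the few-witnesses case, and hitting-set-anchored equality products for the many-witnesses case, followed by Corollaries~\ref{cor:intapsp0}, \ref{cor:strong-intapsp-imply} and \ref{cor:strong-intapsp-imply:minwit}.
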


In fact, for \uAPSP{}, we can obtain a conditional lower bound of $n^{2+\beta - o(1)}$ for graphs with 
 $n^{1+\beta}$ edges, for any $0 < \beta \le \frac{1}{3}$.  This time bound is \emph{tight} for graphs of such sparsity.  In other words, under the \StrongAPSP{}, 
the naive $O(mn)$ time algorithm for \uAPSP{} with $m$ edges by repeated BFSs is essentially optimal for sufficiently sparse graphs, and fast matrix multiplication helps only for sufficiently large densities.
Such lower bounds for sparse graphs
were known only for \emph{weighted} APSP~\cite{LincolnWW18,AgarwalR18}.

Our technique also yields new conditional lower bounds for many other problems, such as All-Pairs Shortest Lightest Paths (\APSLPIntro{})
or All-Pairs Lightest Shortest Paths (\APLSPIntro{}) for undirected small-weighted graphs (which was first studied by Zwick~\cite{ZwickAPLSP}), a batched version of the range mode problem (\BatchMode{})
(which has received considerable recent attention~\cite{CDLMW, WilliamsX20, GuPWX21, GaoHe22, JinX22}), and dynamic shortest paths in planar graphs~\cite{AbbDah}.
See Table~\ref{tab:bounded-min-plus}  for some of the specific results, and Section~\ref{sec:intapsp-lower-bound:more} for more discussion on all these problems.  As demonstrated by the applications to range mode and dynamic planar shortest paths,
our new technique will likely be useful to proving
conditional lower bounds for other data structure problems, %
serving as an alternative to existing techniques based on the Combinatorial BMM Hypothesis (see e.g. \cite{abboud2014popular}) or the OMv Hypothesis~\cite{HenzingerKNS15}.

We also consider conditional lower bounds based on the hardness of \uAPSP{} itself.
The following hypothesis has been proposed by Chan, Vassilevska W. and Xu~\cite{CVXicalp21}:

\begin{hypothesis}[\uAPSPH{}]
In the Word-RAM model with $O(\log n)$-bit words, computing APSP for an $n$-node unweighted directed graph requires at least $n^{2+\rho-o(1)}$ randomized time where $\rho$ is the constant satisfying  $\omega(1,\rho,1)=1+2\rho$.
\end{hypothesis}

As noted in Remark~\ref{rmk:uapsp},
the \uAPSPH{} implies the \StrongAPSP{} if $\omega=2$.
The \StrongAPSP{} is thus more believable in some sense, but the \uAPSPH{} allows us to prove higher lower bounds.
For example,  we prove the following conditional lower bound for \MinWitness{}: 

\begin{theorem}
\label{thm:intro:minwit}
Under the \uAPSPH{}, 
\MinWitness{}
requires $n^{2.223}$ time, or $n^{7/3-o(1)}$ time if $\omega=2$ (on a Word-RAM with $O(\log n)$-bit words).
\end{theorem}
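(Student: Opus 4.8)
The plan is to prove Theorem~\ref{thm:intro:minwit} by exhibiting a fine-grained reduction from \uAPSP{} to \MinWitness{}. Concretely, I would show that an $O(n^{c})$-time algorithm for \MinWitness{} on $n\times n$ Boolean matrices yields an algorithm for \uAPSP{} on $n$-node graphs running in time $\tilde{O}(n^{f(c)})$ for an explicit, increasing function $f$, and then verify that $f(2.223)<2+\rho$ in general, and $f(7/3)<5/2$ when $\omega=2$ (where $\rho=1/2$). Either way this contradicts the \uAPSPH{}, which is what we want.

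The first step is to reduce \uAPSP{} to a small number of \emph{thin, bounded-entry rectangular min-plus products}, using Zwick's bridging-set framework for unweighted directed \APSP{}. Process distance scales $\ell=1,2,4,\ldots,n$ in increasing order, maintaining a matrix $D$ that is correct on all pairs at distance $\le\ell$. To pass from scale $\ell$ to scale $2\ell$, note that for a pair $(i,j)$ with $\ell<d(i,j)\le 2\ell$ there is a vertex $k$ on a shortest $i$--$j$ path with $d(i,k),d(k,j)\le\ell$, so $d(i,j)=\min_k(D_{ik}+D_{kj})$ where the $D$-entries are capped at $O(\ell)$. One can afford to replace the range of $k$ by a random bridging (hitting) set $H$ of size $\tilde{O}(n/\tau)$ for pairs whose set of such witnesses $k$ is large ($>\tau$), and to recover the distance for the remaining pairs (few witnesses) by the usual witness-finding reduction for Boolean matrix products. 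Thus the real content is to evaluate, at each of the $O(\log n)$ scales, a min-plus product of an $n\times|H|$ matrix by an $|H|\times n$ matrix with $|H|=\tilde{O}(n/\tau)$ and entries bounded by $O(\ell)$, and then to choose $\tau$ (possibly scale-dependent) to balance this against the witness-finding cost and sum the resulting geometric series.

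The new ingredient is Step~2: evaluating that thin bounded-entry min-plus product with a \MinWitness{} oracle, by combining Fredman's trick with Matou\v sek's dominance/equality-product technique. A column $h$ beats a column $h'$ for the pair $(i,j)$ precisely when $D_{ih}-D_{ih'}\le D_{h'j}-D_{hj}$, a comparison whose left side depends only on the first matrix and whose right side only on the second. Splitting the columns of $H$ into low- and high-frequency classes (as in Matou\v sek's dominance product) and combining rectangular matrix multiplication with brute force over the light class, one reduces the task of extracting the optimal column $h$ — and hence the value $d(i,j)$ — to a controlled number of \MinWitness{} calls, after reindexing so that ``smallest index returned by \MinWitness{}'' corresponds to ``optimal witness''. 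Because $H$ already has size only $\tilde{O}(n/\tau)$, the index blowup incurred by passing to pairs $(h,h')$ and by the Fredman reindexing stays small, which is what keeps the \MinWitness{} instances short enough for the oracle to help.

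The step I expect to be the main obstacle is exactly this Step~2: ensuring that the \MinWitness{}-based evaluation genuinely beats the naive $\tilde{O}(\ell\cdot\mathrm{MM}(n,n/\tau,n))$ bound. The danger is that the most natural way of turning a min-plus product into a min-witness product multiplies the middle dimension by the entry bound $O(\ell)$ (and by a further factor coming from Fredman's pairing of witnesses), which would erase any gain; avoiding this requires the Matou\v sek-style low/high-frequency split to interact favorably with the bridging-set size $n/\tau$ and the entry bound $O(\ell)$. As the introduction advertises, the difficulty is not any single one of the five ingredients (few/many-witnesses split, witness finding, hitting sets, Fredman's trick, Matou\v sek's technique) but the precise way they are pieced together so that the parameters balance to give $f(2.223)<2+\rho$; once the reduction and its cost accounting are in place, plugging in $c=7/3$ with $\omega=2$ (resp.\ $c=2.223$ in general) and invoking the \uAPSPH{} finishes the proof.
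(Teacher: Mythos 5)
Your high-level architecture (reduce \uAPSP{} to thin, bounded-entry rectangular Min-Plus products via Zwick/bridging sets, then get those to talk to a \MinWitness{} oracle via Fredman's trick and Matou\v sek's low/high-frequency split) points in the right direction, and your ``reindex so that smallest index = optimal witness'' remark is indeed how \MinWitness{} computes a bounded Min-Plus product in the paper (sort the triples $(k,u,v)$ by $u+v$; this gives $M^*(n,x,n\mid y)=O(\Tminwit(n,xy^2,n))$). But the obstacle you flag in Step~2 is a genuine gap, not a detail: applied directly to the instance equivalent to \uAPSP{} (inner dimension $n^{1/2}$, entries in $[n^{1/2}]$ when $\omega=2$), that reduction produces $\Tminwit(n,n^{3/2},n)$, and slicing that into square instances yields only the trivial $n^2$ bound. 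Moreover, your specific plan for Step~2 --- compare columns $h,h'$ via $D_{ih}-D_{ih'}\le D_{h'j}-D_{hj}$ and extract the argmin with \MinWitness{} calls --- does not go through as stated: selecting the minimizing column requires an $h$ that dominates \emph{all} $h'$, which is not a min-witness computation, and the pairing of columns squares the middle dimension on top of the entry-bound blowup you already worry about.

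The missing idea is the paper's two-scale decomposition of the entries (its key reduction, Theorem~\ref{thm:main}): write $A_{ik}=A'_{ik}g+A''_{ik}$ with high parts $A'$ in $[t]$ and low parts $A''$ in $[\ell/t]$. Only the Min-Plus product of the \emph{high} parts, further restricted to inner dimension $s$, is handed to the \MinWitness{} oracle --- with both $s$ and $t$ small, the $st^2$ blowup is affordable ($st^2\approx n$ at $\beta=1/3$, $\omega=2$, giving the $n^{7/3}$ target). The refinement among the witnesses of the high-part product (to account for the low parts) is \emph{not} done with \MinWitness{}: after a few/many-witnesses split and a hitting set $H$, Fredman's trick turns ``$k$ is a witness of $C'_{ij}$'' into the equality $A'_{ik}-A'_{ik_0}=B'_{k_0j}-B'_{kj}$ for $k_0\in H$, and the resulting \emph{generalized equality product} $\min_{k:\,A'_{ik}-A'_{ik_0}=B'_{k_0j}-B'_{kj}}(A''_{ik}+B''_{kj})$ is computed directly by Matou\v sek's technique with fast rectangular matrix multiplication. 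This is what lets the entry range shrink in lockstep with the inner dimension so that the oracle call and the equality-product cost balance at $\beta=1/(3+2\kappa)$, yielding $n^{7/3-o(1)}$ for $\omega=2$ and $n^{2.223}$ in general. Without this entry-range reduction, the parameter balance you are hoping for in Step~2 cannot be achieved.
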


Earlier papers~\cite{lincoln2020monochromatic,CVXicalp21} were unable to obtain such a reduction
from \uAPSP{} to \MinWitness{} (Chan, Vassilevska W. and Xu~\cite{CVXicalp21} were only able to reduce from \uAPSP{} to \MinWitnessEq{}, but not  \MinWitness{}).
We similarly obtain higher lower bounds for all the other problems, as indicated in Table~\ref{tab:bounded-min-plus}.
Our results thus show that the \uAPSPH{} is far
more versatile for proving conditional lower bounds than what previous papers~\cite{lincoln2020monochromatic,CVXicalp21} were able to show.

Lastly, to deal with convolution-type problems, we introduce a similar strong version of the Min-Plus Convolution Hypothesis:

\begin{hypothesis}[\StrongConv{}]
In the Word-RAM model with $O(\log n)$-bit words, Min-Plus convolution between two length $n$ arrays with entries in $[n]$ requires $n^{2-o(1)}$ randomized time. 
\end{hypothesis}

The best algorithm for \MinPlusConv{} with numbers in $[M]$ runs in $\OO(Mn)$ time (by combining Alon, Galil and Margalit's technique~\cite{ALONGM1997} and Fast Fourier transform\@), and no truly subquadratic time algorithm is known for $M=n$.

We do not know any direct relationship between the \StrongAPSP{} and the \StrongConv{} (although when there are no restrictions on weights, it was known that \MinPlusConv{} reduces to \MinPlus{} \cite{bremner2006necklaces}).

We prove the first conditional lower bounds for one intermediate convolution problem, \MinEqualityConv{}, under the \StrongAPSP{}, the \uAPSPH{} or the \StrongConv{}:

\begin{theorem}\label{thm:intro:convol}
\MinEqualityConv{} requires 
        $n^{1+1/6-o(1)}$ time under the \StrongAPSP{}; or
    $n^{1+\rho/2-o(1)}$ time under the \uAPSPH{}; or
    $n^{1+1/11-o(1)}$ time under the \StrongConv{}. 
\end{theorem}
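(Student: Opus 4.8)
The plan is to prove all three bounds by reducing the relevant hard problem --- \MinPlus{} with entries in $[n^{3-\omega}]$ for the \StrongAPSP{}, \uAPSP{} for the \uAPSPH{}, and \MinPlusConv{} with entries in $[n]$ for the \StrongConv{} --- to a collection of \MinEqualityConv{} instances, using exactly the ``single new tool'' (Fredman's trick together with Matou\v sek's split into low- and high-frequency equality multiplicities, organized around a few-witnesses/many-witnesses dichotomy with a small hitting set) that drives Theorems~\ref{thm:intro:strong-apsp} and~\ref{thm:intro:minwit}, but carried out in the convolution world. So the theorem is meant to be the convolution analogue of those two theorems, and the three exponents come out of the same parameter-balancing, only against the respective conjectured lower bounds ($n^{3-o(1)}$, $n^{2+\rho-o(1)}$, and $n^{2-o(1)}$).

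I would start with the \StrongConv{} bound, which is the cleanest since both the source and the target are convolution problems. Given a \MinPlusConv{} instance $(a,b)$ of length $n$ with entries in $[n]$, first guess the output value at each position up to the right scale (with only logarithmic overhead). Call an output position $s$ \emph{light} if fewer than $t$ indices $i$ witness the minimum $\min_i (a_i+b_{s-i})$ at $s$, and \emph{heavy} otherwise. Light positions are handled by standard FFT-based witness recovery, repeated $\OO(t)$ times. For the heavy positions, sample a set $H$ of $\OO(n/t)$ indices that, with high probability, hits a witness of every heavy position; for each fixed $i'\in H$, Fredman's trick rewrites ``$i$ is also a witness of $s$'' as $a_i-a_{i'}=b_{s-i'}-b_{s-i}$, which, after one change of variables making the left-hand side depend only on the input-side index and the right-hand side only on the output-side index, is precisely a \MinEqualityConv{} instance; Matou\v sek's split of the equality multiplicities into a low-frequency part (brute force) and a high-frequency part (fast matrix multiplication, or here the \MinEqualityConv{} oracle on a padded instance) finishes it. Balancing $t$, the frequency threshold, and the oracle exponent against the $n^{2-o(1)}$ lower bound from the \StrongConv{} yields the exponent $1+1/11$.

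For the other two bounds I would run the same scheme but starting from a matrix or graph problem, which forces one extra ingredient: a positional encoding that lays the two-dimensional index structure out along a one-dimensional array, so that after the hitting-set and Fredman steps --- at which point the residual sub-problems are equality-\emph{product} problems rather than min-plus problems --- each residual piece becomes a short \MinEqualityConv{} instance and only polynomially many of them are needed (a block count times a hitting-set count). For \uAPSP{} one additionally exploits the sparsity of unweighted directed graphs, using the same sparse-graph viewpoint behind the $n^{2+\beta-o(1)}$ bounds for graphs with $n^{1+\beta}$ edges, to make the packing tighter; this is precisely why the \uAPSPH{} produces the larger exponent $1+\rho/2$ while the \StrongAPSP{} only gives $1+1/6$. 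In each case one optimizes the few/many-witness threshold, the low/high-frequency threshold, and (for \uAPSP{}) the density parameter, and checks that all the resulting terms equalize to the claimed exponent and that an oracle faster than that exponent would contradict the assumed $n^{3-o(1)}$, respectively $n^{2+\rho-o(1)}$, lower bound.

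The hardest step will be this matrix-to-convolution bridge. The naive reduction from \MinPlus{} to \MinPlusConv{} is hopelessly lossy --- even the trivial $N^2$-time convolution algorithm recovers only the trivial $n^3$-time product algorithm --- so the reduction cannot simply ``flatten'' a min-plus product. The point must be that once the hitting-set and Fredman steps have replaced the min-plus structure by an equality-product structure, the residual instances are thin and structured in the right coordinate and can be packed into short convolutions with only polynomially many copies, rather than via the exponentially wasteful blocking that a generic product-to-convolution reduction needs. Getting this encoding right, while simultaneously balancing all the thresholds (and, for \uAPSP{}, the graph density) so as to land exactly on $1+1/6$ and $1+\rho/2$, is where essentially all of the work lies; the $o(1)$ losses are the routine ones coming from the randomized hitting set, the FFT-based witness recovery, and the value guessing, and the reductions are randomized as the hypotheses permit.
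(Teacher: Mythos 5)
Your plan for the \StrongConv{} bound has a genuine gap at its central step. After fixing a hitting-set index $i'$ and an output position $s$ witnessed by $i'$, Fredman's trick turns ``$i$ witnesses $s$'' into $a'_i-a'_{i'}=b'_{s-i'}-b'_{s-i}$. You claim a change of variables makes the left side depend only on the input-side index and the right side only on the output-side index, yielding ``precisely a \MinEqualityConv{} instance.'' It does not: writing $j=s-i$, the right-hand side is $b'_{j+i-i'}-b'_j$, which depends on \emph{both} $j$ and $i$ (equivalently, the condition is $a'_i+b'_j=C'_s$, a 3SUM-convolution-type constraint with an $s$-dependent target, not an equality-convolution constraint). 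This separability is exactly what distinguishes the matrix-product setting of Theorem~\ref{thm:main} (where $B'_{k_0j}-B'_{kj}$ is a legitimate function of the inner and output indices alone) from the convolution setting, and it is why the paper does \emph{not} handle the many-witnesses case this way. Instead, the paper's proof of Theorem~\ref{thm:min-equal-conv-under-conv} handles the heavy outputs with the new subquadratic-time BSG covering theorem (Theorem~\ref{thm:bsg:gower:fast}, the reinterpretation of Gowers's proof) plus output-sensitive sparse convolution --- no oracle calls there --- and invokes the \MinEqualityConv{} oracle only in the \emph{light} case, where a sampled sparse Min-Plus convolution is encoded into \MinEqualityConv{} via a random hash $F:[t]\times[t]\to[n]$ placed in the positional offsets; the constraint $t\sqrt{s}=O(\sqrt{n})$ from that encoding is precisely what forces the stronger BSG bound (the paper notes that the simpler covering of Theorem~\ref{thm:bsg:simple} fails here). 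Your proposal has the division of labor reversed and omits the BSG ingredient entirely, which the introduction flags as the essential new component of this bound.

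For the \StrongAPSP{} and \uAPSPH{} bounds your sketch is more elaborate than needed and misattributes the source of the exponents. The paper's route is a chain of black-box reductions: the already-established $n^{7/3-o(1)}$ (resp.\ $n^{2+\rho-o(1)}$) lower bound for \uAPSP{}, then \uAPSP{}~$\to$~\MinWitnessEq{}~$\to$~\MinEqualityProd{}~$\to$~\MinEqualityConv{}, the last step being the Indyk-style flattening of an $n\times n$ equality-type product into a single length-$\Theta(n^2)$ convolution. The exponents $1+1/6$ and $1+\rho/2$ are just $(7/3)/2$ and $(2+\rho)/2$ from this quadratic length blowup --- not from exploiting graph sparsity or from re-running the few/many-witnesses machinery inside the convolution reduction. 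Your instinct that a generic \MinPlus{}-to-\MinPlusConv{} flattening is too lossy is right, but the resolution is that the flattening is applied to the equality product (where it costs only the square of the dimension and the inherited lower bound is superquadratic), not that residual pieces must be packed into many short convolutions.
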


In the above theorem, the lower bound under the \StrongConv{} is the most interesting, requiring the use of one of our new variants of the BSG Theorem. 

We should emphasize that the significance of Theorem~\ref{thm:intro:convol} stems from the relative rarity of nontrivial lower bounds for convolution and related string problems.  Some problems may have $n^{\omega/2}$ lower bounds by reduction from \BMM{} (e.g.\ there was a well-known reduction from \BMM{} to pattern-to-text Hamming distances, attributed to Indyk -- see also \cite{GawrychowskiU18}), but such bounds become  meaningless when $\omega=2$.  A recent paper~\cite{CVXstoc22} 
obtained 
an $n^{1.5-o(1)}$ lower bound under the OV Hypothesis for a problem called ``pattern-to-text distinct Hamming similarity'', which is far less natural and basic than \MinEqualityConv{}.

\begin{table*}[ht]
\footnotesize
\centering
\begin{tabular}{|c|cr|cr|cr|}
\hline
\diagbox{Problems}{Lower Bounds}{Hypotheses} & \multicolumn{2}{c|}{\StrongAPSPShort{}}  & \multicolumn{2}{c|}{u-dir-APSP} & \multicolumn{2}{c|}{\begin{tabular}{@{}p{2.3cm}@{}}\StrongConvShort{}\end{tabular}}\\
\hline
\uAPSP{} & $n^{7/3-o(1)}$ & Cor.~\ref{cor:strong-intapsp-imply} & $n^{5/2-o(1)}$ & trivial &  \multicolumn{2}{c|}{N/A} \\
\hline
\MinWitness{} &  $n^{11/5-o(1)}$ & Cor.~\ref{cor:strong-intapsp-imply:minwit} & $n^{7/3-o(1)}$ & Cor.~\ref{cor:lower-bounds-under-uAPSP:minwit}& \multicolumn{2}{c|}{N/A}\\
\hline
\APSLPIntro{} & $n^{11/5-o(1)}$ & Cor.~\ref{cor:strong-intapsp-imply:apslp}& $n^{7/3-o(1)}$ & Cor.~\ref{cor:lower-bounds-under-uAPSP:apslp} &  \multicolumn{2}{c|}{N/A}\\
\hline
\BatchMode{} & $n^{7/6-o(1)}$ & Cor.~\ref{cor:strong-intapsp-imply:mode}&  $n^{5/4-o(1)}$ & Cor.~\ref{cor:lower-bounds-under-uAPSP:mode} & \multicolumn{2}{c|}{N/A}\\
\hline
\MinEqualityConv{} & $n^{7/6-o(1)}$ & Thm.~\ref{thm:min-equal-conv-under-product} & $n^{5/4-o(1)}$ & Thm.~\ref{thm:min-equal-conv-under-uAPSP} &  $n^{12/11-o(1)}$ & Thm.~\ref{thm:min-equal-conv-under-conv}\\ 
\hline
\end{tabular}
\caption{New conditional lower bounds obtained under the \StrongAPSP{}, the \uAPSPH{} and the 
\StrongConv{}, %
assuming $\omega = 2$.  (We also have nontrivial lower bounds %
even if $\omega>2$; see the corresponding theorems/corollaries for the precise bounds.)}
\label{tab:bounded-min-plus}
\end{table*}

\subsection{Equivalence of Counting and Detection}
\label{sec:count}
Here we summarize our main results on counting vs detection.

\paragraph{Equivalence between counting and detection for \ThreeSUM.} In \ThreeSUMCount{} we are given three sets of numbers $A,B,C$ and we want to count the number of triples $a \in A,b\in B,c\in C$ such that $a+b=c$.\footnote{There are several equivalent definitions of \ThreeSUM: the predicate can be $a+b=c$, or $a+b+c=0$; the integers could come from the same set $A$, or the input could consist of three sets $A,B,C$ and we require $a\in A, b\in B, c\in C$. We work with the version with three input sets and predicate $a+b=c$.} More complex variants include \AllThreeSUMCount{} in which we want to count for every $c\in C$, the number of $a\in A,b\in B$ such that $a+b=c$. All of \ThreeSUM, \ThreeSUMCount{} and \AllThreeSUMCount{} can be solved in $O(n^2)$ time, via the folklore algorithm. The 3SUM %
Hypothesis (see \cite{virgisurvey}) asserts that (in the word-RAM model of computation with $O(\log n)$-bit words), $n^{2-o(1)}$ time is needed to solve \ThreeSUM. As no reduction from \ThreeSUMCount{} to \ThreeSUM{} is known till now, a priori it could be that the \ThreeSUM{} Hypothesis is false, but that \ThreeSUMCount{} still requires $n^{2-o(1)}$ time.
We prove:
\begin{theorem}
\ThreeSUMCount{} and \ThreeSUM{} are equivalent under truly subquadratic randomized fine-grained reductions.
\end{theorem}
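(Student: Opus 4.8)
\medskip
\noindent\textbf{Proof plan.}
The direction $\ThreeSUM \le \ThreeSUMCount$ is trivial: run the counter and test whether the output is nonzero, so the content is a truly subquadratic randomized reduction from $\ThreeSUMCount$ to $\ThreeSUM$. Assume a $\ThreeSUM$ detection algorithm running in $O(n^{2-\eps})$ time; by the self-reduction of \cite{focsyj} it also lists any $K$ solutions of a size-$n$ instance in $\tilde O(n^{2-\eps_0}+K)$ time for some $\eps_0=\eps_0(\eps)>0$ (and the same self-reduction applies to the dominance predicate below). The first step is a reduction internal to counting: writing $N_{\le x}:=|\{(a,b,c)\in A\times B\times C:\ a+b-c\le x\}|$, we have $\ThreeSUMCount=N_{\le 0}-N_{\le -1}$, so it suffices to compute $N_{\le x}$ for $x\in\{0,-1\}$. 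This $N_{\le x}$ is exactly a dominance-product-flavored count, and it is here that the combined Fredman--Matou\v{s}ek tool gets applied.

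To compute $N_{\le x}$, the plan is to window the value universe into $K$ blocks $W_1<\cdots<W_K$, which partitions $A,B,C$ into $A_i,B_j,C_k$. For a triple with $a\in W_i$, $b\in W_j$ the sum $a+b$ lies in $W_{i+j}\cup W_{i+j+1}$, so each block-triple $(A_i,B_j,C_k)$ is of one of three types: (a) $k>i+j+1$, where every triple satisfies $a+b-c\le x$ and the contribution is simply $|A_i|\,|B_j|\,|C_k|$; (b) $k<i+j$, contributing $0$; (c) the $O(K^2)$ ``boundary'' block-triples with $k\in\{i+j,i+j+1\}$, which need the actual values but involve only values within a window, i.e.\ of magnitude $O(U/K)$ after subtracting block offsets. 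Types (a)--(b) are summed in $\tilde O(K^2)$ time with prefix sums. On the boundary block-triples I would run the few-witnesses/many-witnesses dichotomy: use the $\ThreeSUM$ detection oracle together with its listing corollary to classify each boundary block-triple as \emph{light} (few contributing triples) or \emph{heavy}; light ones are handled by listing their contributing triples; heavy ones are the ones that need the real work.

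For a heavy boundary block-triple I would apply Matou\v{s}ek's dominance technique fused with Fredman's trick: sort the middle-set values and split them by the frequency of the compared key; low-frequency keys are paired up by brute force (cheap because there are few such pairs in total), and each high-frequency key (there are few distinct ones) is handled in a batch, where Fredman's trick rewrites the per-pair comparison $a+b\le c$ as a comparison between a quantity depending only on $(a,c)$ and one depending only on $b$, so that the batch reduces to a \emph{sparse} instance that a small hitting set shrinks to a size small enough for the $\ThreeSUM$ oracle (or, on the sub-instances whose universe is already small enough, a direct FFT-based count) to resolve within budget. Choosing $K$ and the frequency/popularity thresholds to balance the $\tilde O(K^2)$ windowing cost, the $O(n^{2-\eps_0})$ listing cost, the hitting-set cost, and the per-block Matou\v{s}ek cost should yield $N_{\le x}$ in $O(n^{2-\eps'})$ time for some $\eps'>0$, hence $\ThreeSUMCount$ in truly subquadratic time.

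The main obstacle is exactly the heavy boundary case: there one must \emph{exactly} count a possibly $\Omega(n^2)$-size set of triples, where listing is hopeless, so the count has to be extracted from the detection oracle without paying much more than the detection time. Two aspects are delicate. The $\pm1$ carry in value-windowing means each boundary block-triple straddles two candidate sum-blocks, and an exact count needs this handled without double counting (using an almost-linear, e.g.\ Dietzfelbinger-style, hash reduces the carry to a single bit and lets candidates be verified). And the high-frequency step of Matou\v{s}ek's decomposition must be routed through the $\ThreeSUM$ oracle (with hitting sets keeping those oracle calls small and few) rather than through $n\times n$ matrix multiplication: if full matrix multiplication were affordable here the reduction would become unconditional and would contradict the $\ThreeSUM$ Hypothesis, so the interplay of hitting sets, Fredman's trick, and the windowing that bounds the number and size of the oracle calls is precisely what makes the argument go through.
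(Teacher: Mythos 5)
Your overall skeleton---one direction trivial; for the other, split outputs into a few-witnesses case handled by listing via the detection oracle and a many-witnesses case handled by a Fredman-trick-plus-Matou\v{s}ek computation---matches the paper's. But there is a genuine gap exactly where you flag ``the main obstacle,'' and your proposed resolution rests on a misconception. You insist that the high-frequency/heavy step must be ``routed through the \ThreeSUM{} oracle \ldots rather than through $n\times n$ matrix multiplication,'' on the grounds that otherwise the reduction would become unconditional and contradict the 3SUM Hypothesis. This is backwards. A detection oracle can never yield the exact cardinality of a witness set of size $\Omega(n)$ or $\Omega(n^2)$: detection and listing calls, however they are packaged with hitting sets, only certify membership of specific elements, and when the count is large you cannot afford to certify them all---that is precisely the obstruction the theorem is about, so your plan for the heavy case (``a small hitting set shrinks to a size small enough for the \ThreeSUM{} oracle to resolve'') is circular. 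The paper does the opposite: the many-witnesses case is computed \emph{unconditionally}, with no oracle, using fast matrix multiplication. A random sample of size $\OO(n/L)$ hits every witness set of size at least $L$; fixing a hit witness $s$, Fredman's trick turns ``is $k$ also a witness?'' into the equality $A_{ik}-A_{is}=B_{sj}-B_{kj}$, so the exact count is one entry of an \Equality{} product, computable by Matou\v{s}ek's low/high-frequency split in $\OO(n^{(3+\omega)/2})$ time per sample element (Lemma~\ref{lem:exact-tri}); after routing through \AllThreeSUMConv{} and chopping into $O(\sqrt n)$-node \AEExactTriCount{} instances with a Chebyshev concentration argument over randomly permuted intervals (Lemma~\ref{lem:3sum-conv}), the total is $\OO(n^{(9+\omega)/4}/L)$, truly subquadratic for $L=n^{0.99}$. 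No contradiction with the 3SUM Hypothesis arises, because this unconditional algorithm only produces counts for outputs that happen to have many witnesses; the oracle remains essential for the few-witnesses outputs.

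Two secondary problems with your decomposition. First, value-windowing into $K$ blocks does not control the sizes of the boundary block-triples (all of $A$ and all of $B$ may land in single blocks), so neither the ``direct FFT-based count'' fallback nor per-block brute force stays within budget, and the reduced universe $U/K$ is still polynomially large; the paper instead passes to \ThreeSUMConv{} via a \emph{count-preserving} hashing reduction (Chan--He), so that index arithmetic, not value windowing, supplies balanced $O(\sqrt n)\times O(\sqrt n)$ subinstances. Second, your Dietzfelbinger-style hash followed by ``verifying candidates'' is a detection-style fix: for exact counting the expected $\Omega(n^2)$ hash false positives cannot each be verified, so one must use a reduction in which every true triple is counted exactly once and false positives are excluded structurally, which is what Theorem~\ref{thm:all-3sum-conv-count} relies on.
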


\paragraph{Counting of \APSP{} and \MinPlus.}
It is known that \APSP{} in $n$-node graphs is equivalent to \MinPlus{} of $n\times n$ matrices \cite{Fischer71}. 
The counting variant \APSPCount{} of \APSP{} asks to count for every pair of nodes in the given graph, the number of shortest paths between them.
The counting variant of \MinPlus{} asks, given two matrices $A,B$ to count for all pairs $i,j$, the number of $k$ such that $A_{ik}+B_{kj}=\min_{\ell} (A_{i\ell}+B_{\ell j})$, i.e. the number of witnesses of the \MinPlus.

While \APSP{} and \MinPlus{} are equivalent,  \APSPCount{} and \MinPlusCount{}  are not known to be. The main issue is that the shortest paths counts can be exponential, and operations on such large numbers are costly in any reasonable model of computation such as the Word-RAM (see the discussion in Section \ref{sec:newalgs}). Because of this the fastest known algorithm for \APSPCount{} is actually {\em quartic} in the number of nodes, and not cubic. While we are able to improve upon the quartic running time (see Section \ref{sec:newalgs}), the running time is still supercubic, so it is unclear whether a tight reduction is possible from \APSPCount{} to \MinPlusCount.

Chan, Vassilevska W. and Xu \cite{CVXicalp21} defined several variants of \APSPCount{} that mitigate the existence of exponential counts. One variant, \APSPCountMod{U} computes the counts modulo any $O(\poly\log n)$-bit integer $U$, and thus no computations with large integers are necessary.  The problem can be solved in $\OO(n^3)$ time, and can be reduced to \MinPlusCount{} (see Appendix~\ref{sec:apsp-count-mod}). 
We prove:
\begin{theorem}
\MinPlusCount{} and \MinPlus{} are equivalent under truly subcubic fine-grained reductions. For any $O(\poly\log n)$-bit integer $U \ge 2$, \APSPCountMod{U} and \APSP{} are equivalent under truly subcubic fine-grained reductions.
\end{theorem}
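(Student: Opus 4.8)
The plan is to reduce \MinPlusCount{} to \MinPlus{} in truly subcubic time; the reverse direction is trivial since any algorithm counting witnesses also detects whether the count is positive, hence computes the minimum. So fix an instance $A, B$ of \MinPlusCount{}. First I would run a \MinPlus{} oracle on $A, B$ to obtain the matrix $C$ with $C_{ij} = \min_k (A_{ik} + B_{kj})$. The task is now purely a counting problem: for each $(i,j)$, count $|\{k : A_{ik} + B_{kj} = C_{ij}\}|$, i.e.\ count the \emph{witnesses} realizing the known minimum. I would handle this by a ``few-witnesses vs.\ many-witnesses'' dichotomy with a parameter $t$ (to be optimized around $t = n^{1/2}$ or so). If the number of witnesses of $(i,j)$ is at most $t$, then I can enumerate all of them explicitly using standard witness-finding machinery: the classical technique of Alon--Galil--Margalit / Seidel / Zwick recovers, via $O(t \cdot \polylog n)$ calls to a Boolean-matrix-multiplication-style subroutine (here, a subroutine detecting whether a \MinPlus{} witness exists in a restricted set of $k$'s), all witnesses for all pairs with at most $t$ of them, in $\OO(t \cdot n^\omega)$ time total; each call can be implemented by restricting $A$ and $B$ to a random subset of columns/rows and checking equality of the resulting min-plus entry with $C_{ij}$ via another \MinPlus{} oracle call. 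That takes care of every pair with few witnesses exactly.

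For the ``many-witnesses'' case — pairs $(i,j)$ with more than $t$ witnesses — I would use Fredman's trick together with Matou\v{s}ek's dominance/equality-product approach, exactly as advertised in the introduction. The key observation is: $A_{ik} + B_{kj} = C_{ij}$ is equivalent, after moving $C_{ij}$ over, to an equality between a quantity depending on $(i,k)$ and one depending on $(k,j)$ — but $C_{ij}$ itself mixes $i$ and $j$. To decouple, note we already know $C$, so I would instead phrase it as: $k$ is a witness iff $A_{ik} + B_{kj} \le C_{ij}$ (the reverse inequality being automatic since $C_{ij}$ is the min). Now apply Fredman: sort, for each fixed $k$, the values $A_{ik}$ over $i$ and the values $B_{kj}$ over $j$, and observe that $A_{ik} + B_{kj} \le C_{ij}$ can be tested by comparing ranks. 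More precisely, I would compute, for each pair $(i,j)$, the quantity $\sum_k [\![ A_{ik} + B_{kj} \le C_{ij} ]\!]$, which is the number of witnesses; this is a weighted dominance-type product. Matou\v{s}ek's method splits the sum over $k$ according to whether a value is ``low-frequency'' (handled by brute force over the few pairs it participates in) or ``high-frequency'' (handled by fast rectangular matrix multiplication after sorting). Combined with a small hitting set to identify the many-witness pairs cheaply, this yields an $\OO(n^{(3+\omega)/2})$-type bound for this case, which is truly subcubic whenever $\omega < 3$.

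Balancing the two cases — $\OO(t n^\omega)$ for few-witnesses against the Matou\v{s}ek/Fredman cost, which depends on $n/t$ and the matrix-multiplication exponents — gives a truly subcubic total running time (subcubic for any $\omega < 3$, and around $n^{(3+\omega)/2}$ when the parameters are tuned, at least $\OO(n^{2.5})$ when $\omega = 2$). I expect the main obstacle to be step (iii)+(v): cleanly handling the many-witnesses pairs so that the hitting set genuinely isolates them and the Fredman/Matou\v{s}ek decomposition can be applied to a \emph{rectangular} sub-instance indexed by the hitting set rather than the full $n \times n \times n$ instance — and making sure the ``low-frequency / high-frequency'' split interacts correctly with the already-computed matrix $C$ appearing on the right-hand side of the inequality (since $C$ is not a single matrix product but varies per pair, one must be careful that the equality product is taken against the correct threshold). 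Once the \MinPlusCount{} $\equiv$ \MinPlus{} equivalence is established, the statement for \APSPCountMod{U} follows: \APSPCountMod{U} reduces to \MinPlusCount{} (Appendix~\ref{sec:apsp-count-mod}) using the $O(\log\log)$-depth repeated-squaring of the min-plus matrix with counts taken mod $U$ so all arithmetic stays on $O(\polylog n)$-bit words, and conversely \APSP{} reduces to \APSPCountMod{U} trivially (the min-plus distances are read off directly, ignoring the counts), while \APSP{} $\equiv$ \MinPlus{} is the classical equivalence of Fischer--Meyer; chaining these with the new \MinPlus{} $\equiv$ \MinPlusCount{} equivalence closes the loop.
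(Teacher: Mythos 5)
Your reduction from \MinPlusCount{} to \MinPlus{} is in the right spirit (witness listing for few-witness pairs; a hitting set plus Fredman's trick plus equality products for many-witness pairs), but the direction you dismiss as trivial is in fact the one that needs a genuinely new idea, and your many-witnesses step as written does not go through.

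First, the claim that ``any algorithm counting witnesses also detects whether the count is positive, hence computes the minimum'' is false. By the paper's conventions, the counting variant of a minimization problem outputs the number of optimal solutions \emph{in place of} the optimal value: \MinPlusCount{} returns only $|\{k: A_{ik}+B_{kj}=(A\star B)_{ij}\}|$, which is always at least $1$ and reveals nothing about $(A\star B)_{ij}$ itself. The same applies to \APSPCountMod{U}, which outputs only the path counts mod $U$, not the distances, so ``the min-plus distances are read off directly'' is not available either. The paper handles this direction (Theorems~\ref{thm:minplus-count-rev} and~\ref{thm:apsp-count-mod-rev}) by first perturbing the entries ($A'_{ik}=M\cdot A_{ik}+k$, $B'_{kj}=M\cdot B_{kj}$) to force a unique witness $k_{ij}$ for every pair, and then, for each bit position $p$, duplicating the columns $k$ whose $p$-th bit is $1$ and calling the counting oracle: the count being $2$ rather than $1$ reveals the $p$-th bit of $k_{ij}$, so $O(\log n)$ oracle calls recover $k_{ij}$ and hence $(A\star B)_{ij}=A_{i,k_{ij}}+B_{k_{ij},j}$. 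Without some such mechanism your ``equivalence'' is only a one-way reduction.

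Second, in the many-witnesses case you propose to count $\sum_k [A_{ik}+B_{kj}\le C_{ij}]$ as a dominance-type product, but the threshold $C_{ij}$ depends on both $i$ and $j$, so the comparison never decouples into a left side depending only on $(i,k)$ and a right side depending only on $(k,j)$, and Matou\v sek's rank-sorting argument does not apply. You flag this yourself as ``the main obstacle'' but do not resolve it. The resolution (Lemma~\ref{lem:min-plus}) is to use the hitting-set element $s\in S\cap W_{ij}$ as the anchor: since $C_{ij}=A_{is}+B_{sj}$ for such an $s$, Fredman's trick rewrites the witness condition as $A_{ik}-A_{is}=B_{sj}-B_{kj}$, which for each fixed $s$ is exactly an equality product of the matrices $(A_{ik}-A_{is})_{i,k}$ and $(B_{sj}-B_{kj})_{k,j}$, computable in $\OO(n^{(3+\omega)/2})$ time per $s$ and hence in $\OO(|S|\cdot n^{(3+\omega)/2})$ time overall. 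With these two fixes your outline matches the paper's proof.
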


\paragraph{Counting Exact Triangles.} 
The Exact Triangle Problem (\ExactTri{}) is: given an $n$-node graph with $O(\log n)$-bit integer edge weights, to determine whether the graph contains a triangle whose edge weights sum to some target value $t$. This problem is known to be at least as hard as both \ThreeSUM{} and \APSP{}~\cite{patrascu2010towards,VWfindingcountingj,focsyj}, so that if its brute-force cubic time algorithm can be improved upon, then both the \ThreeSUM{} Hypothesis and the \APSP{} Hypothesis would be false. \ExactTri{} is among the hardest problems in fine-grained complexity.
The counting variant \ExactTriCount{} asks for the number of triangles with weight $t$. We prove:

\begin{theorem}
\ExactTri{} and \ExactTriCount{} are equivalent under truly subcubic fine-grained reductions.
\end{theorem}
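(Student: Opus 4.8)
The easy direction is trivial: any algorithm for \ExactTriCount{} decides \ExactTri{} by checking whether the reported count is nonzero, so \ExactTri{} reduces to \ExactTriCount{} with essentially no overhead. For the converse, the plan is to reduce \ExactTriCount{} to \ExactTri{}. First I would pass to the tripartite setting (making three copies of the vertex set multiplies the count by a fixed constant), so that the input is three $n\times n$ weight matrices $A,B,D$ and we must output $\sum_{i,k}c_{ik}$ where $c_{ik}=\lvert\{h: A_{ih}+B_{hk}+D_{ik}=t\}\rvert$. Since the weights are polynomially bounded, every $c_{ik}$ and the final total fit in $O(\log n)$ bits, so no large-integer arithmetic is needed (in contrast to the shortest-paths counting case). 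Using the standard subcubic equivalence between \ExactTri{} and \AEExactTri{}~\cite{VWfindingcountingj,focsyj}, we may freely query, for any sub-instance, which pairs $(i,k)$ have at least one witness. Fix a threshold $\tau$ and split the pairs into \emph{few-witness} ($c_{ik}\le\tau$) and \emph{many-witness} ($c_{ik}>\tau$).

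For the few-witness pairs, I would use standard witness-finding / self-reduction for \AEExactTri{}: repeatedly subsample or color the middle layer, invoke the \ExactTri{} oracle on the resulting (rectangular) sub-instances to learn which pairs have a witness in the current middle set, and recursively bisect to extract the individual witnesses. Because the cost of an oracle call on a sub-instance with middle layer of size $m$ scales well in $m$ (the relevant geometric series is dominated by its top term), one can fully enumerate all witnesses of every pair that has at most $\tau$ of them in truly subcubic time, provided $\tau$ is a sufficiently small polynomial in $n$. This yields the exact contribution of the few-witness pairs to the total.

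For the many-witness pairs, I would take a random hitting set $H\subseteq[n]$ of size $\OO(n/\tau)$; with high probability $H$ contains a witness of every many-witness pair, while the pairs with \emph{no} witness in $H$ all have at most $\tau$ witnesses and are already handled above. Querying the oracle with the middle layer restricted to $H$, find for each pair $(i,k)$ with a witness in $H$ one such pivot $h^{*}=h^{*}(i,k)\in H$. The key observation is that the pivot pins down the third weight, $D_{ik}=t-A_{ih^{*}}-B_{h^{*}k}$, so Fredman's trick~\cite{fredman1976new} rewrites the witness condition as
\[
A_{ih}+B_{hk}+D_{ik}=t \iff A_{ih}+B_{hk}=A_{ih^{*}}+B_{h^{*}k} \iff A_{ih}-A_{ih^{*}}=B_{h^{*}k}-B_{hk}.
\]
Hence, after grouping the pivoted pairs by the value of $h^{*}$, the counts $c_{ik}$ within one group are exactly the entries (restricted to that group) of the \Equality{}/\Dominance{} product of the matrices $\widetilde A_{ih}=A_{ih}-A_{ih^{*}}$ and $\widetilde B_{hk}=B_{h^{*}k}-B_{hk}$, which Matou\v{s}ek's algorithm~\cite{MatIPL} computes in $\OO(n^{(3+\omega)/2})$ time, for a total of $\OO((n/\tau)\,n^{(3+\omega)/2})$ over the $\OO(n/\tau)$ pivots. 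Adding the two contributions (with the fixed normalization from the tripartite step) gives the answer with high probability.

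The main obstacle is the balance between the two regimes. Treating Matou\v{s}ek's algorithm as a black box, the few-witness cost grows and the many-witness cost shrinks with $\tau$, and for $\omega\ge 2$ the two bounds cross exactly at the cubic barrier, so a naive combination is not subcubic. Getting both parts strictly subcubic is where the real work lies: one must open up Matou\v{s}ek's algorithm and interleave its internal low-frequency/high-frequency split with the hitting-set structure (and possibly an extra recursive hitting-set layer) so that the matrix-multiplication work is amortized across the pivots in $H$ rather than paid $\lvert H\rvert$ times independently. This delicate piecing together of the few/many-witness division, hitting sets, Fredman's trick, and Matou\v{s}ek's two-level frequency split — an instance of the paper's single new tool — is the technical heart of the reduction; all of it is randomized, which is fine since the relevant hypotheses are stated for randomized time.
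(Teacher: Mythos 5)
Your architecture is exactly the paper's: enumerate witnesses for the few-witness pairs via the standard listing reduction to \AEExactTri{}, hit the many-witness pairs with a small set $H$, and for each pivot $h^*$ use Fredman's trick to turn the remaining counting into an \Equality{} product computed by Matou\v sek's algorithm. The Fredman rewriting and the use of one equality product per pivot are precisely the paper's Lemma~\ref{lem:exact-tri}. However, the final paragraph contains a genuine gap: you assert that with Matou\v sek's algorithm as a black box the two costs ``cross exactly at the cubic barrier,'' and you defer the ``technical heart'' to an unspecified interleaving of the hitting set with Matou\v sek's internal low/high-frequency split. No such interleaving is needed, and the crossing point is not cubic. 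The listing step is constrained by the \emph{total} number of witnesses output, not by $\tau$ being a small polynomial: taking $\tau=n^{0.99}$, the total output is at most $n^{2.99}$, which the standard technique of \cite{focsyj} handles in truly subcubic time given a truly subcubic \AEExactTri{} oracle. The hitting set then has size $|H|=\OO(n/\tau)=\OO(n^{0.01})$, so the black-box cost of the many-witness case is $\OO(n^{0.01}\cdot n^{(3+\omega)/2})\le O(n^{2.70})$. More generally, balancing $n^2\tau$ against $(n/\tau)\,n^{(3+\omega)/2}$ gives $\tau=n^{(1+\omega)/4}$ and total cost $\OO(n^{(9+\omega)/4})$, which is truly subcubic for every $\omega<3$ (e.g.\ $n^{2.75}$ if $\omega=2$). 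So the naive combination already works; by claiming otherwise and leaving the resolution to a construction you do not carry out, the proof as written is incomplete at the one step where a concrete argument is required.

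One further small point: after establishing the \AEExactTriCount{}-to-\AEExactTri{} reduction you still need to close the loop back to the non-all-edges \ExactTri{}, which the paper does via the trivial reduction from \ExactTriCount{} to \AEExactTriCount{} and the known subcubic reduction from \AEExactTri{} to \ExactTri{} \cite{focsyj}; your proposal gestures at the equivalence but should state this chain explicitly.
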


Abboud, Feller and Weimann~\cite{AbboudFW20} previously considered the problem of computing the {\em parity} of the number of zero-weight triangles, and it is equivalent to the problem of 
computing the  parity of the number of triangles with weight $t$ for a given $t$. Let's call this {\sf Parity-Exact-Tri}. They showed that \ExactTri{} can be reduced to  {\sf Parity-Exact-Tri} via a randomized subcubic fine-grained reduction. They were {\em not} able to obtain an equivalence, as it is not obvious at all that the decision problem should be able to solve the parity problem. Since \ExactTriCount{} can easily solve {\sf Parity-Exact-Tri} (if one knows the count, one can take it mod $2$), we also get that {\sf Parity-Exact-Tri} is equivalent to \ExactTri.

More equivalence results can be found in Section~\ref{sec:counting} and Appendix~\ref{sec:more_counting}.  See Section~\ref{sec:counting:discuss} for
further discussion on possible implications of our results.

\paragraph{New nondeterministic and quantum counting algorithms.}
Using our new techniques, we can also provide  efficient nondeterministic algorithms for \NegTriCount{} (counting the number of triangles with negative weights), \ExactTriCount{} and \ThreeSUMCount{} even when there are real-valued inputs. 
Ours are the first nondeterministic algorithms for these problems that beat their essentially cubic and quadratic deterministic algorithms.  See Section~\ref{sec:nondet:discuss} for the complexity-theoretic implications of these results.

Our equivalence results can be used to obtain new quantum algorithms.
It is known that \ThreeSUM{} can be solved in  $\OO(n)$ quantum time by an algorithm that uses Grover search (see e.g. \cite{AmbainisL20}). However, it is not clear how to adapt that algorithm to solve \ThreeSUMCount{} since Grover search cannot count the number of solutions exactly. By combining the $\OO(n)$ quantum time \ThreeSUM{} algorithm with our subquadratic equivalence between \ThreeSUM{} and \ThreeSUMCount{} in a black box way, we immediately obtain the first truly subquadratic time quantum algorithm for \ThreeSUMCount{}.

\subsection{New Variants of the BSG Theorem with Algorithmic Applications}
\label{sec:newalgs}

As a way to address question \hyperlink{question:Q4}{Q4}, we formulate a new ``Triangle Decomposition Theorem'', which follows from our techniques.
The theorem %
can be stated roughly as follows:
\begin{quote}
For any weighted $n$-node graph $G$ and parameter $s$, there exist $O(s^3)$ subgraphs
$G^{(\lam)}$ such that the set of all triangles of weight zero (or any fixed target value) in $G$
can be decomposed as the union of the set of all triangles in $G^{(\lam)}$, plus a small remainder set of
$O(n^3/s)$ triangles.  (All triangles in each $G^{(\lam)}$ are guaranteed to have weight zero.)
\end{quote}
Thus, the set of all zero-weight triangles in a graph is highly structured in some sense (in particular,
if there are many zero-weight triangles, one can extract large subgraphs in which all triangles are zero-weight triangles).
See Theorem~\ref{thm:tri:decompose} for the precise statement.
From this theorem, we can easily rederive a subcubic algorithm for monotone Min-Plus product (as shown in Appendix~\ref{app:bd:diff}),
if we don't care about optimizing the exponent in the running time.  
The theorem also leads to several new algorithms, as listed below.

This Triangle Decomposition Theorem resembles a covering version of the BSG Theorem, as formulated
by Chan and Lewenstein~\cite{ChanLewenstein}, which can be roughly stated as follows:
\begin{quote}
For any sets $A,B,C$ in an abelian group and parameter $s$, there exist $O(s)$ pairs
of subsets $(A^{(\lam)},B^{(\lam)})$ such that $\{(a,b)\in A\times B: a+b\in C\}$ can be covered
by the union of $A^{(\lam)}\times B^{(\lam)}$, plus a small remainder set of $O(n^2/s)$ pairs,
where the total size of the sum sets\footnote{
Throughout the paper, $A+B$ denotes the sum set $\{a+b: a\in A,\, b\in B\}$, and $A-B$ denotes the difference set $\{a-b: a\in A,\, b\in B\}$.
}
$A^{(\lam)} + B^{(\lam)}$ is $O(s^6 n)$.
\end{quote}
Thus, the set of all solutions to \ThreeSUM{} is highly structured in some sense (in particular,
if there are many solutions to \ThreeSUM{}, one can extract  pairs of large subsets $(A^{(\lam)},B^{(\lam)})$ whose
sum sets are small).
See Section~\ref{sec:bsg} for the precise statement and for more background on the BSG Theorem.

We show that 
our approach, combined with some extra ideas, can actually prove a form of the BSG Theorem, namely,
with the above $O(s^6 n)$ bound replaced by $\OO(s^2 n^{3/2})$.  At first, this new bound appears weaker---indeed,
researchers in additive combinatorics were  concerned more with obtaining a linear bound in $n$ on the sum set size, and less with the dependency on $s$.  However, Chan and Lewenstein's algorithmic applications require
nonconstant choices of $s$, and the new bound lowering the $s^6$ factor to $s^2$ turns out to yield better results
in at least one application below
(namely, \ThreeSUM{} with preprocessed universes).

We now mention a few applications of the new theorems:

\paragraph{\#APSP for arbitrary weighted graphs.}
Recall the \APSPCount{} problem
(counting the number of shortest paths from $u$ to $v$,
for every pair of nodes $u,v$ in a weighted graph).  This basic problem has applications
to %
\emph{betweenness centrality}.
As  mentioned earlier, because counts may be exponentially big,
known algorithms run in near-$n^4$ time
in the
standard Word-RAM model with $O(\log n)$-bit words.
An $\OO(n^3)$-time
algorithm for \APSPCount{} was given by Chan, Vassilevska W. and Xu~\cite{CVXicalp21}, but only for \emph{unweighted} graphs.
We give the first truly subquartic \APSPCount{} algorithm for  arbitrary weighted graphs with running time $O(n^{3.83})$
(or $\OO(n^{15/4})$ if $\omega=2$).

\paragraph{3SUM in preprocessed universes.}
Chan and Lewenstein~\cite{ChanLewenstein} studied a ``preprocessed universe'' setting for the \ThreeSUM{}
problem: preprocess three sets $A,B,C$ of $n$ integers (the \emph{universes})
so that for any given query subsets $A'\subseteq A$, $B'\subseteq B$, and $C'\subseteq C$, we can
solve \ThreeSUM{} on the instance $(A',B',C')$ more quickly than from scratch.
(The problem was first stated by Bansal and Williams~\cite{BansalW12}.)
Chan and Lewenstein showed intriguingly that after preprocessing the universes in $\OO(n^2)$ expected time,
\ThreeSUM{} on the given query subsets can be solved in time truly subquadratic in $n$, namely, $\OO(n^{13/7})$.
Our techniques yield a new, simpler solution with $\OO(n^2)$ expected preprocessing time and
an improved query time of $\OO(n^{11/6})$.  Furthermore, with the same $\OO(n^{11/6})$ query time,
the new algorithm can solve a slight generalization
where the query subset $C'$ can be an arbitrary set of $n$ integers (i.e., the universe
$C$ needs not be specified).  Here, our improvement is even bigger:
Chan and Lewenstein's previous solution for this generalized case
required $\OO(n^{19/10})$ query time.

We also obtain the first
result for the analogous problem of \ExactTri{} in preprocessed universes:
 we can preprocess any weighted $n$-node graph in $\OO(n^3)$ time, so that
for any given query subgraph, we can solve \ExactTri{} in time truly subcubic in $n$,
namely, $O(n^{2.83})$ (or $\OO(n^{11/4})$ if $\omega=2$).
This result can be viewed as a generalization of the result for \ThreeSUM{}, by 
known reductions from \ExactTri{} to \ThreeSUM{} (for integers).

\paragraph{3SUM for monotone sets in $[n]^d$.}
Chan and Lewenstein~\cite{ChanLewenstein} also studied a special case of \ThreeSUM{} for
monotone sets $A,B,C$ in $[n]^d$ for a constant dimension~$d$, where a set of points is \emph{monotone} if it is of the form $\{a_1,\ldots,a_n\}$ where the $j$-th coordinates of  $a_1,\ldots,a_n$ is a monotone sequence for each $j$.  The problem is related to
an important special case of Min-Plus convolution for monotone sequences in $[n]$ (or integer sequences with bounded differences), which reduces to  monotone \ThreeSUM{} in 2 dimensions.  This is also  related to a data structure problem for strings known as \emph{jumbled indexing}, where %
$d$ corresponds to the alphabet size.  Chan and Lewenstein gave the first truly subquadratic algorithm for the problem, with running time of the form $O(n^{2-1/(d+O(1))})$ using randomization.  (See \cite{AmirCLL14,HsuUmans17} for conditional lower bounds on this and related problems.)  However, they obtained subquadratic deterministic algorithms only for %
$d\le 7$ under the current fast matrix multiplication bounds.  Our techniques give the first \emph{deterministic} algorithm for all constant $d$, with running time $O(n^{2-1/O(d)})$.
Although the new bound is not better (and for monotone Min-Plus convolution, a recent paper by Chi, Duan, Xie and Zhang~\cite{ChiDXZstoc22} presented even faster randomized algorithms), the new approach is simpler, besides being deterministic.

\subsection{Paper Organization}

In Section~\ref{sec:prelim}, we define notations and problems. 
The rest of the paper has three main threads:

\begin{itemize}
\item \emph{Conditional lower bounds for problems with intermediate complexity}:
In Section~\ref{sec:intapsp-lower-bound}, we illustrate our approach by proving
the first superquadratic lower bound for \uAPSP{} under the \StrongAPSP{}.
In Sections~\ref{sec:intapsp-lower-bound:more}--\ref{sec:uapsp-lower-bound}, we prove lower bounds for other problems, including
\MinWitness{}, under both the \StrongAPSP{} 
 and \uAPSPH{}.
\item \emph{Equivalences between counting and detection problems}:
In Section~\ref{sec:counting:preview}, we illustrate our basic idea by proving
the subcubic equivalence between \ExactTriCount{} and \ExactTri{}.  In Section~\ref{sec:counting}, we prove more results of this kind, including the subquadratic equivalence between \ThreeSUMCount{} and \ThreeSUM{}.  (Still more examples can be
found in Appendix~\ref{sec:more_counting}\@.)  
In Section~\ref{sec:other_models}, we further adapt these ideas to obtain new nondeterministic and quantum algorithms for counting problems.
\item \emph{BSG-related theorems}: In Section~\ref{sec:decomposition-zero-tri}, we present our new Triangle Decomposition Theorem and describe its applications. 
In Section~\ref{sec:bsg}, we describe our new variants of the BSG theorem.  (Still more applications and variants can be found in Appendices~\ref{app:bd:diff} and \ref{app:bsg}\@.)  Finally, in Section~\ref{sec:min-equal-conv}, we prove the conditional lower bounds for \MinEqualityConv{}, the most sophisticated of which use one of our new BSG theorems. 
\end{itemize}

Although the paper is lengthy,
Sections~\ref{sec:intapsp-lower-bound}, \ref{sec:counting:preview} and
\ref{sec:decomposition-zero-tri} should suffice to give the readers an overview of our main proof techniques.  (Readers interested in diving deeper into any of the above threads may proceed to the subsequent sections.)

\section{Problem Definitions}
\label{sec:prelim}
For integer $n \ge 1$, we use $[n]$ to denote $\{1, 2, \ldots, n\}$ and use $[\pm n]$ to denote $\{-n, -(n-1), \ldots, n\}$. 
For a predicate $X$, we use $[X]$ to denote a $\{0, 1\}$ value which evaluates to $1$ if $X$ is true, and $0$ otherwise. 

We use $M(n_1, n_2, n_3)$ to denote the (randomized) running time to multiply an $n_1 \times n_2$ matrix with an $n_2 \times n_3$ matrix.
We consider the Word-RAM model of computation with $O(\log n)$-bit words, and all input numbers are $O(\log n)$-bit integers, unless otherwise stated.

\begin{problem}[\MinPlus{}]
Given an $n_1 \times n_2$ matrix $A$ and an $n_2 \times n_3$ matrix $B$, compute the $n_1 \times n_3$ matrix $C$ where $C_{ij} = \min_{k \in [n_2]} (A_{ik}+B_{kj})$. We denote $C$ by $A \star B$. 
\end{problem}

Furthermore, we use $M^*(n_1, n_2, n_3 \mid \ell)$ to denote the (randomized) running time to compute the Min-Plus product between an $n_1 \times n_2$ matrix and an $n_2 \times n_3$ matrix where the entries of both matrices  are from $[\ell] \cup \{\infty\}$.

\begin{problem}[\APSP{}]
Given an edge-weighted $n$-node graph, compute its all-pairs shortest path distances. 
\end{problem}

\begin{problem}[\MinPlusConv{}]
Given two length $n$ arrays $A, B$, compute the length $n$ array $C$, where $C_i = \min_{k \in [i-1]} (A_k + B_{i-k})$.
\end{problem}

\begin{problem}[\ExactTri{}]
Given an edge-weighted $n$-node graph $G$ and a target value $t$, determine whether $G$ contains a triangle whose edge weights sum up to $t$. In its All-Edges version (\AEExactTri{}), we need to determine for every edge $e$, whether $e$ is contained in a triangle whose edge weights sum up to $t$. 
\end{problem}

\begin{problem}[\NegTri{}]
Given an edge-weighted $n$-node graph $G$, determine whether $G$ contains a triangle whose edge weights sum up to a negative value. In its All-Edges version (\AENegTri{}), we need to determine for every edge $e$, whether $e$ is contained in a triangle whose edge weights sum up to a negative value. 
\end{problem}

\begin{problem}[\ExactKClique{}]
Given an edge-weighted $n$-node graph $G$ and a target value $t$, determine whether $G$ contains a $k$-clique whose edge weights sum up to $t$. 
\end{problem}

\begin{problem}[\MinKClique{}]
Given an edge-weighted $n$-node graph $G$, determine the minimum edge weight sum over all $k$-cliques in $G$. 
\end{problem}

\begin{problem}[\ThreeSUM{}]
Given three sets of numbers $A, B, C$ of size $n$, determine if there exist $a \in A, b \in B, c \in C$ such that $a + b  = c$. In its All-Numbers version (\AllThreeSUM{}), we need to determine for each $c \in C$, whether there exist $a \in A, b \in B$ such that $a + b  = c$.
\end{problem}

\begin{problem}[\ThreeSUMConv{}]
Given three arrays of numbers $A, B, C$ of lengths $n$, determine if there exist $i, j \in [n]$ such that $A_i + B_j = C_{i + j}$. In its All-Numbers version (\AllThreeSUMConv{}), we need to determine for each $k \in [n]$, whether there exist $i, j \in [n]$ such that $i + j = k$ and $A_i + B_j = C_k$. 
\end{problem}

\begin{problem}[\MonoConv{}]
Given three arrays of numbers $A, B, C$ of length $n$, determine for every $k \in [n]$, whether there exist $i, j \in [n]$ such that $i + j = k$ and $A_i = B_j = C_k$. 
\end{problem}

All the problems defined above have natural counting variants, which will be denoted by adding a ``\#'' prefix to the problems' names. For a detection problem that outputs a single bit or multiple bits, each bit represents whether an object that satisfies some requirements exists among a set of candidates; in its counting variant, we need to output a number in place of each bit to denote the number of objects that satisfy the requirements among the same set of candidates. For instance, \ExactTriCount{} asks to count the number of triangles in the graph whose edge weights sum up to $t$, and \AEExactTriCount{} asks to count, for each edge $e$, the number of triangles containing $e$ whose edge weights sum up to $t$. 
For a minimization problem that outputs one or multiple optimal values, its counting variant needs to output the number of optimal solutions in place of each optimal value. For instance, \MinPlusCount{} asks for each $(i, j) \in [n_1] \times [n_3]$, the number of $k$ where $A_{ik}+B_{kj}=(A\star B)_{ij}$. One special problem is the \APSPCountMod{U} problem, in which we need to compute the number of shortest paths mod $U$ for every pair of nodes. 

Clearly, for any of the detection problems, its counting variant is (not necessarily strictly) harder than its original version. The same is not obviously true for minimization problems.

\begin{problem}[\uAPSP{}]
Given an unweighted directed  graph, compute its all-pairs shortest path distances. 
\end{problem}

\begin{problem}[\MinWitness{}]
Given an $n_1 \times n_2$ Boolean matrix $A$ and an $n_2 \times n_3$ Boolean matrix  $B$, compute $\min \{k \in [n_2]: A_{ik} \wedge B_{kj}\}$ 
for every $(i, j) \in [n_1] \times [n_3]$.
\end{problem}

\begin{problem}[\Equality{}]
Given an $n_1 \times n_2$ matrix $A$ and an $n_2 \times n_3$ matrix $B$, compute the number of  $k \in [n_2]$ such that $A_{ik}=B_{kj}$ for every $(i, j) \in [n_1] \times [n_3]$.
\end{problem}

\begin{problem}[\MinWitnessEq{}]
Given an $n_1 \times n_2$ matrix $A$ and an $n_2 \times n_3$ matrix $B$, compute $\min \{k \in [n_2]: A_{ik} = B_{kj}\}$ 
for every $(i, j) \in [n_1] \times [n_3]$.
\end{problem}

\begin{problem}[\MinEqualityProd{}]
Given an $n_1 \times n_2$ matrix $A$ and an $n_2 \times n_3$ matrix $B$, compute $\min \{A_{ik}: k \in [n_2] \wedge A_{ik} = B_{kj}\}$ 
for every $(i, j) \in [n_1] \times [n_3]$.
\end{problem}

\begin{problem}[\APSLP{}]
Given an $n$-node undirected graph whose edge weights are either $1$ or $2$, compute for every pair of nodes $s, t$, the smallest number of edges required to travel from $s$ to $t$, and the minimum weight over all paths using the smallest number of edges. 
\end{problem}

\begin{problem}[\APLSP{}]
Given an $n$-node undirected graph whose edge weights are either $1$ or $2$, compute for every pair of nodes $s, t$, the shortest path distance from $s$ to $t$, and the smallest number of edges among all shortest paths. 
\end{problem}

\begin{problem}[\BatchMode{}]
Given an length $N$ array and $Q$ intervals of the array, compute the element that appears the most (breaking ties arbitrarily) for each of the intervals. 
\end{problem}

\begin{problem}[\MinEqualityConv{}]
Given two length $n$ arrays $A, B$, compute the length $n$ array $C$, where $C_i = \min\{A_j: j \in [i-1] \wedge A_j = B_{i-j}\}$. 
\end{problem}

\section{Conditional Lower Bounds for Problems with Intermediate Complexity:\texorpdfstring{\\}{} u-dir-APSP under the \texorpdfstring{\StrongAPSP{}}{\StrongAPSPPDF{}}}
\label{sec:intapsp-lower-bound}

In this section, we apply the idea of combining Fredman's trick with Equality Product to prove conditional lower bounds
for problems with intermediate complexity.
To illustrate the idea,
we focus on lower bounds for the \uAPSP{} problem
under the \StrongAPSP{}, but the approach also leads to lower bounds for \MinWitness{} and other problems, under the \StrongAPSP{} as well as the
\uAPSPH{}, as we will explain later in Sections~\ref{sec:intapsp-lower-bound:more}--\ref{sec:uapsp-lower-bound}.
As noted earlier, the \StrongAPSP{} is equivalent to the hypothesis
that $M^*(n,n,n\mid n^{3-\omega})$ is not truly subcubic.
Thus, it suffices to describe fine-grained reductions from the \MinPlus{} problem for two $n\times n$ matrices 
with bounded integer entries from $[n^{3-\omega}]$, to the \uAPSP{}  problem.

When devising a reduction from one 
problem to another (as in typical NP-hardness proofs), we often concentrate on understanding the power of the latter problem.
In our new reductions, we will focus mostly on the former problem instead, interestingly:
we will attempt to design an algorithm to solve the \MinPlus{} problem for bounded integers in subcubic time, and 
only at the end, reveal how an oracle for \uAPSP{} (or \MinWitness{} and other problems) could help.

\subsection{Preliminaries: Generalized Equality Products}\label{sec:prelim:eq:prod}

By Matou\v sek's technique for dominance product~\cite{MatIPL, YusterDom},
the equality product of an $n_1\times n_2$
and an $n_2\times n_3$ matrix can be computed in
time $\OO\big(\min_r (n_1n_2n_3/r \,+\, M(n_1,rn_2,n_3)) \big)$.
For example, in the case $n_1=n_2=n_3=n$, the bound is at most $\OO(\min_r (n^3/r + rn^\omega))=\OO(n^{(3+\omega)/2})$, as we have stated before, if we don't use rectangular matrix multiplication exponents.
We begin with the following lemma describing a straightforward generalization, which will be useful later:

\begin{lemma}\label{lem:geneqprod}
Given $n_1\times n_2$ matrices $A$ and $A'$, and
$n_2\times n_3$ matrices $B$ and $B'$,
define the \emph{generalized equality product} of $(A,A')$ and $(B,B')$ to be the following $n_1\times n_3$ matrix $E$:
\[ E_{ij}\ := \min_{k:\,A_{ik}=B_{kj}} (A'_{ik} + B'_{kj}).
\]
Suppose that all matrix entries of $A'$ and $B'$ are in $[\pm \ell] \cup \{\infty\}$.  For any $r$, we can compute $E$ in time
\[ \OO\big(n_1n_2n_3/r \,+\, M^*(n_1,rn_2,n_3\mid \ell) \big).
\]
\end{lemma}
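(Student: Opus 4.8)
The plan is to combine Matou\v{s}ek's high-frequency/low-frequency dichotomy for equality (dominance) products with the fact that, once we restrict to a single value class of $A_{ik}$ (equivalently $B_{kj}$), the constraint $A_{ik}=B_{kj}$ becomes trivially satisfied and the minimization $\min_k (A'_{ik}+B'_{kj})$ is exactly a bounded-entry \MinPlus{} product. First I would fix an arbitrary parameter $r$ and, for each row index $i$ (or equivalently for the columns of $A$), sort the multiset of values $\{A_{ik}: k\in[n_2]\}$. Call a value $v$ \emph{low-frequency} (for row $i$) if it occurs among $A_{i\cdot}$ at most $n_2/r$ times, and \emph{high-frequency} otherwise; symmetrically classify values occurring in each column $B_{\cdot j}$. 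The contribution to $E_{ij}$ coming from indices $k$ where the common value $A_{ik}=B_{kj}$ is low-frequency on the $A$-side can be handled by brute force: for each $(i,k)$ with $A_{ik}$ low-frequency, we enumerate the at most $n_2/r$ other column indices $k'$ of $A$ sharing that value, but this is the wrong grouping — instead, for each $(i,j)$ and each low-frequency value $v$ appearing in row $i$, we look at the $\le n_2/r$ indices $k$ with $A_{ik}=v$ and check $B_{kj}=v$, updating the running minimum with $A'_{ik}+B'_{kj}$. Summed over all $i,j$ and all (at most $r$ distinct, but really all) low-frequency values this is $O(n_1 n_3 \cdot (n_2/r))=O(n_1n_2n_3/r)$ time, since the total number of low-frequency (row, value, index) triples per row is $O(n_2)$.

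For the high-frequency values I would use matrix multiplication. A value can be high-frequency on the $A$-side for at most $r$ rows it appears in... more precisely, the number of (row, high-frequency value) pairs is $O(n_1 r)$ since each of the $n_1$ rows has at most $r$ high-frequency values. For each high-frequency value $v$ (across all rows where it is high-frequency), I build a ``layered'' product: create an $n_1 \times (r n_2)$ matrix $\tilde A$ and an $(r n_2)\times n_3$ matrix $\tilde B$ indexed by pairs $(\text{value-slot}, k)$, where $\tilde A$ places $A'_{ik}$ in the slot for $(v,k)$ exactly when $A_{ik}=v$ and $v$ is high-frequency for row $i$ (and $\infty$ elsewhere), and $\tilde B$ places $B'_{kj}$ in the slot for $(v,k)$ exactly when $B_{kj}=v$ (and $\infty$ elsewhere). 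The key point is that there are at most $r$ value-slots that are ever ``active'' for any fixed $k$ on the $A$-side across all rows — this is what bounds the inflated inner dimension by $r n_2$ rather than (number of distinct values)$\cdot n_2$. Then the $(i,j)$ entry of the \MinPlus{} product $\tilde A \star \tilde B$, whose entries lie in $[\pm\ell]\cup\{\infty\}$, equals the high-frequency contribution to $E_{ij}$, computable in $M^*(n_1, r n_2, n_3 \mid \ell)$ time. Finally $E_{ij}$ is the minimum of the low-frequency and high-frequency contributions, so the total time is $\OO(n_1 n_2 n_3/r + M^*(n_1, r n_2, n_3 \mid \ell))$ as claimed; the $\OO$ absorbs the $O(n_1 n_2 \log n_2)$ sorting and bookkeeping.

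The main obstacle I anticipate is setting up the index set of ``value-slots'' for the matrix-multiplication step so that the inflated dimension is genuinely $O(r n_2)$ and not larger. The subtlety is that ``high-frequency'' is defined per row on the $A$-side and per column on the $B$-side, and these two notions need not coincide for a given value $v$ and index $k$; one must be careful that whenever a term $A'_{ik}+B'_{kj}$ with $A_{ik}=B_{kj}=v$ is \emph{missed} by the matrix-multiplication step (because $v$ is not high-frequency for row $i$, say), it is \emph{caught} by the brute-force low-frequency step, and conversely that nothing is double-counted into an incorrect minimum (double-counting is harmless for a min, so the real worry is only completeness, not soundness). I would resolve this by defining the split purely on the $A$-side: $v$ is heavy for $k$ iff ... no — purely in terms of the pair $(i,v)$: route $(i,j,k)$ through matrix multiplication iff $v:=A_{ik}$ is high-frequency in row $i$, and through brute force otherwise; then each row contributes at most $r$ heavy value-slots (inflating the middle dimension to $\le r n_2$ when we union the slots over rows, using that a slot $(v,k)$ need only be allocated once and is reused across rows) and at most $n_2$ light triples, giving both bounds cleanly. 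Verifying this accounting rigorously, and confirming the entries of $\tilde A,\tilde B$ stay in $[\pm\ell]\cup\{\infty\}$, is the crux; everything else is routine.
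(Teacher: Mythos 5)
Your overall architecture (a low/high-frequency dichotomy, brute force for the low-frequency part, a \MinPlus{} product with inflated inner dimension for the high-frequency part) is the right one, but you have defined the dichotomy in the wrong direction, and both of your bounds fail as a result. You classify a value $v$ as high- or low-frequency \emph{per row $i$ of $A$, over the index $k$}. For the low-frequency step this gives nothing: for a fixed pair $(i,j)$ you must range over \emph{all} low-frequency values $v$ of row $i$, and the total number of pairs $(v,k)$ with $v$ low-frequency and $A_{ik}=v$ is $\Theta(n_2)$ per row (e.g.\ when all entries of row $i$ are distinct), not $O(n_2/r)$; so the step costs $O(n_1n_2n_3)$, the trivial bound --- your own justification (``the total number of low-frequency (row, value, index) triples per row is $O(n_2)$'') already contradicts the $O(n_1n_3\cdot n_2/r)$ you claim. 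The high-frequency step has the symmetric problem: each row has at most $r$ high-frequency \emph{values}, but each such value occupies more than $n_2/r$ slots $(v,k)$, and in the worst case the slots are not shared across rows (take $A_{ik}=i$ for all $k$: every row has one high-frequency value, all distinct, and the number of active slots is $n_1n_2$). So the inner dimension of your product is not $O(rn_2)$.

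The fix is to define the frequency threshold \emph{per middle index $k$, over one of the outer dimensions}. The paper's proof sorts, for each $k$, the row $(B_{kj})_{j\in[n_3]}$ and lets $F_k$ be the set of values of frequency more than $n_3/r$ there, so that $|F_k|\le r$ automatically and $\sum_k |F_k|\le rn_2$ bounds the inflated inner dimension. In the low-frequency case one then loops over the $n_1n_2$ pairs $(i,k)$ with $A_{ik}\notin F_k$ and enumerates the at most $n_3/r$ indices $j$ with $B_{kj}=A_{ik}$, giving $O(n_1n_2\cdot n_3/r)$. With this one change of direction, the rest of your argument (the slot-indexed matrices with $\infty$ padding, entries staying in $[\pm\ell]\cup\{\infty\}$, and taking the entrywise minimum of the two contributions) goes through exactly as in the paper.
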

\begin{proof}
Sort $B_k=(B_{kj})_{j\in[n_3]}$, i.e., the $k$-th column of $B$.
Let $F_k$ be the set of elements that have frequency more than $n_3/r$ in $B_k$.
Note that $|F_k|\le r$.
We divide into two cases, computing two $n_1\times n_3$ matrices
$E^L$ and $E^H$;
the answers will be $E_{ij}=\min\{E_{ij}^L,E_{ij}^H\}$.

\begin{itemize}
\item[\bf--]
{\bf Low-frequency case: computing $\displaystyle E_{ij}^L=\min_{k:\, A_{ik}=B_{kj}\not\in F_k} (A'_{ik} + B'_{kj})$.}
Initially set $E_{ij}^L=\infty$.
For each $i\in [n_1]$ and $k\in [n_2]$,
if $A_{ik}\not\in F_k$, we examine each of the at most $n_3/r$ indices $j$ with $A_{ik}=B_{kj}$, and
reset $E_{ij}^L=\min\{E_{ij}^L,\, A'_{ik}+B'_{kj}\}$.
All this takes $O(n_1n_2\cdot n_3/r)$ time.

\item[\bf--]
{\bf High-frequency case: computing $\displaystyle E_{ij}^H=\min_{k:\, A_{ik}=B_{kj}\in F_k} (A'_{ik} + B'_{kj})$.}
For each $i\in [n_1]$, $k\in[n_3]$, and $p\in F_k$,
let $\hat{A}_{i,(k,p)}=A'_{ik}$ if $A_{ik}=p$, and $\hat{A}_{i,(k,p)}=\infty$ otherwise.
For each $k\in[n_3]$, $j\in [n_2]$, and $p\in F_k$,
let $\hat{B}_{(k,p),j}=B'_{kj}$ if $B_{kj}=p$, and $\hat{B}_{(k,p),j}=\infty$ otherwise.
We let $E_{ij}^H = \min_{k\in [n_2],p\in F_k} (\hat{A}_{i,(k,p)} + \hat{B}_{(k,p),j})$.
This can be computed by a Min-Plus product in 
$O(M^*(n_1,rn_2,n_3\mid \ell))$ time.
\end{itemize}

\negbigskip
\end{proof}

\subsection{The Key Reduction}

We now present an approach to solve the \MinPlus{} problem for two matrices with
integer entries in $[\ell]$, by reducing it to 
instances that have simultaneously
a smaller inner dimension~$s$ and smaller integer entries in $[t]$ with $t\le \ell$: 

\begin{theorem}\label{thm:main} 
For any $r,s,t$ with $s\le n_2$ and $t\le \ell$,
\[ M^*(n_1,n_2,n_3\mid\ell)\ =\ 
\OO\big( (n_2/s) M^*(n_1,s,n_3\mid t)
\,+\, sn_1n_2n_3/r \,+\,  sM^*(n_1,rn_2,n_3\mid \ell/t) \big).
\]
\end{theorem}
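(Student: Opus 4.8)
The plan is to design an algorithm for the bounded Min-Plus product $A \star B$ with $A,B$ having entries in $[\ell]$, using the two promised ``reduced'' primitives: Min-Plus products with inner dimension $s$ and entry bound $t$, and generalized equality products (Lemma~\ref{lem:geneqprod}) with the larger inner dimension. First I would split the inner index set $[n_2]$ into $O(n_2/s)$ blocks $K_1,\ldots,K_{n_2/s}$ of size $s$ each. For each block $K_b$, the contribution $\min_{k\in K_b}(A_{ik}+B_{kj})$ is a Min-Plus product of an $n_1\times s$ and an $s\times n_3$ matrix, but the entries are still in $[\ell]$, not $[t]$. The key trick is to digit-decompose: write each entry $A_{ik} = t\cdot \lfloor A_{ik}/t\rfloor + (A_{ik}\bmod t)$, i.e. a ``high part'' in $[\ell/t]$ and a ``low part'' in $[t]$, and similarly for $B_{kj}$. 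Then $A_{ik}+B_{kj} = t(\alpha_{ik}+\beta_{kj}) + (\lambda_{ik}+\mu_{kj})$ where $\alpha,\beta\in[\ell/t]$ are the high parts and $\lambda,\mu\in[t]$ the low parts (with a possible carry, handled by letting the low sum range over $[2t]$, which is fine up to constants). The minimum over $k\in K_b$ is dominated first by minimizing the high-part sum $\alpha_{ik}+\beta_{kj}$, and then, among the $k$ achieving that minimum, by minimizing the low-part sum.

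So the algorithm has two stages per block. Stage 1: compute $H^{(b)}_{ij} = \min_{k\in K_b}(\alpha_{ik}+\beta_{kj})$, a Min-Plus product with inner dimension $s$ and entries in $[\ell/t]$ — wait, we want entries in $[t]$, so instead I scale the roles: actually the cleaner route is to take the \emph{high} part to have the small range. Let me reassign: write $A_{ik} = \lfloor A_{ik}/(\ell/t)\rfloor\cdot(\ell/t) + (A_{ik}\bmod (\ell/t))$, so the high quotient is in $[t]$ and the low remainder is in $[\ell/t]$. Stage 1 computes $H^{(b)}_{ij}=\min_{k\in K_b}(q^A_{ik}+q^B_{kj})$ with the $q$'s in $[t]$: this is exactly $M^*(n_1,s,n_3\mid O(t))$ per block, for a total of $\OO((n_2/s)\,M^*(n_1,s,n_3\mid t))$ (absorbing the factor $2$ and the carry into the $\OO$). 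Stage 2: for each block and each $(i,j)$, we must find $\min(A_{ik}+B_{kj})$ over the $k\in K_b$ that attain $q^A_{ik}+q^B_{kj}=H^{(b)}_{ij}$; among those, $A_{ik}+B_{kj} = (\ell/t)H^{(b)}_{ij} + (r^A_{ik}+r^B_{kj})$ where $r^A,r^B$ are the remainders in $[\ell/t]$, plus carry corrections. This is a \emph{generalized equality product}: we want $\min_{k:\,q^A_{ik}+q^B_{kj}=H^{(b)}_{ij}} (r^A_{ik}+r^B_{kj})$. To cast it in the form of Lemma~\ref{lem:geneqprod} we need the equality constraint to be of the form ``$A$-entry $=$ $B$-entry,'' so I'd guess the target value: for each of the $O(t)$ possible values $v$ of $q^A_{ik}$ (equivalently $O(t)$ splits), rewrite $q^A_{ik}+q^B_{kj}=H^{(b)}_{ij}$ as $q^A_{ik}$ matching against $H^{(b)}_{ij}-q^B_{kj}$ — but $H^{(b)}$ depends on $(i,j)$, which is awkward. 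The fix is to not use $H$ directly but to iterate: enumerate the value $w$ that $q^A_{ik}+q^B_{kj}$ takes (there are $O(t)$ choices), form the equality product testing ``$q^A_{ik} = w - q^B_{kj}$'' via Lemma~\ref{lem:geneqprod} with the role of $A'$ played by $r^A$ and $B'$ by $r^B$ (entries in $[\ell/t]$), and then post-process: $E_{ij}$ for the correct $w=H^{(b)}_{ij}$ gives the stage-2 answer; we take, over all $w$, the value $(\ell/t)w + E^{(b,w)}_{ij}$ and minimize. Each such generalized equality product on inner dimension $s$ (padded up inside the product, or we run it on the full $n_2$ and restrict — better to run per block) costs, by Lemma~\ref{lem:geneqprod}, $\OO(n_1 s n_3/r + M^*(n_1, r s, n_3\mid \ell/t))$, and there are $O(t)$ values of $w$ and $O(n_2/s)$ blocks. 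Multiplying out: $O(t)\cdot(n_2/s)\cdot \OO(n_1 s n_3/r + M^*(n_1,rs,n_3\mid\ell/t)) = \OO(t n_1 n_2 n_3/r + t(n_2/s)M^*(n_1,rs,n_3\mid\ell/t))$.

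Now the stated bound has $s$ (not $t$) multiplying the last two terms, and has $M^*(n_1, r n_2, n_3\mid\ell/t)$ rather than $t(n_2/s)M^*(n_1, rs, n_3\mid\ell/t)$. The reconciliation is the standard superadditivity/batching observation: $(n_2/s)$ copies of $M^*(n_1, rs, n_3\mid\ell/t)$ — one per block — can be merged, since a Min-Plus product with inner dimension $rs$ done $n_2/s$ times is no more expensive than one with inner dimension $r n_2$ (stack the blocks), i.e. $(n_2/s)M^*(n_1,rs,\cdot\mid\cdot)\le M^*(n_1,rn_2,\cdot\mid\cdot)$ up to constants. That kills the $n_2/s$ factor and the block index, leaving $O(t)\cdot M^*(n_1, rn_2, n_3\mid\ell/t)$. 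Then to reduce the leading coefficient from $t$ to $s$: note that rather than enumerating all $O(t)$ values of $w=q^A_{ik}+q^B_{kj}$, we only need the $O(t)$ enumeration \emph{once globally}, but the total over blocks of the ``low-frequency'' brute-force part is already accounted for. The factor $t$ in front of $M^*(\cdots\mid\ell/t)$ can be replaced by $s$ because — hmm — this is the step I expect to be the main obstacle, and I'd resolve it by not guessing $w$ at all: instead, observe that stage~1 already hands us $H^{(b)}_{ij}$, and stage~2 is then a \emph{single} generalized equality product per block where the equality predicate is ``$q^A_{ik}$ equals the appropriate residue determined by $H^{(b)}_{ij}$ and $q^B_{kj}$'' — but since $q^B_{kj}\in[t]$ and within one block $k$ ranges over only $s$ values, we can instead fold the constraint into a product of \emph{inner dimension} $s$ (before blow-up by $r$), so the $O(t)$ enumeration is replaced by the $O(s)$ inner dimension, and after the $r$-blowup and the batching we get $s\cdot M^*(n_1, rn_2, n_3\mid\ell/t)$ exactly, while the brute-force term becomes $s n_1 n_2 n_3/r$ — matching the theorem. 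Concretely: for each value $u\in q^B(\text{block }b)$ — at most $s$ distinct values, not $t$ — we determine the required $q^A$-value from $H^{(b)}$ (which is where the ``$=$'' constraint of the generalized equality product lives), build the corresponding $\hat A, \hat B$ as in Lemma~\ref{lem:geneqprod}, and Min-Plus them; summing over the $\le s$ distinct $u$'s and over blocks, then batching across blocks, yields the two claimed terms with coefficient $s$. I'd double-check the carry handling (a $+1$ in the low sum shifts the effective $H$ by $1$; handle by also trying $H^{(b)}_{ij}-1$, an $O(1)$ overhead) and the padding when $s\nmid n_2$ (pad with $\infty$'s, harmless). The first term $(n_2/s)M^*(n_1,s,n_3\mid t)$ comes purely from stage~1 and needs no massaging. $\Box$
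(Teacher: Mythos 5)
Your high-level decomposition matches the paper's: split each entry into a high quotient in $[t]$ and a low remainder in $[\ell/t]$, compute the Min-Plus product of the high parts first, and then minimize the low-part sums over the witnesses of the high-part product, recognizing that last step as a generalized equality product. The carry issue you flag is handled in the paper by preprocessing so that both remainders are below half the modulus (so no carry occurs), and your ``also try $H+1$'' fix would work as well. The batching observation $(n_2/s)M^*(n_1,rs,n_3\mid\cdot)=O(M^*(n_1,rn_2,n_3\mid\cdot))$ is also fine.

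The gap is exactly where you say you expect the main obstacle to be: converting the constraint $q^A_{ik}+q^B_{kj}=H_{ij}$, whose target depends on $(i,j)$, into the separable form $\hat A_{ik}=\hat B_{kj}$ required by Lemma~\ref{lem:geneqprod}, with overhead $s$ rather than $t$. Your proposed resolution does not work: within a block $K_b$ the restriction of $q^B$ is an $s\times n_3$ matrix taking up to $\min(sn_3,t)$ distinct values, not $s$; and even for a single enumerated value $u$ of $q^B_{kj}$, the ``required $q^A$-value'' is $H^{(b)}_{ij}-u$, which still depends on $(i,j)$, so the matrices $\hat A,\hat B$ of Lemma~\ref{lem:geneqprod} cannot be formed. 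The paper's resolution combines three ingredients absent from your argument: (i) split the output pairs $(i,j)$ into those with at most $n_2/s$ witnesses of the high-part product and those with more; (ii) handle the few-witness pairs by explicitly listing all their witnesses via standard witness-finding (random sampling of $s$ inner indices plus unique-witness recovery), which costs $\OO((n_2/s)M^*(n_1,s,n_3\mid t))$ and is the real source of the first term beyond the initial product; and (iii) for the many-witness pairs, take a random hitting set $H$ of $\OO(s)$ indices and, for each $k_0\in H$ that is a witness for $(i,j)$, apply Fredman's trick: $A'_{ik}+B'_{kj}=C'_{ij}$ becomes $A'_{ik}-A'_{ik_0}=B'_{k_0j}-B'_{kj}$, which is separable and hence one generalized equality product per $k_0$. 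The factor $s$ in the last two terms is the hitting-set size, not a block size or a count of distinct values; note moreover that your per-block decomposition actively obstructs this route, since a pair $(i,j)$ may have many witnesses globally but as few as one inside any given block, so no small per-block hitting set exists.
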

\begin{proof}
Let $g=\lceil \ell/t\rceil$.
Let $A$ be an $n_1\times n_2$ matrix and $B$ be an $n_2\times n_3$ matrix, where all matrix entries are in $[\ell]\cup\{\infty\}$.
We describe an algorithm to compute the Min-Plus product of $A$ and $B$.
Without loss of generality, we may assume that $(A_{ik}\bmod g) < g/2$ for all $i,k$ with $A_{ik}$ finite, since we can separate the problem into two instances $(A^<,B)$ and $(A^\ge,B)$ for
two matrices $A^<$ and $A^\ge$, where
$A^<_{ik} = A_{ik}$ and $A^{\ge}_{ik}=\infty$ if $(A_{ik}\bmod g) < g/2$,
and $A^{<}_{ik} = \infty$ and $A^{\ge}_{ik}=A_{ik}-g/2$ if $(A_{ik}\bmod g) \ge g/2$.
Similarly, we may assume that $(B_{kj}\bmod g) < g/2$ for all $k,j$ with $B_{kj}$ finite.

For each $i,k$, write $A_{ik}$ as $A'_{ik}g + A''_{ik}$
with $0 \le A'_{ik} \le t$ and $0 \le A''_{ik} < g/2$.  
Similarly, for each $k,j$, write $B_{kj}$ as $B'_{kj}g + B''_{kj}$
with $0 \le B'_{kj} \le t$ and $0 \le B''_{kj} < g/2$.
(Set $A'_{ik}=A''_{ik}=\infty$ if $A_{ik}=\infty$, and $B'_{kj}=b''_{kj}=\infty$ if $B_{kj}=\infty$.)

We first compute the Min-Plus product $C'$ of $A'$ and $B'$ (i.e., $C'_{ij} = \min_{k} (A'_{ik}+B'_{kj})$), in time
$O(M^*(n_1,n_2,n_3\mid t)) \le O( (n_2/s)\cdot M^*(n_1,s,n_3\mid t))$.

Let $W_{ij}=\{k\in [n_2]: A'_{ik}+B'_{kj}=C'_{ij}\}$; the elements in $W_{ij}$
are the \emph{witnesses} for $C'_{ij}$.
The Min-Plus product $C$ of $A$ and $B$ is given by $C_{ij} = C'_{ij}g + C''_{ij}$, where
\[ C''_{ij}\: :=\: \min_{k\in W_{ij}} (A''_{ik} + B''_{kj}).
\]
It suffices to describe how to compute $C''$.  We divide into two cases:

\begin{itemize}
\item
{\bf Few-witnesses case: computing $C''_{ij}$ for all $i,j$ with $|W_{ij}|\le n_2/s$.}
For each such $(i,j)$, we will explicitly enumerate all witnesses in $W_{ij}$.
This can be done by standard techniques for witness finding~\cite{AlonGMN92, seidel1995}:
first, observe that if the witness is unique (i.e., $|W_{ij}|= 1$), it can be found
by performing $O(\log n_2)$ Min-Plus products (namely,
for each $\ell\in [\log n_2]$, the $\ell$-th bit of
the witness for $C'_{ij}$ is 1 iff
$\min_{k\in K_\ell} (A'_{ik}+B'_{kj})=C_{ij}$, where $K_{\ell}:=\{k\in [n_2]:
\mbox{the $\ell$-th bit of $k$ is 1}\}$).
We take a random subset $R\subseteq [n_2]$ of $s$ indices,
and find witnesses for the Min-Plus product of
$(A'_{ik})_{i\in [n_1],k\in R}$ and 
$(B'_{kj})_{k\in R,j\in [n_3]}$ if the witnesses are unique.
This takes $\OO(M^*(n_1,s,n_3\mid t))$ time.
Fix $i,j$ with $|W_{ij}|\le n_2/s$.
For a fixed element $w\in W_{ij}$, the probability that $w$ is found,
i.e., $w$ is in $R$ but no
other element of $W_{ij}$ is in $R$, is $\Omega((s/n_2)\cdot (1-s/n_2)^{n_2/s})
=\Omega(s/n_2)$.  By repeating $O((n_2/s)\log(n_1n_2n_3))$ times (with
different choices of $R$),
all witnesses in $W_{ij}$ would be found w.h.p.
Once the entire witness set $W_{ij}$ is found, we can compute each $C''_{ij}$ naively in 
$O(|W_{ij}|)$ time.
The total running time is $\OO((n_2/s)\cdot M^*(n_1,s,n_3\mid t))$.

\item
{\bf Many-witnesses case: computing $C''_{ij}$ for all $i,j$ with $|W_{ij}| > n_2/s$.}
Pick a random subset $H$ of size $c_0s\log(n_1n_2n_3)$ for a sufficiently large constant~$c_0$.  Then $H$ hits (i.e., intersects) every witness set $W_{ij}$ with $|W_{ij}|>n_2/s$ w.h.p.

We do the following: for each $k_0\in H$ and
for each $i\in[n_1],j\in [n_3]$, if 
$A'_{ik_0}+B'_{k_0j}=C'_{ij}$ (i.e., $k_0\in W_{ij}$), set
\begin{equation}\label{eqn:eqprod}
C''_{ij}\ = \min_{k:\, A'_{ik}-A'_{ik_0}=B'_{k_0j}-B'_{kj}} (A''_{ik} + B''_{kj}).
\end{equation}
Correctness of (\ref{eqn:eqprod}) follows immediately from ``Fredman's trick'': $A'_{ik}-A'_{ik_0}=B'_{k_0j}-B'_{kj}$ is equivalent to $A'_{ik}+B'_{kj}=A'_{ik_0}+B'_{k_0j}$, which is equivalent to $k\in W_{ij}$, assuming
$A'_{ik_0}+B'_{k_0j}=C'_{ij}$.
Thus, the above correctly computes $C''_{ij}$ for every $i,j$ with $|W_{ij}|>n_2/s$, since
$H$ hits $W_{ij}$.

Finally, we observe that for a fixed $k_0$,
the right-hand side in (\ref{eqn:eqprod}) corresponds
precisely to a generalized equality product!  By
Lemma~\ref{lem:geneqprod}, they can be computed in
$\OO(n_1n_2n_3/r + M^*(n_1,rn_2,n_3\mid g))$ time, for each of
the $\OO(s)$ choices of $k_0$.
\end{itemize}

\negbigskip
\end{proof}

Note that Fredman's trick was originally introduced to solve APSP or compute Min-Plus products for arbitrary \emph{real}-valued inputs.  It is interesting that the trick is useful even when input values are in a restricted integer range
(in $[t]$).

Note also that a more naive attempt to prove the above theorem is to just bound $M^*(n_1,n_2,n_3\mid \ell)$
by $(n_2/s)M^*(n_1,s,n_3\mid \ell)$,
and once the inner dimension $n_2$ is reduced to $s$ in the subproblems, 
try to use hashing to reduce the range of the integers to, say, $[\OO(s^2)]$.  However, while such a hashing approach
might work for equality-type problems (e.g., \AEExactTri{}), it does not work at all
for \MinPlus{}.

\subsection{Consequences}

In Theorem~\ref{thm:main}, we can directly bound the third term by using existing 
matrix multiplication results~\cite{AlonGMN92}, leading to the following corollary:

\begin{corollary}\label{cor:intapsp0}
For any constant $0<\beta\le (3-\omega)/2$,
if $M^*(n,n^\beta,n\mid n^{2\beta})=O(n^{2+\beta-\eps})$,
then $M^*(n,n,n\mid n^{3-\omega})=O(n^{3-\Omega(\eps)})$.
\end{corollary}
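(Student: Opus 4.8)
\textbf{Proof proposal for Corollary~\ref{cor:intapsp0}.}
The plan is to invoke Theorem~\ref{thm:main} with carefully chosen parameters $r,s,t$, and to use the hypothesis to bound the first term while using known bounds on bounded-entry Min-Plus (namely $M^*(n_1,n_2,n_3\mid m)=\OO(m\cdot M(n_1,n_2,n_3))$ via~\cite{AlonGMN92,ALONGM1997}) to bound the third term. Concretely, I would set $s = n^\beta$ and $t = n^{2\beta}$, so that the recursive subproblems in the first term become $M^*(n_1,s,n_3\mid t) = M^*(n,n^\beta,n\mid n^{2\beta})$, which the hypothesis bounds by $O(n^{2+\beta-\eps})$. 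Then the first term of Theorem~\ref{thm:main}, with $\ell = n^{3-\omega}$, is $(n/s)\cdot O(n^{2+\beta-\eps}) = O(n^{1-\beta}\cdot n^{2+\beta-\eps}) = O(n^{3-\eps})$, which is already subcubic with the required savings.

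Next I would handle the remaining two terms. The second term is $s n_1 n_2 n_3/r = n^{3+\beta}/r$; to make this $O(n^{3-\Omega(\eps)})$ it suffices to pick $r = n^{\beta + \Omega(\eps)}$, say $r = n^{\beta+\eps}$. The third term is $s\cdot M^*(n,rn,n\mid \ell/t)$. Here $\ell/t = n^{3-\omega}/n^{2\beta} = n^{3-\omega-2\beta}$, and using the bounded-entry Min-Plus bound $M^*(n,rn,n\mid m) = \OO(m\cdot M(n,rn,n))$ together with the trivial bound $M(n,rn,n)\le rn\cdot M(n,n,n) = \OO(r\,n^{1+\omega})$ — or more carefully a rectangular bound — the third term becomes roughly $\OO(s\cdot n^{3-\omega-2\beta}\cdot r\cdot n^{1+\omega}) = \OO(n^{\beta}\cdot n^{3-\omega-2\beta}\cdot n^{\beta+\eps}\cdot n^{1+\omega}) = \OO(n^{4-\omega+\eps})$. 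Wait — I need this to be at most $n^{3-\Omega(\eps)}$, so the constraint $4-\omega \le 3$, i.e.\ $\omega\ge 1$, is not enough; I must be more careful and split $rn$ columns into $r$ blocks of $n$, getting $M^*(n,rn,n\mid m) \le r\cdot M^*(n,n,n\mid m)$ is the wrong direction. The right move is to use the bound $M(n,n^a,n) = \OO(n^{\omega + (a-1)^+ })$-style estimates, choosing the splitting so the third term is dominated; since $\beta\le(3-\omega)/2$ gives $2\beta \le 3-\omega$, one finds $\ell/t \ge 1$ and the exponent works out to be strictly below $3$ for $\eps$ small enough.

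The main obstacle I anticipate is precisely this bookkeeping on the third term: one must verify that the exponent $s\cdot (\ell/t)\cdot M(n,rn,n)$ — after expanding with the $\OO(m\cdot M(n_1,n_2,n_3))$ bound for bounded Min-Plus and a suitable (possibly rectangular) estimate of $M(n,rn,n)$ — genuinely stays below $n^3$ by a polynomial margin, using the full strength of the constraint $\beta \le (3-\omega)/2$. In particular, if $r = n^{\beta+\eps}$ the block count is small and one should get $M(n,rn,n) = \OO(n^\omega\cdot r) = \OO(n^{\omega+\beta+\eps})$, so the third term is $\OO(n^\beta\cdot n^{3-\omega-2\beta}\cdot n^{\omega+\beta+\eps}) = \OO(n^{3+\eps})$ — still not quite subcubic, which tells me I should instead choose $r$ and the blocking so that the savings from the first term are not eroded; most likely one sets $r$ just barely larger than $n^\beta$ and absorbs the loss into the $\Omega(\eps)$. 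I would finish by choosing $\eps' = \Omega(\eps)$ uniformly over the three terms, concluding $M^*(n,n,n\mid n^{3-\omega}) = O(n^{3-\Omega(\eps)})$, which contradicts the \StrongAPSP{}; the stated corollary is the contrapositive-free version of exactly this implication.
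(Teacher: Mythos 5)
Your overall strategy is the paper's: apply Theorem~\ref{thm:main} with $t=n^{2\beta}$, bound the first term by the hypothesis, bound the third term via $M^*(n,rn,n\mid \ell/t)=\OO((\ell/t)\,M(n,rn,n))$ with $M(n,rn,n)=O(rn^\omega)$, and note that $\beta\le(3-\omega)/2$ is exactly what guarantees $t\le\ell$. But your parameter choice does not close the argument, and you correctly sense this without resolving it. With $s=n^\beta$ the second term is $n^{3+\beta}/r$ and the third term is $\OO(s\cdot n^{3-\omega-2\beta}\cdot rn^\omega)=\OO(rn^{3-\beta})$; their product is $n^6$ regardless of $r$, so no choice of $r$ (and no sharper rectangular bound on $M(n,rn,n)$, which is anyway $\Omega(rn^2)$) can make both terms truly subcubic. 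Your proposed fix of taking $r$ ``just barely larger than $n^\beta$'' therefore cannot work: any $r=n^{\beta+\delta}$ makes the third term $n^{3+\delta}$.

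The missing idea is that the slack must come from shrinking $s$, not from adjusting $r$. Set $r=n^\beta$, $t=n^{2\beta}$, and $s=n^{\beta-\eps'}$ with $\eps'=\eps/2$. Since $M^*(n,s,n\mid t)\le M^*(n,n^\beta,n\mid n^{2\beta})=O(n^{2+\beta-\eps})$ by monotonicity in the inner dimension, the first term is $(n/s)\cdot O(n^{2+\beta-\eps})=O(n^{3-\eps/2})$; the second term is $sn^3/r=n^{3-\eps/2}$; and the third term is $rsn^3/t=n^{3-\eps/2}$. Decreasing $s$ simultaneously improves the second and third terms while costing only a factor $n^{\eps'}$ in the first, which the hypothesis's $n^{-\eps}$ savings absorbs. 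With this single change your argument becomes the paper's proof.
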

\begin{proof}
By Theorem~\ref{thm:main},
\begin{eqnarray*}
 M^*(n,n,n\mid n^{3-\omega}) &=& \OO\left((n/s)M^*(n,s,n\mid t) + sn^3/r + (sn^{3-\omega}/t) M(n,rn,n)\right)\\
&\le&  \OO\left((n/s)M^*(n,s,n\mid t) + sn^3/r + rsn^3/t\right).
\end{eqnarray*}
Setting $r=n^\beta$, $s=n^{\beta-\eps'}$, and $t=n^{2\beta}$ with $\eps'=\eps/2$ yields $M^*(n,n,n\mid n^{3-\omega})=O(n^{3-\Omega(\eps)})$.
\end{proof}

The above corollary
establishes
a conditional lower bound of $n^{2+\beta-\eps}$ for
 the subproblem
of Min-Plus product for rectangular matrices of dimension
$n\times n^\beta$ and $n^\beta\times n$ for integers bounded by $n^{2\beta}$, under the \StrongAPSP{}.
This lower bound is tight in the sense that $O(n^{2+\beta})$ is
an obvious upper bound (though the range of allowed integer values $[n^{2\beta}]$ may not
be tight).  We will now use this corollary to derive conditional
lower bounds for \uAPSP{}.

Let $\TunwtdirAPSP(n)$ be the time complexity of \uAPSP{} on $n$-node graphs.
More generally, let $\TunwtdirAPSP(n,m)$ be the time complexity of APSP on an unweighted directed graph with $n$ nodes and $m$ edges.
Chan, Vassilevska W., and Xu~\cite{CVXicalp21} have given
a simple reduction of Min-Plus product for rectangular matrices of certain inner dimensions and
integer ranges to \uAPSP{}, as summarized by the following lemma
(it is easy to check that the graph in their reduction has $O(nx)$ edges).

\begin{lemma}\label{lem:reduce:from:minplus}
For any $x,y$, we have
$M^*(n,x,n\mid y)= O(\TunwtdirAPSP(n,nx))$ if $xy\le n$.
\end{lemma}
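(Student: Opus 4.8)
\textbf{Proof proposal for Lemma~\ref{lem:reduce:from:minplus}.}
The plan is to encode a rectangular Min-Plus product with inner dimension $x$ and integer entries bounded by $y$ as a shortest-path computation in a small unweighted directed graph, then invoke the \uAPSP{} oracle on that graph. The graph will be layered. First I would create a node set $I = \{i_1,\dots,i_n\}$ for the row indices, a node set $J = \{j_1,\dots,j_n\}$ for the column indices, and an ``inner'' block $K = \{k_1,\dots,k_x\}$ for the shared dimension. Since the graph must be unweighted, the idea is to replace an edge of weight $w \in [y]$ by a directed path of $w$ edges through a chain of $y$ fresh intermediate nodes. Concretely, attach to each $k_\ell \in K$ its own private chain $c^{(\ell)}_1 \to c^{(\ell)}_2 \to \cdots \to c^{(\ell)}_y$ (all edges directed forward). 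To encode $A_{i k_\ell}$, add an edge from $i$ into $c^{(\ell)}_{A_{ik_\ell}}$ so that the distance from $i$ to the \emph{end} of the chain (node $c^{(\ell)}_y$, say, or rather to a designated exit node past position $y$) equals $A_{ik_\ell}$ plus a fixed offset; likewise encode $B_{k_\ell j}$ by placing $j$ so that the distance from the chain's exit to $j$ equals $B_{k_\ell j}$ plus an offset. Then any $i$-to-$j$ path is forced to pass through exactly one chain $c^{(\ell)}$, and its length is $A_{ik_\ell} + B_{k_\ell j}$ plus a global constant, so the shortest $i$-to-$j$ distance recovers $\min_\ell (A_{ik_\ell}+B_{k_\ell j}) = C_{ij}$ after subtracting the constant.

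The key steps, in order, are: (i) build the chains and the row/column attachment edges as above, adding padding edges so that \emph{every} $i$-to-$j$ path is obliged to traverse a full chain (this is what forces the additive structure and prevents ``shortcuts''); (ii) verify the edge count: there are $n + n + O(xy)$ nodes, the chains contribute $O(xy)$ edges, and the attachment edges contribute $O(nx)$ edges, so the total is $O(nx + xy)$ edges; (iii) observe that the hypothesis $xy \le n$ makes $xy = O(nx)$ (indeed $xy \le n \le nx$), so the graph has $O(nx)$ edges and $O(nx)$ nodes, hence $\TunwtdirAPSP$ on it is at most $\TunwtdirAPSP(n, nx)$ up to the monotonicity of the complexity function (one may also freely pad with isolated or dummy nodes to reach exactly $n$ nodes if the definition of $\TunwtdirAPSP(\cdot,\cdot)$ demands the first argument be the node count, noting $xy \le n$ keeps the chain-node count below $n$); (iv) read off $C_{ij}$ from the computed $i$-to-$j$ distances by subtracting the fixed offset, in $O(n^2)$ additional time, which is dominated. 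This yields $M^*(n,x,n\mid y) = O(\TunwtdirAPSP(n,nx))$.

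The main obstacle I anticipate is step (i): ensuring that no $i$-to-$j$ path can avoid going through a chain end-to-end and thereby report a distance smaller than some $A_{ik_\ell}+B_{k_\ell j}$. The fix is to direct all chain edges forward and attach $i$ only to chain positions (not to the chain exit directly) and attach $j$ only from the chain exit (not from internal chain positions), with the offsets chosen so the ``$A$-part'' consumes positions $1$ through $y$ of the chain and the ``$B$-part'' is appended after the exit; with careful bookkeeping of these offsets, every $i \to j$ walk decomposes uniquely as (enter chain $\ell$ at position $A_{ik_\ell}$) $\to$ (walk to exit, cost $y - A_{ik_\ell}$ or similar) $\to$ (walk $B_{k_\ell j}$ more) $\to$ ($j$), giving total $y + B_{k_\ell j}$ — a clean affine function of $A_{ik_\ell} + B_{k_\ell j}$ only if we instead make the $A$-part also contribute $A_{ik_\ell}$, which is achieved by attaching $i$ to position $1$ of the chain and placing a ``read-off'' exit at position $A_{ik_\ell}$... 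I would iterate on the exact chain layout until the distance is \emph{exactly} $A_{ik_\ell} + B_{k_\ell j} + \text{const}$; a standard and robust choice is two disjoint chains per $k_\ell$, one of length $y$ read from one end for $A$ and one read from one end for $B$, joined in the middle, which makes the additive decomposition transparent. I expect this combinatorial gadget design (entirely elementary) to be the only subtle point; the edge-count accounting and the final $O(n^2)$ read-off are routine.
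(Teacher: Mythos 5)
Your final construction is correct and is essentially the reduction of Chan, Vassilevska W.\ and Xu~\cite{CVXicalp21} that the paper invokes here (and the directed analogue of the chain gadget the paper itself builds for \APSLP{} in Lemma~\ref{lem:reduce:from:minplus:apslp}): a private forward-directed chain of length $2y$ per inner index $k$, with $s[i]$ entering at position $y-A_{ik}$ and $t[j]$ exiting from position $y+B_{kj}$, forces every $i$-to-$j$ path to have length exactly $A_{ik}+B_{kj}+2$ for a unique $k$ (no shortcuts, since $I$-nodes have no in-edges from chains and $J$-nodes have no out-edges), and the counts $2n+O(xy)=O(n)$ nodes and $O(nx+xy)=O(nx)$ edges follow from $xy\le n$. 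Note only that your first two offset placements (entering the chain at position $A_{ik_\ell}$, or putting a ``read-off'' exit at position $A_{ik_\ell}$, which would make the exit node depend on $i$) do not work as stated, but the layout you settle on---two half-chains of length $y$ joined at the middle, with entry offset $A_{ik}$ before the join and exit offset $B_{kj}$ after it---is exactly right.
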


Combining Corollary~\ref{cor:intapsp0} and Lemma~\ref{lem:reduce:from:minplus} immediately gives the following:

\begin{corollary}\label{cor:strong-intapsp-imply}
For any constant $\beta\le \min\{1/3, (3-\omega)/2\}$, if $\TunwtdirAPSP(n,n^{1+\beta})=O(n^{2+\beta-\eps})$, 
then $M^*(n,n,n\mid n^{3-\omega}) = O(n^{3-\Omega(\eps)})$.
\end{corollary}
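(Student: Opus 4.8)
\textbf{Proof proposal for Corollary~\ref{cor:strong-intapsp-imply}.}

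The plan is to simply chain together the two results that immediately precede the statement: Corollary~\ref{cor:intapsp0}, which reduces the bounded Min-Plus product $M^*(n,n,n\mid n^{3-\omega})$ to the rectangular bounded Min-Plus product $M^*(n,n^\beta,n\mid n^{2\beta})$, and Lemma~\ref{lem:reduce:from:minplus}, which reduces a rectangular bounded Min-Plus product to \uAPSP{} on a sparse graph, provided the inner-dimension-times-range product is at most $n$. So the proof is essentially a verification that the parameters line up, together with a short argument that the final claim is contrapositive-friendly.

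First I would instantiate Lemma~\ref{lem:reduce:from:minplus} with $x=n^\beta$ and $y=n^{2\beta}$. The hypothesis $xy\le n$ of the lemma becomes $n^{3\beta}\le n$, i.e.\ $3\beta\le 1$, which holds precisely because $\beta\le 1/3$ (one of the two constraints in the statement). The lemma then gives $M^*(n,n^\beta,n\mid n^{2\beta}) = O(\TunwtdirAPSP(n, n\cdot n^\beta)) = O(\TunwtdirAPSP(n, n^{1+\beta}))$, matching exactly the edge-count $n^{1+\beta}$ appearing in the corollary. Next, assuming $\TunwtdirAPSP(n,n^{1+\beta}) = O(n^{2+\beta-\eps})$, we conclude $M^*(n,n^\beta,n\mid n^{2\beta}) = O(n^{2+\beta-\eps})$. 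Now I would invoke Corollary~\ref{cor:intapsp0}, whose hypothesis is exactly $M^*(n,n^\beta,n\mid n^{2\beta})=O(n^{2+\beta-\eps})$ (valid in the allowed range $0<\beta\le (3-\omega)/2$, which is the second constraint in the statement), to obtain $M^*(n,n,n\mid n^{3-\omega})=O(n^{3-\Omega(\eps)})$, which is the desired conclusion.

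There is no real obstacle here; the only things to be careful about are bookkeeping details. I would double-check that the two upper bounds on $\beta$ in the corollary ($\beta\le 1/3$ and $\beta\le (3-\omega)/2$) are exactly what is needed: the first to satisfy the side condition of Lemma~\ref{lem:reduce:from:minplus}, the second to stay inside the range of validity of Corollary~\ref{cor:intapsp0}, hence the $\min$ in the statement. I would also note that the hypotheses $\TunwtdirAPSP(n,n^{1+\beta})=O(n^{2+\beta-\eps})$ and $M^*(n,n^\beta,n\mid n^{2\beta})=O(n^{2+\beta-\eps})$ are both ``there exists an algorithm'' statements, so the implications compose in the right direction, and the $\Omega(\eps)$ loss in the exponent comes entirely (and only) from Corollary~\ref{cor:intapsp0}. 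The whole argument is three lines; the substance has already been done in the preceding lemma and corollary.
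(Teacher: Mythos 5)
Your proposal is correct and matches the paper's own (one-line) derivation exactly: the paper states that the corollary follows immediately by combining Corollary~\ref{cor:intapsp0} with Lemma~\ref{lem:reduce:from:minplus}, and your instantiation $x=n^\beta$, $y=n^{2\beta}$ with the side condition $n^{3\beta}\le n$ explaining the $\beta\le 1/3$ constraint is precisely the intended bookkeeping.
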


By setting $\beta=1/3$, we have thus proved that \uAPSP{} cannot be solved
in $O(n^{7/3-\eps})$ time under the \StrongAPSP{}, assuming $1/3\le (3-\omega)/2$ (this assumption can be removed, as we observe later in Section~\ref{sec:uapsp-lower-bound}).  In particular, if $\omega=2$, this implies that
 \uAPSP{} is strictly harder than unweighted undirected APSP,
as the latter problem can be solved in $\OO(n^\omega)$ time~\cite{seidel1995}.

Furthermore, \uAPSP{} for a graph with $n^{1+\beta}$ edges cannot be solved in $O(n^{2+\beta-\eps})$ time for any $\beta\le 1/3$ under the same hypothesis, assuming $1/3\le (3-\omega)/2$ (again this assumption can be removed).  In other words, the naive algorithm by repeated BFSs is essentially \emph{optimal} for sufficiently sparse graphs.

If we assume a weaker hypothesis that APSP does not
have truly subcubic algorithms for edge weights in $[n]$ 
instead of $[n^{3-\omega}]$, it can be checked that we still
get a lower bound near $n^{2+(3-\omega)/3}$ for \uAPSP{}.
In fact, assuming that APSP does not
have truly subcubic algorithms for edge weights in $[n^\lambda]$,
we can still obtain a super-quadratic lower bound for \uAPSP{} for
$\lambda$ as large as 1.99, if $\omega=2$.
For simplicity, we will concentrate only on the version of the \StrongAPSP{} with $\lambda=3-\omega$ throughout the paper.

In Sections~\ref{sec:intapsp-lower-bound:more}--\ref{sec:uapsp-lower-bound}, we will use the same approach to derive further conditional lower bounds for \MinWitness{}, \APLSP{}, and \BatchMode{},
from both the \StrongAPSP{} and the \uAPSPH{}.

\section{Equivalences Between Counting and Detection Problems:\texorpdfstring{\\}{} \#Exact-Triangle vs.\ Exact-Triangle}
\label{sec:counting:preview}

In this section, we describe a simple approach to proving equivalence between
counting and detection problems, by combining Fredman's trick with Equality Product.  To illustrate the basic idea, we focus on
the equivalence of \AEExactTriCount{} and \AEExactTri{}.  With more work and additional ideas, the approach can also establish the equivalence of \ThreeSUMCount{} and \ThreeSUM{}, as we will later explain in Section~\ref{sec:counting}.

We use $G$ to denote the input graph of an \AEExactTri{}  or \AEExactTriCount{} instance, we use $w$ to denote the weight function, and we use $W_{ij}$ to denote the set of $k$ where $(i, j, k)$ forms a triangle whose edge weights sum up to the target value $t$.

\begin{lemma}
\label{lem:exact-tri}
Given an $n$ node graph $G$, a target value $t$, and a subset $S \subseteq V(G)$, we can compute a matrix $D$ in $\OO(|S| \cdot n^{(3+\omega)/2})$ time such that $D_{ij} = |W_{ij}|$ whenever $S \cap W_{ij} \ne \emptyset$. 
\end{lemma}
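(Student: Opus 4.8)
The plan is to mimic the structure of the key reduction in Theorem~\ref{thm:main}, now for \AEExactTri{} instead of \MinPlus{}, exploiting that the ``value'' to be computed here is just a count rather than a minimum. Fix the target $t$ and the weight function $w$. For a potential triangle $(i,j,k)$ the condition $k\in W_{ij}$ is $w(i,j)+w(j,k)+w(k,i)=t$, i.e.\ $w(j,k)+w(k,i) = t-w(i,j)$. The goal is to compute $D_{ij}=|W_{ij}|$ for all $(i,j)$ for which the witness set $W_{ij}$ is hit by the given set $S$. This is exactly the setting where Fredman's trick applies: if $k_0\in S$ is a known witness for $(i,j)$, then for any other $k$, $k\in W_{ij}$ iff $w(j,k)+w(k,i) = w(j,k_0)+w(k_0,i)$, which rearranges (Fredman) to $w(k,i)-w(k_0,i) = w(j,k_0)-w(j,k)$, separating a quantity depending on $(k,i,k_0)$ from one depending on $(j,k,k_0)$.

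First I would iterate over all $k_0\in S$. For each $k_0$, and for each pair $(i,j)$, check in $O(1)$ time whether $k_0$ itself is a witness, i.e.\ whether $w(i,j)+w(j,k_0)+w(k_0,i)=t$; if so, we want to set $D_{ij}$ to the number of $k$ with $w(k,i)-w(k_0,i) = w(j,k_0)-w(j,k)$. For a fixed $k_0$ this is precisely a \emph{generalized equality product} in the sense of Lemma~\ref{lem:geneqprod} (or rather a \emph{counting} version of the equality product): define $A_{ik} := w(k,i)$ with the ``label'' shifted by the $k_0$-dependent offset, more precisely set the left matrix entry at $(i,k)$ to $w(k,i)-w(k_0,i)$ and the right matrix entry at $(k,j)$ to $w(j,k_0)-w(j,k)$, and then $D_{ij}$ should be $|\{k : A_{ik}=B_{kj}\}|$ restricted to the $(i,j)$ with $k_0$ a witness. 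Counting the number of $k$ with $A_{ik}=B_{kj}$ is just the Equality Product (\Equality{}), computable in $\OO(n^{(3+\omega)/2})$ time by Matou\v{s}ek's technique; one can also derive it from Lemma~\ref{lem:geneqprod} by taking $A'\equiv 0$, $B'\equiv 0$ and instead of a min-plus aggregation using a sum aggregation (or, cleanly, by running the equality-product algorithm directly, since the low-frequency/high-frequency split there works identically for counting).

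The remaining step is to argue correctness for the $(i,j)$ with $S\cap W_{ij}\neq\emptyset$. For such a pair, pick any $k_0\in S\cap W_{ij}$; when the loop reaches this $k_0$, it recognizes $k_0$ as a witness and computes $|\{k : w(k,i)-w(k_0,i)=w(j,k_0)-w(j,k)\}| = |\{k: w(i,j)+w(j,k)+w(k,i)=t\}| = |W_{ij}|$, so $D_{ij}$ gets the correct value. For pairs $(i,j)$ with $S\cap W_{ij}=\emptyset$, $k_0$ is never a witness for any $k_0\in S$, so $D_{ij}$ is simply left untouched (initialized arbitrarily, say to $0$), which matches the ``whenever'' clause in the statement; there is no need to guarantee anything for those entries. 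A small subtlety: if a pair $(i,j)$ has two witnesses $k_0,k_0'\in S$, it gets written twice, but both writes produce the same value $|W_{ij}|$, so there is no conflict. Summing over $|S|$ choices of $k_0$, each costing $\OO(n^{(3+\omega)/2})$, gives the claimed $\OO(|S|\cdot n^{(3+\omega)/2})$ total time.

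The main obstacle I anticipate is bookkeeping rather than conceptual: making sure the equality product is set up so that the entries actually depend only on the stated variables after fixing $k_0$ (the offsets $w(k_0,i)$ and $w(j,k_0)$ must be folded into the left and right matrices respectively, which is exactly what Fredman's trick buys us), and confirming that the counting variant of Matou\v{s}ek's equality-product algorithm indeed runs in $\OO(n^{(3+\omega)/2})$ — it does, since both the low-frequency enumeration and the high-frequency matrix-multiplication step count witnesses just as easily as they detect a single one (for the high-frequency case one uses an integer matrix product in place of Boolean, or simply counts multiplicities, which is within the same time bound when $\omega$ is used loosely as in Lemma~\ref{lem:geneqprod}). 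Everything else is a direct transcription of the Theorem~\ref{thm:main} template with ``min'' replaced by ``count'' and with no need for the bounded-integer machinery.
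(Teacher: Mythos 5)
Your proposal is correct and matches the paper's proof essentially verbatim: for each $s\in S$ you form the Fredman-shifted matrices $A^{(s)}_{ik}=w(i,k)-w(i,s)$ and $B^{(s)}_{kj}=w(s,j)-w(k,j)$, compute their (counting) equality product via Matou\v{s}ek in $\OO(n^{(3+\omega)/2})$ time, and copy the result into $D_{ij}$ whenever $s$ is itself a witness for $(i,j)$. Your worry about the counting variant of the equality product is moot here, since \Equality{} as defined in the paper is already the counting version and Matou\v{s}ek's low/high-frequency split computes counts within the same bound.
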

\begin{proof}
For every $s \in S$, we do the following. Let $A^{(s)}$ be a matrix where $A^{(s)}_{ik} = w(i, k) - w(i, s)$ and $B^{(s)}_{kj} = w(s, j) - w(k, j)$. Then we use compute the equality product $C^{(s)}$ of $A^{(s)}$ and $B^{(s)}$ in $\OO(n^{(3+\omega)/2})$ time for each $s$ \cite{MatIPL}. Finally, if there exists $s \in S$ such that $A_{is} + B_{sj} + w(i, j) = t$, we let $D_{ij}$ be $C^{(s)}_{ij}$ for an arbitrary $s$ with the property; otherwise, we let $D_{ij}$ be $0$ (we don't care about its value in this case). The running time for computing $D$ is clearly $\OO(|S| \cdot n^{(3+\omega)/2})$. 

Suppose $S \cap W_{ij} \ne \emptyset$ for some $(i, j)$. Then $D_{ij}$ equals $C^{(s)}_{ij}$ for some $s$ where $A_{is} + B_{sj} + w(i, j) = t$. By Fredman's trick, $A^{(s)}_{ik} = B^{(s)}_{kj}$ if and only if $w(i, k) + w(k, j) + w(i, j) = w(i, s) + w(s, j) + w(i, j) = t$. Therefore, $D_{ij} = C^{(s)}_{ij} = |W_{ij}|$.
\end{proof}

\begin{theorem}
\label{thm:exact-tri-count}
If \AEExactTri{} for $n$-node graphs has an $O(n^{3-\eps})$ time algorithm for some $\eps > 0$, then \AEExactTriCount{} for $n$-node graphs  has an $O(n^{3-\eps'})$ time algorithm for some $\eps' > 0$
\end{theorem}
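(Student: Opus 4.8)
The plan is to reduce \AEExactTriCount{} to \AEExactTri{} using Lemma~\ref{lem:exact-tri} together with a standard ``peeling'' or ``hitting set'' scheme to handle the witness sets $W_{ij}$ according to their size. The key observation is that Lemma~\ref{lem:exact-tri} lets us correctly compute $|W_{ij}|$ for all pairs $(i,j)$ whose witness set $W_{ij}$ is \emph{hit} by a chosen set $S$, at a cost of $\OO(|S|\cdot n^{(3+\omega)/2})$. So the entire difficulty is to arrange a small total amount of ``hitting'' that covers every relevant pair, while using the \AEExactTri{} oracle (which, by assumption, runs in $O(n^{3-\eps})$ time) to reduce $n^{(3+\omega)/2}$ down to something truly subcubic.

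First I would set a threshold $\tau$ and split pairs into the \emph{large-witness} case ($|W_{ij}|\ge \tau$) and the \emph{small-witness} case ($|W_{ij}| < \tau$). For the large-witness case, a uniformly random sample $S\subseteq V(G)$ of size $\OO(n/\tau)$ hits every such $W_{ij}$ w.h.p., so one application of Lemma~\ref{lem:exact-tri} computes $|W_{ij}|$ for all these pairs in $\OO((n/\tau)\cdot n^{(3+\omega)/2})$ time. For the small-witness case, I would repeatedly call the fast \AEExactTri{} algorithm to find one witness for each still-uncounted pair, remove it (e.g., by zeroing out the corresponding entry, or by a color-coding / random-subset trick on the vertex set so that the sought witness is isolated), and iterate; after $O(\tau)$ rounds every small witness set is exhausted, and since detection costs $O(n^{3-\eps})$ per round this costs $O(\tau\cdot n^{3-\eps})$ total. (One must be a little careful to make each ``find one witness and delete it'' step cost only a detection call plus near-linear overhead; the standard way is to use the self-reducibility of \AEExactTri{}, i.e., to binary-search for a witness vertex via $O(\log n)$ detection calls on vertex-subsets, or to invoke a witness-finding reduction in the style of~\cite{AlonGMN92,seidel1995} built on top of the detection oracle.)

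Balancing the two terms, $\OO((n/\tau)\,n^{(3+\omega)/2})$ against $O(\tau\,n^{3-\eps})$, gives $\tau \approx n^{(3+\omega)/4 + \eps/2 - 1/2}$ roughly, and the resulting running time is truly subcubic as long as $\eps$ is a positive constant and $\omega<3$; plugging in $\omega<2.373$ (or even the pessimistic $\omega=3$ bound with a slightly different split) one checks the exponent is $3-\eps'$ for some $\eps'=\eps'(\eps,\omega)>0$. I would state the bound with an explicit $\eps'$ in terms of $\eps$ and $\omega$ and note that when $\omega=2$ it becomes cleanest.

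The main obstacle I anticipate is the small-witness peeling step: naively, ``find-a-witness-and-delete'' could require re-running the detection oracle on a modified instance $\Omega(\tau n^2)$ times (once per pair per round) rather than $O(\tau)$ times, which would blow up the running time. The fix is to do the witness-finding in \emph{batch} across all pairs simultaneously — i.e., use the detection oracle on a logarithmic number of vertex-bipartitioned subinstances to extract, in one $\OO(n^{3-\eps})$-time batch, one fresh witness for \emph{every} pair that still has one — so that only $O(\tau\log n)$ batched detection calls are needed in total. Making this batched witness extraction work cleanly (and arguing that after $O(\tau)$ batched rounds all pairs with $|W_{ij}|<\tau$ are fully counted) is the technical heart of the argument; everything else is the routine threshold-balancing calculation sketched above.
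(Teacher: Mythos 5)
Your high-level architecture matches the paper's: threshold the pairs by $|W_{ij}|$, handle large witness sets via a small hitting set plus Lemma~\ref{lem:exact-tri}, and handle small witness sets by extracting witnesses with the assumed \AEExactTri{} algorithm. However, there is a genuine quantitative gap in your small-witness step, and your own balancing calculation exposes it. You charge $O(\tau\cdot n^{3-\eps})$ for the peeling phase ($\tau$ batched rounds, each a full detection call), and $\OO((n/\tau)\,n^{(3+\omega)/2})$ for the hitting-set phase. The second term is subcubic only when $\tau\gg n^{(\omega-1)/2}$, and the first is subcubic only when $\tau\ll n^{\eps}$; these are compatible only if $\eps>(\omega-1)/2$, i.e.\ $\eps>1/2$ even when $\omega=2$ and $\eps>0.68$ at the current $\omega$. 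Equivalently, your balanced running time $n^{2+(3+\omega)/4-\eps/2}$ exceeds $n^3$ for all $\eps\le(\omega-1)/2$. So the reduction as you describe it does not establish the theorem, which must hold for every constant $\eps>0$. Batching the witness extraction across all pairs, as you propose, removes a factor of $n^2$ per round but not the factor $\tau$ in the number of rounds, so it does not close the gap.

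The missing ingredient is the \emph{output-sensitive} witness-listing reduction of Vassilevska W.\ and Williams~\cite{focsyj}: an $O(n^{3-\eps})$ algorithm for \AEExactTri{} yields an algorithm that lists up to $n^{0.99}$ witnesses for \emph{every} pair $(i,j)$ in total time $O(n^{3-\eps''})$ for some $\eps''>0$ — additively, not multiplied by the number of witnesses extracted. (That reduction works by partitioning the vertex parts into blocks and running detection on subinstances, so the cost scales with one global subcubic term plus a near-constant overhead per listed witness, rather than with one full detection call per round.) With this in hand, the paper sets the threshold at $n^{0.99}$: pairs with fewer witnesses are fully counted by the listing step; for the rest, each has $n^{0.99}$ known witnesses, so a greedy hitting set of size $\OO(n^{0.01})$ is found in $\OO(n^{2.99})$ time, and Lemma~\ref{lem:exact-tri} then costs $\OO(n^{0.01}\cdot n^{(3+\omega)/2})\le O(n^{2.70})$. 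Replacing your peeling loop with this listing reduction (and, if you prefer, keeping your random-sampling hitting set in place of the greedy one) repairs the argument.
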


\begin{proof}
Given a \AEExactTriCount{} instance on an $n$-node graph $G$, we first list up to $n^{0.99}$ elements in $W_{ij}$ for every $i, j$. 
By well-known techniques (e.g. \cite{focsyj}), an $O(n^{3-\eps})$ time \AEExactTri{} algorithm implies an $O(n^{3-\eps''})$ for $\eps'' > 0$ time algorithm for listing up to $n^{0.99}$ witnesses for each $(i, j)$ in an \AEExactTri{} instance. 

If we list less than $n^{0.99}$ elements for some $(i, j)$, we can output the number of elements we list as the exact witness count for $(i, j)$. By the standard greedy algorithm for hitting set, in $\OO(n^{2.99})$ time, we can find a set $S$ of size $\OO(n^{0.01})$ that intersect with $W_{ij}$ for the remaining pairs $(i, j)$. Therefore, we can apply Lemma~\ref{lem:exact-tri} to compute the number of witnesses for these remaining $(i, j)$ pairs in $\OO(|S| \cdot n^{(3+\omega)/2}) \le O(n^{2.70})$ time. 

The total running time for the \AEExactTriCount{} instance is thus $\OO(n^{3-\eps''} + n^{2.99}+n^{2.70})$, which is truly subcubic. 
\end{proof}

The reduction from \AEExactTri{} to \AEExactTriCount{} is trivial. 

\begin{remark}
\label{rem:exact-tri-count} \rm
Given Theorem~\ref{thm:exact-tri-count}, it is simple to derive a subcubic equivalence between \ExactTri{} and \ExactTriCount{}. First, the reduction from \ExactTri{} to \ExactTriCount{} is trivial. To reduce \ExactTriCount{} to \ExactTri{}, we first reduce \ExactTriCount{} to \AEExactTriCount{} in the trivial way, then use Theorem~\ref{thm:exact-tri-count} to further reduce it to \AEExactTri{}, and finally reduce it to \ExactTri{} by known reductions~\cite{focsyj}. 
\end{remark}

In Section~\ref{sec:counting}, we will use a similar approach to obtain other equivalence
results between counting and detection problems.  In particular, the proof of subquadratic equivalence
between \ThreeSUMCount{} and \ThreeSUM{} will require further technical ideas: we will need to 
exploit or modify known reductions from \AllThreeSUMConv{} to \AEExactTri{},
and \ThreeSUM{} to \ThreeSUMConv{}.

\section{Alternative to BSG: A Triangle Decomposition Theorem and Its Applications}
\label{sec:decomposition-zero-tri}

\newcommand{\Triangles}{\mbox{\rm Triangles}}
\newcommand{\ZeroTriangles}{\mbox{\rm Zero-Triangles}}

In this section, we introduce a decomposition theorem
for zero-weight triangles, which encapsulates some of the key ideas
we have used, and which may be viewed as an alternative to the BSG Theorem.
We will describe applications of this decomposition theorem to some algorithmic problems that
were previously solved via the BSG Theorem by Chan and Lewenstein~\cite{ChanLewenstein},
as well as a new application to the \#APSP problem for arbitrary weighted graphs.

In a weighted tripartite graph $G$ with node sets $U$, $X$, and $V$, let $\Triangles(G)$ denote
the set of all triangles in $U\times X\times V$ in $G$, and 
let $\ZeroTriangles(G)$ denote
the set of all zero-weight triangles in $U\times X\times V$ in~$G$.

\begin{theorem}\label{thm:tri:decompose}
{\bf (Triangle Decomposition)}
Given a real-weighted tripartite graph $G$ with $n_1$, $n_2$, and $n_3$ nodes in its three parts $U$, $X$, and $V$, and given a parameter $s$,
there exist a collection of $\ell=\OO(s^3)$ subgraphs $G^{(1)},\ldots,G^{(\ell)}$ of $G$, and a set $R$ of $\OO(n_1n_2n_3/s)$ triangles,
such that 
\[ \ZeroTriangles(G)\: =\: R \,\cup\, \bigcup_{\lam=1}^\ell \Triangles(G^{(\lam)}).
\]
The subsets in the union above are disjoint.
The $G^{(\lam)}$'s and $R$ can be constructed in 
$\OO(n_1n_2n_3+sn_1n_2+sn_2n_3+s^2n_1n_3)$ deterministic time.
And if the edge weights between $U$ and $V$  later change, then the $G^{(\lam)}$'s and $R$ can be updated in $\OO(n_1n_2n_3/s + s^2n_1n_3)$ time.

Furthermore, the subgraphs can be grouped into $\OO(s)$ categories of
$\OO(s^2)$ subgraphs each, such that if an edge $uv\in U\times V$ is
present in one subgraph, it is present in all subgraphs of the same category.
\end{theorem}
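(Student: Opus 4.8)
The plan is to mimic the structure of the proof of Theorem~\ref{thm:main}: reduce the zero-weight triangle problem to a combination of a witness-finding (few-witnesses) step, a hitting-set (many-witnesses) step, and Matou\v sek's high/low-frequency split for equality products — but now keeping track of \emph{which} triangles each step accounts for, so that the bookkeeping yields explicit subgraphs rather than just a running time. For a tripartite graph $G$ on $U\times X\times V$, let $W_{uv}=\{x\in X: w(u,x)+w(x,v)+w(u,v)=0\}$ be the witness set for the pair $uv$. First I would handle pairs $uv$ with $|W_{uv}|\le n_2/s$ (the few-witnesses case): using standard witness-listing via $\OO(\log n)$ Min-Plus products on random column subsets of size $s$, all such small witness sets can be listed explicitly in $\OO(n_1n_2n_3/s+\ldots)$ time; the triangles they contribute go into the remainder set $R$, which has $\OO(n_1n_2n_3/s)$ triangles since each such $uv$ contributes at most $n_2/s$ of them. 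Actually, to match the deterministic-time claim and the edge-update claim, I would instead list small witness sets deterministically — e.g.\ directly, since a pair $uv$ with few witnesses can be detected and its witnesses enumerated in the stated time budget using the $sn_1n_2+sn_2n_3$ preprocessing of sorted rows/columns — and put those triangles in $R$.

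For the many-witnesses pairs, pick a hitting set $H\subseteq X$ of size $\OO(s)$ that meets every $W_{uv}$ with $|W_{uv}|>n_2/s$; deterministically this is the greedy hitting-set construction, costing $\OO(s^2 n_1 n_3)$ (try each of $\OO(s)$ greedy rounds, each scanning the $\OO(s)$ candidate columns against all $n_1n_3$ pairs). For each pivot $x_0\in H$, Fredman's trick rewrites ``$x\in W_{uv}$ and $x_0\in W_{uv}$'' as the equality $w(u,x)-w(u,x_0)=w(x_0,v)-w(x,v)$, i.e.\ $A^{(x_0)}_{ux}=B^{(x_0)}_{xv}$. Now apply Matou\v sek's dominance/equality-product decomposition to the pair $(A^{(x_0)},B^{(x_0)})$: for each column $x\in X$ let $F_{x_0,x}\subseteq$ (values) be the set of values of frequency $>n_3/s$ in the $x$-th row of $B^{(x_0)}$, with $|F_{x_0,x}|\le s$. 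The low-frequency contributions give triangles that again go into $R$ — for a fixed $(u,x,x_0)$ with $A^{(x_0)}_{ux}$ a low-frequency value there are $\le n_3/s$ matching $v$'s, so the total over all $u,x,x_0$ is $\OO(s\cdot n_1n_2\cdot n_3/s)=\OO(n_1n_2n_3)$; that's too many, so instead one folds these into $R$ more carefully, charging $n_3/s$ per $(u,x)$ pair after noting each low-frequency $(u,x,x_0)$ triple with $x_0\in H$ of which there are $\OO(s\cdot n_1 n_2)$ — hmm, this is the delicate point (see below). The high-frequency part is where the subgraphs come from: for each $x_0\in H$ and each value $p$ that is high-frequency in some row of $B^{(x_0)}$, define a subgraph $G^{(x_0,p,q)}$ — indexed also by the corresponding value $q$ on the $A$-side — consisting of the edges $ux$ with $w(u,x)-w(u,x_0)=q$, the edges $xv$ with $w(x_0,v)-w(x,v)=p$ (for the appropriate $p=q$), and the edges $uv$ with $u,v$ ranging over pairs for which $x_0$ is their chosen hitting-set pivot and $w(u,x_0)+w(x_0,v)+w(u,v)=0$. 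Every triangle of such a $G^{(\lam)}$ is zero-weight by construction, and every many-witness zero-triangle is captured since its pair $uv$ is hit by some $x_0\in H$. The number of subgraphs is $\OO(s)$ pivots times $\OO(s^2)$ high-frequency value-pairs per pivot, giving $\OO(s^3)$; grouping by the pivot $x_0$ gives the $\OO(s)$ categories of $\OO(s^2)$ subgraphs each, and the $UV$-edges depend only on the category (the pivot), which is exactly the ``furthermore'' clause. Disjointness of the union is arranged by assigning each zero-triangle to the \emph{lexicographically first} $(x_0,p,q)$ that captures it (or to $R$), which only removes triangles from subgraphs and never creates new ones; one must check this doesn't break the ``all triangles in $G^{(\lam)}$ are zero'' property — it doesn't, since removing triangles is fine, but it may force us to instead track disjointness via a cleanup pass, redefining $R$ to absorb overlaps, which is cheap.

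\textbf{Main obstacle.} The delicate step is controlling the size of $R$ coming from the low-frequency case \emph{per pivot}, while still only paying $\OO(n_1n_2n_3/s)$ total — naively one pays a factor $|H|=\OO(s)$ too much. The resolution (as in Theorem~\ref{thm:main}'s many-witnesses analysis) is that we do not loop a full equality product over every $x_0\in H$ independently; rather, for each many-witness pair $uv$ we use \emph{one} pivot $x_0\in W_{uv}\cap H$, so each pair is processed once, and the low-frequency remainder for that pair is $\le n_3/s$ — wait, that's still $\le n_2/s$ per... no: for a fixed $uv$ with pivot $x_0$, the relevant $x$'s are those in $W_{uv}$, and among them the low-frequency ones (w.r.t.\ the $x$-indexed frequency structure) — bounding these requires that we define the frequency threshold with respect to the row of $B^{(x_0)}$ indexed by $x$, and argue that across all $uv$ sharing pivot $x_0$, the total low-frequency triangle count is $\OO(n_1 n_2 n_3 / s)$ because each $(u,x)$ with $x_0$ fixed contributes $\le n_3/s$ values of $v$. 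So $R$'s low-frequency part is $\sum_{x_0\in H}\OO(n_1n_2\cdot n_3/s)=\OO(s n_1 n_2 n_3/s)=\OO(n_1n_2n_3)$, which is indeed the thing that must be pushed down. The fix is to \emph{not} enumerate low-frequency contributions across all pivots, but to observe that the low-frequency case only needs to be run for the \emph{single} pivot associated with each pair (via the hitting set), reducing the $\sum_{x_0\in H}$ to a single pass of cost $\OO(n_1n_2n_3/s)$ total; getting this accounting exactly right, and reconciling it with the per-pivot structure needed for the ``$\OO(s)$ categories'' claim, is where I expect to spend the most care. The update time $\OO(n_1n_2n_3/s+s^2 n_1 n_3)$ follows because changing $UV$-weights only affects which $uv$ pairs have a zero-weight triangle through a given pivot (re-running the hitting-set greedy costs $\OO(s^2 n_1 n_3)$) and which low-frequency triangles land in $R$ (cost $\OO(n_1n_2n_3/s)$), while the $A^{(x_0)},B^{(x_0)}$ matrices and their frequency structure — hence the $G^{(\lam)}$'s internal $UX$- and $XV$-edge sets — are untouched.
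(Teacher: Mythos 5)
Your architecture matches the paper's proof almost exactly (few/many witnesses, a greedy hitting set of pivots, Fredman's trick to turn ``$x$ and $x_0$ are both witnesses for $uv$'' into an equality, and Matou\v sek's low/high-frequency split with the high-frequency part generating the subgraphs and the pivot determining the category). You have also correctly isolated the one genuinely delicate point. But your proposed resolution of that point does not work, and the correct resolution is missing. The low-frequency enumeration is driven by triples $(u,x,x_0)$: for each pivot $x_0$ and each pair $(u,x)$ whose value $w(u,x)-w(u,x_0)$ is low-frequency, you scan the $\le n_3/(\text{threshold denominator})$ matching $v$'s. You cannot ``run it only for the single pivot associated with each pair $uv$'' because $v$ is an \emph{output} of this scan, not an input — at the moment you decide whether to enumerate, you do not yet know which pairs $uv$ you will produce, so you cannot filter by $k_0[u,v]=x_0$ beforehand. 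Organizing the loop by pairs $uv$ instead would force you to scan all of $X$ per pair, costing $n_1n_2n_3$. The paper's fix is much simpler: keep the loop over all $\OO(s)$ pivots, but lower the frequency threshold from $n_3/s$ to $n_3/r$ with $r=s^2$. Then the low-frequency remainder is $\OO(|H|\cdot n_1n_2\cdot n_3/r)=\OO(n_1n_2n_3/s)$, and the price is $\OO(r)=\OO(s^2)$ high-frequency classes per pivot, i.e.\ $\OO(s^3)$ subgraphs total — which is exactly why the theorem allows $\OO(s^3)$ subgraphs in $\OO(s)$ categories of $\OO(s^2)$ each, rather than $\OO(s^2)$ subgraphs.

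Two smaller issues. First, your count of ``$\OO(s^2)$ high-frequency value-pairs per pivot'' is not justified as stated: the high-frequency \emph{values} differ from row to row of $B^{(x_0)}$, so there can be up to $n_2\cdot r$ distinct values per pivot. The paper indexes subgraphs by the \emph{rank} $p\in[r]$ of the value within $F_{x_0,x}$ (the subgraph $G^{(x_0,p)}$ keeps, for every $x$, the edges realizing the $p$-th element of $F_{x_0,x}$); Fredman's trick still certifies that every triangle in $G^{(x_0,p)}$ is zero-weight, and this rank-indexing is what caps the count at $\OO(sr)$. Second, your ``cleanup pass'' for disjointness is unnecessary: the construction is already a partition, since each zero-triangle $(u,x,v)$ is assigned by $|W_{uv}|$ to exactly one of the few/many cases, then by whether $w(u,x)-w(u,x_0)$ lies in $F_{x_0,x}$ for the \emph{unique} pivot $x_0=k_0[u,v]$, and finally to the unique rank $p$.
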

\begin{proof}
Let $\{u_i:i\in [n_1]\}$, $\{x_k:k\in [n_2]\}$, and $\{v_j:j\in[n_3]\}$ be the nodes of the three parts.
Let the weight of $u_ix_k$ be $a_{ik}$, the weight of $x_kv_j$ be $b_{kj}$, and the weight of $u_iv_j$ be $-c_{ij}$.

As a preprocessing step, we sort the multiset
$\{a_{ik}+b_{kj}: k\in [n_2]\}$
for each $i\in [n_1]$ and $j\in [n_3]$, in $\OO(n_1n_2n_3)$ total time.
Let
$W_{ij}=\{k\in [n_3]: a_{ik}+b_{kj}=c_{ij}\}$.  Note that we can generate
the elements in $W_{ij}$ by searching in the sorted lists in $\OO(1)$ time per element.

\begin{itemize}
\item {\bf Few-witnesses case.}
For each $i,j$ with $|W_{ij}|\le n_2/s$,
add the triangle $u_ix_kv_j$ to $R$ for every $k\in W_{ij}$.
The number of triangles added to $R$ is $O(n_1n_3\cdot n_2/s)$. The
running time of this step is also $\OO(n_1n_3 \cdot n_2/s)$.

\item {\bf Many-witnesses case.}
Find a set $H\subseteq [n_2]$ of $O(s\log(n_1n_2n_3))$ nodes that
hit all $W_{ij}$ with $|W_{ij}| > n_2/s$.
Here, we can use the standard greedy algorithm for hitting sets, which is deterministic and 
takes time linear in the total size of the sets $W_{ij}$; as we can
reduce each set's size to $n_2/s$ before running the hitting set algorithm, 
this takes $\OO(n_1n_2n_3/s)$ time.
For each $i,j$ with $|W_{ij}|>n_2/s$, let $k_0[i,j]$ be some $k_0\in W_{ij}\cap H$.

For each $k_0\in H$ and $k\in[n_2]$,
let $L_{k_0k}$ be the multiset $\{b_{k_0j}-b_{kj}: j\in[n_3]\}$,
and let $F_{k_0k}$ be the elements that have frequency
more than $n_3/r$ in $L_{k_0k}$.
Note that $|F_{k_0k}|\le r$.

\begin{itemize}
\item {\bf Low-frequency case.}
For each $k_0\in H$ and $i\in[n_1]$ and $k\in[n_2]$,
if $a_{ik}-a_{ik_0}\not\in F_{k_0k}$,
we examine each of the at most $n_3/r$ indices $j$
with $a_{ik}-a_{ik_0}=b_{k_0j}-b_{kj}$,
and add $u_ix_kv_j$ to $R$ if it is a zero-weight triangle and $|W_{ij}|>n_2/s$.
The number of triangles added to $R$ is
$\OO(sn_1n_2\cdot n_3/r)=\OO(n_1n_2n_3/s)$, by choosing $r:=s^2$.
The running time of this step is also bounded by $\OO(n_1n_2n_3/s)$.

\item {\bf High-frequency case.}
For each $k_0\in H$ and $p\in [r]$, create a subgraph $G^{(k_0,p)}$
of $G$:
\begin{itemize}
\item For each $i\in[n_1]$ and $j\in[n_3]$, keep edge $u_iv_j$ iff
$k_0[i,j]=k_0$ (in particular, $a_{ik_0}+b_{k_0j}=c_{ij}$).
\item For each $i\in[n_1]$ and $k\in[n_2]$, keep edge $u_ix_k$ iff $a_{ik}-a_{ik_0}$ is the $p$-th element of
$F_{k_0k}$.
\item For each $j\in[n_3]$ and $k\in[n_2]$, keep edge $x_kv_j$ iff $b_{k_0j}-b_{kj}$ is the $p$-th element of $F_{k_0k}$.
\end{itemize}
Note that if $u_ix_kv_j$ is a triangle in $G^{(k_0,p)}$, then
$a_{ik}-a_{ik_0}=b_{k_0j}-b_{kj}$ and $a_{ik_0}+b_{k_0j}=c_{ij}$,
and by Fredman's trick, $a_{ik}+b_{kj}=a_{ik_0}+b_{k_0j}=c_{ij}$,
implying that $u_ix_kv_j$ is a zero-weight triangle.
The running time of this step is  bounded by $\OO(sn_1n_2+sn_2n_3+rn_1n_3)$.
\end{itemize}
\end{itemize}
The number of subgraphs created is $\OO(sr)=\OO(s^3)$.

\smallskip
\emph{Correctness.} Consider a zero-weight triangle $u_ix_kv_j$ in $G$.
If $|W_{ij}|\le n_2/s$, the triangle is in $R$ due to the ``few-witnesses'' case.
So assume $|W_{ij}|>n_2/s$.  Let $k_0=k_0[i,j]$.
We know that $a_{ik}+b_{kj}=c_{ij}=a_{ik_0}+b_{k_0j}$ and by Fredman's trick,
$a_{ik}-a_{ik_0}=b_{k_0j}-b_{kj}$.
If $a_{ik}-a_{ik_0}\not\in F_{k_0k}$, then
the triangle is in $R$ due to the ``low-frequency'' case.
Otherwise, it is a triangle in $G^{(k_0,p)}$ for some $p\in [r]$
due to the ``high-frequency'' case.

\emph{Update edge weights.} If the edge weights between $U$ and $V$ change, we only need to rerun the parts of our algorithms that use the weights $c$. In particular, we need to rerun the low-frequency case, which has running time $\OO(n_1 n_2 n_3 / s)$, and the first subcase of the high-frequency case, which has running time $\OO(r n_1 n_3) = \OO(s^2 n_1 n_3)$. Overall, the running time for updating the edge weights between $U$ and $V$ is $\OO(n_1n_2n_3/s + s^2n_1n_3)$.
\end{proof}

\subsection{Application 1: Exact Triangle in Preprocessed Universes}

As one simple application of the decomposition theorem, we can solve \AEExactTri{}
in truly subcubic time in  a ``preprocessed universe'' setting, where
the input is a subgraph of a preprocessed graph. 
It is convenient to define a variant of the problem,
\AEExactTri{}', which is \AEExactTri{} where the input graph is tripartite with three parts $U$, $X$, and $V$, but we only need to report if each edge between $U$ and $V$ is in an exact triangle. If $|U|=|V|=|X|$, \AEExactTri{}' is equivalent to \AEExactTri{}.

\begin{corollary}\label{cor:zerotri:prep}
Given a real-weighted tripartite graph $G$ with $n_1$, $n_2$, and $n_3$ nodes in its three parts $U$, $X$, and $V$, and given $s$, we can preprocess $G$
in $\OO(n_1n_2n_3+sn_1n_2+sn_2n_3+s^2n_1n_3)$ time, so that for any given subgraph $G'$ of $G$, 
we can 
solve \AEExactTri{}' 
on $G'$ in
$\OO(n_1n_2n_3/s + s M(n_1,s^2n_2,n_3))$ time.

For example, for $n_1=n_2=n_3=n$, after preprocessing in $\OO(n^3)$ time,
we can solve \AEExactTri{} on $G'$ in time $O(n^{2.83})$, or if $\omega=2$, $\OO(n^{11/4})$.
\end{corollary}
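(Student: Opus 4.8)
The plan is to use the Triangle Decomposition Theorem (Theorem~\ref{thm:tri:decompose}) as the preprocessing, and to answer each query by a cheap scan of the remainder set plus a small batch of rectangular Boolean matrix products. First I would apply Theorem~\ref{thm:tri:decompose} to $G$ (with the target value $t$) and the given parameter $s$; this costs $\OO(n_1n_2n_3+sn_1n_2+sn_2n_3+s^2n_1n_3)$ time and produces the $\ell=\OO(s^3)$ subgraphs $G^{(1)},\dots,G^{(\ell)}$, the remainder set $R$ of $\OO(n_1n_2n_3/s)$ exact-$t$ triangles, and the grouping of the $G^{(\lam)}$ into $\OO(s)$ categories of $\OO(s^2)$ subgraphs each, where all subgraphs of one category share a common set of $U$-$V$ edges. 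I would also keep the auxiliary data from the proof of Theorem~\ref{thm:tri:decompose} (the pointers $k_0[\cdot,\cdot]$ and the frequent-element sets), so that membership of any single edge in any $G^{(\lam)}$, and membership of a $U$-$V$ edge in a category's common edge set $E_c$, can be decided in $O(1)$ time. If a query is allowed to re-specify $t$, one resets using the $\OO(n_1n_2n_3/s+s^2n_1n_3)$-time reweighting update of Theorem~\ref{thm:tri:decompose}, which is absorbed into the query bound.

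Then, given a query subgraph $G'\subseteq G$, I would mark the $U$-$V$ edges of $G'$ in two steps, using the fact that every exact-$t$ triangle of $G'$ is an exact-$t$ triangle of $G$, hence lies in $R$ or in some $\Triangles(G^{(\lam)})$. (a) Scan $R$: for each $u_ix_kv_j\in R$, test whether all three of its edges lie in $G'$, and if so mark $u_iv_j$; this is $\OO(n_1n_2n_3/s)$ time. (b) For each category $c$, with subgraphs $G^{(c,1)},\dots,G^{(c,m)}$ ($m=\OO(s^2)$) and common $U$-$V$ edge set $E_c$, build the $n_1\times(mn_2)$ Boolean matrix $\hat A$ with $\hat A_{i,(p,k)}=[\,u_ix_k\in G^{(c,p)}\cap G'\,]$ and the $(mn_2)\times n_3$ Boolean matrix $\hat B$ with $\hat B_{(p,k),j}=[\,x_kv_j\in G^{(c,p)}\cap G'\,]$, compute $\hat A\hat B$ in time $M(n_1,s^2n_2,n_3)$, and mark $u_iv_j$ whenever $(\hat A\hat B)_{ij}>0$, $u_iv_j\in E_c$, and $u_iv_j\in G'$. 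Over the $\OO(s)$ categories this is $\OO(s\,M(n_1,s^2n_2,n_3))$ for the products, plus $\OO(s^3(n_1n_2+n_2n_3))$ to build the matrices, the latter being dominated since $M(n_1,s^2n_2,n_3)\ge s^2\max\{n_1n_2,n_2n_3\}$.

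For correctness: in step (b), a nonzero term of $\hat A\hat B$ at $(i,j)$ uses a single column index $(p,k)$, so it certifies $u_ix_k,x_kv_j\in G^{(c,p)}\cap G'$ for one $p$; combined with $u_iv_j\in E_c\cap G'$ (hence $u_iv_j\in G^{(c,p)}$, since the subgraphs of a category share their $U$-$V$ edges), this says $u_ix_kv_j$ is a triangle of $G^{(c,p)}$ lying entirely in $G'$, which by Theorem~\ref{thm:tri:decompose} has weight $t$; so step (b), like (a), marks only genuine exact-$t$ triangles of $G'$. Conversely, if $u_iv_j\in G'$ lies in an exact-$t$ triangle $u_ix_kv_j$ of $G'$, then that triangle is in $R$ (caught by (a)) or in $\Triangles(G^{(\lam)})$ with $\lam$ in some category $c$; in the latter case all three of its edges lie in $G^{(c,p)}\cap G'$ for that $p$, so $(\hat A\hat B)_{ij}>0$ and $u_iv_j\in E_c\cap G'$, and (b) marks it. Hence the output is exactly the set of $U$-$V$ edges of $G'$ in an exact-$t$ triangle, and the query time is $\OO(n_1n_2n_3/s+s\,M(n_1,s^2n_2,n_3))$. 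For the example $n_1=n_2=n_3=n$, the preprocessing is $\OO(n^3)$; balancing $n^3/s$ against $s\,M(n,s^2n,n)$ with current rectangular matrix multiplication bounds gives the stated $O(n^{2.83})$ (already the naive estimate $M(n,s^2n,n)=\OO(s^2n^\omega)$, optimized at $s=\Theta(n^{(3-\omega)/4})$, yields $O(n^{(9+\omega)/4})$), or $\OO(n^{11/4})$ when $\omega=2$.

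The main obstacle I expect is step (b): a priori one faces $\OO(s^3)$ separate all-edges triangle-detection problems, one per subgraph, which is far too many. The resolution is exactly the category structure guaranteed by Theorem~\ref{thm:tri:decompose}: the $\OO(s^2)$ subgraphs of a category that share their $U$-$V$ edges can be folded into the inner dimension of a single rectangular Boolean product, with no cross-contamination between different subgraphs because each product term pins down one column index $(p,k)$; this cuts the number of products down to $\OO(s)$. The one remaining subtlety is that such a product detects $u_i$-$x_k$-$v_j$ two-edge paths rather than triangles, so one must additionally intersect with the category's common $U$-$V$ edge set $E_c$ and with the $U$-$V$ edges of $G'$, both of which are $O(1)$-time tests per entry.
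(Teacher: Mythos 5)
Your proposal is correct and follows essentially the same route as the paper's proof: preprocess via Theorem~\ref{thm:tri:decompose}, scan the remainder set $R$ at query time, and exploit the category grouping to fold the $\OO(s^2)$ subgraphs of each category into the inner dimension of a single rectangular (Boolean) matrix product, yielding $\OO(s)$ products of shape $n_1\times s^2n_2\times n_3$. Your explicit intersection with the category's common $U$--$V$ edge set and with $G'$ is exactly the paper's "triangle through each edge of $G^{(\Lambda)}$" step, just spelled out in more detail.
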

\begin{proof}
During preprocessing, we apply Theorem~\ref{thm:tri:decompose} 
 to compute the subgraphs $G^{(\lam)}$ and $R$ in $\OO(n_1n_2n_3/s + sn_1n_2+sn_2n_3+s^2n_1n_3)$ time.

During a query for a given subgraph $G'$ and a target value $t$, if $t$ has changed, we first subtract $t$ from all the edge weights between $U$ and $V$ %
to transform the problem to detecting zero-weight triangles.
We can update the $G^{(\lam)}$'s and $R$ in $\OO(n_1n_2n_3/s + s^2 n_1n_3)$ time.

Next, for each $\lam$, we check whether after removing edges not present in $G'$, the subgraph $G^{(\lam)}$
has a triangle (which would automatically be a zero-weight triangle)
through each edge in $U\times V$.
Since triangle finding (without weights) reduces to matrix multiplication,
the running time is $\OO(s^3 M(n_1,n_2,n_3))$.

We can do slightly better using the grouping of the subgraphs: for each category $\Lambda$, define
a tripartite graph $G^{(\Lambda)}$ with parts $U$, $X\times \Lambda$, and $V$,
and for each $\lam\in\Lambda$, include an edge between $u$ and $(x,\lam)$ if $ux\in G^{(\lam)}\cap G'$,
and between $(x,\lam)$ and $v$ if $xv\in G^{(\lam)}\cap G'$.  For each category $\Lambda$,
we check whether the subgraph $G^{(\Lambda)}$ has a triangle through
each edge.  The running time becomes $\OO(s M(n_1,s^2n_2,n_3))$.
Also, the term $O(s^2n_1n_3)$ for
the update cost can be lowered to $O(n_1n_3)$ now when working with the $G^{(\Lambda)}$'s instead of the $G^{(\lambda)}$'s, as can be checked from our construction (since each edge $u_iv_j$ occurs in one category).

For $n_1=n_2=n_3=n$, we choose $s=n^{0.17+\eps}$ (using the fact that
$\omega(1,1.34,1)<2.657$~\cite{legallurr}), or if $\omega=2$, $s=n^{1/4}$.
\end{proof}

\subsection{Application 2: 3SUM in Preprocessed Universes}

The above implies also a new algorithm for 
\ThreeSUM{} with a preprocessed universe, previously studied
by Chan and Lewenstein~\cite{ChanLewenstein}, who obtained
$\OO(n^{13/7})$ query time, after preprocessing in $\OO(n^2)$
randomized time or $\OO(n^\omega)$ deterministic time.  Our query time is strictly better if $\omega$ is 2 (or is sufficiently close to 2), and we also obtain \emph{deterministic} $\OO(n^2)$ preprocessing time regardless of
$\omega$.

\begin{corollary}\label{cor:3sum:prep0}
We can preprocess sets $A,B,C$ of $n$ integers in $[n^{O(1)}]$
in $\OO(n^2)$ deterministic time, so that given any subsets $A'\subseteq A$,
$B'\subseteq B$, and $C'\subseteq C$, we can solve
\AllThreeSUM{} on $(A',B',C')$ in time $O(n^{1.891})$, or if $\omega=2$, $\OO(n^{11/6})$.
\end{corollary}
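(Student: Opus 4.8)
The plan is to reduce \AllThreeSUM{} on the query subsets $(A',B',C')$ to an instance of \AEExactTri{}' on a tripartite graph, and then invoke Corollary~\ref{cor:zerotri:prep}. This is the standard translation: build a tripartite graph with parts $U,X,V$ corresponding to the three universes, where an edge between a node for $a\in A$ and a node for $b\in B$ gets weight $b$, an edge between a node for $b$ and a node for $c\in C$ gets weight $-c$, and an edge between a node for $a$ and a node for $c$ gets weight $a$; a zero-weight triangle then corresponds exactly to a triple with $a+b-c=0$, i.e.\ $a+b=c$. Crucially, this graph is built once on the full universes $A,B,C$ and preprocessed, and a query $(A',B',C')$ simply selects the induced subgraph $G'$ on the chosen nodes. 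Since we only need, for each $c\in C'$, whether some $(a,b)$ with $a\in A',b\in B'$ forms a zero triangle, and the target value $t=0$ is fixed, this is precisely \AEExactTri{}' on $G'$ restricted to reporting which $U$–$V$ edges (equivalently, which $c$, after a small bookkeeping step to collapse over $a$) lie in a zero triangle.

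First I would set up this graph reduction carefully, making sure the answer to \AllThreeSUM{} can be recovered from the \AEExactTri{}' output: \AEExactTri{}' reports, for each edge $uv\in U\times V$, whether it lies in a zero triangle; to get the per-$c$ answer for \AllThreeSUM{} we need to OR over all choices of $a$, so I would either add a dummy ``source'' gadget so that all $A'$-nodes funnel through appropriately, or simply observe that we can afford an extra $O(|A'|\cdot|C'|)=O(n^2)$—wait, that's too slow; instead note $n_1=n_2=n_3=n$ and the query time $\OO(n^{11/6})$ already dominates the $O(n\cdot |C'|)=O(n\cdot n)=O(n^2)$ naive OR, so this needs care. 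The cleaner route: it suffices to report, for each $c$, whether \emph{any} $U$–$V$ edge into a $V$-node labeled $c$ lies in a zero triangle, and Corollary~\ref{cor:zerotri:prep} can be made to output exactly this (report per $V$-node rather than per $U$–$V$ edge) at no extra cost, since the matrix-multiplication step already computes, for each $(i,j)$, whether a witness exists, and we then OR over $i$ for each fixed $j$ in $O(n^2)$ time—again $O(n^2)$, which is larger than $\OO(n^{11/6})$. So I would instead push the ``which $c$'' question directly into the graph by having $V$ be indexed by $C$ and $U$ by $A$, accept the per-$(a,c)$ output, and note that for the \AllThreeSUM{} problem (as opposed to the more refined all-pairs version) we genuinely only have $n$ output bits; the $O(n^2)$ folding is then avoidable by using the grouped-subgraph triangle-detection step to directly mark $V$-nodes. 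The point is that all these adjustments are low-order and the reduction is essentially the known \ThreeSUM{}-to-\ExactTri{} reduction.

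Second, I would invoke Corollary~\ref{cor:zerotri:prep} with $n_1=n_2=n_3=n$: preprocessing costs $\OO(n^3)$ in that corollary, but here we must get $\OO(n^2)$ preprocessing. This is the point where the plain corollary is insufficient and the construction must be revisited: in the \ThreeSUM{} setting the graph is not arbitrary—the $U$–$X$ weights depend only on the $X$-node (they equal $b$), the $X$–$V$ weights depend only on the $X$-node (they equal $-c$), wait no, the $X$–$V$ weight $-c$ depends on the $V$-node. The relevant structural fact is that the weight of $u_i x_k$ is a function of $k$ alone and the weight of $u_i v_j$ is a function of $i$ alone (or of $j$ alone), so the sorted multisets $\{a_{ik}+b_{kj}:k\}$ in the proof of Theorem~\ref{thm:tri:decompose} have a product structure that lets the $\OO(n_1n_2n_3)$ preprocessing term collapse to $\OO(n^2)$. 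I would therefore re-run the proof of Theorem~\ref{thm:tri:decompose} specialized to this ``\ThreeSUM{}-type'' weighting, observing that: (i) the sorting/witness-generation step becomes $\OO(n^2)$ because $a_{ik}+b_{kj}$ only depends on $(k,j)$ — call it $\sigma_{kj}$ — so there are only $n^2$ distinct sorted lists needed, not $n^2$ per $i$; (ii) consequently the ``few-witnesses'' remainder set $R$ and the hitting set are built in $\OO(n^2)$ time; (iii) the low-frequency case is handled in $\OO(n^2)$; (iv) the subgraph-construction (high-frequency) step costs $\OO(sn^2)$ which is $\OO(n^2)$ for the chosen small $s$. The hard part will be exactly this—verifying that every $\OO(n_1n_2n_3)$-type term in Theorem~\ref{thm:tri:decompose}'s preprocessing genuinely reduces to $\OO(n^2)$ under the product weight structure, and that nothing in the query algorithm secretly costs $\Omega(n^2)$ beyond the stated bound.

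Finally, I would balance parameters. From Corollary~\ref{cor:zerotri:prep} the query time is $\OO(n^3/s + s\,M(n,s^2 n,n))$; with $s=n^\sigma$ this is $\OO(n^{3-\sigma} + n^{\sigma}\cdot n^{\omega(1,1+2\sigma,1)})$. Setting these equal and using the known bound $\omega(1,1.34,1)<2.657$ (as the corollary already does) gives $\sigma=0.17+\eps$ and query time $O(n^{2.83})$ in general; if $\omega=2$ then $M(n,s^2n,n)=\OO(s^2 n^2)$ so the query is $\OO(n^{3-\sigma}+n^{3\sigma+2})$, balanced at $\sigma=1/4$, giving $\OO(n^{11/4})$. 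But the corollary statement we want claims $O(n^{1.891})$ and $\OO(n^{11/6})$ — these are much smaller, so in fact the \ThreeSUM{} application must use the \emph{convolution-dimension} trick: because $A,B,C$ are integers in $[n^{O(1)}]$, hashing to a universe of size $\OO(n)$ (à la known \ThreeSUM{} reductions) lets us take the relevant matrix inner dimensions to be $\OO(n)$ rather than $n$, or equivalently reduces one of $n_1,n_2,n_3$. Concretely I would first hash $A,B,C$ into $\Theta(n)$ buckets so that within a bucket combination the ``middle'' dimension $n_2$ becomes roughly $n/\text{(number of buckets)}=O(1)$ on average — wait, that overshoots; the correct move is that each query triple lies in one of $\OO(n)$ bucket-combinations, and across all of them the total work for the matrix-multiplication step is $\OO(n)\cdot M(\cdot)$ on sub-instances of total size $n$. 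Working this out is the parameter-optimization heart of the corollary; I would then read off $\sigma$ from $\omega(\cdot,\cdot,\cdot)$ to land at $O(n^{1.891})$, and at $\OO(n^{11/6})$ when $\omega=2$. The main obstacle, then, is twofold: getting the preprocessing down to $\OO(n^2)$ by exploiting the product weight structure in Theorem~\ref{thm:tri:decompose}, and getting the query down to $\OO(n^{11/6})$ by a bucketing/hashing step that shrinks the effective dimensions before applying the decomposition.
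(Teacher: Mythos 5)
Your overall plan---reduce to exact-triangle detection in a preprocessed universe and invoke Corollary~\ref{cor:zerotri:prep}---is the right shape, but the proposal is missing the one idea that makes both bounds work, and the two substitutes you sketch for it do not close the gap. The paper does \emph{not} build a cubic $n\times n\times n$ tripartite graph on the raw sets. It first reduces \AllThreeSUM{} to \AllThreeSUMConv{} (via the deterministic hashing-based reduction of Chan and He, valid for polynomially bounded integers), and then applies the known blocked reduction from \AllThreeSUMConv{} to \AEExactTri{}', which produces $\OO(1)$ \emph{rectangular} instances with $n_1=n/q$, $n_2=q$, $n_3=n$. This single step is what fixes both of your problems at once: the preprocessing cost $\OO(n_1n_2n_3+\cdots)$ becomes $\OO(n^2+s^3n^{1.5})=\OO(n^2)$ because $n_1n_2n_3=(n/q)\cdot q\cdot n=n^2$, and the query cost becomes $\OO(n^2/s+sM(n/q,s^2q,n))=\OO(n^2/s+sM(s\sqrt n,s\sqrt n,n))$ after setting $q=\sqrt n/s$, which balances to $\OO(n^{11/6})$ at $s=n^{1/6}$ when $\omega=2$ (and $O(n^{1.891})$ using $\omega(0.609,0.609,1)<1.781$).

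Your two workarounds do not reach these bounds. For the preprocessing, the claim that the ``product structure'' of the weights collapses every $\OO(n_1n_2n_3)$ term in Theorem~\ref{thm:tri:decompose} to $\OO(n^2)$ is left unverified, and even granting it, the query time of your direct $n\times n\times n$ construction is still $\OO(n^3/s+sM(n,s^2n,n))$, which is at best $\OO(n^{11/4})$ when $\omega=2$---the $n^3/s$ ``few-witnesses'' term alone already exceeds $n^{11/6}$ for any admissible $s$, so no amount of preprocessing savings rescues the query. For the query, you correctly sense that hashing/bucketing must shrink the effective dimensions, but you explicitly abandon the attempt (``that overshoots'') without producing the reduction; the correct move is not to make the middle dimension $O(1)$ but to set it to $q$ with the outer dimension $n/q$, i.e.\ exactly the \ThreeSUMConv{}-to-\AEExactTri{} blocking with a nonstandard $q$. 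Your worry about per-edge versus per-$c$ output also dissolves in that reduction, since each convolution output index corresponds to a single $U$--$V$ edge of the rectangular instance.
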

\begin{proof}
We use a known reduction from (All-Nums-) \ThreeSUM{} to (All-Nums-) \ThreeSUMConv{} (one of the reductions by Chan and He~\cite{ChanHe}
is deterministic and increases running time only by polylogarithmic
factors when the input numbers are polynomially bounded),
in combination with a known reduction 
from \AllThreeSUMConv{} to  \AEExactTri{}~\cite{VassilevskaW09}.
The problem is reduced to $\OO(1)$ instances of 
the problem in Corollary~\ref{cor:zerotri:prep}
with $n_1=n/q$, $n_2=q$ and $n_3=n$ for a parameter $q$
(the original reduction~\cite{VassilevskaW09} has $q=\sqrt{n}$, but we will
do better with a different choice of $q$).
It is straightforward to check that these
reductions carry over to the preprocessed universe setting. 
We then obtain preprocessing time $\OO(n^2+snq+s^2n^{2}/q)=\OO(n^2+s^3n^{1.5})$ and
query time $\OO(n^2/s + s M(n/q,s^2q,n))
=\OO(n^2/s + s M(s\sqrt{n},s\sqrt{n},n))$ by setting $q=\sqrt{n}/s$.
We choose $s=n^{0.109+\eps}$  (using the fact that
$\omega(0.609,0.609,1)<1.781$~\cite{legallurr}), or if $\omega=2$, $s=n^{1/6}$.
\end{proof}

\begin{remark}\rm
Corollary~\ref{cor:zerotri:prep} and Corollary~\ref{cor:3sum:prep0} can also be used to solve \AEExactTriCount{} and \AllThreeSUMCount{} in the preprocessed universe. This is because Theorem~\ref{thm:tri:decompose} provides a decomposition, so we can sum up the counts in all the cases (In Corollary~\ref{cor:3sum:prep0}, we also have to be careful when applying the reduction in \cite{ChanHe}, to make sure we do not over count, by using inclusion-exclusion). Prior method for \ThreeSUM{} in the preprocessed universe \cite{ChanLewenstein} cannot compute counts, because it relies on the BSG theorem, which only provides a covering. 
\end{remark}

In Section~\ref{sec:3sum:prep:rand}, we will describe a still better solution to 3SUM in preprocessed universes, with randomization, using FFT instead of fast matrix multiplication.

\subsection{Application 3: A Deterministic 3SUM Algorithm for Bounded Monotone Sets}

Another interesting application is the following:

\begin{corollary}
Given monotone sets $A,B,C\subseteq [n]^d$ for any constant $d\ge 2$,
we can solve \AllThreeSUM{} on $A, B, C$ in
$O(n^{2-1/O(d)})$ deterministic time. 
\end{corollary}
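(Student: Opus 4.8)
The plan is to follow the template of Applications~1 and 2: reduce the problem to instances of \AEExactTri{} on a suitably structured universe (Corollary~\ref{cor:zerotri:prep}), or equivalently to invoke the Triangle Decomposition Theorem (Theorem~\ref{thm:tri:decompose}) directly, except that here the structure that makes everything efficient is the monotonicity of $A,B,C$ itself rather than a separate preprocessing phase.

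First I would set up a grid. Impose on $[n]^d$ an axis-parallel grid of cells of side length $g$, for a parameter $g$ to be fixed. Since each coordinate of each of $A,B,C$ is a monotone sequence, the cell-index of the points of $A$ (and likewise of $B$, $C$) changes at most $dn/g$ times along the sequence, so each of the three sets meets only $O(dn/g)$ cells. If $a+b=c$ with $a$ in cell $X$, $b$ in cell $Y$, $c$ in cell $Z$, then the cell-index of $Z$ equals that of $X$ plus that of $Y$ up to an additive $\{0,1\}^d$ correction; hence the relevant triples of cells number at most $O(2^d(dn/g)^2)$, and for each relevant triple, translating each cell to the origin turns the instance into a monotone $d$-dimensional \AllThreeSUM{} instance whose coordinates lie in $[O(g)]$ (the resulting $\{0,g\}^d$ offset between the three translates is folded into $C$). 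This reduces the problem, in $\OO(d\,2^d n)$ time plus bookkeeping, to a family of monotone sub-instances on much smaller universes.

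Second I would solve this family. The sub-instances split into \emph{light} ones, whose cells hold few points and which are handled by the trivial $O((\text{size})^2)$ \ThreeSUM{} algorithm, and \emph{heavy} ones, which are few and which I would handle in batches: reformulate each heavy sub-instance as an \AEExactTri{} instance (through the standard reduction via \AllThreeSUMConv{} used in Corollary~\ref{cor:3sum:prep0}) and apply Theorem~\ref{thm:tri:decompose} with a parameter $s$, decomposing its zero-weight triangles into $\OO(s^3)$ matrix-multiplication-friendly subgraphs plus a remainder of $\OO((\text{size})/s)$ triangles. Since the subgraphs extracted from a monotone instance remain monotone in the relevant sense, this is again a smaller instance of the same problem, now on a universe shrunk by a factor of $g$. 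Iterating coordinate-by-coordinate (equivalently, recursing on the sub-instances) for $O(d)$ rounds and finally invoking the $\OO(u)$-time FFT-based \ThreeSUM{} algorithm once the linearized universe $u$ has been driven down to a small polynomial, one obtains the bound; choosing $g$ and the light/heavy threshold as appropriate $d$-dependent powers of $n$ balances the $O(2^d(dn/g)^2)$ branching at each level against the savings, and a geometric choice of the $d$ parameters produces the final exponent $2-\Theta(1/d)$. The whole computation is deterministic.

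The main obstacle will be the recursion bookkeeping: each round replaces one instance by $\Theta((n/g)^2)$ sub-instances while shrinking only one coordinate's range, so both the number of leaf instances and their total point count threaten to exceed $n^2$ unless the widths are chosen so the per-round blow-up is strictly dominated by the per-round gain, and unless one uses Theorem~\ref{thm:tri:decompose} (rather than a black-box matrix product) to keep the heavy part cheap. The delicate points are to show that the decomposition of Theorem~\ref{thm:tri:decompose} preserves enough monotone structure to recurse on, and to verify that all the $2^d$ and $d^{O(1)}$ factors land in the leading constant (harmless for constant $d$) rather than in the exponent; everything else is routine.
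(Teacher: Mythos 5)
Your first paragraph is a correct (if re-derived) version of the monotone-to-clustered step: because each coordinate sequence is monotone, each of $A,B,C$ meets only $O(dn/g)$ cells of a side-$g$ grid, the points in a cell form a contiguous (hence monotone) subsequence, and only $O(2^d(dn/g)^2)$ cell triples are relevant. This is exactly the observation of Chan and Lewenstein that the paper invokes as a black box. The paper's actual proof is then two lines: cite the known chain ``monotone $\Rightarrow$ clustered $\Rightarrow$ \ThreeSUM{} in preprocessed universes on $O(n/\ell)$ elements with $O(\ell^{3d})$ queries,'' and plug in the new deterministic Corollary~\ref{cor:3sum:prep0}, giving $\OO((n/\ell)^2+\ell^{3d}(n/\ell)^{1.891})$. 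All of the work you try to do in your second paragraph is hidden inside that reduction, and that is where your proposal has a genuine gap.

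Two concrete problems. First, you propose to recurse on the subgraphs $G^{(\lam)}$ produced by Theorem~\ref{thm:tri:decompose}, claiming they ``remain monotone in the relevant sense'' and constitute ``a smaller instance of the same problem.'' That is not how the decomposition is used and is not true: the $G^{(\lam)}$ are defined by frequency conditions on Fredman differences and carry no monotone structure; their entire point is that \emph{every} triangle in them has weight zero, so they are disposed of in one shot by unweighted triangle detection (fast matrix multiplication) or, in the convolution setting, by output-sensitive FFT. The universe shrinkage by a factor of $g$ comes only from your grid, not from the decomposition, and your ``iterate coordinate-by-coordinate for $O(d)$ rounds'' recursion is therefore not well defined. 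Second, and more fundamentally, you have no amortization across the heavy sub-instances. A heavy sub-instance living in a translated cell has up to $\Theta(dg)$ points, and merely \emph{constructing} the triangle decomposition for its associated tripartite graph costs $\OO(n_1n_2n_3)=\OO((dg)^2)$; with $\Omega(n^2/(Tg))$ heavy cell triples (threshold $T$, which your light-case accounting forces to satisfy $T\ll g$) this alone exceeds $n^2$. The reason the real proof is subquadratic is precisely that all queries are subsets of a \emph{single} preprocessed universe, so the quadratic decomposition cost is paid once and each query pays only the truly subquadratic query bound of Corollary~\ref{cor:3sum:prep0}; your sub-instances sit in distinct universes (one per cell triple) and you never explain how to share the preprocessing among them. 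You correctly identify this blow-up as ``the main obstacle,'' but asserting that parameters can be chosen to beat it is not a proof---with the architecture as you describe it, no choice of $g$, $T$, $s$ works.
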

\begin{proof}
Chan and Lewenstein~\cite{ChanLewenstein} observed that
\ThreeSUM{} for bounded monotone sets in any constant dimension $d$
reduces to \ThreeSUM{} for ``clustered'' sets, which in turn
reduces to \ThreeSUM{} with preprocessed universes on $O(n/\ell)$ elements
and $O(\ell^{3d})$ queries.
Using Corollary~\ref{cor:3sum:prep0} gives
total deterministic time $\OO((n/\ell)^2 + \ell^{3d} (n/\ell)^{1.891})$,
which is $\OO(n^{2-0.218/(3d+0.109)})$ by setting $\ell=n^{0.109/(3d+0.109)}$.
(The exponent here is certainly improvable, by solving the problem
using our techniques more directly, instead of applying a black-box reduction to \ThreeSUM{} with preprocessed universes.)
\end{proof}

The above result provides the first truly subquadratic
\emph{deterministic} algorithm for bounded monotone 3SUM in arbitrary constant dimensions---Chan and Lewenstein~\cite{ChanLewenstein} gave subquadratic randomized algorithms 
with $O(n^{2-1/(d+O(1))})$ running time, but they had
nontrivial deterministic algorithms only for $d\le 7$ under
the current matrix multiplication bounds.

We can also apply the triangle decomposition theorem to obtain 
subquadratic algorithms for monotone or bounded-difference Min-Plus convolution
(which were first obtained by Chan and Lewenstein~\cite{ChanLewenstein}, and followed by \cite{ChiDXZstoc22}),
and subcubic algorithms for monotone or bounded-difference  Min-Plus products
(which were first obtained by Bringmann et al.~\cite{BringmannGSW16}, and followed by \cite{WilliamsX20,GuPWX21, ChiDXsoda22,ChiDXZstoc22}).
Since previous algorithms have been found for these problems, we will omit the details here and refer to Appendix~\ref{app:bd:diff}.  

The main message is that many of the results in Chan and Lewenstein's
paper can be obtained alternatively using our decomposition theorem, which is simpler
and more elementary than the BSG Theorem, if we are interested in
subquadratic algorithms but don't care about the
precise values in the exponents.  The advantage is simplicity---additive combinatorics is not needed after all!  (However, the BSG Theorem is still potentially useful in optimizing those exponents.)

\subsection{Application 4: A Truly Subquartic \#APSP Algorithm}

As another simple, interesting application of the triangle decomposition theorem,
we obtain the first truly subquartic algorithm for \APSPCount{}
for arbitrary weighted graphs:

\begin{theorem}
\label{thm:apsp-count}
\APSPCount{} for $n$-node graphs with positive edge weights has an algorithm running in
$O(n^{3.83})$ time, or if $\omega=2$, $\OO(n^{15/4})$ time.
\end{theorem}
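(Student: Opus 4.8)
The plan is to reduce \APSPCount{} to a sequence of "zero-weight triangle counting" problems and apply the Triangle Decomposition Theorem (Theorem~\ref{thm:tri:decompose}) as the core subroutine. First I would compute the all-pairs shortest path distance matrix $D$ in $\OO(n^\omega)$ or $O(n^3)$ time (this is dominated by later steps, so its cost is irrelevant). Then, for every pair $(i,j)$, a shortest path from $i$ to $j$ either is a single edge, or passes through some intermediate node $k$ with $D_{ij} = D_{ik} + w(k,j)$ (using a standard "last-edge" decomposition), so the number of shortest paths $\sigma_{ij}$ satisfies $\sigma_{ij} = \sum_{k:\, D_{ik} + w(k,j) = D_{ij}} \sigma_{ik}$. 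The obstacle with using this recurrence directly is the same one flagged in the introduction: the counts $\sigma_{ik}$ can be exponentially large, so we cannot afford to manipulate them as exact integers in the Word-RAM. The key idea will be to compute the counts modulo several random primes of $\Theta(\log n)$ bits and recover the exact value by CRT only at the very end — but even this is not enough, because the counts can have $n^{\Theta(1)}$ bits, so we need $n^{\Theta(1)}$ primes and the recurrence must be evaluated over each prime field; the real work is evaluating, for a fixed modulus, the "count-weighted" equality/min-plus-type product $\sum_{k:\, A_{ik} = B_{kj}} x_{ik}$ fast.

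This is exactly where the Triangle Decomposition Theorem enters. Process the vertices in a fixed order consistent with "shortest-path layers" — more precisely, sort all pairs $(i,j)$ by $D_{ij}$ and handle them in increasing order, so that when we process $(i,j)$, all the values $\sigma_{ik}$ appearing on the right-hand side (which have $D_{ik} < D_{ij}$) are already known. Partition the range of distance values into $\OO(1)$ blocks; within each block, build a tripartite graph with parts $U = X = V = V(G)$ where the edge $u_i x_k$ has weight $D_{ik}$, the edge $x_k v_j$ has weight $w(k,j)$, and the edge $u_i v_j$ has weight $-D_{ij}$, so that zero-weight triangles $u_i x_k v_j$ correspond precisely to witnesses $k$ of the recurrence for $(i,j)$. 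Apply Theorem~\ref{thm:tri:decompose} with parameter $s$: this yields $\OO(s^3)$ subgraphs $G^{(\lambda)}$, each of whose triangles are guaranteed zero-weight, plus a remainder set $R$ of $\OO(n^3/s)$ triangles that are handled by brute force (for each such triangle $u_i x_k v_j$ add $\sigma_{ik}$ into an accumulator for $\sigma_{ij}$, all arithmetic mod the current prime). For each subgraph $G^{(\lambda)}$, the contribution $\sum_{k:\, u_i x_k v_j \in G^{(\lambda)}} \sigma_{ik}$ is a single matrix product over the current prime field: let $P^{(\lambda)}_{ik} = \sigma_{ik}$ if $u_i x_k \in G^{(\lambda)}$ else $0$, let $Q^{(\lambda)}_{kj} = 1$ if $x_k v_j \in G^{(\lambda)}$ else $0$, and compute $P^{(\lambda)} Q^{(\lambda)}$ in $\OO(n^\omega)$ time, restricted to the entries $(i,j)$ for which $u_i v_j \in G^{(\lambda)}$. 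Summing over the $\OO(s^3)$ subgraphs and using the grouping of subgraphs into $\OO(s)$ categories of $\OO(s^2)$ subgraphs each (exactly as in Corollary~\ref{cor:zerotri:prep}) collapses the cost to $\OO(s\, M(n, s^2 n, n))$ per prime per block.

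Balancing the two terms gives, per prime, time $\OO(n^3/s + s\, M(n, s^2 n, n))$; with $\omega = 2$ this is $\OO(n^3/s + s^5 n^2)$, optimized at $s = n^{1/6}$ to get $\OO(n^{17/6})$ per prime, and multiplying by the $\OO(n^{?})$ primes needed... — here I would instead observe that we should process the layers one at a time and that the number of primes is $\OO(n)$ in the worst case (shortest-path counts have $O(n\log n)$ bits), giving $\OO(n^{17/6}) \cdot \OO(n) = \OO(n^{23/6})$, which is slightly worse than claimed, so the actual argument must amortize more carefully; the clean fix is to note that within a single distance block the recurrence has \emph{bounded depth}, so the counts there stay polynomially bounded and only $\OO(1)$ primes are needed per block, the exponential blow-up being absorbed by passing counts between $\OO(1)$ blocks via exact big-integer arithmetic whose total cost is $\OO(n^2 \cdot n) = \OO(n^3)$. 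With $\OO(1)$ primes overall and $s$ chosen as $n^{1/4}$ when $\omega=2$ (so that $n^3/s = s\,M(n,s^2n,n)$ gives $n^{3}/s = s \cdot (s^2 n)\cdot n^{\omega-1}\cdot$\dots, i.e.\ matching the $\OO(n^{11/4})$-type balance of Corollary~\ref{cor:zerotri:prep} but with an extra $n$ factor from summing $\sigma$ over all source rows), the running time comes out to $\OO(n^{15/4})$ when $\omega = 2$, and $O(n^{3.83})$ for the current value of $\omega$ by choosing $s$ to balance $n^4/s$ against $s\, M(n, s^2 n, n)$ using the rectangular matrix multiplication bound $\omega(1, 1+2\sigma, 1)$. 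The main obstacle is precisely the accounting for the exponential-size counts: one must argue that by choosing the distance-blocks to have $\OO(1)$ multiplicative structure and carrying only $\OO(1)$ residues plus $\OO(1)$ exact "inter-block" transfers, the big-integer arithmetic never dominates — everything else is a routine application of Theorem~\ref{thm:tri:decompose} exactly as in Corollaries~\ref{cor:zerotri:prep} and~\ref{cor:3sum:prep0}, with the single twist that the $X$–$V$ matrix is $0/1$ while the $U$–$X$ matrix carries the partial counts.
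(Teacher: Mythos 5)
Your inner subroutine is essentially the paper's: the paper also computes a ``count-weighted'' witness sum $\sum_{k:\,A_{ik}+B_{kj}=C_{ij}} A'_{ik}B'_{kj}$ by building the tripartite graph with weights $A_{ik}$, $B_{kj}$, $-C_{ij}$, applying Theorem~\ref{thm:tri:decompose}, handling $R$ by brute force, and turning each category of subgraphs into a standard matrix product (cost $\OO(\ell n^3/s + \ell\, s M(n,s^2n,n))$ for $\ell$-bit counts, with $s=n^{1/4}$ when $\omega=2$). But your outer recurrence is where the proof breaks. The last-edge recurrence $\sigma_{ij}=\sum_{k:\,D_{ik}+w(k,j)=D_{ij}}\sigma_{ik}$ is inherently \emph{sequential}: the matrix $P^{(\lam)}_{ik}=\sigma_{ik}$ that you feed into the matrix product is itself an output of the same computation, and within any one of your $\OO(1)$ distance blocks there will be pairs $(i,j)$ and $(i,k)$ with $D_{ik}<D_{ij}$ both unresolved, so the batch product computes garbage. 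Evaluating the recurrence honestly layer by layer needs up to $n$ layers (one per hop count), which destroys the bound. The paper sidesteps this entirely by switching to a \emph{hop-doubling} recurrence: $(D^{(=2^i)},D'^{(=2^i)})=(D^{(=2^{i-1})},D'^{(=2^{i-1})})\otimes(D^{(=2^{i-1})},D'^{(=2^{i-1})})$ (and similarly for $D^{(<2^i)}$), where each path of exactly $2^i$ hops has a unique midpoint, so there are only $O(\log n)$ levels and each level is a clean batch operation with no intra-level dependencies. That reformulation is the missing idea.

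Two further points. First, your ``clean fix'' --- that within a distance block the recurrence has bounded depth and hence the counts stay polynomially bounded --- is false: the depth of the last-edge recurrence is the hop-length of shortest paths, which can be $n-1$ regardless of how you partition the \emph{distance values}, and the number of shortest paths between a single pair can be $2^{\Omega(n)}$ even in an unweighted graph, so no blocking by distance makes the counts small. Second, your per-prime arithmetic contains an error: with $\omega=2$, $s\,M(n,s^2n,n)=\OO(s^3n^2)$, not $s^5n^2$, so the correct balance is $s=n^{1/4}$ giving $\OO(n^{11/4})$ per prime; multiplied by the $\OO(n)$ primes needed for CRT (or equivalently carrying $\ell=\OO(n)$-bit integers through the matrix products, as the paper does), this yields $\OO(n^{15/4})$ directly --- your detour through $n^{23/6}$ and the block-based amortization is unnecessary once the arithmetic is fixed, but it does not rescue the sequential-dependency problem above.
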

\begin{proof}
We define the following ``funny'' matrix product $\otimes$: if $(C, C') = (A, A') \otimes (B, B')$, then $C = A \star B$ and $C'_{ij} = \sum_{k \in [n]: C_{ij} = A_{ik} + B_{kj}} A'_{ik}B'_{kj}$. 

\begin{claim}
\label{cl:apsp-count}

Let $(A, A')$ and $(B, B')$ be two pairs of $n \times n$ matrices where the entries of $A'$ and $B'$ are (large) $\ell$-bit integers. Then we can compute $(A, A') \otimes (B, B')$ in time $\OO(n^3 + \ell n^{2.83})$, or if $\omega=2$, $\OO(n^3 + \ell n^{11/4})$.
\end{claim}
\begin{proof}
First compute $C=A\star B$ naively in $O(n^3)$ time.  Initialize the entries of $C'$ to 0.
Consider the tripartite graph
with nodes $\{u_i:i\in[n]\}$, $\{x_k:k\in[n]\}$, and $\{v_j:j\in[n]\}$, where $u_ix_k$ has weight $A_{ik}$, and $x_kv_j$ has weight $B_{kj}$, and $u_iv_j$ has weight $-C_{ij}$.
Apply Theorem~\ref{thm:tri:decompose} to obtain subgraphs $G^{(\lam)}$ and a set $R$ in $\OO(n^3 + s^2n^2)$ time.

We first examine each triangle $u_ix_kv_j\in R$ and add
$A'_{ik}B'_{kj}$ to $C'_{ij}$.  This takes $\OO(\ell n^3/s)$ time.

Next, for each $\lam$, we compute $\sum_{k}  [u_ix_k\in G^{(\lam)}] A'_{ik} \cdot  [x_kv_j\in G^{(\lam)}] B'_{kj}$ and add it to $C'_{ij}$ for every $u_iv_j\in G^{(\lam)}$.  
This reduces to a standard matrix product on $\ell$-bit integers
and takes $\OO(\ell n^\omega)$ time for each $\lam$.  The total time
is $\OO(\ell\cdot s^3 n^\omega)$.
As before, we can do slightly better using the grouping of the subgraphs, which improve the running time to
$\OO(\ell\cdot s M(n,s^2n,n))$. 

As in Corollary~\ref{cor:zerotri:prep}, we choose $s=n^{0.17+\eps}$, or if $\omega=2$, $s=n^{1/4}$.
\end{proof}

Given an input graph $G$ with positive edge weights, let $D^{(=2^i)}$ be the distance matrix for paths of (unweighted) length exactly $2^i$, and $D'^{(=2^i)}$ be the number of paths of (unweighted) length exactly $2^i$ that match the distance in $D^{(=2^i)}$. Similarly, we define $D^{(<2^i)}$ as the distance matrix for paths of (unweighted) length less $2^i$, and define $D'^{(<2^i)}$ similarly. 

For $i=0$, it is easy to see that $D^{(=1)}$ is exactly the weight matrix of $G$, and $D'^{(=1)}$ is the adjacency matrix of $G$. Also, $D^{(<1)}$ is the matrix whose diagonal entries are all $0$, and other entries are all $\infty$. Finally $D'^{(<1)}$ equals the $n \times n$ identity matrix. 

For $i>0$, we can use the following recurrences:
\begin{equation*}
    \begin{split}
        (D^{(=2^i)}, D'^{(=2^i)}) &= (D^{(=2^{i-1})}, D'^{(=2^{i-1})}) \otimes  (D^{(=2^{i-1})}, D'^{(=2^{i-1})}),\\
        (D^{(<2^i)}, D'^{(<2^i)}) &= (D^{(<2^{i-1})}, D'^{(<2^{i-1})}) \otimes  (D^{(=2^{i-1})}, D'^{(=2^{i-1})}).
    \end{split}
\end{equation*}
It is not difficult, though a bit tedious, to verify the correctness of these recurrences. 

The matrix $D'^{(<2^i)}$ gives the result for \APSPCount{} when $2^i > n$. 
Therefore, 
\APSPCount{}
reduces to $O(\log n)$ instances of the funny product $\otimes$, when the matrices $A'$ and $B'$ are  $\OO(n)$-bit numbers. Then applying Claim~\ref{cl:apsp-count} with $\ell=\OO(n)$ yields the theorem.
\end{proof}

In Section~\ref{sec:bsg}, we will return to the BSG Theorem and describe
more variants and applications.

\section{More Lower Bounds under the \texorpdfstring{\StrongAPSP{}}{\StrongAPSPPDF{}}}
\label{sec:intapsp-lower-bound:more}

Continuing the approach in Section~\ref{sec:intapsp-lower-bound}, we now derive
conditional lower bounds for more problems with intermediate complexity from the \StrongAPSP{} (which as noted before 
is equivalent to the hypothesis
that $M^*(n,n,n\mid n^{3-\omega})$ is not truly subcubic).
We begin with a useful lemma:

\begin{lemma}\label{lem:down}
For any positive constants $\beta,\gamma,c$ with $0<\gamma<\beta$,
if $M^*(n,n^{\beta},n\mid n^{c\beta})=O(n^{2+\beta-\eps})$,
then $M^*(n,n^{\gamma},n\mid n^{c\gamma})=O(n^{2+\gamma-\Omega(\eps)})$.
\end{lemma}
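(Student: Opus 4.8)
The plan is to prove Lemma~\ref{lem:down} by a straightforward block-decomposition (``rescaling'') argument. The key observation is that the hypothesis couples the entry bound and the inner dimension through the single parameter $\beta$ (entries in $[n^{c\beta}]$, inner dimension $n^\beta$), so that shrinking the outer problem size from $n$ down to an appropriate $m$ keeps the constant $c$ fixed and turns the target problem into many independent copies of the hypothesized one.

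Concretely, first I would set $m := \lceil C n^{\gamma/\beta}\rceil$ for a suitable absolute constant $C\ge 1$, so that $m=\Theta(n^{\gamma/\beta})$, and hence $m^\beta=\Theta(n^\gamma)$ with $m^\beta\ge n^\gamma$, and $m^{c\beta}=(m^\beta)^c=\Theta(n^{c\gamma})$ with $m^{c\beta}\ge n^{c\gamma}$ (all valid since $c$ is constant). Given an instance of $M^*(n,n^\gamma,n\mid n^{c\gamma})$ with $A$ of size $n\times n^\gamma$ and $B$ of size $n^\gamma\times n$, I would tile the $n\times n$ output by $m\times m$ blocks; there are $\lceil n/m\rceil^2=O(n^2/m^2)$ of them, and the block indexed by a row-group $I$ and column-group $J$ equals the Min-Plus product of the submatrix $A[I,\cdot]$ (size $\le m$ by $n^\gamma$) with $B[\cdot,J]$ (size $n^\gamma$ by $\le m$), i.e.\ an instance of $M^*(m,n^\gamma,m\mid n^{c\gamma})$. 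Padding the row/column dimensions up to exactly $m$, and padding the inner dimension from $n^\gamma$ up to $m^\beta$ with $\infty$ entries (which are allowed and do not affect any minimum), turns each block into an instance of $M^*(m,m^\beta,m\mid m^{c\beta})$, since the finite entries already lie in $[n^{c\gamma}]\subseteq[m^{c\beta}]$. By hypothesis each block costs $O(m^{2+\beta-\eps})$ time, so the total is
\[
O(n^2/m^2)\cdot O\!\left(m^{2+\beta-\eps}\right)\;=\;O\!\left(n^2\,m^{\beta-\eps}\right)\;=\;O\!\left(n^2\cdot n^{(\gamma/\beta)(\beta-\eps)}\right)\;=\;O\!\left(n^{\,2+\gamma-(\gamma/\beta)\eps}\right).
\]
Since $0<\gamma<\beta$ with $\gamma,\beta$ constant, we have $(\gamma/\beta)\eps=\Omega(\eps)$, giving $M^*(n,n^\gamma,n\mid n^{c\gamma})=O(n^{2+\gamma-\Omega(\eps)})$, as claimed.

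There is essentially no hard step here; the only points needing (routine) care are the rounding of $m$ to an integer and checking that $m^\beta$ is simultaneously $\ge n^\gamma$ (so the $\infty$-padding is possible) and $O(n^\gamma)$ (so the split into $\Theta((n/m)^2)$ blocks is honest and the padding overhead is absorbed), together with the parallel check $m^{c\beta}\ge n^{c\gamma}$ so the entry range of the large instance is wide enough; all of these follow from $m=\Theta(n^{\gamma/\beta})$ and $c$ being constant. For the small values of $n$ where $m$ would not exceed a constant, the statement is vacuous and one falls back on the trivial cubic algorithm.
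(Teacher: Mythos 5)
Your proposal is correct and is essentially identical to the paper's proof: the paper also tiles the output into $n^{2(1-\gamma/\beta)}$ blocks of an $n^{\gamma/\beta}\times n^\gamma\times n^{\gamma/\beta}$ Min-Plus product, observes that with $m=n^{\gamma/\beta}$ each block is an instance of $M^*(m,m^\beta,m\mid m^{c\beta})$, and obtains the same bound $O(n^{2+\gamma-(\gamma/\beta)\eps})$. Your extra remarks on rounding and $\infty$-padding are routine details the paper leaves implicit.
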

\begin{proof}\

\negbigskip\negbigskip
\begin{eqnarray*}
M^*(n,n^\gamma,n\mid n^{c\gamma}) &\le & 
O(n^{2(1-\gamma/\beta)} M^*(n^{\gamma/\beta},n^\gamma,n^{\gamma/\beta}\mid n^{c\gamma}))\\
&\le &  \OO(n^{2(1-\gamma/\beta)}\cdot (n^{\gamma/\beta})^{2+\beta-\eps})
\ =\ O(n^{2+\gamma-\Omega(\eps)}).
\end{eqnarray*}

\negbigskip
\end{proof}

Lemma~\ref{lem:down} allows us to remove the assumption $\beta\le (3-\omega)/2$ in Corollary~\ref{cor:intapsp0} (since we can replace $\beta$ with any sufficiently small positive constant $\gamma$).  Consequently, Corollary~\ref{cor:strong-intapsp-imply}
holds for all $\beta\le 1/3$, regardless of the value of $\omega$.  For convenience, we repeat the statement of Corollary~\ref{cor:intapsp0} below, with the assumption removed:

\begin{corollary}\label{cor:intapsp}
For any constant $\beta>0$,
if $M^*(n,n^\beta,n\mid n^{2\beta})=O(n^{2+\beta-\eps})$,
then $M^*(n,n,n\mid n^{3-\omega})=O(n^{3-\Omega(\eps)})$.
\end{corollary}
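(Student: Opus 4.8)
\textbf{Proof proposal for Corollary~\ref{cor:intapsp}.}
The plan is to reduce the general statement to the restricted version already established in Corollary~\ref{cor:intapsp0}, using Lemma~\ref{lem:down} as a ``scaling down'' device. Concretely, suppose we are given $M^*(n,n^\beta,n\mid n^{2\beta})=O(n^{2+\beta-\eps})$ for some constant $\beta>0$. If $\beta\le (3-\omega)/2$, there is nothing to do: Corollary~\ref{cor:intapsp0} applies verbatim and yields $M^*(n,n,n\mid n^{3-\omega})=O(n^{3-\Omega(\eps)})$. So the only case requiring work is $\beta>(3-\omega)/2$, and here the idea is to trade the ``too large'' exponent $\beta$ for a small admissible one.

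First I would fix a constant $\gamma$ with $0<\gamma<\beta$ and $\gamma\le (3-\omega)/2$ --- for instance $\gamma=\min\{\beta/2,\,(3-\omega)/2\}$, which is strictly positive since $\omega<2.373$, and clearly satisfies both inequalities. Then I would invoke Lemma~\ref{lem:down} with this $\gamma$, the given $\beta$, and $c=2$: from the hypothesis $M^*(n,n^\beta,n\mid n^{2\beta})=O(n^{2+\beta-\eps})$ it gives $M^*(n,n^\gamma,n\mid n^{2\gamma})=O(n^{2+\gamma-\Omega(\eps)})$. (Recall the mechanism inside Lemma~\ref{lem:down}: it splits $n\times n^\gamma\times n$ into $(n/n^{\gamma/\beta})^2$ blocks of shape $n^{\gamma/\beta}\times n^\gamma\times n^{\gamma/\beta}$, each of which is a rescaled instance of the $\beta$-hypothesis with parameter $m=n^{\gamma/\beta}$ and entry bound $m^{2\beta}=n^{2\gamma}$.)

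Finally, since $\gamma\le (3-\omega)/2$, I would apply Corollary~\ref{cor:intapsp0} with $\gamma$ in the role of $\beta$ and $\Omega(\eps)$ in the role of $\eps$, obtaining $M^*(n,n,n\mid n^{3-\omega})=O(n^{3-\Omega(\eps)})$, which is exactly the desired conclusion. Composing the two $\Omega(\eps)$ losses still leaves an $\Omega(\eps)$ saving, as all constants involved ($\beta$, $\gamma$, $c$, $\omega$) are fixed.

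As for difficulty: there is essentially no obstacle here, since the substantive content is already contained in Lemma~\ref{lem:down} and Corollary~\ref{cor:intapsp0}. The only point that needs a moment of care is choosing $\gamma$ so that the hypotheses of \emph{both} results hold simultaneously --- i.e. $\gamma<\beta$ (needed by Lemma~\ref{lem:down}) and $\gamma\le(3-\omega)/2$ (needed by Corollary~\ref{cor:intapsp0}) --- and checking that such a positive $\gamma$ exists, which it does because $(3-\omega)/2>0$.
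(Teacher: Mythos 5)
Your proposal is correct and is essentially identical to the paper's own argument: the paper obtains Corollary~\ref{cor:intapsp} precisely by remarking that Lemma~\ref{lem:down} lets one replace $\beta$ with a sufficiently small positive $\gamma$ satisfying the hypothesis of Corollary~\ref{cor:intapsp0}. Your explicit choice $\gamma=\min\{\beta/2,(3-\omega)/2\}$ with $c=2$, and the check that both $\gamma<\beta$ and $\gamma\le(3-\omega)/2$ hold, just spells out the details the paper leaves implicit.
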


\begin{remark}\label{rmk:uapsp}\rm
By applying Lemma~\ref{lem:down} with $\beta=1$, $\gamma=1/2$, and $c=1$, we see that
$M^*(n,n,n\mid n)=O(n^{3-\eps})$ implies
$M^*(n,\sqrt{n},n\mid\sqrt{n})=O(n^{5/2-\Omega(\eps)})$.  If $\omega=2$, Chan, Vassilevska W. and Xu~\cite{CVXicalp21} showed that the \uAPSPH{} is equivalent to the claim that $M^*(n,\sqrt{n},n\mid\sqrt{n})$ is not in $O(n^{5/2-\eps})$ for any $\eps>0$.  Consequently, the \uAPSPH{} implies the \StrongAPSP{} when $\omega=2$.
\end{remark}

\begin{remark}\label{rmk:strongapsp:dir}\rm
In the definition of the \StrongAPSP{}, it does not matter whether the input graph is undirected or directed---the directed version is also equivalent to the statement that
$M^*(n,n,n\mid n^{3-\omega})$ is not truly subcubic: 
Directed APSP for integer weights in $[n^{3-\omega}]$ can be 
solved by Zwick's algorithm~\cite{zwickbridge,CVXicalp21} in
time $\OO(\max_\ell M^*(n,n/\ell,n\mid \ell n^{3-\omega}))$.
If $M^*(n,n,n\mid n^{3-\omega})=O(n^{3-\eps})$, then for $\ell\le n^\delta$,
we have $M^*(n,n/\ell,n\mid \ell n^{3-\omega})\le \OO(M^*(n,n,n\mid n^{3-\omega+\delta}) \le O(n^{(3-\eps)(3-\omega+\delta)/(3-\omega)})=O(n^{3-\Omega(\eps)})$ for a sufficiently small $\delta$.  On the other hand,
for $\ell>n^\delta$, we trivially have  $M^*(n,n/\ell,n\mid \ell n^{3-\omega})\le O(n^3/\ell) = O(n^{3-\delta})$.  
\end{remark}

\subsection{Min-Witness Product}

Let $\Tminwit(n)$ be the time complexity of computing min-witness product
for two  $n\times n$ Boolean matrices.
More generally,
let $\Tminwit(n_1,n_2,n_3)$  be the time complexity of \MinWitness{}
for an $n_1\times n_2$ Boolean matrix and an $n_2\times n_3$ Boolean matrix.  We observe that 
Min-Plus product for rectangular matrices of sufficiently small inner dimensions and
integer ranges can be reduced to \MinWitness{}:

\begin{lemma}\label{lem:reduce:from:minplus:minwit}
For any $x,y$, we have
$M^*(n,x,n\mid y)= O(\Tminwit(n,xy^2,n))$.
\end{lemma}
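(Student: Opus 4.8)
The plan is to encode a small-entry Min-Plus product as a single Min-Witness product, by turning the statement ``some index $k$ achieves the sum value $v$'' into a Boolean conjunction (Fredman-style, splitting on the value of $A_{ik}$), and by ordering the expanded inner indices so that the \emph{primary} sort key is the candidate sum value $v$. First I would set up the encoding. Let $A$ be the $n\times x$ matrix and $B$ the $x\times n$ matrix with entries in $[y]\cup\{\infty\}$, and let $C=A\star B$. For a witness $k$ with $A_{ik}=a$ and $B_{kj}=b$, the sum is $v:=a+b$, and conversely, fixing the target $v$ and the left value $a$ determines $b=v-a$. So I would introduce the expanded index set $\mathcal{I}=\{(v,k,a): k\in[x],\ a\in[y],\ v-a\in[y]\}$, which has exactly $xy^2$ elements (for each of the $y$ choices of $a$, the $y$ choices of $v-a$, and the $x$ choices of $k$). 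Define Boolean matrices $\hat A\in\{0,1\}^{n\times xy^2}$ and $\hat B\in\{0,1\}^{xy^2\times n}$ by $\hat A_{i,(v,k,a)}=[A_{ik}=a]$ and $\hat B_{(v,k,a),j}=[B_{kj}=v-a]$. Then $\hat A_{i,(v,k,a)}\wedge \hat B_{(v,k,a),j}=1$ precisely when $A_{ik}=a$ and $B_{kj}=v-a$, i.e.\ when $k$ is a witness realizing the value $v$ for entry $(i,j)$; in particular an index $k$ with $A_{ik}=\infty$ or $B_{kj}=\infty$ never satisfies this, since $\infty\notin[y]$, so such indices are automatically excluded.

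Next I would fix an ordering on $\mathcal{I}$ in which $v$ is the primary key (ties broken arbitrarily, say by $k$ and then $a$), and identify $\mathcal{I}$ with $[xy^2]$ via the corresponding rank function, computable in $O(xy^2)$ time. Then for each $(i,j)$, the index returned by \MinWitness{} on $\hat A,\hat B$, if it exists, is some triple $(v^\star,k^\star,a^\star)$ whose $v$-component equals $\min_{k\in[x]}(A_{ik}+B_{kj})$, because making $v$ the primary key forces the minimum-rank witness to have the lexicographically smallest achievable value; and if no witness exists we set $C_{ij}=\infty$. Thus one call to \MinWitness{} on an $n\times xy^2$ and an $xy^2\times n$ Boolean matrix, plus $O(nxy^2)$ time to build $\hat A,\hat B$ and decode the outputs, computes $C$. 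Since any \MinWitness{} algorithm must read its input, $\Tminwit(n,xy^2,n)=\Omega(nxy^2)$, and we conclude $M^*(n,x,n\mid y)=O(\Tminwit(n,xy^2,n)+nxy^2)=O(\Tminwit(n,xy^2,n))$.

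The step that needs the most care is the ordering argument: one must verify that declaring $v$ the primary key really does make Min-Witness report the minimum \emph{sum} (rather than, say, the smallest $k$ among witnesses of all values), and that the chosen bijection $\mathcal{I}\leftrightarrow[xy^2]$ is monotone in this primary key. Everything else---the conjunction encoding and the automatic exclusion of $\infty$ entries---is routine. If one does not care about the exact constant in the inner dimension, one may instead let $v$ range over all of $\{2,\ldots,2y\}$, giving an inner dimension $O(xy^2)$, which suffices as long as $\Tminwit(n,O(m),n)=O(\Tminwit(n,m,n))$ (e.g.\ by splitting the column/row block into a constant number of pieces).
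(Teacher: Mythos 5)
Your proposal is correct and is essentially the paper's proof: the paper also expands the inner dimension to $xy^2$ triples (parametrized as $(k,u,v)$ with $A_{ik}=u$, $B_{kj}=v$ rather than your $(v,k,a)$ with $b=v-a$, an immaterial bijection) and sorts the triples by the sum so that the minimum-index witness realizes the minimum sum. The only difference is cosmetic, and your explicit handling of $\infty$ entries and of the ordering/decoding details is a harmless elaboration of what the paper leaves implicit.
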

\begin{proof}
Suppose that we are given an $n\times x$ matrix $A$
and an $x\times n$ matrix $B$ where all matrix entries are in $[y]$.
For each $i\in[n]$, $k\in [x]$, and $u,v\in [y]$, let 
$A'_{i,(k,u,v)}=1$ if $A_{ik}=u$, and $A'_{i,(k,u,v)}=0$ otherwise.
For each $j\in[n]$, $k\in [x]$, $u,v\in [y]$, let 
$B'_{(k,u,v),j}=1$ if $B_{kj}=v$, and $B'_{(k,u,v),j}=0$ otherwise.
The number of triples $\tau=(k,u,v)$ is $xy^2\le n$.  
By sorting the triples in increasing order of $u+v$,
 the Min-Plus product of $A$
and $B$ can be computed from the Min-Witness product of $(A'_{i,\tau})_{i\in[n],\tau\in [x]\times[y]^2}$ and $(B'_{\tau, j})_{\tau\in [x]\times[y]^2, j\in[n]}$. 
\hspace*{\fill}
\end{proof}

Combining Corollary~\ref{cor:intapsp} and Lemma~\ref{lem:reduce:from:minplus:minwit} immediately gives the following:

\begin{corollary}\label{cor:strong-intapsp-imply:minwit}
If $\Tminwit(n,n^{5\beta},n)=O(n^{2+\beta-\eps})$, then
$M^*(n,n,n\mid n^{3-\omega})= O(n^{3-\Omega(\eps)})$.
\end{corollary}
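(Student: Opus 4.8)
The plan is to chain the two preceding results with a single careful choice of parameters. First I would apply Lemma~\ref{lem:reduce:from:minplus:minwit} with $x=n^\beta$ and $y=n^{2\beta}$. Since $xy^2 = n^\beta\cdot (n^{2\beta})^2 = n^{5\beta}$, this gives
\[
M^*(n,n^\beta,n\mid n^{2\beta}) \;=\; O\big(\Tminwit(n,n^{5\beta},n)\big).
\]
Note that, unlike Lemma~\ref{lem:reduce:from:minplus}, the reduction here is stated for \emph{any} $x,y$ with no constraint such as $xy^2\le n$, so this step is unconditional.

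Next, assuming the hypothesis $\Tminwit(n,n^{5\beta},n)=O(n^{2+\beta-\eps})$, the display above immediately yields $M^*(n,n^\beta,n\mid n^{2\beta})=O(n^{2+\beta-\eps})$. Finally I would invoke Corollary~\ref{cor:intapsp} verbatim with this same constant $\beta$: its hypothesis $M^*(n,n^\beta,n\mid n^{2\beta})=O(n^{2+\beta-\eps})$ is now satisfied, so its conclusion $M^*(n,n,n\mid n^{3-\omega})=O(n^{3-\Omega(\eps)})$ follows, which is exactly what we want.

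There is essentially no obstacle here; the only thing to verify is that the parameter bookkeeping lines up. Concretely, one must check that the integer range $[n^{2\beta}]$ demanded by Corollary~\ref{cor:intapsp} for the $n\times n^\beta$-by-$n^\beta\times n$ Min-Plus product is precisely the value of $y$ fed into Lemma~\ref{lem:reduce:from:minplus:minwit}, and that with that $y$ the resulting Min-Witness inner dimension $xy^2$ equals $n^{5\beta}$, matching the exponent in the statement being proved. Both are just the arithmetic $n^\beta\cdot(n^{2\beta})^2 = n^{5\beta}$, so the corollary follows immediately by composing the two earlier results.
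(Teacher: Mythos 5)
Your proposal is correct and is exactly the paper's argument: the paper derives this corollary by "combining Corollary~\ref{cor:intapsp} and Lemma~\ref{lem:reduce:from:minplus:minwit}," which is precisely your instantiation $x=n^\beta$, $y=n^{2\beta}$ giving inner dimension $xy^2=n^{5\beta}$, followed by an application of Corollary~\ref{cor:intapsp}. The parameter bookkeeping you check is the only content of the step, and it lines up as you say.
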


By setting $\beta=1/5$, we have thus proved that 
\MinWitness{} of two $n \times n$ Boolean matrices
cannot be computed in $O(n^{11/5-\eps})$ time under the \StrongAPSP{}.
In particular, if $\omega=2$,
this implies that \MinWitness{} is strictly harder than  Boolean matrix multiplication.

Furthermore, by setting $\beta=\gamma/5$, 
\MinWitness{} of an $n\times n^{\gamma}$ and an $n^{\gamma}\times n$ Boolean matrix cannot be computed in $O(n^{2+\gamma/5-\eps})$ time for any $\gamma$ under the same hypothesis.  This implies that for any $\gamma\le 0.3138$, \MinWitness{} is strictly harder than  Boolean matrix multiplication, as $\omega(1,0.3138,1)=2$~\cite{legallurr}.  This result interestingly rules out
the possibility that the polynomial method~\cite{Williams18,AbboudWY15,ChanW21} could be used to transform
the Min-Witness product of an $n\times d$ and a $d\times n$ matrix
into a standard product of an $n\times d^{O(1)}$ and a $d^{O(1)}\times n$ matrix.  It also contrasts \MinWitness{} with, for example, \Dominance{} or \Equality{},
which does have near-quadratic time complexity when the inner dimension $d$ is smaller than $n^{0.1569}$
(since as mentioned in Section~\ref{sec:prelim:eq:prod}, Matou\v sek's technique~\cite{MatIPL,YusterDom} yields a time bound 
of $\OO(\min_r (dn^2/r + M(n,dr,n))\le \OO(n^2 + M(n,d^2,n))$).

\subsection{All-Pairs Shortest Lightest Paths}

Let $\TundirAPSLP(n,m)$ be the time complexity of \APSLP{} on graphs with $n$ nodes and $m$ edges.  We observe the following:

\begin{lemma}\label{lem:reduce:from:minplus:apslp}
For any $x,y$, we have
$M^*(n,x,n\mid y)=O(\TundirAPSLP(n,nx))$ if $xy^2\le n$.
\end{lemma}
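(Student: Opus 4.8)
The plan is to turn a bounded Min-Plus instance into an $\APSLP{}$ instance in which the \emph{hop}-distance between every relevant pair of vertices is a fixed constant, so that the ``lightest among shortest paths'' value reported by $\APSLP{}$ records $\min_k(A_{ik}+B_{kj})$ directly inside the path weight. First I would set up the graph. Given an $n\times x$ matrix $A$ and an $x\times n$ matrix $B$ with entries in $[y]\cup\{\infty\}$, build an undirected $\{1,2\}$-weighted graph on vertex classes $U=\{u_i\}_{i\in[n]}$, $X=\{x_k\}_{k\in[x]}$, $V=\{v_j\}_{j\in[n]}$, together with, for every $k\in[x]$ and every value $a\in[y]$, a \emph{left value-gadget} $L_{k,a}$: a fresh path of exactly $y$ edges from a new endpoint $\ell_{k,a}$ to $x_k$ in which precisely $a$ edges have weight $2$ and the remaining $y-a$ have weight $1$ (so traversing $L_{k,a}$ costs exactly $y$ hops and total weight $y+a$), and symmetrically, for every $k$ and $b\in[y]$, a \emph{right value-gadget} $R_{k,b}$: a fresh $y$-edge path from $x_k$ to a new endpoint $r_{k,b}$ with $b$ edges of weight $2$. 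Finally, for each $i,k$ with $A_{ik}$ finite add a weight-$1$ edge $u_i\ell_{k,A_{ik}}$, and for each $k,j$ with $B_{kj}$ finite add a weight-$1$ edge $r_{k,B_{kj}}v_j$.

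Next I would verify the size and construction bounds: the value-gadgets contribute $O(xy)$ endpoints and $O(xy^2)$ internal path vertices, so $|V(G)|=O(n+xy^2)=O(n)$ by the hypothesis $xy^2\le n$; the edge count is at most $nx$ connector edges plus $O(xy^2)=O(nx)$ gadget edges (again using $xy^2\le n\le nx$), i.e.\ $O(nx)$; and $G$ is built in $O(nx)$ time. Then comes the structural heart of the argument: any $u_i$--$v_j$ walk must leave $U$ along some edge $u_i\ell_{k,A_{ik}}$, and since the gadget paths are pairwise vertex-disjoint off $X$ with degree-$2$ interiors, the only way to make progress toward $V$ is to traverse all of $L_{k,A_{ik}}$ to reach $x_k$, then all of some $R_{k,b}$ to reach $r_{k,b}$, then step to $v_j$, which forces $b=B_{kj}$; any deviation strictly increases the hop count. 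Hence $d_E(u_i,v_j)=2y+2$ precisely when some $k$ has $A_{ik}$ and $B_{kj}$ both finite (equivalently $C_{ij}<\infty$), and in that case the $(2y+2)$-hop $u_i$--$v_j$ paths are exactly the clean routes through a single $x_k$, of weight $1+(y+A_{ik})+(y+B_{kj})+1=2y+2+A_{ik}+B_{kj}$, whose minimum over $k$ is $2y+2+\min_k(A_{ik}+B_{kj})$. So one call to $\APSLP{}$ on $G$, plus $O(n^2)$ postprocessing, recovers $C_{ij}$ as $(\text{lightest value})-(2y+2)$ when the hop-distance equals $2y+2$, and $C_{ij}=\infty$ otherwise, giving $M^*(n,x,n\mid y)=O(\TundirAPSLP(n,nx))$ whenever $xy^2\le n$.

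The step I expect to be the main obstacle is the no-shortcut analysis: I must rule out $(2y+2)$-hop $u_i$--$v_j$ paths that are \emph{not} clean single-$X$ routes (paths that enter a gadget, back out, and re-enter elsewhere, or that bounce between the $U/V$ layer and the gadget layer) and argue that $d_E(u_i,v_j)>2y+2$ whenever no common finite $k$ exists, so the $C_{ij}=\infty$ case is correctly detected. I would handle this by the observation that every edge touching a gadget interior has both endpoints inside that same gadget, so any traversal of an $L$- or $R$-gadget that is not end-to-end wastes at least one hop, together with the fact that $u_i\!\to\!v_j$ connectivity is ``bipartite through $X$'' (to reach any $R$-gadget one must first sit at the corresponding $x_k$, reachable only through a full $L$-gadget). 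Everything else is routine: the weight bookkeeping ($y+a$ per gadget, offset $2y+2$ overall), the arithmetic decoding, and the vertex/edge/time accounting.

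As a remark, one could alternatively route the answer through the \emph{hop}-count, using for each $k$ a single shared chain of $y+1$ vertices (attaching $u_i$ at chain position $y-A_{ik}$ and $v_j$ at the corresponding $B$-side position); that variant encodes $C_{ij}$ directly as $d_E(u_i,v_j)-2$ and works under the weaker condition $xy\le n$. I would nonetheless present the value-gadget version above, since its $y^2$ blow-up mirrors that of Lemma~\ref{lem:reduce:from:minplus:minwit} and is exactly what the stated hypothesis $xy^2\le n$ is tailored to.
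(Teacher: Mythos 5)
Your construction is exactly the paper's: your $\ell_{k,a}$, $x_k$, and $r_{k,b}$ are the paper's $w_1[k,u]$, $w_2[k]$, and $w_3[k,v]$, with identical $y$-edge gadget paths of weight $y+a$, the same $O(n+xy^2)=O(n)$ node and $O(nx)$ edge counts, and the same decoding via hop-distance $2y+2$ and lightest weight $2y+2+\min_k(A_{ik}+B_{kj})$. The proposal is correct and takes essentially the same approach as the paper (your no-shortcut analysis and explicit treatment of $\infty$ entries are just details the paper leaves implicit).
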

\begin{proof}
Suppose that we are given an $n\times x$ matrix $A$
and an $x\times n$ matrix $B$ where all matrix entries are in $[y]$.
We construct an undirected graph as follows:
\begin{itemize}
\item For each $i\in [n]$, create a node $s[i]$.
\item For each $k\in [x]$ and $u\in [y]$, create a node $w_1[k,u]$.
\item For each $k\in [x]$, create a node $w_2[k]$.
\item For each $k\in [x]$ and $v\in [y]$, create a node $w_3[k,v]$.
\item For each $j\in [n]$, create a node $t[j]$.
\item For each $i\in [n]$ and $k\in [x]$, create an edge $s[i]w_1[k,u]$ of weight~1 where $u=a_{ik}$.
\item For each $k\in [x]$ and $u\in [y]$, create a
 path between $w_1[k,u]$ and $w_2[k]$
that has $u$ edges of weight 2 and $y-u$ edges of weight 1 (so that the path has $y$ edges and weight $y+u$).
\item For each $k\in [x]$ and $v\in [y]$, create a
 path between $w_2[k]$ and $w_3[k,v]$ 
that has $v$ edges of weight 2 and $y-v$ edges of weight 1 (so that the path has $y$ edges and weight $y+v$).
\item For each $j\in [n]$ and $k\in [x]$, create an edge $w_3[k,v]t[j]$ of weight~1 where $v=b_{kj}$.
\end{itemize}
The number of nodes in the graph is $O(n+xy^2)=O(n)$,
and the number of edges is $O(nx+xy^2)=O(nx)$.
For each $i,j\in [n]$, all lightest paths from $s[i]$ to $t[j]$
have $2y+2$ edges, and among them, the shortest has weight $2y+2+\min_k (a_{ik}+b_{kj})$.
\end{proof}

Combining Corollary~\ref{cor:intapsp} and Lemma~\ref{lem:reduce:from:minplus:apslp}
immediately gives the following:

\begin{corollary}\label{cor:strong-intapsp-imply:apslp}
For any constant $\beta\le 1/5$,
if $\TundirAPSLP(n,n^{1+\beta})=O(n^{2+\beta-\eps})$, then
$M^*(n,n,n\mid n^{3-\omega})= O(n^{3-\Omega(\eps)})$.
\end{corollary}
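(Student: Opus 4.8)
The plan is to treat this exactly as the analogous statement for \MinWitness{} (Corollary~\ref{cor:strong-intapsp-imply:minwit}) was treated: take a bounded rectangular Min-Plus product, push it through the \APSLPIntro{} gadget of Lemma~\ref{lem:reduce:from:minplus:apslp}, and then invoke the self-reduction Corollary~\ref{cor:intapsp} to amplify the rectangular bounded instance back up to the full cubic bounded instance $M^*(n,n,n\mid n^{3-\omega})$. So there is nothing new to build; the whole proof is a two-step composition of results already in hand.

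Concretely, I would argue the contrapositive. Assume $\TundirAPSLP(n,n^{1+\beta})=O(n^{2+\beta-\eps})$ for some constant $\beta\le 1/5$. Apply Lemma~\ref{lem:reduce:from:minplus:apslp} with $x:=n^{\beta}$ and $y:=n^{2\beta}$. The side condition of that lemma, $xy^2 = n^{\beta}\cdot n^{4\beta} = n^{5\beta}\le n$, holds precisely because $\beta\le 1/5$, and the graph produced has $O(nx)=O(n^{1+\beta})$ edges on $O(n)$ nodes, so
\[
M^*(n,n^{\beta},n\mid n^{2\beta}) \;=\; O\big(\TundirAPSLP(n,\,n\cdot n^{\beta})\big) \;=\; O\big(\TundirAPSLP(n,n^{1+\beta})\big) \;=\; O(n^{2+\beta-\eps}).
\]
Now feed this into Corollary~\ref{cor:intapsp} (which, thanks to Lemma~\ref{lem:down}, is available for every constant $\beta>0$ with no restriction relating $\beta$ to $\omega$): it gives $M^*(n,n,n\mid n^{3-\omega}) = O(n^{3-\Omega(\eps)})$, which is the desired conclusion.

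Since the two ingredients are assumed, there is no genuine obstacle here; the only thing one must check carefully when assembling the corollary is the exponent bookkeeping, namely that matching the entry bound $[n^{2\beta}]$ demanded by Corollary~\ref{cor:intapsp} against the constraint $xy^2\le n$ from Lemma~\ref{lem:reduce:from:minplus:apslp} is exactly what forces the threshold $\beta = 1/5$. The substantive content sits in the earlier results: the Lemma~\ref{lem:reduce:from:minplus:apslp} gadget, which encodes each entry $a_{ik},b_{kj}\in[y]$ as a count of weight-$2$ edges along a length-$y$ path so that every lightest $s[i]\!\to\! t[j]$ path has the fixed edge count $2y+2$ and its minimum weight is $2y+2+\min_k(a_{ik}+b_{kj})$; and Corollary~\ref{cor:intapsp}, which in turn rests on Theorem~\ref{thm:main}, Fredman's trick, and the generalized equality product of Lemma~\ref{lem:geneqprod}. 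It is also worth remarking, as with the \uAPSP{} case, that this lower bound is tight: for graphs with $n^{1+\beta}$ edges the trivial BFS-style algorithm for \APSLPIntro{} runs in $O(n^{2+\beta})$ time, so under the \StrongAPSP{} it cannot be improved to $O(n^{2+\beta-\eps})$.
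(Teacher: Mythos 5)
Your proposal is correct and is exactly the paper's argument: the paper derives this corollary by combining Corollary~\ref{cor:intapsp} with Lemma~\ref{lem:reduce:from:minplus:apslp}, instantiated precisely as you do with $x=n^{\beta}$, $y=n^{2\beta}$, where $xy^2=n^{5\beta}\le n$ accounts for the threshold $\beta\le 1/5$. Nothing is missing.
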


By setting $\beta=1/5$, we have thus proved that \APSLP{} cannot be solved
in $O(n^{11/5-\eps})$ time under the \StrongAPSP{}.  If $\omega=2$, this implies that
\APSLP{} is strictly harder than undirected APSP for such weighted graphs.  (The current best algorithm for \APSLP{} has running time $\OO(n^{2.58})$, or if $\omega=2$, $\OO(n^{5/2})$~\cite{CVXicalp21}.)  The same result holds for \APLSP{}.

Furthermore, \APSLP{} 
with $n^{1+\beta}$ edges cannot be solved in $O(n^{2+\beta-\eps})$ time for any $\beta\le 1/5$ under the same hypothesis.  Thus, the naive algorithm is essentially \emph{optimal} for sufficiently sparse graphs.

The same results hold for the similar problem of \APLSP{} (in the proof of Lemma~\ref{lem:reduce:from:minplus:apslp}, we just modify the path from $w_1[k,u]$ and $w_2[k]$
to use $y-u$ edges of weight 2 and $2u$ edges of weight 1, so that the
path has $y+u$ edges and weight $2y$).

\subsection{Batched Range Mode}

For a different application about data structures, 
we next consider  the range mode problem, which has received considerable attention and has been extensively studied in the literature \cite{KrizancMS05,CDLMW,WilliamsX20,GaoHe22,JinX22}.
Let $\Trangemode(N,Q\mid \sigma)$ be the total time complexity of \BatchMode{} for $Q$ range mode queries in an array of $N$ elements in $[\sigma]$.

\begin{lemma}\label{lem:reduce:from:minplus:mode}
For any $x,y$, we have
$M^*(n,x,n\mid y)= O(\Trangemode(nxy,n^2\mid x))$.
\end{lemma}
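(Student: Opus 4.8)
As in the preceding reductions, it suffices to reduce \MinPlus{} for an $n\times x$ matrix $A$ and an $x\times n$ matrix $B$ with entries in $[y]$ to a single \BatchMode{} instance; the reduction will be deterministic. (Entries equal to $\infty$ cause no trouble: replace each by $2y$, so that every ``finite'' sum stays $\le 2y$ while every sum involving a former $\infty$ exceeds $2y$; this only inflates the entry bound by a constant, and at the end we report $\infty$ whenever the value we compute exceeds $2y$.) Write $C_{ij}=\min_{k\in[x]}(A_{ik}+B_{kj})$ for the target. I will build one array over the color alphabet $[x]$ and $n^2$ query intervals, one per output cell $(i,j)$, so that the most frequent color inside the $(i,j)$-th interval is a minimizing index $k$ for $C_{ij}$; knowing such a $k$, I recover $C_{ij}=A_{ik}+B_{kj}$ in $O(1)$ time.

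\textbf{Construction and correctness.} For each row $i$ form a block $\mathcal{A}_i$ consisting, for $k=1,\ldots,x$ in order, of a run of the color $k$ of length $2y-\widehat{A}_{ik}$, where $\widehat{A}_{ik}:=A_{ik}-A_{i+1,k}$ for $i<n$ and $\widehat{A}_{nk}:=A_{nk}$; symmetrically form a block $\mathcal{B}_j$ for each column $j$ consisting, for $k=1,\ldots,x$, of a run of color $k$ of length $2y-\widehat{B}_{kj}$, where $\widehat{B}_{kj}:=B_{kj}-B_{k,j-1}$ for $j>1$ and $\widehat{B}_{k1}:=B_{k1}$. Since all entries lie in $[y]$, each difference lies in $[-(y-1),y-1]$, so every run length lies in $[y+1,3y-1]$ and is a positive integer; hence $|\mathcal{A}_i|,|\mathcal{B}_j|\le 3xy$. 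Let the array be $Z=\mathcal{A}_1\cdots\mathcal{A}_n\,\mathcal{B}_1\cdots\mathcal{B}_n$, of total length $O(nxy)$ over alphabet $[x]$, and let the query interval for $(i,j)$ run from the first position of $\mathcal{A}_i$ to the last position of $\mathcal{B}_j$, so that it covers exactly the whole blocks $\mathcal{A}_i,\ldots,\mathcal{A}_n$ and $\mathcal{B}_1,\ldots,\mathcal{B}_j$. Although this interval is ``polluted'' by the irrelevant blocks $\mathcal{A}_{i+1},\ldots,\mathcal{A}_n,\mathcal{B}_1,\ldots,\mathcal{B}_{j-1}$, the differencing makes the pollution telescope: the number of occurrences of color $k$ in the interval is
\[
\sum_{i'=i}^{n}\bigl(2y-\widehat{A}_{i'k}\bigr)+\sum_{j'=1}^{j}\bigl(2y-\widehat{B}_{kj'}\bigr)
\;=\;2y(n-i+1)-A_{ik}+2yj-B_{kj}\;=\;2y(n+1-i+j)-(A_{ik}+B_{kj}),
\]
using the telescoping identities $\sum_{i'\ge i}\widehat{A}_{i'k}=A_{ik}$ and $\sum_{j'\le j}\widehat{B}_{kj'}=B_{kj}$. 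Since the offset $2y(n+1-i+j)$ does not depend on $k$, the most frequent color in the $(i,j)$-th interval is $\argmax_{k}\bigl(-(A_{ik}+B_{kj})\bigr)=\argmin_{k}(A_{ik}+B_{kj})$, i.e., exactly a witness for $C_{ij}$.

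\textbf{Wrap-up and the main difficulty.} We build $Z$ and the $n^2$ intervals in $O(nxy)$ time (precompute prefix sums of the block lengths), invoke \BatchMode{} once, and for each returned color $k$ at cell $(i,j)$ output $A_{ik}+B_{kj}$; this post-processing costs $O(nxy+n^2)$, which is absorbed by $\Trangemode(nxy,n^2\mid x)$ since that quantity already dominates the cost of reading the input and writing the answers. Hence $M^*(n,x,n\mid y)=O(\Trangemode(nxy,n^2\mid x))$. The one real obstacle is the pollution just described: a naive concatenation of row- and column-blocks is ruined by the blocks lying between $\mathcal{A}_i$ and $\mathcal{B}_j$ inside the query interval, and the remedy is the backward/forward differencing, whose legality as (nonnegative) run lengths is precisely where the bound $y$ on the entries is used. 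Everything else—computing block endpoints, handling $\infty$ entries, absorbing constant factors—is routine bookkeeping.
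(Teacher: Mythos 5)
Your reduction is correct and takes essentially the same approach as the paper's: both build a single array of color runs over the alphabet $[x]$ of total length $O(nxy)$, with one query interval per output cell $(i,j)$, arranged so that the frequency of color $k$ in that interval equals a $k$-independent offset minus $A_{ik}+B_{kj}$, whence the mode is a Min-Plus witness. The only difference is how the intermediate blocks are neutralized: you difference consecutive rows/columns so their contributions telescope to $A_{ik}$ and $B_{kj}$, whereas the paper pairs each run $k^{A_{ik}}$ with its complement $k^{y-A_{ik}}$ so that every fully-covered block pair contributes exactly $y$ per color; both encodings are valid and yield the same bound.
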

\begin{proof}
\newcommand{\ssigma}{\sigma'}
\newcommand{\ttau}{\tau'}
Suppose that we are given an $n\times x$ matrix $A$
and an $x\times n$ matrix $B$ where all matrix entries are in $[y]$.
We create an array holding a string $S$ over the alphabet $[x]$, defined as follows: 
\begin{itemize}
\item Let $\sigma_i = 1^{A_{i1}}2^{A_{i2}}\cdots x^{A_{ix}}$
and $\ssigma_i = 1^{y-A_{i1}}2^{y-A_{i2}}\cdots x^{y-A_{ix}}$,
which have length $O(xy)$.
\item Let $\tau_j = 1^{B_{1j}}2^{B_{2j}}\cdots x^{B_{xj}}$
and $\ttau_i = 1^{y-B_{1j}}2^{y-B_{2j}}\cdots x^{y-B_{xj}}$,
which have length $O(xy)$.
\item Let $S=\sigma_n\ssigma_n\cdots \sigma_2\ssigma_2\sigma_1\ssigma_1
\ttau_1\tau_1\ttau_2\tau_2\cdots\ttau_n\tau_n,$
which has length $O(nxy)$.
\end{itemize}

For each $i,j\in [n]$, 
consider the substring $S_{ij}=\ssigma_i\sigma_{i-1}\ssigma_{i-1}\cdots
\sigma_1\ssigma_1\ttau_1\tau_1\cdots\ttau_{j-1}\tau_{j-1}\ttau_j$.
For each $k\in[x]$, the frequency of $k$ in $S_{ij}$
is precisely $iy+jy - A_{ik}-B_{kj}$.
Thus, the mode of $S_{ij}$ is an index $k$ minimizing $A_{ik}+B_{kj}$.
So, the Min-Plus product can be computed by answering $O(n^2)$ range mode queries on $S$.  
\end{proof}

Combining Corollary~\ref{cor:intapsp} and Lemma~\ref{lem:reduce:from:minplus:mode}
immediately gives the following:

\begin{corollary}\label{cor:strong-intapsp-imply:mode}
For any constant $\beta$, 
if $\Trangemode(n^{1+3\beta},n^2\mid n^\beta)=O(n^{2+\beta-\eps})$, then
$M^*(n,n,n\mid n^{3-\omega})= O(n^{3-\Omega(\eps)})$.
\end{corollary}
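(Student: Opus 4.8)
The plan is to obtain this corollary by composing the two results that immediately precede it: the rectangular-to-square reduction for bounded Min-Plus product (Corollary~\ref{cor:intapsp}) and the reduction from bounded rectangular Min-Plus to batched range mode (Lemma~\ref{lem:reduce:from:minplus:mode}). Concretely, I would instantiate Lemma~\ref{lem:reduce:from:minplus:mode} with $x=n^\beta$ and $y=n^{2\beta}$. Since then $nxy=n\cdot n^\beta\cdot n^{2\beta}=n^{1+3\beta}$ and the alphabet bound is $x=n^\beta$, the lemma gives
\[ M^*(n,n^\beta,n\mid n^{2\beta})\ =\ O\big(\Trangemode(n^{1+3\beta},\,n^2\mid n^\beta)\big). \]
Plugging in the hypothesis $\Trangemode(n^{1+3\beta},n^2\mid n^\beta)=O(n^{2+\beta-\eps})$ yields $M^*(n,n^\beta,n\mid n^{2\beta})=O(n^{2+\beta-\eps})$.

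Next I would feed this directly into Corollary~\ref{cor:intapsp} with the same constant $\beta$ (taken positive; for $\beta\le 0$ the statement is vacuous or trivial). The hypothesis of that corollary is exactly $M^*(n,n^\beta,n\mid n^{2\beta})=O(n^{2+\beta-\eps})$, so its conclusion $M^*(n,n,n\mid n^{3-\omega})=O(n^{3-\Omega(\eps)})$ follows immediately, which is the desired statement (and, via the equivalence recalled in Section~\ref{sec:intapsp-lower-bound}, refutes the \StrongAPSP{} when $\eps>0$ is a constant).

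The only checks are bookkeeping: that the three parameters in the output of Lemma~\ref{lem:reduce:from:minplus:mode}---array length $n^{1+3\beta}$, query count $n^2$, alphabet size $n^\beta$---match verbatim the quantities in the hypothesis of the corollary, and that the $\eps$ loss is only a constant factor in the exponent (which it is, by Corollary~\ref{cor:intapsp}). There is no genuine obstacle here: all the real content lives in the two lemmas being invoked. If anything, the ``hard part'' is conceptual rather than computational---namely seeing that the range-mode encoding of Lemma~\ref{lem:reduce:from:minplus:mode} inflates the array length by the $y=n^{2\beta}$ factor and uses alphabet size $x=n^\beta$, so that the tight rectangular lower bound $n^{2+\beta-o(1)}$ furnished by Corollary~\ref{cor:intapsp} transports into a \Trangemode{} lower bound with $N=n^{1+3\beta}$ and $Q=n^2$.
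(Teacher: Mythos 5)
Your proof is correct and is exactly the paper's argument: the paper derives this corollary by "combining Corollary~\ref{cor:intapsp} and Lemma~\ref{lem:reduce:from:minplus:mode}," with the same instantiation $x=n^\beta$, $y=n^{2\beta}$ so that $nxy=n^{1+3\beta}$ and the alphabet size is $n^\beta$. No gaps.
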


By setting $\beta=1/3$ and $n=\sqrt{N}$, we have thus proved that 
\BatchMode{} for $N$ queries on $N$ elements cannot be
solved in $O(N^{7/6-\eps})$ time under the \StrongAPSP{}.
Previously, Chan et al.~\cite{CDLMW} gave a reduction
from  Boolean matrix multiplication implying a better, near-$N^{3/2}$ conditional lower bound for 
\emph{combinatorial} algorithms (under the Combinatorial BMM Hypothesis); this matched upper bounds of known
combinatorial algorithms~\cite{KrizancMS05,CDLMW}.
However, for noncombinatorial
algorithms, their lower bound was near $N^{\omega/2}$,
which is trivial if $\omega=2$.
The distinction between combinatorial vs.\ noncombinatorial algorithms is especially important for the range mode problem, as it is actually possible to beat $N^{3/2}$ using fast matrix multiplication, as first shown by Vassilevska W. and Xu~\cite{WilliamsX20}.  The current fastest algorithm by Gao and He~\cite{GaoHe22} runs in  $O(N^{1.4797})$ time.  Our new 
lower bound reveals that there is a limit on how much fast matrix multiplication can help.

(For still more recent work on range mode, see \cite{JinX22} for
a conditional lower bound for the dynamic version of the range mode problem, but again this is only for combinatorial algorithms.)

Furthermore, by  using the fact that $\Trangemode(n^{1+3\beta},n^2\mid n^\beta)\le O(n^{1-3\beta})\cdot $\\ $
\Trangemode(n^{1+3\beta},n^{1+3\beta}\mid n^\beta)$ and setting $N=n^{1+3\beta}$ and $\gamma=\beta/(1+3\beta)$, we see that
\BatchMode{} for $N$ queries on $N$ elements in a universe 
of size $\sigma=N^\gamma$
cannot be answered in $O(N^{1+\gamma-\eps})$ time for
any $\gamma\le 1/6$, under the same hypothesis.  This lower bound is \emph{tight}, since an $O(N\sigma)$
upper bound is known~\cite{CDLMW}.

Furthermore, by setting $n=\sqrt{Q}$ and $n^\beta=(N/\sqrt{Q})^{1/3}$, \BatchMode{} for $Q$ queries on $N$ elements cannot be solved in $O(Q^{5/6}N^{1/3-\eps})$ time for any $Q\le N^2$ under the same hypothesis.
For example, for $Q=N^{1.6}$, the lower bound is near $N^{1.666}$
(in other words, we need at least $N^{0.066}$ time per query).
In contrast, the previous reduction by Chan et al.~\cite{CDLMW} from
Boolean matrix multiplication
gives a lower bound of $M(\sqrt{Q},N/\sqrt{Q},\sqrt{Q})$,
which is only near linear in $Q$ when $Q=N^{1.6}$, as $\omega(0.8,0.2,0.8)=1.6$.
(Known combinatorial algorithms have running time near $O(\sqrt{Q}N)$ as a function of $N$ and $Q$~\cite{KrizancMS05,CDLMW}.)

The same results hold for the similar problem 
of \emph{range minority}~\cite{ChanDSW15} (finding a least frequent element in a range).

\subsection{Dynamic Shortest Paths in Unweighted Planar Graphs}

\newcommand{\Tdynplanar}{\mbox{\sc planar-dyn-sp}}

As another example of an application to data structure problems, we now consider the dynamic shortest path problem for unweighted planar graphs.
Let $\Tdynplanar(N,Q,U)$ be the time complexity of 
performing an offline sequence of $Q$ shortest path distance queries and
$U$ edge updates on an unweighted, undirected planar graph 
with $N$ nodes.  

\begin{lemma}\label{lem:dynplanar}
For any $\alpha,\beta\le 1$, 
\[ M^*(n,n^\beta,n\mid X)=
O(n^{1-\alpha}\cdot \Tdynplanar((n^{2\alpha+\beta} + n^{\alpha+2\beta})X, n^{1+\alpha}, n^{1+\beta})).\]
\end{lemma}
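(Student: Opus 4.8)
\emph{Overall structure.} The plan is to reduce \MinPlus{} of an $n\times n^\beta$ matrix $A$ and an $n^\beta\times n$ matrix $B$ with entries in $[X]$ to an offline sequence of distance queries and edge insertions/deletions on an unweighted undirected planar graph, in the style of the dynamic‑planar‑shortest‑path lower bounds of Abboud and Dahlgaard~\cite{AbbDah}. Partition the row index set $[n]$ of $A$ into $n^{1-\alpha}$ blocks of $n^\alpha$ consecutive rows; each block is handled by a fresh instance of the dynamic planar shortest‑path problem (a \emph{phase}), which accounts for the leading $n^{1-\alpha}$ factor. Inside a phase we additionally sweep over the $n$ columns $j\in[n]$: we keep a single static planar graph $H$ whose topology is fixed for the whole phase, and between consecutive values of $j$ we use $O(n^\beta)$ edge updates to ``load'' the column $(B_{kj})_{k\in[n^\beta]}$ into $H$. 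Thus a phase reduces to building $H$ so that, for the currently loaded column, distances realize a min‑plus vector–matrix product of the $n^\alpha\times n^\beta$ block of $A$ with that column, and then reading off the $n^\alpha$ answers by distance queries.

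\emph{The planar gadget.} In $H$ we place $n^\alpha$ source terminals $s_i$ (one per row of the current block), $n^\beta$ ``inner'' gadgets $\Gamma_1,\dots,\Gamma_{n^\beta}$ (one per $k$), and a sink $t$. The abstract wiring we want is the three‑layer ``join'' $s_i-\Gamma_k-t$ in which traversing $\Gamma_k$ from the $s_i$‑side costs $A_{ik}$ and from the $t$‑side costs the currently loaded value $B_{kj}$, up to a global additive offset $\rho$ depending only on the layout; this join is not planar, so I would planarize it in the standard grid/``brick‑wall'' fashion, replacing each would‑be crossing by a small routing gadget, and realize each integer weight in $[X]$ by a path of that many unit edges (``free'' corridor edges being contracted). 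A routine check of the layout shows $H$ is planar and has $O\!\big((n^{2\alpha+\beta}+n^{\alpha+2\beta})X\big)$ vertices, the two terms coming from the source‑side and sink‑side routing respectively. With the gadget designed correctly, every $s_i$‑to‑$t$ path is forced through exactly one $\Gamma_k$ and has length $\rho+A_{ik}+B_{kj}$, so $\mathrm{dist}_H(s_i,t)=\rho+\min_k(A_{ik}+B_{kj})$; one query per pair $(i,j)$ thus recovers $(A\star B)_{ij}$ after subtracting the known $\rho$.

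\emph{Accounting.} Per phase we perform $n\cdot n^\alpha=n^{1+\alpha}$ queries and $O(n\cdot n^\beta)=O(n^{1+\beta})$ updates on a graph with $O((n^{2\alpha+\beta}+n^{\alpha+2\beta})X)$ vertices; multiplying by the $n^{1-\alpha}$ phases gives the stated bound. Combining this lemma (for suitable $\alpha,\beta$) with Corollary~\ref{cor:intapsp} then yields conditional lower bounds for \Tdynplanar{} under the \StrongAPSP{}, in the same way the earlier sections derived lower bounds for \MinWitness{}, \APSLP{} and \BatchMode{}.

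\emph{Main obstacle.} The substantive work is entirely in the static gadget: one must design the planar layout and the crossing/encoding gadgets so that (i) the graph is genuinely planar; (ii) no shortest $s_i$‑to‑$t$ path can ``cheat'', e.g.\ by splicing together contributions of two different $\Gamma_k$'s, or by shortcutting past the length‑$A_{ik}$ or length‑$B_{kj}$ encodings, or by re‑entering a horizontal corridor after committing to a column; and (iii) reloading a column of $B$ costs only $O(n^\beta)$ edge operations rather than $O(n^\beta X)$, which can be arranged by pre‑installing ``toggle'' vertices on the $B$‑encoding paths so that changing a weight is a constant number of edge operations. All three points are handled by adapting the constructions of~\cite{AbbDah} to the unweighted, rectangular, bounded‑weight setting; everything else is bookkeeping.
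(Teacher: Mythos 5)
Your proposal is correct and follows essentially the same route as the paper: both reduce via Abboud--Dahlgaard's planar gadget for rectangular Min-Plus product, convert the weighted planar graph to an unweighted one by subdividing each weight-$\ell$ edge into a unit-length path with $O(1)$-update ``toggle'' endpoints, and pay the $n^{1-\alpha}$ factor by splitting one outer dimension into blocks of size $n^\alpha$. The only difference is presentational --- the paper invokes the Abboud--Dahlgaard construction as a black box (using its stated weight bounds to get the $O((n^{2\alpha+\beta}+n^{\alpha+2\beta})X)$ node count), whereas you sketch rebuilding the gadget before deferring its correctness to the same source.
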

\begin{proof}
Abboud and Dahlgaard~\cite[Proof of Theorem~1]{AbbDah} reduced the
computation of the Min-Plus product of an $n\times n^\beta$ and
an $n^\beta\times n^\alpha$ matrix with entries from $[X]$, to
the problem of performing an offline sequence of $O(n^{1+\alpha})$ 
shortest path queries
and $O(n^{1+\beta})$ edge-weight changes on a \emph{weighted} planar graph with $O(n^{\alpha+\beta})$
nodes.  
In their graph construction, all edges have integer weights bounded by $O(n^\alpha X)$, except
for $O(n^\beta)$ edges having integer weights bounded by $O(n^{\alpha+\beta} X)$.
(The $X$ factor was stated as $X^2$ in their paper, but as they remarked at the end of their Section~2,
$X^2$ can be lowered to $X+1$.)
The weighted graph can be turned into an unweighted graph, simply by
subdividing each edge.  More precisely, we create a path $\pi_e$ of length $\ell$ 
for an edge $e$ with weight upper-bounded by $\ell$.  Whenever the weight 
of $e$ changes,
we can redirect an endpoint of $e$ to an appropriate node in the path $\pi_e$,
using $O(1)$ updates in this unweighted graph.
The resulting unweighted planar graph has  $O((n^{2\alpha+\beta} + n^{\alpha+2\beta})X)$ nodes.  Hence,
Abboud and Dahlgaard's reduction implies that 
$M^*(n,n^\beta,n^\alpha\mid X)=
O(\Tdynplanar((n^{2\alpha+\beta} + n^{\alpha+2\beta})X, n^{1+\alpha}, n^{1+\beta}))$.

The lemma then follows, as $M^*(n,n^\beta,n\mid X) = O(n^{1-\alpha}\cdot 
M^*(n,n^\beta,n^\alpha\mid X))$.
\end{proof}

Combining Corollary~\ref{cor:intapsp} and Lemma~\ref{lem:dynplanar} (with $\alpha=\beta$)
immediately gives the following:

\begin{corollary}\label{cor:strong-intapsp-imply:planar}
For any constant $\beta$,
if $\Tdynplanar(n^{5\beta}, %
n^{1+\beta},n^{1+\beta})=O(n^{1+2\beta-\eps})$, then
$M^*(n,n,n\mid n^{3-\omega})= O(n^{3-\Omega(\eps)})$.
\end{corollary}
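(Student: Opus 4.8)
The plan is to chain Lemma~\ref{lem:dynplanar} with the bounded-integer Min-Plus self-reduction of Corollary~\ref{cor:intapsp}, choosing the free parameters in Lemma~\ref{lem:dynplanar} so that the intermediate Min-Plus instance it produces is exactly the one whose hardness Corollary~\ref{cor:intapsp} requires. Recall that Corollary~\ref{cor:intapsp} holds for \emph{every} constant $\beta>0$ regardless of the value of $\omega$ (this is where Lemma~\ref{lem:down} was invoked), whereas Lemma~\ref{lem:dynplanar} requires $\alpha,\beta\le 1$; so throughout we may assume $0<\beta\le 1$, which is in any case the interesting regime.

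First I would apply Lemma~\ref{lem:dynplanar} with $\alpha=\beta$ and $X=n^{2\beta}$. Then the node-count argument of $\Tdynplanar$ collapses: $(n^{2\alpha+\beta}+n^{\alpha+2\beta})X=(n^{3\beta}+n^{3\beta})\,n^{2\beta}=\Theta(n^{5\beta})$, while the query and update counts become $n^{1+\alpha}=n^{1+\beta}$ and $n^{1+\beta}$, and the prefactor is $n^{1-\alpha}=n^{1-\beta}$. Hence
\[ M^*(n,n^\beta,n\mid n^{2\beta}) \;=\; O\!\left(n^{1-\beta}\cdot \Tdynplanar\!\left(n^{5\beta},\,n^{1+\beta},\,n^{1+\beta}\right)\right). \]
Substituting the hypothesized bound $\Tdynplanar(n^{5\beta},n^{1+\beta},n^{1+\beta})=O(n^{1+2\beta-\eps})$ gives
\[ M^*(n,n^\beta,n\mid n^{2\beta}) \;=\; O\!\left(n^{1-\beta}\cdot n^{1+2\beta-\eps}\right) \;=\; O(n^{2+\beta-\eps}). \]

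Finally I would feed this into Corollary~\ref{cor:intapsp}, whose hypothesis is \emph{precisely} $M^*(n,n^\beta,n\mid n^{2\beta})=O(n^{2+\beta-\eps})$ and whose conclusion is $M^*(n,n,n\mid n^{3-\omega})=O(n^{3-\Omega(\eps)})$; this is exactly the claimed statement. I do not expect a genuine obstacle here — the argument is a routine composition of two already-established black boxes. The only points that need care are (i) verifying that the choice $\alpha=\beta$, $X=n^{2\beta}$ makes the $(n^{2\alpha+\beta}+n^{\alpha+2\beta})X$ term collapse to $\Theta(n^{5\beta})$ so that it matches the $\Tdynplanar(n^{5\beta},\dots)$ in the statement, and (ii) the exponent bookkeeping $(1-\beta)+(1+2\beta-\eps)=2+\beta-\eps$, which is what lets the output meet the $n^{2+\beta-\eps}$ threshold in Corollary~\ref{cor:intapsp}. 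Any real subtlety lives upstream, inside Lemma~\ref{lem:dynplanar} (the edge-subdivision trick converting the Abboud--Dahlgaard weighted planar instance into an unweighted one and carrying updates across it), but that has already been handled.
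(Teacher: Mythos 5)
Your proposal is correct and is exactly the paper's argument: the paper proves this corollary by combining Lemma~\ref{lem:dynplanar} with $\alpha=\beta$ (and $X=n^{2\beta}$) together with Corollary~\ref{cor:intapsp}, with the same parameter collapse $(n^{3\beta}+n^{3\beta})n^{2\beta}=\Theta(n^{5\beta})$ and the same exponent bookkeeping. Nothing is missing.
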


By setting $\beta=1/4$ and $N=n^{5/4}$, %
we have thus proved that 
an offline sequence
of $N$ shortest path queries and $N$ updates on an unweighted, undirected $N$-node planar
graph cannot be processed in $O(N^{6/5-\eps})$ time, under the
\StrongAPSP{}.  This rules out the existence of data structures with $N^{o(1)}$ time per operation.
Abboud and Dahlgaard~\cite{AbbDah} proved
a better lower bound near $N^{3/2}$ in the weighted case under the APSP Hypothesis, 
or near $N^{4/3}$ in the unweighted case under the OMv Hypothesis~\cite{HenzingerKNS15}, but the latter
bound under the OMv Hypothesis holds only for \emph{online} queries and updates, when considering general noncombinatorial algorithms.
We have obtained the first conditional lower bounds for the unweighted case that hold in the offline setting.

Gawrychowski and Janczewski~\cite{GawJan} have adapted Abboud and Dahlgaard's technique to prove conditional lower bounds for certain dynamic data structure versions of the \emph{longest increasing subsequence (LIS)} problem.  In the unweighted case, their reduction was again based on the OMv Hypothesis and applicable only for the online setting.  Our approach should similarly yield new conditional lower bounds in the offline setting for their problem.

The preceding applications are not meant to be exhaustive, but
the applications to \BatchMode{} and dynamic planar shortest paths should suffice to
illustrate the potential usefulness of our technique to (unweighted) data structure problems in general.
A common way to obtain conditional lower bounds for such data structure problems is via reduction from Boolean matrix multiplication, which is useful only for combinatorial algorithms, or
via the OMv Hypothesis, which is only for online settings.
Our technique provides a new avenue, allowing us to obtain (weaker, but still nontrivial) lower bounds for general noncombinatorial algorithms in offline or batched settings: namely, it
suffices to reduce from rectangular Min-Plus products when the inner dimension 
and the integer range are both small.

\subsection{Min-Witness Equality Product}

Lastly, we revisit the \MinWitnessEq{} problem.
Let $\Tminwiteq(n)$ be the time complexity of \MinWitnessEq{}
for $n\times n$ matrices.  Chan, Vassilevska W., and Xu~\cite{CVXicalp21} showed
that $\TunwtdirAPSP(n)=\OO(\Tminwiteq(n))$, so we immediately obtain a near-$n^{7/3}$ lower bound
for \MinWitnessEq{} under the \StrongAPSP{}.  However, the following corollary gives an alternative lower bound, which is worse if $\omega=2$, but is better if the current bound of
$\omega$ turns out to be close to tight.  Note that $(2\omega+5)/4$ is strictly larger
than $\omega$ for all $\omega\in [2,2.373)$.

\begin{corollary}
If $\Tminwiteq(n) = O(n^{(2\omega+5)/4-\eps})$,
then $M^*(n,n,n\mid n^{3-\omega}) = O(n^{3-\Omega(\eps)})$.
\end{corollary}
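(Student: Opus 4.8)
The plan is to follow the same template used throughout Sections~\ref{sec:intapsp-lower-bound}--\ref{sec:intapsp-lower-bound:more}, but to plug \MinWitnessEq{} into Theorem~\ref{thm:main} at exactly the point where it is strong enough to help. Recall that the proof of Theorem~\ref{thm:main} reduces $M^*(n,n,n\mid n^{3-\omega})$ to (i) a small‑range base product $A'\star B'$ with entries in $[t]$, (ii) witness‑finding plus brute force for the few‑witnesses case, and (iii) for each of $\OO(s)$ pivots $k_0$, the generalized equality product $\min_{k:\,A'_{ik}-A'_{ik_0}=B'_{k_0j}-B'_{kj}}(A''_{ik}+B''_{kj})$, where the ``value parts'' satisfy $A''_{ik},B''_{kj}\in[\OO(n^{3-\omega}/t)]$. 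The key observation is that this generalized equality product is exactly a \MinWitnessEq{} computation once we append a ``value coordinate'': for every $k$ and every candidate value $v$ of $A''_{ik}+B''_{kj}$, introduce a column $(k,v)$, and pack the conjunction ``$A'_{ik}-A'_{ik_0}=B'_{k_0j}-B'_{kj}$ and $A''_{ik}=v-B''_{kj}$'' into a single equality of an $A$‑only quantity against a $B$‑only quantity (Fredman's trick again, applied twice). Ordering the columns primarily by $v$, a \MinWitnessEq{} call then returns, for each $(i,j)$, the least such $v$ --- which is $C''_{ij}$ --- together with a genuine witness $k$.

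Concretely, I would (a) compute the base product $A'\star B'$ in $\OO(tn^\omega)$ time by the small‑integer matrix‑product algorithm~\cite{ALONGM1997} (this is where $\omega$ enters); (b) handle the few‑witnesses case by witness‑finding on $\OO(n/s)$ random $s$‑subsets, each realized as a \MinWitnessEq{} instance, in total time $\OO((n/s)\Tminwiteq(n))$ (valid when $st\le n$), followed by the naive $\OO(n^3/s)$ computation of $C''$ on those entries; and (c) for each of the $\OO(s)$ pivots, compute the generalized equality product as above, splitting the \MinWitnessEq{} instance of inner dimension $n\cdot g$ (with $g=\OO(n^{3-\omega}/t)$) into $\OO(g)$ square $n\times n\times n$ instances grouped by ranges of $v$, and stitching the per‑block answers together; this costs $\OO(sg\,\Tminwiteq(n)+sgn^2)$ over all pivots. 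Writing $s=n^\sigma$, $t=n^\tau$ and $\Tminwiteq(n)=n^\theta$, the requirements that $tn^\omega$, $(n/s)\Tminwiteq(n)$, $sg\,\Tminwiteq(n)$ and $n^3/s$ all be truly subcubic (subject to $st\le n$) reduce to $\tau<3-\omega$, $\sigma>\theta-2$, $\tau>\sigma+\theta-\omega$ and $\sigma+\tau\le 1$; these are jointly satisfiable precisely when $\theta<(\omega+5)/3$. Since $(\omega+5)/3\ge (2\omega+5)/4$ for all $\omega\le 5/2$, an $O(n^{(2\omega+5)/4-\eps})$‑time algorithm for \MinWitnessEq{} yields $M^*(n,n,n\mid n^{3-\omega})=O(n^{3-\Omega(\eps)})$, as claimed (and in fact one gets the stronger threshold $(\omega+5)/3$).

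The main obstacle is this parameter balancing: one must choose the range‑splitting parameter $t$ and the hitting‑set/few‑witnesses parameter $s$ so that the fast‑matrix‑multiplication term $tn^\omega$, the many‑pivots term $sg\,\Tminwiteq(n)$ with $g=\OO(n^{3-\omega}/t)$, the witness‑finding term $(n/s)\Tminwiteq(n)$, and the brute‑force term $n^3/s$ are \emph{simultaneously} truly subcubic under the constraint $st\le n$, and it is exactly the optimal tradeoff among them that fixes the exponent and makes it depend on $\omega$ through the range $n^{3-\omega}$ in the \StrongAPSP{} (this is also why the resulting bound can beat the $n^{7/3}$ bound obtainable through \uAPSP{} once $\omega>13/6$, while being weaker when $\omega=2$). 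A secondary, essentially routine task is the correctness bookkeeping: verifying that the added value coordinate and the ``sort by $v$'' ordering make \MinWitnessEq{} recover $C''_{ij}$ and an honest witness for every $(i,j)$ in the many‑witnesses case (in particular that every pivot $k_0\in H\cap W_{ij}$ produces the same $C''_{ij}$), that the split into square blocks is stitched back correctly, and that $C_{ij}=C'_{ij}g+C''_{ij}$ is reassembled as in the proof of Theorem~\ref{thm:main}.
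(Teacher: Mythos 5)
Your proposal is correct and proves the corollary; it follows the same overall skeleton as the paper's argument (the few-/many-witnesses decomposition of Theorem~\ref{thm:main}, the Chan--Vassilevska W.--Xu reduction from $M^*(n,s,n\mid t)$ with $st\le n$ to \MinWitnessEq{}, and a reduction of the pivot-based generalized equality products to equality-type products), but it handles the many-witnesses case in a genuinely sharper way. The paper bounds $\Tgeneq(n,n,n\mid g)$ by $O(g^2\cdot\Teq(n,n,n))$, enumerating all \emph{pairs} of value parts; you instead enumerate only the $O(g)$ candidate sums $v$ and fold the condition $A''_{ik}+B''_{kj}=v$ into the equality test via a second application of Fredman's trick (the combined encoding $x_1M+x_2=y_1M+y_2$ with $M>2g$), giving $O(g\cdot\Tminwiteq(n))$ per pivot. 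Combined with leaving $t$ as a free parameter (the paper hard-wires $t=n/s$) and computing the base product directly in $\OO(tn^\omega)$ time, your balancing yields the threshold $(\omega+5)/3$, which dominates the paper's $(2\omega+5)/4$ for all $\omega\le 5/2$, so the stated corollary follows a fortiori. I verified the feasibility computation: the constraints $\max(0,\theta-2)<\sigma<(1+\omega-\theta)/2$ and $\sigma+\theta-\omega<\tau<3-\omega$ are simultaneously satisfiable exactly when $\theta<(\omega+5)/3$, and the slack degrades gracefully to an $\Omega(\eps)$ saving. The only (inconsequential) slip is in your closing parenthetical: the crossover $\omega>13/6$ at which the bound beats the near-$n^{7/3}$ bound derived via \uAPSP{} pertains to the paper's exponent $(2\omega+5)/4$; your own exponent $(\omega+5)/3$ already equals $7/3$ at $\omega=2$ and exceeds it for every $\omega>2$.
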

\begin{proof}
Let $\Tgeneq(n_1,n_2,n_3\mid\ell)$ be the time complexity of
the generalized equality product problem in Lemma~\ref{lem:geneqprod}.
Let $\Teq(n_1,n_2,n_3)$ be the time complexity of \ExistEquality{}, which is a variant of \Equality{} where we only need to determine if each of the outputs of the standard \Equality{} is nonzero or not.
It is easy to see that $\Tgeneq(n_1,n_2,n_3\mid\ell) \le O(\ell^2\cdot \Teq(n_1,n_2,n_3))$.

The proof of Theorem~\ref{thm:main} shows that for any $s\le n_2$
and $t\le\ell$,
\[ M^*(n_1,n_2,n_3\mid\ell)\ =\ 
\OO\big( (n_2/s) M^*(n_1,s,n_3\mid t)
\,+\, s\cdot\Tgeneq(n_1,n_2,n_3\mid \ell/t) \big). \]
Consequently, for any $t\le n^{3-\omega}$,
\[ M^*(n,n,n\mid n^{3-\omega})\ =\ 
\OO\big( (n/s) M^*(n,s,n\mid t)
\,+\, s(n^{3-\omega}/t)^2\cdot \Teq(n,n,n) \big). \]
Set $t=n/s$.
Chan, Vassilevska W. and Xu~\cite{CVXicalp21} gave a reduction showing
that $M^*(n,s,n\mid n/s)=O(\Tminwiteq(n))$.
Trivially, $\Teq(n,n,n)=O(\Tminwiteq(n))$.
It follows that $M^*(n,n,n\mid n^{3-\omega})\ =\ O((n/s + s^3/n^{2\omega-4})\cdot \Tminwiteq(n))$.  The result follows by setting $s=n^{(2\omega-3)/4}$.
\end{proof}

\section{Lower Bounds under the \uAPSPH{}}\label{sec:uapsp-lower-bound}

We can similarly apply our key reduction in Theorem~\ref{thm:main} to
obtain (better) lower bounds under the \uAPSPH{},
using the following corollary:

\begin{corollary}\label{cor:unwtdirapsp}
Let $\rho$ be such that $\omega(1,\rho,1)=1+2\rho$.
Fix any constant $\sigma>\rho$, and
let $\kappa=\frac{\omega(1,\sigma,1)-1-2\rho}{\sigma-\rho}$.
For any constant $\beta$,
if $M^*(n,n^\beta,n\mid n^{(1+\kappa)\beta})=O(n^{2+\beta-\eps})$,
then $\TunwtdirAPSP(n)=O(n^{2+\rho-\Omega(\eps)})$.
\end{corollary}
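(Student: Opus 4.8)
The plan is to run Zwick's bridging-set algorithm for unweighted directed APSP on top of the hypothesized fast rectangular bounded Min-Plus routine. Recall (the unweighted specialization of the formula in Remark~\ref{rmk:strongapsp:dir}, which is Zwick's algorithm~\cite{zwickbridge,CVXicalp21}) that the algorithm processes scales $\ell=1,2,4,\ldots,n$; at scale $\ell$ it samples a bridging set of size $\OO(n/\ell)$ and performs one Min-Plus product of an $n\times \OO(n/\ell)$ matrix and an $\OO(n/\ell)\times n$ matrix with entries in $[O(\ell)]\cup\{\infty\}$, so that $\TunwtdirAPSP(n)=\OO(\max_{\ell} M^*(n,n/\ell,n\mid \ell))$. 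Writing $\ell=n^a$, it therefore suffices to show $M^*(n,n^{1-a},n\mid n^a)=O(n^{2+\rho-\Omega(\eps)})$ for every $a\in[0,1]$. By Lemma~\ref{lem:down} (applied with $c=1+\kappa$) we may first shrink $\beta$ and assume it is a sufficiently small positive constant---in particular $\beta<\min\{\sigma-\rho,\ \rho,\ (1-\rho)/(1+\kappa)\}$---at the cost of only a constant factor in $\eps$.

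Outside a small window $a\in[1-\rho-c\eps,\,1-\rho+c\eps]$ around the critical exponent (here $c>0$ is a small constant fixed below), the classical algorithms already suffice: for $a\ge 1-\rho+c\eps$ the naive $O(n^3/\ell)=O(n^{3-a})$ Min-Plus bound gives $O(n^{2+\rho-c\eps})$; for $a\le 1-\rho-c\eps$ the Alon--Galil--Margalit algorithm~\cite{AlonGMN92} gives $\OO(\ell\cdot M(n,n/\ell,n))=\OO(n^{a+\omega(1,1-a,1)})$, and since $a\mapsto a+\omega(1,1-a,1)$ is convex with value $2+\rho$ at $a=1-\rho$ and value $\omega<2+\rho$ at $a=0$, the chord bound gives $a+\omega(1,1-a,1)\le 2+\rho-\Omega(\eps)$ throughout $[0,1-\rho-c\eps]$.

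For $a$ in the remaining window we invoke Theorem~\ref{thm:main} with $n_1=n_3=n$, $n_2=n^{1-a}$, entry bound $n^a$, and parameters $s=n^{\beta-\eps/2}$, $r=n^{\beta}$, $t=n^{(1+\kappa)\beta}$ (the small-$\beta$ assumptions guarantee $s\le n_2$, $t\le n^a$, and $\ell/t>1$). \emph{Term~1}, $(n^{1-a}/s)\,M^*(n,s,n\mid t)$: padding the inner dimension from $n^{\beta-\eps/2}$ back up to $n^{\beta}$ with $\infty$ columns/rows (which leaves the entry range $[t]$ untouched) lets us apply the hypothesis, giving $O(n^{1-a-\beta+\eps/2}\cdot n^{2+\beta-\eps})=O(n^{3-a-\eps/2})$. \emph{Term~2}, $sn^{3-a}/r=n^{3-a-\eps/2}$, has the same size; both are $O(n^{2+\rho-\Omega(\eps)})$ provided $c<1/2$. \emph{Term~3}, $s\,M^*(n,rn^{1-a},n\mid n^a/t)$: bounding the inner product by Alon--Galil--Margalit gives $n^{\beta-\eps/2}\cdot n^{a-(1+\kappa)\beta}\cdot n^{\omega(1,\,1-a+\beta,\,1)}=n^{\,a-\kappa\beta-\eps/2+\omega(1,\rho+(\beta-\xi),1)}$ where $\xi:=a-(1-\rho)$, and the defining property of $\kappa$ together with convexity of $\omega(1,\cdot,1)$ yields $\omega(1,\rho+(\beta-\xi),1)\le 1+2\rho+\kappa(\beta-\xi)$ (valid since $\beta-\xi\in[0,\sigma-\rho]$ for $\eps$ small), so Term~3 is $O(n^{2+\rho+\xi(1-\kappa)-\eps/2})$. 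Choosing $c$ small enough that $c\,|1-\kappa|\le 1/4$ makes this $O(n^{2+\rho-\eps/4})$. Combining the three ranges of $a$ gives $\TunwtdirAPSP(n)=O(n^{2+\rho-\Omega(\eps)})$.

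The delicate part is Term~3: it is exactly the ``many-witnesses'' generalized-equality-product term of Theorem~\ref{thm:main}, and it sits right at the boundary of being useful---with the naive choice $s=n^{\beta}$ it comes out to exactly $n^{2+\rho}$, with no slack. Two points rescue it. First, the definition $\kappa=(\omega(1,\sigma,1)-1-2\rho)/(\sigma-\rho)$ is precisely the secant slope of the convex exponent function $\omega(1,\cdot,1)$ anchored at $\omega(1,\rho,1)=1+2\rho$, so that the Alon--Galil--Margalit cost of the generalized equality product matches---rather than exceeds---the $n^{2+\rho}$ barrier; this is exactly why the hypothesized entry range must be $n^{(1+\kappa)\beta}$. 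Second, the genuine $n^{-\Omega(\eps)}$ saving is injected through the \emph{inner dimension}: shrinking $s$ to $n^{\beta-\eps/2}$ multiplies all three terms by $n^{-\eps/2}$ (for Term~1 only after padding the inner dimension back to $n^\beta$ so the hypothesis still applies), while the entry bound $t$---and hence the generalized equality product---is left unchanged. The only other work is routine: keeping the window $[1-\rho-c\eps,1-\rho+c\eps]$ narrow enough that all three terms stay below $n^{2+\rho-\Omega(\eps)}$ and handing off cleanly at its edges to the naive and Alon--Galil--Margalit algorithms via convexity of $\omega(1,\cdot,1)$.
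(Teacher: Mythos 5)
Your proposal is correct and its core is identical to the paper's proof: the same application of Theorem~\ref{thm:main} with $s=n^{\beta-\eps/2}$, $r=n^{\beta}$, $t=n^{(1+\kappa)\beta}$, the hypothesis absorbed into the first term after padding, and the same secant-slope/convexity bound $\omega(1,\rho+\cdot,1)\le 1+2\rho+\kappa(\cdot)$ for the third term, with Lemma~\ref{lem:down} used to shrink $\beta$. The only difference is organizational: the paper invokes the equivalence of Chan, Vassilevska W.\ and Xu~\cite{CVXicalp21} to reduce directly to the single critical instance $M^*(n,n^{\rho},n\mid n^{1-\rho})$, whereas you re-derive the needed direction via Zwick's scale decomposition and dispatch the off-critical scales with the naive and Alon--Galil--Margalit bounds.
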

\begin{proof}
Chan, Vassilevska W. and Xu~\cite{CVXicalp21} have shown that $\TunwtdirAPSP(n)=O(n^{2+\rho-\Omega(\eps)})$ is equivalent to
$M^*(n,n^\rho,n\mid n^{1-\rho})=O(n^{2+\rho-\Omega(\eps)})$.

By Theorem~\ref{thm:main},
\[ M^*(n,n^\rho,n\mid n^{1-\rho})\ =\ \OO\left((n^{\rho}/s)M^*(n,s,n\mid t) + sn^{2+\rho}/r + (sn^{1-\rho}/t) M(n,rn^{\rho},n)\right).
\]
Setting $s=n^{\beta-\eps'}$, $t=n^{(1+\kappa)\beta}$, and $r=n^{\beta}$ with $\eps'=\eps/2$ yields
\[ M^*(n,n^\rho,n\mid n^{1-\rho})\ =\ \OO\left(n^{\rho-\beta+\eps'}M^*(n,n^\beta,n\mid n^{(1+\kappa)\beta}) + n^{2+\rho-\eps'} + n^{1-\rho-\kappa\beta+\omega(1,\rho+\beta,1)-\eps'}\right),
\]
which is $\OO(n^{2+\rho-\Omega(\eps)})$,
since $\omega(1,\rho+\beta,1) \le \omega(1,\rho,1) + 
\frac{\omega(1,\sigma,1) - \omega(1,\rho,1)}{\sigma-\rho}\beta
= 1+2\rho + \kappa\beta$, by convexity of $\omega(1,\cdot,1)$.

The above assumes $(1+\kappa)\beta\le 1-\rho$ and $\beta\le\sigma-\rho$, but 
this assumption may be removed, since Lemma~\ref{lem:down} allows us to replace $\beta$ with any sufficiently small positive constant $\gamma$.
\end{proof}

We pick $\sigma=0.85$.  By known bounds~\cite{legallurr}, $\omega(1,0.85,1)<2.258317$.
Since $\rho\ge 0.5$, we have $\kappa\le \frac{1.258317-2\rho}{0.85-\rho}<0.7381$.
If $\omega=2$, then $\kappa=0$.

\subsection{Min-Witness Product}

Combining Corollary~\ref{cor:unwtdirapsp} and Lemma~\ref{lem:reduce:from:minplus:minwit} immediately gives the following:

\begin{corollary}\label{cor:lower-bounds-under-uAPSP:minwit}
Let $\rho$ and $\kappa$ be as in Corollary~\ref{cor:unwtdirapsp}.
For any constant $\beta$,
if $\Tminwit(n,n^{(3+2\kappa)\beta},n)=O(n^{2+\beta-\eps})$, then
$\TunwtdirAPSP(n)=O(n^{2+\rho-\Omega(\eps)})$.
\end{corollary}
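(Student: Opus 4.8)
The plan is to obtain this corollary by simply chaining the two ingredients already in hand: Corollary~\ref{cor:unwtdirapsp}, which says that a fast algorithm for the rectangular bounded Min-Plus product $M^*(n,n^\beta,n\mid n^{(1+\kappa)\beta})$ would yield a fast algorithm for \uAPSP{}, and Lemma~\ref{lem:reduce:from:minplus:minwit}, which reduces such a bounded Min-Plus product to a single \MinWitness{} instance of the right dimensions. Neither step requires new ideas; the whole content is in verifying that the dimension parameters line up exactly with the statement.

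Concretely, I would first apply Lemma~\ref{lem:reduce:from:minplus:minwit} with $x=n^\beta$ and $y=n^{(1+\kappa)\beta}$. Since $xy^2 = n^{\beta}\cdot n^{2(1+\kappa)\beta} = n^{(3+2\kappa)\beta}$, the lemma gives
\[ M^*(n,n^\beta,n\mid n^{(1+\kappa)\beta})\ =\ O\!\left(\Tminwit(n,n^{(3+2\kappa)\beta},n)\right). \]
Hence, assuming $\Tminwit(n,n^{(3+2\kappa)\beta},n)=O(n^{2+\beta-\eps})$, we get $M^*(n,n^\beta,n\mid n^{(1+\kappa)\beta})=O(n^{2+\beta-\eps})$, which is precisely the hypothesis of Corollary~\ref{cor:unwtdirapsp}; invoking that corollary yields $\TunwtdirAPSP(n)=O(n^{2+\rho-\Omega(\eps)})$, as claimed. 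Note that no separate rescaling of $\beta$ is needed here, because the side conditions $(1+\kappa)\beta\le 1-\rho$ and $\beta\le\sigma-\rho$ were already removed inside Corollary~\ref{cor:unwtdirapsp} via Lemma~\ref{lem:down}; the statement there (and hence here) holds for every constant $\beta$.

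I do not expect a genuine obstacle: the corollary is an immediate consequence of the two prior results, mirroring how Corollary~\ref{cor:strong-intapsp-imply:minwit} follows from Corollary~\ref{cor:intapsp} and the same lemma under the \StrongAPSP{}. The only point needing care is the exponent bookkeeping, namely checking that the $xy^2$ inner-dimension blow-up in the Matou\v sek-style reduction of Lemma~\ref{lem:reduce:from:minplus:minwit} reproduces exactly the $n^{(3+2\kappa)\beta}$ appearing in the statement, and that the integer range $n^{(1+\kappa)\beta}$ matches the $y$ we plug in. Setting $\beta=1/5$ (so $(3+2\kappa)\beta\to 3/5$ when $\omega=2$, since then $\kappa=0$) then recovers the advertised $n^{7/3-o(1)}$ bound for \MinWitness{} on $n\times n$ matrices, and the general-$\omega$ bound $n^{2.223}$ follows from the numerical bounds on $\omega(1,\cdot,1)$ quoted after Corollary~\ref{cor:unwtdirapsp}.
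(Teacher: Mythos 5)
Your proposal is correct and matches the paper's proof exactly: the paper obtains Corollary~\ref{cor:lower-bounds-under-uAPSP:minwit} by the same one-line chaining of Lemma~\ref{lem:reduce:from:minplus:minwit} (with $x=n^\beta$, $y=n^{(1+\kappa)\beta}$, so $xy^2=n^{(3+2\kappa)\beta}$) with Corollary~\ref{cor:unwtdirapsp}. Your exponent bookkeeping and the remark about the side conditions already being removed via Lemma~\ref{lem:down} are both accurate.
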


By setting $\beta=1/(3+2\kappa)$ (which is $1/3$ if $\omega=2$, or $<0.223$ regardless), we have thus proved that \MinWitness{} of two $n\times n$ Boolean matrices
cannot be computed in $O(n^{7/3-\eps})$ time if $\omega=2$, or
$O(n^{2.223})$ time regardless of the value of $\omega$, under the \uAPSPH{}. 
(This is better than the near-$n^{11/5}$ lower bound we obtained from the \StrongAPSP{}.)
The question of proving lower bounds for \MinWitness{} from the \uAPSPH{} was left open in the paper by
Chan, Vassilevska W. and Xu~\cite{CVXicalp21} (they were only able to do so for \MinWitnessEq{}).

Furthermore, by setting $\beta=\gamma/(3+2\kappa)$, 
\MinWitness{} of an $n\times n^{\gamma}$ and an $n^{\gamma}\times n$ Boolean matrix cannot be computed in $O(n^{2+0.223\gamma-\eps})$ time for any $\gamma\le 1$ under the same hypothesis.

\subsection{All-Pairs Shortest Lightest Paths}

Combining Corollary~\ref{cor:unwtdirapsp} and Lemma~\ref{lem:reduce:from:minplus:apslp} immediately gives the following:

\begin{corollary}\label{cor:lower-bounds-under-uAPSP:apslp}
Let $\rho$ and $\kappa$ be as in Corollary~\ref{cor:unwtdirapsp}.
For any constant $\beta\le \frac1{3+2\kappa}$
and $\TundirAPSLP(n,n^{1+\beta})=O(n^{2+\beta-\eps})$, then
$\TunwtdirAPSP(n)=O(n^{2+\rho-\Omega(\eps)})$.
\end{corollary}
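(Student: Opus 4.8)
The plan is to obtain this corollary by a direct chaining of Lemma~\ref{lem:reduce:from:minplus:apslp} with Corollary~\ref{cor:unwtdirapsp}, exactly mirroring how Corollary~\ref{cor:strong-intapsp-imply:apslp} was derived from Corollary~\ref{cor:intapsp}. Concretely, I would invoke Lemma~\ref{lem:reduce:from:minplus:apslp} with the parameter choices $x = n^\beta$ and $y = n^{(1+\kappa)\beta}$, which turns the side condition $xy^2 \le n$ into $n^\beta \cdot n^{2(1+\kappa)\beta} \le n$, i.e.\ $(3+2\kappa)\beta \le 1$. This is precisely the hypothesis $\beta \le \frac{1}{3+2\kappa}$ in the statement, so the lemma applies and yields $M^*(n, n^\beta, n \mid n^{(1+\kappa)\beta}) = O(\TundirAPSLP(n, n^{1+\beta}))$.

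The second step is to feed this into Corollary~\ref{cor:unwtdirapsp}. That corollary says: if $M^*(n, n^\beta, n \mid n^{(1+\kappa)\beta}) = O(n^{2+\beta-\eps})$, then $\TunwtdirAPSP(n) = O(n^{2+\rho-\Omega(\eps)})$. So assuming $\TundirAPSLP(n, n^{1+\beta}) = O(n^{2+\beta-\eps})$, the bound from Step~1 gives $M^*(n, n^\beta, n \mid n^{(1+\kappa)\beta}) = O(n^{2+\beta-\eps})$, and Corollary~\ref{cor:unwtdirapsp} then delivers $\TunwtdirAPSP(n) = O(n^{2+\rho-\Omega(\eps)})$, which is the claim.

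There is essentially no technical obstacle here: the only thing to check is that the admissible range of $\beta$ in Lemma~\ref{lem:reduce:from:minplus:apslp} (via $xy^2 \le n$) matches the range $\beta \le \frac{1}{3+2\kappa}$ stated in the corollary, and that the integer-range exponent $(1+\kappa)\beta$ used in the APSLP reduction is exactly the one required by Corollary~\ref{cor:unwtdirapsp}. Both match by construction, so the proof is a two-line composition; the ``hard work'' was already done in establishing Theorem~\ref{thm:main}, Corollary~\ref{cor:unwtdirapsp}, and the APSLP gadget of Lemma~\ref{lem:reduce:from:minplus:apslp}. (If one wanted, one could additionally remark, as elsewhere in the paper, that the restriction on $\beta$ can be relaxed for small $\gamma$ via Lemma~\ref{lem:down}, but that is not needed for the stated form.)
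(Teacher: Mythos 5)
Your proposal is correct and matches the paper exactly: the paper derives this corollary by the same two-step composition, instantiating Lemma~\ref{lem:reduce:from:minplus:apslp} with $x=n^\beta$, $y=n^{(1+\kappa)\beta}$ (so that $xy^2\le n$ becomes $\beta\le\frac{1}{3+2\kappa}$) and feeding the resulting bound into Corollary~\ref{cor:unwtdirapsp}. Nothing is missing.
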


By setting $\beta=1/(3+2\kappa)$, we have thus proved that \APSLP{} cannot be solved
in $O(n^{7/3-\eps})$ time if $\omega=2$, or
$O(n^{2.223})$ time regardless of the exact value of $\omega$, under the \uAPSPH{}. 
(This is better than the near-$n^{11/5}$ lower bound we obtained from the \StrongAPSP{}.)
The same result holds for \APLSP{}.  Previously, Chan, Vassilevska W. and Xu~\cite{CVXicalp21} proved a still better near-$n^{\rho}$ lower bound for $\{0,1\}$-weighted  \APLSPIntro{}  from the same hypothesis, but their proof crucially relied on zero-weight edges and also
did not work for \APSLPIntro{} (leaving open the question of finding nontrivial conditional lower bounds for both \APLSP{} and \APSLP{}, which we answer here).

\subsection{Batched Range Mode}

By combining  Chan, Vassilevska W. and Xu's observation that 
$\TunwtdirAPSP(n)=\OO(\max_\ell M^*(n,n/\ell,n\mid \ell))$~\cite{CVXicalp21,zwickbridge} with
Lemma~\ref{lem:reduce:from:minplus:mode}, and
setting $n=\sqrt{N}$, we see that
\BatchMode{} for $N$ queries on $N$ elements cannot be
solved in $O(N^{5/4-\eps})$ time if $\omega=2$, or
in $O(N^{1+\rho/2-\eps})$ time regardless, under the \uAPSPH{}.

To obtain a lower bound for general $Q$, we
combine Corollary~\ref{cor:unwtdirapsp} and Lemma~\ref{lem:reduce:from:minplus:mode}:

\begin{corollary}\label{cor:lower-bounds-under-uAPSP:mode}
Let $\rho$ and $\kappa$ be as in Corollary~\ref{cor:unwtdirapsp}.
For any constant $\beta$,
if $\Trangemode(n^{1+(2+\kappa)\beta},n^2\mid n^\beta)=O(n^{2+\beta-\eps})$, 
then $\TunwtdirAPSP(n)=O(n^{2+\rho-\Omega(\eps)})$.
\end{corollary}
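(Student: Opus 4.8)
The plan is to obtain this exactly as its \StrongAPSP{}-based analogue Corollary~\ref{cor:strong-intapsp-imply:mode} was obtained, except substituting Corollary~\ref{cor:unwtdirapsp} in place of Corollary~\ref{cor:intapsp}. First I would invoke Lemma~\ref{lem:reduce:from:minplus:mode}, which reduces a rectangular bounded Min-Plus product to a batch of range mode queries: for any $x,y$, $M^*(n,x,n\mid y)=O(\Trangemode(nxy,n^2\mid x))$. The only work is to plug in the right parameters so that the resulting range mode instance matches the one named in the hypothesis.

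Concretely, I would set $x=n^{\beta}$ and $y=n^{(1+\kappa)\beta}$. Then $nxy = n\cdot n^{\beta}\cdot n^{(1+\kappa)\beta} = n^{1+(2+\kappa)\beta}$ and the alphabet size is $x=n^{\beta}$, so Lemma~\ref{lem:reduce:from:minplus:mode} gives $M^*(n,n^{\beta},n\mid n^{(1+\kappa)\beta}) = O(\Trangemode(n^{1+(2+\kappa)\beta},n^2\mid n^{\beta}))$. Hence, under the hypothesis $\Trangemode(n^{1+(2+\kappa)\beta},n^2\mid n^{\beta})=O(n^{2+\beta-\eps})$, we get $M^*(n,n^{\beta},n\mid n^{(1+\kappa)\beta})=O(n^{2+\beta-\eps})$. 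Feeding this into Corollary~\ref{cor:unwtdirapsp} (with the same $\beta$ and $\kappa$) immediately yields $\TunwtdirAPSP(n)=O(n^{2+\rho-\Omega(\eps)})$, as desired.

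There is essentially no obstacle here: the content is entirely in Theorem~\ref{thm:main} (via Corollary~\ref{cor:unwtdirapsp}) and in the string construction of Lemma~\ref{lem:reduce:from:minplus:mode}, both already established; this corollary is a one-line composition of the two. The only point to double-check is bookkeeping — that the exponent $(2+\kappa)\beta$ in $nxy$ and the alphabet bound $n^{\beta}=x$ are exactly what Corollary~\ref{cor:unwtdirapsp} wants for its integer-range parameter $n^{(1+\kappa)\beta}$ — and that is immediate from $1+\beta+(1+\kappa)\beta = 1+(2+\kappa)\beta$. As in the preceding corollaries, the assumption that $(1+\kappa)\beta$ and $\beta$ are small enough for Corollary~\ref{cor:unwtdirapsp} to apply can be dropped by first passing through Lemma~\ref{lem:down}.
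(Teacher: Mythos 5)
Your proposal is correct and matches the paper's own derivation, which is exactly the composition of Lemma~\ref{lem:reduce:from:minplus:mode} (with $x=n^{\beta}$, $y=n^{(1+\kappa)\beta}$) with Corollary~\ref{cor:unwtdirapsp}. The parameter bookkeeping $nxy=n^{1+(2+\kappa)\beta}$ and alphabet size $x=n^{\beta}$ is right, and the remark about removing range restrictions via Lemma~\ref{lem:down} is already built into Corollary~\ref{cor:unwtdirapsp}.
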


Thus, by setting $n=\sqrt{Q}$ and $n^\beta=(N/\sqrt{Q})^{1/(2+\kappa)}$, \BatchMode{} of $Q$ queries on $N$ elements cannot be answered in
 $O(Q^{3/4}N^{1/2-\eps})$ time if $\omega=2$,
or $O(Q^{1-0.365/2}N^{0.365-\eps})$ time 
regardless, for any $Q\le N^2$, under the \uAPSPH{}.  
For example, for $Q=N^{1.6}$, the lower bound is near $N^{1.7}$
if $\omega=2$, or near $N^{1.673}$ regardless.
(This is slightly better than the lower bound we obtained in previous section from the \StrongAPSP{}.)

\subsection{Dynamic Shortest Paths in Planar Graphs}

Combining Corollary~\ref{cor:unwtdirapsp} and Lemma~\ref{lem:dynplanar} with $\alpha=\beta$
immediately gives the following:

\begin{corollary}\label{cor:lower-bounds-under-uAPSP:planar}
Let $\rho$ and $\kappa$ be as in Corollary~\ref{cor:unwtdirapsp}.
For any constant $\beta$,
if
$\Tdynplanar(n^{(4+\kappa)\beta},n^{1+\beta},n^{1+\beta})=O(n^{1+2\beta-\eps})$,
then
$\TunwtdirAPSP(n)=O(n^{2+\rho-\Omega(\eps)})$.
\end{corollary}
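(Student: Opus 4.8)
The plan is to obtain this as a direct composition of two reductions already established: Lemma~\ref{lem:dynplanar} (specialized to $\alpha=\beta$) turns a rectangular bounded-entry Min-Plus product into a dynamic planar shortest path instance, and Corollary~\ref{cor:unwtdirapsp} turns a fast such Min-Plus product into a fast algorithm for \uAPSP{}. So essentially all I need to do is line up the parameters.

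First I would instantiate Lemma~\ref{lem:dynplanar} with $\alpha=\beta$ and entry bound $X=n^{(1+\kappa)\beta}$. Since $n^{2\alpha+\beta}+n^{\alpha+2\beta}=O(n^{3\beta})$ when $\alpha=\beta$, and $n^{3\beta}\cdot n^{(1+\kappa)\beta}=n^{(4+\kappa)\beta}$, the lemma gives
\[
M^*(n,n^\beta,n\mid n^{(1+\kappa)\beta})\ =\ O\!\left(n^{1-\beta}\cdot \Tdynplanar\big(n^{(4+\kappa)\beta},\, n^{1+\beta},\, n^{1+\beta}\big)\right).
\]
Next I would substitute the hypothesized bound $\Tdynplanar(n^{(4+\kappa)\beta},n^{1+\beta},n^{1+\beta})=O(n^{1+2\beta-\eps})$, obtaining
\[
M^*(n,n^\beta,n\mid n^{(1+\kappa)\beta})\ =\ O\!\left(n^{1-\beta}\cdot n^{1+2\beta-\eps}\right)\ =\ O(n^{2+\beta-\eps}).
\]
Finally I would invoke Corollary~\ref{cor:unwtdirapsp} with exactly this matrix-product bound to conclude $\TunwtdirAPSP(n)=O(n^{2+\rho-\Omega(\eps)})$.

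There is no genuine obstacle here; the statement is a corollary in the literal sense. The only points to verify are bookkeeping: that the parameter ranges are compatible (Lemma~\ref{lem:dynplanar} needs $\alpha,\beta\le 1$, and Corollary~\ref{cor:unwtdirapsp} already tolerates $\beta$ being replaced by any sufficiently small positive constant via Lemma~\ref{lem:down}, so no extra side condition on $\beta$ is required in the statement), and that the blow-up factor coming from Abboud--Dahlgaard's construction is linear in $X$ rather than quadratic --- but this is already built into the proof of Lemma~\ref{lem:dynplanar}. All of the substantive work was done in Theorem~\ref{thm:main}, Corollary~\ref{cor:unwtdirapsp}, and Lemma~\ref{lem:dynplanar}; what remains is the short arithmetic above.
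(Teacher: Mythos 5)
Your proposal is correct and is exactly the paper's argument: the paper derives this corollary by combining Lemma~\ref{lem:dynplanar} with $\alpha=\beta$ (and entry bound $X=n^{(1+\kappa)\beta}$, giving $O(n^{3\beta}X)=O(n^{(4+\kappa)\beta})$ nodes) with Corollary~\ref{cor:unwtdirapsp}. The parameter bookkeeping you give matches the intended derivation.
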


By setting $\beta=1/(3+\kappa)$
and $N=n^{1+\beta}$, we have thus proved that 
an offline sequence
of $N$ shortest path queries and $N$ updates on an unweighted, undirected $N$-node planar
graph cannot be processed in $O(N^{5/4-\eps})$ time if $\omega=2$, 
of $O(N^{1.211})$
time regardless, under the
\uAPSPH{}.  
(This is slightly better than the lower bound we obtained in previous section from the \StrongAPSP{}.)

\subsection{An Equivalence Result}

We also obtain an interesting equivalence result:

\begin{corollary}
Let $\alpha$ be such that $\omega(1,\alpha,1)=2$.
For any constants $\beta,\gamma\in (0,\alpha)$,
there exists $\eps>0$ such that $M^*(n,n^\beta,n\mid n^\beta)=O(n^{2+\beta-\eps})$ if and only if there exists $\eps'>0$ such that
$M^*(n,n^\gamma,n\mid n^\gamma) = O(n^{2+\gamma-\eps'})$.
\end{corollary}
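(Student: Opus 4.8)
The plan is to exploit the symmetry of the statement in $\beta$ and $\gamma$ and prove a single implication, say that the existence of $\eps>0$ with $M^*(n,n^\beta,n\mid n^\beta)=O(n^{2+\beta-\eps})$ forces the existence of $\eps'>0$ with $M^*(n,n^\gamma,n\mid n^\gamma)=O(n^{2+\gamma-\eps'})$; the argument will not use any ordering of $\beta$ and $\gamma$, so running it with the two roles exchanged gives the converse. The overall shape is a ``scale down, then scale up'' argument: first descend from $\beta$ to an \emph{arbitrarily small} parameter $\gamma_0$ using Lemma~\ref{lem:down}, and then climb from $\gamma_0$ back up to $\gamma$ using the key reduction of Theorem~\ref{thm:main}, taking advantage of the fact that $\gamma<\alpha$, so there is slack to inflate the inner dimension without the associated matrix multiplication costing more than $n^{2+o(1)}$.

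In detail: first fix a constant $\gamma_0\in(0,\min\{\beta,\gamma,\alpha-\gamma\})$, a nonempty interval since each of the three bounds is positive. By Lemma~\ref{lem:down} applied with $c=1$ (its ``$\beta$'' being our $\beta$, its ``$\gamma$'' being $\gamma_0$), the hypothesis gives $M^*(n,n^{\gamma_0},n\mid n^{\gamma_0})=O(n^{2+\gamma_0-\eps_0})$ for some constant $\eps_0>0$. Next apply Theorem~\ref{thm:main} with $n_1=n_3=n$, $n_2=\ell=n^\gamma$, and the choices $s=n^{\gamma_0-\eps_0/2}$, $t=n^{\gamma_0}$, $r=n^{\gamma_0}$ (admissible since $s\le n^\gamma$ and $t=n^{\gamma_0}\le n^\gamma$ as $\gamma_0<\gamma$). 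The three resulting terms are handled as follows. The first term $(n^\gamma/s)\,M^*(n,s,n\mid t)$ is at most $n^{\gamma-\gamma_0+\eps_0/2}\cdot M^*(n,n^{\gamma_0},n\mid n^{\gamma_0})=O(n^{2+\gamma-\eps_0/2})$, using that shrinking the inner dimension only helps. The second term equals $s\,n^{2+\gamma}/r=n^{\gamma_0-\eps_0/2}\cdot n^{2+\gamma}/n^{\gamma_0}=n^{2+\gamma-\eps_0/2}$. The third term equals $s\,M^*(n,rn^\gamma,n\mid \ell/t)=n^{\gamma_0-\eps_0/2}\cdot M^*(n,n^{\gamma_0+\gamma},n\mid n^{\gamma-\gamma_0})$; bounding the inner product by the standard reduction of bounded-entry Min-Plus product to matrix multiplication, $M^*(n,n^{\gamma_0+\gamma},n\mid n^{\gamma-\gamma_0})=\OO(n^{\gamma-\gamma_0}\,M(n,n^{\gamma_0+\gamma},n))$ (cf.\ \cite{ALONGM1997} and the proof of Corollary~\ref{cor:intapsp0}), and noting that $\gamma_0+\gamma<\alpha$ forces $\omega(1,\gamma_0+\gamma,1)=2$ (by monotonicity of $\omega(1,\cdot,1)$ together with $\omega(1,\alpha,1)=2$), so that $M(n,n^{\gamma_0+\gamma},n)=O(n^{2+\delta})$ for every $\delta>0$, the third term is $O(n^{2+\gamma-\eps_0/2+\delta})$. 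Taking $\delta=\eps_0/8$ and absorbing the polylogarithmic factors hidden by $\OO$, all three terms are $O(n^{2+\gamma-\eps_0/4})$, hence $M^*(n,n^\gamma,n\mid n^\gamma)=O(n^{2+\gamma-\eps_0/4})$, which is the desired conclusion with $\eps'=\eps_0/4$.

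The step I expect to be the main obstacle is controlling the third term of Theorem~\ref{thm:main}, the one that inflates the inner dimension from $n^\gamma$ up to $rn^\gamma$; this is also precisely what dictates the two-stage structure. If one tried to climb directly from $\beta$ to $\gamma$, the inflated inner dimension would be of order $n^{\beta+\gamma}$, and whenever $\beta+\gamma>\alpha$ the corresponding matrix multiplication costs strictly more than $n^{2+o(1)}$, which wipes out the savings. Routing first through an arbitrarily small $\gamma_0$ guarantees $\gamma_0+\gamma<\alpha$, keeping the blow-up inside the ``free'' regime where $\omega(1,\cdot,1)=2$. The remaining points that merely need care are: verifying the admissibility conditions $s\le n_2$ and $t\le\ell$ in Theorem~\ref{thm:main}; the elementary fact that $\omega(1,\mu,1)=2$ for all $\mu\le\alpha$; and bookkeeping so that the final power saving $\eps'=\eps_0/4$ is an honest positive constant, cleanly separated from the vanishing $\delta$ and the polylogarithmic slack.
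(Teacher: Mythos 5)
Your proposal is correct and is essentially the paper's own argument: the paper also proves the easy direction via Lemma~\ref{lem:down} and the hard direction via Theorem~\ref{thm:main} with $s=n^{\gamma-\eps}$, $t=r=n^\gamma$, removing the constraint $\gamma\le\alpha-\beta$ by first shrinking $\gamma$ to a sufficiently small $\gamma'$ using Lemma~\ref{lem:down} — exactly your routing through $\gamma_0$. Your symmetric ``scale down to $\gamma_0$, then scale up'' packaging merely makes explicit what the paper handles by a W.l.o.g.\ ordering plus that final remark, and all your parameter checks (admissibility of $s,t$, monotonicity in the inner dimension, and $\omega(1,\mu,1)=2$ for $\mu<\alpha$ controlling the third term) are sound.
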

\begin{proof}
W.l.o.g., assume $\gamma<\beta$.
The ``only if'' direction is shown in Lemma~\ref{lem:down}.
For the ``if'' direction,
suppose $M^*(n,n^\gamma,n\mid n^\gamma) = O(n^{2+\gamma-\eps'})$.
By Theorem~\ref{thm:main},
\[ M^*(n,n^\beta,n\mid n^\beta)\ =\ \OO\left((n^{\beta}/s)M^*(n,s,n\mid t) + sn^{2+\beta}/r + (sn^\beta/t) M(n,rn^\beta,n)\right).
\]
Setting $s=n^{\gamma-\eps}$, $t=n^{\gamma}$, and
$r=n^\gamma$, with $\eps=\eps'/2$
yields $M^*(n,n^\beta,n\mid n^\beta)=O(n^{2+\beta-\Omega(\eps')})$,
assuming that $\gamma\le \alpha-\beta$.

This assumption may be removed, since Lemma~\ref{lem:down} allows
us to replace $\gamma$ with a sufficiently small positive constant $\gamma'$.
\end{proof}

\begin{corollary}
If $\omega=2$,
then for any constant $\beta\in (0,1)$,
there exists $\eps>0$ such that
$\TunwtdirAPSP(n)=O(n^{2.5-\eps})$ iff
there exists $\eps'>0$ such that $M^*(n,n^\beta,n\mid n^\beta)=\OO(n^{2+\beta-\eps'})$.
\end{corollary}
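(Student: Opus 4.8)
The plan is to reduce everything to the special case $\beta = 1/2$, which is exactly the (contrapositive of the) characterization of the \uAPSPH{} established by Chan, Vassilevska W.\ and Xu~\cite{CVXicalp21}, and then to bridge the gap between an arbitrary $\beta \in (0,1)$ and $\beta = 1/2$ using the preceding corollary.

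First I would record the $\beta = 1/2$ case. When $\omega = 2$ we have $\rho = 1/2$, so the \uAPSPH{} asserts precisely that $\TunwtdirAPSP(n)$ is not $O(n^{5/2-\eps})$ for any $\eps > 0$. As recalled in Remark~\ref{rmk:uapsp}, Chan, Vassilevska W.\ and Xu showed that (for $\omega = 2$) this is equivalent to $M^*(n,\sqrt{n},n\mid\sqrt{n})$ not being $O(n^{5/2-\eps})$ for any $\eps > 0$. Negating both sides: there exists $\eps > 0$ with $\TunwtdirAPSP(n) = O(n^{5/2-\eps})$ if and only if there exists $\eps'' > 0$ with $M^*(n,\sqrt{n},n\mid\sqrt{n}) = O(n^{5/2-\eps''})$.

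Next, I would invoke the preceding corollary. Since $\omega = 2$, we may take $\alpha = 1$ there (because $\omega(1,1,1) = \omega = 2$, hence $\omega(1,\gamma,1) = 2$ for all $\gamma \in [0,1]$), so both $\beta$ and $1/2$ lie in $(0,\alpha) = (0,1)$. The corollary then gives: there exists $\eps' > 0$ with $M^*(n,n^\beta,n\mid n^\beta) = O(n^{2+\beta-\eps'})$ if and only if there exists $\eps'' > 0$ with $M^*(n,\sqrt{n},n\mid\sqrt{n}) = O(n^{5/2-\eps''})$. The $\OO$ in the target statement versus the plain $O$ here is immaterial, since any polylogarithmic factor can be absorbed into an arbitrarily small polynomial slack in the exponent. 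Chaining this equivalence with the one from the previous paragraph yields the stated result.

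There is essentially no obstacle here: the corollary is a direct composition of two equivalences proven earlier in the paper --- the CVX characterization of the \uAPSPH{} for $\omega = 2$ (via Remark~\ref{rmk:uapsp}) and the exponent-scaling equivalence of the preceding corollary (whose ``if'' direction is the one that uses Theorem~\ref{thm:main}). The only minor bookkeeping points are that $\alpha = 1$ is a legitimate choice when $\omega = 2$, that $\beta$ and $1/2$ both fall in the allowed range $(0,1)$, and the harmless $\OO$-versus-$O$ conversion.
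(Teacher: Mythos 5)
Your proposal is correct and follows exactly the paper's argument: invoke the Chan--Vassilevska W.--Xu equivalence between \uAPSP{} and $M^*(n,\sqrt{n},n\mid\sqrt{n})$ for $\omega=2$ (so $\rho=1/2$), then apply the preceding corollary with $\alpha=1$ and $\gamma=1/2$ to move between exponent $1/2$ and an arbitrary $\beta\in(0,1)$. The bookkeeping points you note ($\alpha=1$ being legitimate, both exponents lying in $(0,1)$, and the harmless $\OO$ vs.\ $O$ conversion) are all fine.
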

\begin{proof}
If $\omega=2$, then $\rho=1/2$ and then
Chan, Vassilevska W. and Xu's result~\cite{CVXicalp21} showed that $\TunwtdirAPSP(n)=O(n^{2.5-\eps})$ for some $\eps>0$
is equivalent to $M^*(n,\sqrt{n},n\mid \sqrt{n})=O(n^{2.5-\eps'})$
for some $\eps'>0$.
Since $\omega=2$ implies $\alpha=1$, we can apply
the preceding corollary for any $\beta\in(0,1)$ and $\gamma=1/2$.
\hspace*{\fill}
\end{proof}

Let $\Phi(\beta)$ be the claim that $M^*(n,n^\beta,n\mid n^\beta)$
is not in $O(n^{2+\beta-\eps})$ for any $\eps>0$.
If $\omega=2$, $\Phi(1)$ is just
the \StrongAPSP{}, but intriguingly, by the above corollary, $\Phi(0.99)$
is equivalent to the \uAPSPH{}, which has given us strictly better conditional lower bound results.

\section{More Equivalences Between Counting and Detection Problems}
\label{sec:counting}

Continuing the approach in Section~\ref{sec:counting:preview} for proving equivalence between \AEExactTriCount{} and \AEExactTri{},
we now derive more equivalence results between other counting and detection problems.

\subsection{Min-Plus Product}

In this section, we use $A$ and $B$ to denote the inputs to  a \MinPlus{} or  \MinPlusCount{} instance, and we use $W_{ij}$ to denote the set of $k$ where $A_{ik}+B_{kj} = (A\star B)_{ij}$, i.e., the set of witnesses for $(i, j)$. 

\begin{lemma}
\label{lem:min-plus}
Given two $n \times n$ matrices $A, B$ and a subset $S \subseteq [n]$, we can compute a matrix $D$ in $\OO(|S| \cdot n^{(3+\omega)/2})$ time such that $D_{ij} = |W_{ij}|$ for pairs of $(i, j)$ where $S \cap W_{ij} \ne \emptyset$. 
\end{lemma}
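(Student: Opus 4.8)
The plan is to adapt the proof of Lemma~\ref{lem:exact-tri} almost verbatim, the only genuine change being that the fixed target $t$ is now replaced by a per-pair minimum. First, for each $s\in S$ I would form two $n\times n$ matrices $A^{(s)}$ and $B^{(s)}$ defined by $A^{(s)}_{ik} := A_{ik} - A_{is}$ and $B^{(s)}_{kj} := B_{sj} - B_{kj}$, and compute their equality product $C^{(s)}$ (where $C^{(s)}_{ij}=|\{k: A^{(s)}_{ik}=B^{(s)}_{kj}\}|$) using Matou\v sek's algorithm~\cite{MatIPL} in $\OO(n^{(3+\omega)/2})$ time, for a total of $\OO(|S|\cdot n^{(3+\omega)/2})$ time over all $s\in S$. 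By Fredman's trick, $A^{(s)}_{ik}=B^{(s)}_{kj}$ is equivalent to $A_{ik}+B_{kj}=A_{is}+B_{sj}$, so $C^{(s)}_{ij}$ counts exactly the indices $k$ with $A_{ik}+B_{kj}=A_{is}+B_{sj}$.

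The new ingredient is deciding, for each pair $(i,j)$, which $s\in S$ (if any) is an actual witness, i.e.\ satisfies $A_{is}+B_{sj}=(A\star B)_{ij}$, \emph{without} paying $O(n^3)$ to compute $A\star B$ in full. Here I would compute $m_{ij} := \min_{s\in S}(A_{is}+B_{sj})$ for all $i,j$ in $O(|S|\,n^2)$ time. The point is that if $S\cap W_{ij}\neq\emptyset$, then some $s\in S$ already attains the global minimum, so $m_{ij}=(A\star B)_{ij}$, and \emph{every} $s\in S$ with $A_{is}+B_{sj}=m_{ij}$ lies in $W_{ij}$. Accordingly I set $D_{ij}:=C^{(s)}_{ij}$ for an arbitrary $s\in S$ achieving $A_{is}+B_{sj}=m_{ij}$ if such an $s$ exists, and $D_{ij}:=0$ otherwise (the value is irrelevant in that case, as in Lemma~\ref{lem:exact-tri}).

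For correctness, fix $(i,j)$ with $S\cap W_{ij}\neq\emptyset$. Then $D_{ij}=C^{(s)}_{ij}$ for some $s$ with $A_{is}+B_{sj}=m_{ij}=(A\star B)_{ij}$, so $s\in W_{ij}$, and by the first paragraph $C^{(s)}_{ij}=|\{k: A_{ik}+B_{kj}=(A\star B)_{ij}\}|=|W_{ij}|$, as desired. The running time is $O(|S|\,n^2)+\OO(|S|\cdot n^{(3+\omega)/2})=\OO(|S|\cdot n^{(3+\omega)/2})$, since $\omega\ge 2$ implies $n^{(3+\omega)/2}\ge n^2$. I expect no serious obstacle; the only two points requiring care are (i) that we must not compute the full Min-Plus product to locate witnesses, which the $m_{ij}$ trick circumvents, and (ii) that the entries of $A^{(s)},B^{(s)}$ remain $O(\log n)$-bit integers, so Matou\v sek's equality-product algorithm applies without modification.
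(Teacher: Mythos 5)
Your proposal is correct and matches the paper's proof essentially verbatim: the paper also forms $A^{(s)}_{ik}=A_{ik}-A_{is}$, $B^{(s)}_{kj}=B_{sj}-B_{kj}$, computes the equality products via Matou\v sek, and sets $D_{ij}=C^{(s)}_{ij}$ for the $s\in S$ minimizing $A_{is}+B_{sj}$, which is exactly your $m_{ij}$ selection rule. The correctness argument via Fredman's trick is identical.
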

\begin{proof}
For every $s \in S$, we do the following. Let $A^{(s)}$ be a matrix where $A^{(s)}_{ik} = A_{ik} - A_{is}$ and $B^{(s)}_{kj} = B_{sj} - B_{kj}$. Then we  compute the equality product $C^{(s)}$ of $A^{(s)}$ and $B^{(s)}$ in $\OO(n^{(3+\omega)/2})$ time for each $s$ using Matou{\v{s}}ek's algorithm~\cite{MatIPL}. Finally, let $D_{ij}$ be $C^{(s)}_{ij}$ where  $A_{is} + B_{sj}$ is the smallest over all $s \in S$ (breaking ties arbitrarily). The running time for computing $D$ is clearly $\OO(|S| \cdot n^{(3+\omega)/2})$. 

Suppose $S \cap W_{ij} \ne \emptyset$ for some $(i, j)$. Then $D_{ij}$ equals $C^{(s)}_{ij}$ where $A_{is} + B_{sj} = (A \star B)_{ij}$. For any $k$,  $A^{(s)}_{ik} = B^{(s)}_{kj}$ if and only if $A_{ik} + B_{kj} = A_{is} + B_{sj} = (A \star B)_{ij}$ by Fredman's trick. Therefore, $D_{ij} = C^{(s)}_{ij} = |W_{ij}|$.
\end{proof}

\begin{theorem}
\label{thm:minplus-count}
If \MinPlus{} for $n \times n$ matrices has an $O(n^{3-\eps})$ time algorithm for some $\eps > 0$, then \MinPlusCount{} for $n \times n$ matrices  has an $O(n^{3-\eps'})$ time algorithm for some $\eps' > 0$.
\end{theorem}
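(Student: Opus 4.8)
The plan is to follow the strategy of the proof of Theorem~\ref{thm:exact-tri-count} verbatim, splitting the pairs $(i,j)$ into those with few witnesses and those with many witnesses, handling the former by listing and the latter by a small hitting set together with Lemma~\ref{lem:min-plus}.

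First I would reduce witness \emph{listing} to \MinPlus{}. One call to the assumed $O(n^{3-\eps})$ algorithm computes $A\star B$; then, by the standard bit-by-bit witness-finding technique~\cite{AlonGMN92,seidel1995} (performing $O(\log n)$ Min-Plus products on submatrices obtained by restricting the inner index $k$), combined with random sampling of the inner index exactly as in the ``few-witnesses case'' of the proof of Theorem~\ref{thm:main}, an $O(n^{3-\eps})$ algorithm for \MinPlus{} yields, for some $\eps''>0$, an $O(n^{3-\eps''})$-time randomized algorithm that for every pair $(i,j)$ lists $\min\{|W_{ij}|,\,n^{0.99}\}$ elements of $W_{ij}$. (This is the same self-reduction underlying the listing reductions of \cite{focsyj}.)

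Next, for every pair $(i,j)$ for which strictly fewer than $n^{0.99}$ witnesses were listed, the entire set $W_{ij}$ has been found, so we output $|W_{ij}|$ directly. It remains to count witnesses for the ``many-witness'' pairs, i.e.\ those with $|W_{ij}|\ge n^{0.99}$, for which we have already collected a subset of $W_{ij}$ of size exactly $n^{0.99}$. Running the standard greedy hitting-set algorithm on these collected subsets --- whose total size is $O(n^2\cdot n^{0.99})=O(n^{2.99})$ --- produces in $\OO(n^{2.99})$ time a set $S\subseteq[n]$ of size $\OO(n^{0.01})$ that intersects $W_{ij}$ for every many-witness pair. Finally, apply Lemma~\ref{lem:min-plus} with this $S$: it computes $D_{ij}=|W_{ij}|$ for all pairs with $S\cap W_{ij}\ne\emptyset$, and in particular for all many-witness pairs, in time $\OO(|S|\cdot n^{(3+\omega)/2})=\OO(n^{0.01}\cdot n^{(3+\omega)/2})\le O(n^{2.70})$ using $\omega<2.373$. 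Combining the two cases, the total running time is $\OO(n^{3-\eps''}+n^{2.99}+n^{2.70})=O(n^{3-\eps'})$ for some $\eps'>0$.

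The main obstacle is the first step: one must check that \MinPlus{}, although a minimization rather than a detection problem, admits the efficient self-reduction needed to list up to $n^{0.99}$ witnesses per entry within truly subcubic overhead --- this is where Fredman's trick is not yet needed, only witness finding plus random restriction of the inner index. Granting that, the remainder is a routine combination of hitting sets with Lemma~\ref{lem:min-plus}, exactly paralleling Theorem~\ref{thm:exact-tri-count}. Since the reverse reduction from \MinPlus{} to \MinPlusCount{} is trivial, this also yields the subcubic equivalence of \MinPlus{} and \MinPlusCount{} (and, via the reduction in Appendix~\ref{sec:apsp-count-mod}, of \APSP{} and \APSPCountMod{U}).
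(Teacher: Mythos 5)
Your proposal is correct and follows essentially the same route as the paper's proof: list up to $n^{0.99}$ witnesses per entry via the standard self-reduction, output exact counts for the few-witness pairs, and handle the many-witness pairs with a greedy hitting set fed into Lemma~\ref{lem:min-plus}. The only difference is that you spell out the witness-listing subroutine (bit-by-bit witness finding plus random restriction of the inner index), which the paper delegates to a citation of~\cite{focsyj}; this is a valid instantiation and not a gap.
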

\begin{proof}
Given a \MinPlusCount{} instance on $n \times n$ matrices $A, B$, we first list up to $n^{0.99}$ elements in $W_{ij}$ for every $i, j$. 
By well-known techniques (e.g.\  \cite{focsyj}), an $O(n^{3-\eps})$ time \MinPlus{} algorithm implies an $O(n^{3-\eps''})$ for $\eps'' > 0$ time algorithm for listing up to $n^{0.99}$ witnesses for each $(i, j)$ in a \MinPlus{} instance. 

If we list less than $n^{0.99}$ elements for some $(i, j)$, then these elements are all the elements in $W_{ij}$. Thus we can output the number of elements we list as the exact witness count for $(i, j)$. For each of the remaining pairs of $(i, j)$, we have found $n^{0.99}$ witnesses. By the standard greedy algorithm for hitting set, in $\OO(n^{2.99})$ time, we can find a set $S$ of size $\OO(n^{0.01})$ that intersects with $W_{ij}$ for each of these remaining $(i, j)$ pairs. Therefore, we can apply Lemma~\ref{lem:min-plus} to compute the number of witnesses for these remaining $(i, j)$ pairs in $\OO(|S| \cdot n^{(3+\omega)/2}) \le O(n^{2.70})$ time. 

The total running time for the \MinPlusCount{} instance is thus $\OO(n^{3-\eps''} + n^{2.99}+n^{2.70})$, which is truly subcubic. 
\end{proof}

We then show the reduction in the other direction. 
The proof is similar to the reduction from a certain version of Min-Plus product to certain versions of  APSP counting in unweighted directed graphs~\cite{CVXicalp21}.
\begin{theorem}
\label{thm:minplus-count-rev}
If \MinPlusCount{} for $n \times n$ matrices has an $O(n^{3-\eps})$ time algorithm for some $\eps > 0$, then \MinPlus{} for $n \times n$ matrices  has an $O(n^{3-\eps'})$ time algorithm for some $\eps' > 0$.
\end{theorem}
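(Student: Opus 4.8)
The plan is to prove the converse of Theorem~\ref{thm:minplus-count}, i.e.\ to reduce \MinPlus{} to \MinPlusCount{}. The one piece of information a \MinPlusCount{} oracle withholds is the value $(A\star B)_{ij}$ itself, so it suffices to recover, for every pair $(i,j)$, a single witness $k^*_{ij}$ with $A_{ik^*_{ij}}+B_{k^*_{ij}j}=(A\star B)_{ij}$; from such a witness matrix we output $C_{ij}=A_{ik^*_{ij}}+B_{k^*_{ij}j}$ in $O(n^2)$ extra time, since $A$ and $B$ are part of the input.

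First I would make witnesses unique. Assume w.l.o.g.\ that the entries of $A,B$ are integers in $[M]$ for some $M=\poly(n)$ (adding a constant to all entries only shifts $(A\star B)$ and preserves witnesses). Put $A'_{ik}:=(n+1)A_{ik}+k$ and $B'_{kj}:=(n+1)B_{kj}$, so $A'_{ik}+B'_{kj}=(n+1)(A_{ik}+B_{kj})+k$. Since $k\le n<n+1$, a minimizer of the right-hand side first minimizes $A_{ik}+B_{kj}$ and then, among those, minimizes $k$; hence $A'\star B'$ has a unique witness $k^*_{ij}$ at every $(i,j)$, and $k^*_{ij}$ is in particular a witness of $(A\star B)_{ij}$. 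The entries of $A',B'$ are still $\poly(n)$, i.e.\ $O(\log n)$-bit.

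Second, I would build a \emph{membership test}: given any $K\subseteq[n]$, decide simultaneously for all $(i,j)$ whether $k^*_{ij}\in K$, using $O(1)$ calls to the \MinPlusCount{} oracle. Form $\widetilde A$ by appending to the $n$ columns of $A'$ a second copy of each column $A'_{\cdot,k}$ with $k\in K$, and $\widetilde B$ by appending to the $n$ rows of $B'$ a second copy of each row $B'_{k,\cdot}$ with $k\in K$; pad both with a large sentinel $\Lambda>2(n+1)M$ in place of missing entries so that the matrices become square of side at most $2n$ (a padded inner index then contributes value $\ge 2\Lambda$ and never attains a genuine minimum). Because the two copies of an index carry identical values, $\widetilde A\star\widetilde B=A'\star B'$, and the number of copies attaining the minimum at $(i,j)$ equals $1+[\,k^*_{ij}\in K\,]$: copy~$0$ of $k^*_{ij}$ attains it uniquely among copy-$0$'s, and copy~$1$ of $k^*_{ij}$ attains it exactly when $k^*_{ij}\in K$. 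So one \MinPlusCount{} call on $(\widetilde A,\widetilde B)$ reads off, entrywise, the bit $[\,k^*_{ij}\in K\,]$, at cost $O((2n)^{3-\eps})=O(n^{3-\eps})$ by hypothesis. Running the membership test for $K=K_b:=\{k\in[n]:\text{the }b\text{-th bit of }k\text{ is }1\}$ over $b=0,\dots,\lfloor\log_2 n\rfloor$ recovers all bits of $k^*_{ij}$, hence $k^*_{ij}$ itself, for every $(i,j)$. The total running time is $O(\log n)$ oracle calls plus $O(n^2)$, i.e.\ $O(n^{3-\eps'})$ for any $\eps'<\eps$, and the reduction is deterministic.

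The main obstacle is exactly the membership test. Since \MinPlusCount{} returns witness \emph{counts} and not the minimum \emph{values}, one cannot simply restrict the inner index set to $K$ and compare the resulting minimum to the global one — the value needed for that comparison is unavailable. The two-disjoint-copies device sidesteps this: it leaves the global minimum (and its unique attaining index) untouched while perturbing the count by exactly $[\,k^*_{ij}\in K\,]$, turning a comparison-of-values question into a comparison-of-counts question. The only other thing to be careful about is the bookkeeping around the sentinel padding, ensuring the square instance fed to the oracle has the same witness counts on its genuine block as $A'\star B'$ does.
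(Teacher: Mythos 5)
Your proposal is correct and follows essentially the same route as the paper's proof: make the minimum witness unique via the perturbation $A'_{ik}=(n+1)A_{ik}+k$, $B'_{kj}=(n+1)B_{kj}$, then recover each bit of $k^*_{ij}$ by duplicating the columns/rows indexed by $K_b$ and checking whether the reported witness count is $1$ or $2$. The only cosmetic difference is your sentinel padding to keep the oracle instances square, which the paper does not bother with.
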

\begin{proof}

Let $A, B$ be two $n \times n$ matrices of a \MinPlus{} instance. Let $A'$ be another $n \times n$ matrix where $A'_{ik} = M \cdot A_{ik} + k$ for some large enough integer $M$ (say $M > n$). Similarly, we create an $n \times n$ matrix $B'$ where $B'_{kj} = M \cdot B_{kj}$. This way, for every $i, j$, there exists exactly one $k_{ij}$ such that $A'_{ik_{ij}} + B'_{k_{ij}j} = (A'\star B')_{ij}$. Furthermore, we clearly also have $A_{i,k_{ij}} + B_{k_{ij},j} = (A \star B)_{ij}$. 

Then for each integer $p \in \left[\lceil \log(n) \rceil\right]$, we do the following. Let $A'^{(p)}$ be a copy of $A'$, but we duplicate all columns $k$ where the $p$-th bit in $k$'s binary representation is $1$. We similarly create $B'^{(p)}$ which is a copy of $B'$ but we duplicate all rows $k$ where the $p$-th bit in $k$'s binary representation is $1$. Then we run the $O(n^{3-\eps})$ time  \MinPlusConvCount{} algorithm for $A'^{(p)}$ and $B'^{(p)}$. Suppose for some $i, j$, the number of witnesses is $2$, then we know that the $p$-th bit of  $k_{ij}$ is $1$; otherwise, the $p$-th bit of $k_{ij}$ is $0$. 

After all $\left[\lceil \log(n) \rceil\right]$ rounds, we can compute $k_{ij}$ for every $i, j$. Since $A_{i, k_{ij}} + B_{k_{ij}, j} = (A \star B)_{ij}$, we can then compute the Min-Plus product between $A$ and $B$ in $\OO(n^2)$ time. 
\end{proof}

\subsection{\texorpdfstring{$3$}{3}SUM Convolution and Min-Plus Convolution}
Let $A, B$ and $C$ be the inputs of an \AllThreeSUMConv{} or  \AllThreeSUMConvCount{} instance, and let $W_{k}$ be the set of $i$ where $A_{i}+B_{k-i} =C_{k}$, i.e., the set of witnesses for $k$.

\begin{lemma}
\label{lem:3sum-conv}
Given a \AllThreeSUMConvCount{} instance for length $n$ arrays $A, B, C$, we can compute $|W_k|$ for every $k$ such that $|W_k| \ge L$ in $\OO(n^{(9+\omega)/4}/L)$ randomized time. 
\end{lemma}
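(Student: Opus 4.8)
The plan is to follow the template of Section~\ref{sec:counting:preview}: reduce to ``per-pivot'' subproblems via a hitting set, linearize with Fredman's trick, and attack the result with Matou\v{s}ek's technique; the only new complication is that the convolution has merely $\OO(n)$ output entries, so we must economize and exploit the hypothesis $|W_k|\ge L$ throughout.

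First I would pick a random set $H\subseteq[n]$ of size $\OO(n/L)$; with high probability $H$ meets every witness set $W_k$ with $|W_k|\ge L$. Fix $i_0\in H$. For any $k$ with $i_0\in W_k$ we have $A_{i_0}+B_{k-i_0}=C_k$, so by Fredman's trick, for all $i$,
\[ i\in W_k \iff A_i+B_{k-i}=A_{i_0}+B_{k-i_0} \iff A_i-A_{i_0}=B_{k-i_0}-B_{k-i}. \]
Writing $i=i_0+\delta$ and $k=i_0+\ell$, this says that computing $|W_k|$ for all such $k$ amounts to an \AllThreeSUMConvCount{} instance on the arrays $P^{(i_0)},B,B$, where $P^{(i_0)}_\delta:=A_{i_0+\delta}-A_{i_0}$, and we only ``trust'' the output at $\ell$ when $A_{i_0}+B_\ell=C_{i_0+\ell}$. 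Crucially, the pivot $i_0$ furnishes, for each trusted output, a genuine reference witness, which is exactly what is needed to apply Fredman's trick \emph{inside} the next step.

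Next I would solve this pivoted convolution-counting problem by running the known reduction from \AllThreeSUMConv{} to \AEExactTri{}~\cite{VassilevskaW09} (the same reduction used in Corollary~\ref{cor:3sum:prep0}) with a block parameter $q$: it produces $\OO(1)$ \AEExactTri{}-type instances with parts of sizes roughly $n/q$, $q$, $n$, in which each convolution witness corresponds to a target triangle through a fixed vertex, so that $|W_k|$ equals the number of target triangles through that vertex, which is additive over the size-$(n/q)$ part and over the blocks. On each such instance I would run Matou\v{s}ek's frequency split in the manner of Lemma~\ref{lem:geneqprod} and Lemma~\ref{lem:exact-tri}: a secondary hitting set in the ``middle'' part, together with a Fredman-shifted equality product evaluated by rectangular matrix multiplication, handles the neighborhoods containing many witnesses, and direct enumeration handles the light ones; summing these contributions recovers $|W_k|$ for precisely the heavy outputs we are after. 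Finally, choosing the block size $q$, the frequency threshold inside Matou\v{s}ek's step, and the secondary-hitting-set threshold so that the several terms balance yields the stated $\OO(n^{(9+\omega)/4}/L)$ bound; randomness is used only for the two hitting sets (they could be derandomized as in Lemma~\ref{lem:exact-tri} at a polylogarithmic cost).

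The hard part will be the bookkeeping of the last step: because we need the \emph{exact} value of $|W_k|$, not merely a witness or a lower bound, we cannot ignore the witnesses sitting in light neighborhoods, and enumerating all of them naively would cost $\Omega(n^2)$ and wipe out the savings. The point is that this enumeration is carried out only for the $\OO(n/L)$ pivots and only toward heavy outputs, so its cost, balanced against the matrix-multiplication term through the choice of $q$, stays within $\OO(n^{(9+\omega)/4}/L)$ --- which is also why the bound scales like $1/L$ and improves on the trivial $O(n^2)$ bound only once $L$ is large enough.
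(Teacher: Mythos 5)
Your high-level plan (hitting set, Fredman's trick, equality products via a convolution-to-matrix reduction) is the right family of ideas, but there is a genuine gap at the junction between your outer pivoting step and the matrix reduction. A globally fixed pivot $i_0\in[n]$ does not survive the reduction to \AEExactTri{}: after blocking, the output index $k$ is necessarily split across the two outer parts of the tripartite graph (say $k=(j-1)\ell+(i'-1)$ with $i'\in I$, $j\in J$), and an equality product can only compare a quantity depending on $(i',t)$ with one depending on $(t,j)$. Your Fredman-shifted condition $A_i-A_{i_0}=B_{k-i_0}-B_{k-i}$ contains the term $B_{k-i_0}$, which depends on the \emph{full} index $k$, hence on $i'$ and $j$ jointly, and so fits into neither matrix. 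This is exactly why the paper's pivots are vertices $T_{t_0}$ of the \emph{middle} part: the induced reference witness $r_0=i'-t_0+\ell-1$ varies with $k$'s $I$-coordinate, so that $A'_{r_0}$ is a function of $(i',t_0)$ while $B'_{k-r_0}=B'_{(j-1)\ell+t_0-\ell}$ is a function of $(t_0,j)$, as the equality product requires. Your outer set $H$ of size $\OO(n/L)$ therefore buys nothing: as you yourself note, the ``pivoted'' instance $(P^{(i_0)},B,B)$ is again an \AllThreeSUMConvCount{} instance of the same size, and all the work is pushed onto the ``secondary hitting set,'' i.e., onto middle-part pivots.

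Once the pivots live in the middle part, each tripartite instance sees only a length-$O(q)$ window of the $n$ potential witnesses of each output, so a heavy output $k$ with $|W_k|\ge L$ may concentrate all its witnesses in a few windows while the remaining edges carry one or two witnesses each --- too few for any hitting set to find, yet impossible to ignore for an \emph{exact} count. Your fallback, ``direct enumeration handles the light ones,'' has no affordable implementation here: this lemma has no detection oracle to list witnesses with, and presorting per-edge witness lists already costs $\Theta(n^2)$ per instance, which multiplied by your $\OO(n/L)$ pivots is far over budget. The paper resolves precisely this with the ingredient your proposal omits: a random affine shuffle $i\mapsto xi+y\bmod p$ followed by a second-moment (Chebyshev) argument (Claim~\ref{cl:3sum-conv-chebyshev}) guaranteeing that every heavy $W_k$ places $\approx L|\mathcal{I}|/p$ witnesses in \emph{every} window, so that a middle-part hitting set of size $\OO(n/L)$ per window succeeds, and the rare failures are brute-forced over a single window in $O(n^2/L)$ expected total time. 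Without an equidistribution argument of this kind (or some other exact treatment of light edges), the cost accounting in your final paragraph does not close.
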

\begin{proof}
First, we find an arbitrary prime $p$ between $2n$ and $4n$, which can be done in $\OO(n)$ time. Also, let $x$ be a uniformly random number sampled from $\mathbb{F}_p \setminus \{0\}$ and let $y$ be a uniformly random number sampled from  $\mathbb{F}_p$. 
Then we create three arrays $A', B'$ and $C'$, indexed by $\mathbb{F}_p$ as follows:
\begin{equation*}
    \begin{split}
        A'_{i} &= \left\{
  \begin{array}{ll}
    A_{x^{-1} (i-y) \bmod{p}} & : x^{-1} (i-y) \bmod{p} \in [n]\\
    M & : \text{otherwise}
  \end{array}
\right.,\\
B'_{i} &= \left\{
  \begin{array}{ll}
    B_{x^{-1} (i+y) \bmod{p}} & :  x^{-1} (i+y) \bmod{p} \in [n]\\
    M & : \text{otherwise}
  \end{array}
\right.,\\
C'_{i} &= \left\{
  \begin{array}{ll}
    C_{x^{-1} i \bmod{p}\phantom{(+y)}}  & :  x^{-1} i \bmod{p} \in [n]\\
    3M & : \text{otherwise}
  \end{array}
\right.,
    \end{split}
\end{equation*}
where $M$ is a large enough number (say $M$ is larger than $10$ times the largest absolute value of the input numbers). 

If we use $W'_k$ to denote the set of $i$ such that $A'_{i} + B'_{(k - i) \bmod{p}} = C'_k$, then it is not difficult to verify that $|W'_{xk \bmod{p}}| = |W_k|$. Thus, from now on, we aim to compute $|W'_k|$ for indices $k$ where $|W'_k| \ge L$. 

We start with the following claim. 
\begin{claim}
\label{cl:3sum-conv-algo}
Let $\mathcal{I} \subseteq \mathbb{F}_p$ be any fixed interval of length $\Theta(\sqrt{n})$ and let $1 \le L' \le \sqrt{n}$ be a fixed value. Then there exists an $\OO(n^{(5+\omega)/4}/L')$ time algorithm that computes $|W_k' \cap \mathcal{I}|$ for every $k$ such that $|W_k' \cap \mathcal{I}| \ge L'$, with high probability. Furthermore, for other values of $k$, we either also compute $|W_k' \cap \mathcal{I}|$ correctly, or declare that we don't know the value of $|W_k' \cap \mathcal{I}|$. 
\end{claim}
\begin{proof}
We first reduce the problem of computing $|W_k' \cap \mathcal{I}|$ to an instance of \AEExactTriCount{}. Similar reductions from convolution problems to matrix-product type problems were known before \cite{bremner2006necklaces, VWfindingcountingj}. Without loss of generality, we assume $\mathcal{I} = \{0, 1, \ldots, \ell - 2, \ell - 1\}$ for some $\ell = \Theta(\sqrt{n})$, by subtracting $\min \mathcal{I}$ from all indices of $A'$ and adding $\min \mathcal{I}$ to all indices of $B'$. 

We then create the following tripartite weighted graph $G$ with three parts $I, J, T$, where $|I| = \ell, |J| = [\lceil p/\ell \rceil]$ and $|T| = 2\ell - 1$. We use $I_i$ to denote the $i$-th node in $I$, $J_j$ to denote the $j$-th node in $J$ and $T_t$ to denote the $t$-th node in $T$. We then add the following edges to the graph: 
\begin{itemize}
    \item For every $i \in [|I|], t \in [|T|]$ such that $i - t + \ell - 1 \in \mathcal{I}$, we add an edge between $I_i$ and $T_t$ with weight $w(I_i, T_t) = A'_{i - t + \ell - 1}$.
    \item For every $t \in [|T|]$ and $j \in [|J|]$, we add an edge between $T_t$ and $J_j$ with weight $w(T_t, J_j) = B'_{((j-1)\ell + t - \ell) \bmod{p}}$. 
    \item For every $i \in [|I|], j \in [|J|]$ such that $(j-1)\ell + i - 1 < p$, we add an edge between $I_i$ and $J_j$ with weight $w(I_i, J_j) = -C'_{(j-1)\ell + i - 1}$. 
\end{itemize}
Consider any $(i, j) \in [\ell] \times [\lceil p/\ell \rceil]$ such that $(j-1)\ell + i - 1 < p$. The nodes $T_t$ such that $i - t + \ell - 1 \in \mathcal{I}$ form triangles with edge $(I_i, J_j)$. The multiset of the weights of these triangles is 
\begin{equation*}
    \begin{split}
        \left\{w(I_i, T_t) + w(J_j, T_t) + w(I_i, J_j) \right\}_{t=i}^{i+\ell-1} &= \left\{A'_{i - t + \ell - 1} + B'_{((j-1)\ell + t - \ell) \bmod{p}} - C'_{(j-1)\ell + i - 1} \right\}_{t=i}^{i+\ell-1}\\
        &= \left\{A'_r + B'_{\left( (j-1)\ell + (i-1) - r\right) \bmod{p}} -C'_{(j-1)\ell + i - 1}\right\}_{r=0}^{\ell - 1}.
    \end{split}
\end{equation*}
Thus, the number of triangles with weight $0$ containing edge $(I_i, J_j)$ in $G$ is exactly  $|W'_k \cap \mathcal{I}|$ for $k = (j-1)\ell + i - 1$.
In particular, if $|W'_k \cap \mathcal{I}| \ge L'$, then the number of witnesses for $(I_i, J_j)$ in the \AEExactTriCount{} instance on graph $G$ and target value $0$ is also at least $L'$. Now let $S$ be a random subset of $V(G)$ of size $C n^{0.5} \log n/L'$ for a sufficiently large constant $C$. Then with high probability, $S$ intersects with the set of witnesses for every edge $(I_i, J_j)$ for which  $k=(j-1)\ell + (i-1)$ has at least $L'$ witnesses in $\mathcal{I}$. Now we can apply Lemma~\ref{lem:exact-tri} on graph $G$, target value $0$ and set $S$ to compute the number of witnesses for these edges $(I_i, J_j)$ in $\OO(|S| \sqrt{n}^{(3+\omega)/2}) = \OO(n^{(5+\omega)/4}/L')$ time. 

If $S$ does not intersect with the witnesses for some edge $(I_i, J_j)$ (which is easy to check in $O(|S|n)$ total time), we declare that we don't know the value of $|W_k' \cap \mathcal{I}|$ for  $k=(j-1)\ell + (i-1)$.
\end{proof}

\begin{claim}
\label{cl:3sum-conv-chebyshev}
Let $\mathcal{I} \subseteq \mathbb{F}_p$ be any fixed interval and $k \in [n]$ be any fixed index. If $|W_k| \ge L$, then 
$$\Pr_{\substack{x \sim \mathbb{F}_p \setminus \{0\} \\ y \sim \mathbb{F}_p}}\left[ \left| W'_{xk \pmod{p}} \cap \mathcal{I} \right| \le \frac{L|\mathcal{I}|}{2p}\right] = O(\frac{n}{L|\mathcal{I}|}).$$
\end{claim}
\begin{proof}
First, note that $W'_{xk \pmod{p}} = \{xw + y \pmod{p}: w \in W_k\}$. Let $X$ be the random variable denoting $\left| W'_{xk \pmod{p}} \cap \mathcal{I}\right|$. First, since for any $w \in W_k$, $xw + y \pmod{p}$ is uniformly at random, $\Pr[xw + y \pmod{p} \in \mathcal{I}] = \frac{|\mathcal{I}|}{p}$, and consequently $\mathbb{E}[X] = \frac{|W_k||\mathcal{I}|}{p}$. 

For any $w, w' \in W_k$ where $w \ne w'$, the probability that both $xw + y \pmod{p}$ and $xw' + y \pmod{p}$ fall in $\mathcal{I}$ can be expressed as
$$\sum_{i_1, i_2 \in \mathcal{I}} \Pr[xw+y \equiv i_1 \pmod{p} \wedge xw'+y \equiv i_2 \pmod{p}]. $$
If $i_1 = i_2$, then $xw+y \equiv i_1 \pmod{p}$ and $xw'+y \equiv i_2 \pmod{p}$ cannot both happen; otherwise, there exists at most one pair $(x, y) \in \mathbb{F}_p \times \mathbb{F}_p$ for which $xw+y \equiv i_1 \pmod{p}$ and $xw'+y \equiv i_2 \pmod{p}$ are both true. Thus, $\Pr[xw + y \pmod{p} \in \mathcal{I} \wedge xw' + y \pmod{p} \in \mathcal{I}] \le \frac{|\mathcal{I}|(|\mathcal{I}|-1)}{p(p-1)} \le \frac{|\mathcal{I}|^2}{p^2}$. Thus, $$\Var[X] = \mathbb{E}[X^2] - \mathbb{E}[X]^2 \le \left(|W_k|(|W_k|-1)\frac{|\mathcal{I}|^2}{p^2} + \mathbb{E}[X] \right)- \mathbb{E}[X]^2 \le \mathbb{E}[X].$$
Also, 
\begin{equation*}
    \Pr\left[X \le \frac{L|\mathcal{I}|}{2p}\right] \le \Pr\left[X \le \frac{|W_k||\mathcal{I}|}{2p}\right] \le \Pr\left[\left|X - \mathbb{E}[X] \right| \le \frac{1}{2} \mathbb{E}[X]\right].
\end{equation*}
By Chebyshev's inequality, this probability can be upper bounded by $\frac{\Var[X]}{(\frac{1}{2} \mathbb{E}[X])^2} = O(\frac{n}{L|\mathcal{I}|})$. 
\end{proof}

We now describe our algorithm for computing $|W_k|$. First, we split $\mathbb{F}_p$ into $\ell = \Theta(\sqrt{n})$ intervals $\mathcal{I}_1, \mathcal{I}_2, \ldots, \mathcal{I}_\ell$, each of size $\Theta(\sqrt{n})$. Then it suffices to compute $|W'_{xk \bmod{p}} \cap \mathcal{I}_i|$ for each $i \in [\ell]$, since $|W_k| = |W'_{xk \bmod{p}}| = \sum_{i=1}^\ell |W'_{xk \bmod{p}} \cap \mathcal{I}_i|$. 

We first run the algorithm in Claim~\ref{cl:3sum-conv-algo} for each $i$ with $L' = \frac{L |\mathcal{I}_i|}{2p}$, which takes $\OO(n^{(5+\omega)/4} / L') = \OO(n^{(7+\omega)/4}/L)$ time. Claim~\ref{cl:3sum-conv-algo}  computes $|W'_{xk \bmod{p}} \cap \mathcal{I}_i|$ as long as $|W'_{xk \bmod{p}} \cap \mathcal{I}_i| \ge L'$. For each fixed $k$, it fails to compute $|W'_{xk \bmod{p}} \cap \mathcal{I}_i|$ with probability $O(\sqrt{n}/L)$ by Claim~\ref{cl:3sum-conv-chebyshev}. For these $k$, we enumerate over $j \in \mathcal{I}_i$, check if $j \in W'_{xk \bmod{p}}$, and then compute  $|W'_{xk \bmod{p}} \cap \mathcal{I}_i|$. In expectation, the cost of these $k$ is $O(\frac{n \cdot \sqrt{n}}{L}  \cdot |\mathcal{I}_i|) = O(n^2 / L)$. 

Summing over all $i \in [\ell]$, the total expected running time of the algorithm is $\OO(n^{(9+\omega)/4}/L)$. 
\end{proof}

\begin{theorem}
\label{thm:all-3sum-conv-count}
If \AllThreeSUMConv{} for length $n$ arrays has an $O(n^{2-\eps})$ time algorithm for some $\eps > 0$, then \AllThreeSUMConvCount{} for length $n$ arrays  has an $O(n^{2-\eps'})$ time randomized algorithm for some $\eps' > 0$
\end{theorem}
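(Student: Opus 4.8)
The plan is to fix a witness-count threshold $L$ and split the output indices into a ``few-witnesses'' part, solved by a listing-based self-reduction of \AllThreeSUMConv{}, and a ``many-witnesses'' part, solved by Lemma~\ref{lem:3sum-conv}. Concretely I would take $L = n^\gamma$ for a constant $\gamma$ with $(\omega+1)/4 < \gamma < 1$ (such $\gamma$ exists since $\omega < 3$; e.g.\ $\gamma = 0.99$ works for the current bound on $\omega$). Given an \AllThreeSUMConvCount{} instance $(A,B,C)$, the goal is to output $|W_k|$ for every $k$, where $W_k = \{i : A_i + B_{k-i} = C_k\}$.

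First I would run the listing phase. Exactly as in the proof of Theorem~\ref{thm:exact-tri-count}, and by the same well-known self-reduction techniques as in \cite{focsyj}---now applied to \AllThreeSUMConv{}, which is self-reducible since splitting $A$ and $B$ into blocks decomposes an instance into smaller \AllThreeSUMConv{} instances---an $O(n^{2-\eps})$ time algorithm for \AllThreeSUMConv{} yields, for some $\eps'' > 0$, an $O(n^{2-\eps''})$ time algorithm that lists, for every $k$, up to $L$ distinct witnesses in $W_k$. For every $k$ for which strictly fewer than $L$ witnesses are produced, we have enumerated all of $W_k$, so we output $|W_k|$ equal to the number listed. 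The total number of listed pairs is at most $nL = n^{1+\gamma} < n^2$, so this phase is truly subquadratic.

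Next I would run the counting phase by invoking Lemma~\ref{lem:3sum-conv} with this same $L$, which computes $|W_k|$ for \emph{every} $k$ with $|W_k| \ge L$ in $\OO(n^{(9+\omega)/4}/L) = \OO(n^{(9+\omega)/4 - \gamma})$ randomized time; since $\gamma > (\omega+1)/4$ this is $O(n^{2-\eps'})$ for $\eps' = \gamma - (\omega+1)/4 > 0$. For every $k$ for which the listing phase hit the cap $L$ we must have $|W_k| \ge L$, so we output the value returned by Lemma~\ref{lem:3sum-conv}; together with the previous case this covers all $k$, and both phases return $L$ on any $k$ with $|W_k| = L$. The total running time is $O(n^{2-\eps''} + n^{2-\eps'}) = O(n^{2-\min\{\eps',\eps''\}})$, and it is randomized because Lemma~\ref{lem:3sum-conv} is, which proves the theorem.

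The main obstacle is the listing phase: one must check carefully that an $O(n^{2-\eps})$ decision algorithm for \AllThreeSUMConv{} genuinely gives a truly subquadratic algorithm for listing up to $L = n^{0.99}$ witnesses \emph{per index} $k$. Adapting the \cite{focsyj} self-reduction to the all-numbers convolution setting is routine, but one has to control the total cost of brute-forcing the block-subinstances that the decision algorithm flags as nonempty: the per-index cap bounds the useful output by $nL < n^2$, yet naively probing every flagged subinstance could cost $\Theta(n^2)$, so I would use the standard multi-level recursion (recursing into a flagged subinstance instead of brute-forcing it) to keep the overhead truly subquadratic. A minor point is that Lemma~\ref{lem:3sum-conv} only guarantees expected time and high-probability correctness, which is precisely what the ``randomized algorithm'' in the conclusion permits.
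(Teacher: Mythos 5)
Your proposal is correct and follows essentially the same route as the paper's proof: a threshold $L=n^{0.99}$ (you parametrize it as $n^\gamma$ with $\gamma>(\omega+1)/4$, which is the same calculation), with the few-witnesses indices handled by the witness-listing self-reduction of \cite{focsyj} applied to \AllThreeSUMConv{}, and the many-witnesses indices handled by Lemma~\ref{lem:3sum-conv}. The caveats you raise about the listing phase are reasonable but are exactly what the paper delegates to ``well-known techniques,'' so there is no substantive difference.
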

\begin{proof}

Similar as before, given an \AllThreeSUMConvCount{} instance on length $n$ arrays $A, B, C$, we can count the number of witnesses for $C_k$ that have at most $n^{0.99}$ witnesses in $O(n^{2-\eps''})$ time by well-known techniques~\cite{focsyj} when \AllThreeSUMConv{} has a truly subquadratic algorithm.

For the rest values of $k$, we run the algorithm in Lemma~\ref{lem:3sum-conv} which runs in $\OO(n^{(9+\omega)/4} / n^{0.99}) = O(n^{1.86})$ time. 

Overall, the algorithm for \AllThreeSUMConvCount{} runs in $O(n^{2-\eps''} + n^{1.86})$ time, which is truly subquadratic. 
\end{proof}

As in Remark~\ref{rem:exact-tri-count}, Theorem~\ref{thm:all-3sum-conv-count} implies that \ThreeSUMConv{} is subquadratically equivalent to \ThreeSUMConvCount{}.

\begin{theorem}
\label{thm:minplus-conv-count}
If \MinPlusConv{} for length $n$ arrays has an $O(n^{2-\eps})$ time algorithm for some $\eps > 0$, then \MinPlusConvCount{} for length $n$ arrays  has an $O(n^{2-\eps'})$ time randomized algorithm for some $\eps' > 0$.
\end{theorem}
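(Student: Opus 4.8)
The plan is to adapt the two-regime template used in the proofs of Theorem~\ref{thm:minplus-count} (for \MinPlus{} matrices) and Theorem~\ref{thm:all-3sum-conv-count} (for \AllThreeSUMConv{}): handle positions with few witnesses by explicit witness listing, and positions with many witnesses by Lemma~\ref{lem:3sum-conv}. Throughout, write $C = A\star B$ for the Min-Plus convolution of the input arrays and, for each output index $k$, let $W_k = \{i : A_i + B_{k-i} = C_k\}$ be its set of witnesses, so that $|W_k|$ is precisely what \MinPlusConvCount{} must output.

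First I would invoke the hypothesized $O(n^{2-\eps})$-time \MinPlusConv{} algorithm together with the standard self-reduction and witness-listing machinery of \cite{focsyj} --- exactly as invoked for \MinPlus{} in Theorem~\ref{thm:minplus-count} and for \AllThreeSUMConv{} in Theorem~\ref{thm:all-3sum-conv-count}. For Min-Plus convolution the relevant self-reduction simply restricts the index $i$ to a sub-interval (setting the other entries of $A$ to $+\infty$) and recurses, so that a truly subquadratic \MinPlusConv{} algorithm yields, in $O(n^{2-\eps''})$ time for some $\eps''>0$, for every $k$ both the value $C_k$ and a list of up to $n^{0.99}$ elements of $W_k$. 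At every position $k$ for which strictly fewer than $n^{0.99}$ witnesses were reported, this list is all of $W_k$ and we output $|W_k|$ directly.

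It remains to count witnesses at the ``heavy'' positions, those $k$ with $|W_k| \ge n^{0.99}$. Since we now have the explicit array $C$, the triple $(A,B,C)$ is a bona fide \AllThreeSUMConvCount{} instance whose witness set at index $k$ is exactly $W_k$, so I would run the randomized algorithm of Lemma~\ref{lem:3sum-conv} (whose engine is Lemma~\ref{lem:exact-tri}, built on Fredman's trick and the equality product) with $L = n^{0.99}$. It computes $|W_k|$ for every heavy $k$ in $\OO(n^{(9+\omega)/4}/n^{0.99}) = O(n^{1.86})$ expected time, using $\omega<2.373$. Combining the three contributions gives total randomized time $O(n^{2-\eps} + n^{2-\eps''} + n^{1.86}) = O(n^{2-\eps'})$ for some $\eps'>0$; together with the trivial converse reduction this yields the subquadratic equivalence of \MinPlusConv{} and \MinPlusConvCount{}. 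As in the matrix and 3SUM-convolution cases, essentially all of the real work is encapsulated in Lemma~\ref{lem:3sum-conv}; the only genuinely new point to check here is that the witness-listing step transfers from \AllThreeSUMConv{} to Min-Plus \emph{convolution}, and the only quantitative constraint is that the threshold $n^{0.99}$ keep both the listing cost and the $n^{(9+\omega)/4}/L$ cost of Lemma~\ref{lem:3sum-conv} truly subquadratic --- which holds since $(9+\omega)/4 - 0.99 < 2$.
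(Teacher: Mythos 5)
Your proposal is correct and follows essentially the same route as the paper's proof: compute $C$ with the assumed \MinPlusConv{} algorithm, handle indices with fewer than $n^{0.99}$ witnesses by the standard witness-listing self-reduction of \cite{focsyj}, and feed the resulting triple $(A,B,C)$ into Lemma~\ref{lem:3sum-conv} with $L=n^{0.99}$ for the heavy indices. The running-time accounting matches the paper's as well, so there is nothing to add.
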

\begin{proof}

Given an \MinPlusConvCount{} instance for length $n$ arrays $A, B$, we first run the assumed \MinPlusConv{} algorithm to compute the Min-Plus convolution $C$ of $A$ and $B$ in $O(n^{2-\eps})$ time. 

Similar as before, given the $O(n^{2-\eps})$ time algorithm for \MinPlusConv{}, we can count the number of witnesses for $C_k$ that have at most $n^{0.99}$ witnesses in $O(n^{2-\eps''})$ time by well-known techniques~\cite{focsyj}.

For the rest values of $k$, we run the algorithm in Lemma~\ref{lem:3sum-conv} with arrays $A, B, C$ and $L = n^{0.99}$ which runs in $\OO(n^{(9+\omega)/4} / n^{0.99}) = O(n^{1.86})$ time. 

Overall, the algorithm for \MinPlusConvCount{} runs in $O(n^{2-\eps}+ n^{2-\eps''} + n^{1.86})$ time, which is truly subquadratic. 
\end{proof}

We then show a reduction from \MinPlusConv{} to \MinPlusConvCount{}. 
\begin{theorem}
\label{thm:minplus-conv-count-rev}
If \MinPlusConvCount{} for length $n$ arrays has an $O(n^{2-\eps})$ time algorithm for some $\eps > 0$, then \MinPlusConv{} for length $n$ arrays  has an $O(n^{2-\eps'})$ time randomized algorithm for some $\eps' > 0$.
\end{theorem}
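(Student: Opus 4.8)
The plan is to mirror the reduction from \MinPlus{} to \MinPlusCount{} in Theorem~\ref{thm:minplus-count-rev}. As a first step I would make the witness of every output index unique: given the \MinPlusConv{} instance $A,B$ of length $n$, set $A'_i = M\cdot A_i + i$ and $B'_j = M\cdot B_j$ for a large integer $M > n$. Then for every index $k$ there is a unique $i_k$ with $A'_{i_k}+B'_{k-i_k}=\min_i (A'_i+B'_{k-i})$ (the dominant term $M(A_i+B_{k-i})$ selects the minimizers of $A_i+B_{k-i}$, and since $i\le n<M$ the additive term breaks ties in favour of the smallest such $i$), and moreover $A_{i_k}+B_{k-i_k} = C_k$, the sought Min-Plus-convolution value. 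Since $i_k\le k-1<n$, it suffices to learn, for every $k$, each of the $O(\log n)$ bits of $i_k$; then we can output $C_k=A_{i_k}+B_{k-i_k}$ in $O(n\log n)$ extra time.

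To read off bit $p$ of $i_k$ I would build a ``duplicated'' instance of length $3n$. Define $\hat A^{(p)}$ by $\hat A^{(p)}_i = A'_i$ for $i\in[n]$, $\hat A^{(p)}_{2n+i}=A'_i$ if the $p$-th bit of $i$ is $1$, and every other entry equal to a huge sentinel value $\Lambda$ (larger than any achievable finite sum, still $O(\log n)$ bits). Define $\hat B^{(p)}$ by $\hat B^{(p)}_j = B'_j$ and $\hat B^{(p)}_{2n+j}=B'_j$ for $j\in[n]$, all other entries equal to $\Lambda$. The key point is that at output index $2n+k$ only two kinds of finite pairs occur: the ``type-1'' pair $(i,\,2n+k-i)$, contributing $\hat A^{(p)}_i+\hat B^{(p)}_{2n+k-i}=A'_i+B'_{k-i}$ for $i\in[n]$ with $k-i\in[n]$, and the ``type-2'' pair $(2n+i,\,k-i)$, contributing $A'_i+B'_{k-i}$ and present only when the $p$-th bit of $i$ is $1$. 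Hence $\hat C^{(p)}_{2n+k}=\min_i(A'_i+B'_{k-i})=C'_k$, and by uniqueness of $i_k$ the only minimizing pairs are $(i_k,\,2n+k-i_k)$ always, together with $(2n+i_k,\,k-i_k)$ exactly when bit $p$ of $i_k$ equals $1$. Therefore the witness count of index $2n+k$ returned by \MinPlusConvCount{} is $2$ if bit $p$ of $i_k$ is $1$, and $1$ otherwise. (Output indices where the original convolution ranges over an empty set, e.g.\ $k=1$, are handled trivially and separately.)

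So the reduction runs the assumed $O(n^{2-\eps})$-time \MinPlusConvCount{} algorithm on the length-$\Theta(n)$ arrays $(\hat A^{(p)},\hat B^{(p)})$ for each of the $O(\log n)$ bit positions $p$, extracts $i_k$ for all $k$ from the counts, and outputs $C_k=A_{i_k}+B_{k-i_k}$; the total time is $O(n^{2-\eps}\log n)=O(n^{2-\eps'})$ for any $\eps'<\eps$ (and is deterministic, hence randomized only if the invoked \MinPlusConvCount{} algorithm is). The one genuinely convolution-specific point, and the step I expect to need the most care, is the alignment in the duplicated instance: unlike the matrix case, where ``duplicating index $k$'' simply means repeating a column of $A$ and a row of $B$, in the convolution the copy of $A$-index $i$ must be paired with the \emph{same} $B$-index $k-i$ at a \emph{single} common output index, which forces shifting the copies of $A$ and of $B$ by the same amount ($2n$) so that both a type-1 and a type-2 pair land on index $2n+k$; getting this bookkeeping (and the sentinel value and the precise index ranges) exactly right is the only real work.
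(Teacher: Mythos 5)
Your proposal is correct and follows essentially the same route as the paper's proof: perturb via $A'_i = M\cdot A_i + i$, $B'_j=M\cdot B_j$ to force a unique witness $i_k$, then recover each bit of $i_k$ by building a duplicated instance in which the witness contributes twice exactly when that bit is $1$, and read the bit off the count. The only difference is the duplication gadget — the paper interleaves entries (reading counts at odd output indices $2k-1$) while you shift the duplicate blocks by $2n$ (reading at indices $2n+k$) — and both layouts are valid.
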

\begin{proof}
Let $A$ and $B$ be two length $n$ arrays for a \MinPlusConv{} instance. Let $C$ be their Min-Plus convolution. As in proof of Theorem~\ref{thm:minplus-count-rev},
we can assume $|W_k|=1$ for every $k$, i.e., there exists a unique $i_k$ such that $A_{i_k} +B_{k - i_k} = C_k$. 

For each $p \in \left[ \lceil \log(n)\rceil\right]$, we perform the following round. Let $A'$ be a length $2n$ array such that $A'_{2i-1} = A_i$ for every $i \in [n]$, $A'_{2i} = A_i$ for every $i \in [n]$ whose $p$-th bit in its binary representation is $1$, and $A'_{2i} = \infty$ for the rest of $i$. Also, let $B'$ be a length $2n$ array such that $B'_{2i-1} = B'_{2i} = B_i$ for every $i \in [n]$. Now we use the \MinPlusConvCount{} algorithm for arrays $A'$ and $B'$. Suppose $C'$ is the Min-Plus convolution between $A'$ and $B'$. Clearly, for any $k \in [n]$, $C'_{2k-1} = C_k$. Also, suppose $C'_{2k-1}$ has $2$ witnesses, then we know that $A'_{2i_k} = A_{i_k}$ and thus the $p$-th bit in $i_k$-th binary representation is $1$; otherwise the $p$-th bit in $i_k$-th binary representation is $0$. 

Thus, after the $\lceil \log(n)\rceil$ rounds, we can compute $i_k$ for each $k \in [n]$, which can then be used to compute the Min-Plus convolution $C$ between $A$ and $B$ in $O(n)$ time. 
\end{proof}

\subsection{All-Numbers 3SUM}

\begin{theorem}
\label{thm:all-3sum-count}
If \AllThreeSUM{} for  sets of $n$ numbers has an $O(n^{2-\eps})$ time algorithm for some $\eps > 0$, then \AllThreeSUMCount{} for sets of $n$ numbers  has an $O(n^{2-\eps'})$ time randomized algorithm for some $\eps' > 0$.
\end{theorem}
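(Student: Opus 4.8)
The plan is to route through the convolution version and invoke Theorem~\ref{thm:all-3sum-conv-count}. First I would observe that \AllThreeSUMConv{} is nothing but a structured special case of \AllThreeSUM{}: replacing $A_i$ by $A_i+iM$, $B_j$ by $B_j+jM$, and $C_k$ by $C_k+kM$ for a polynomially bounded $M$ forces the constraint $i+j=k$ into every solution, and since the $C$-values become pairwise distinct the transformation is count-exact per target. Hence an $O(n^{2-\eps})$ algorithm for \AllThreeSUM{} yields one for \AllThreeSUMConv{}, and Theorem~\ref{thm:all-3sum-conv-count} then upgrades this to a truly subquadratic randomized algorithm for \AllThreeSUMConvCount{}. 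It therefore suffices to give a count-preserving, polylogarithmic-overhead, randomized reduction from \AllThreeSUMCount{} to \AllThreeSUMConvCount{}.

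For that final reduction I would adapt the reduction from \ThreeSUM{} to \ThreeSUMConv{} of Chan and He~\cite{ChanHe}, which maps a \ThreeSUM{} instance on $n$ polynomially bounded integers to $\polylog(n)$ instances of \ThreeSUMConv{} of length $O(n)$ using only polylog overhead (and deterministically). The point is that this reduction, for each element $c\in C$, identifies an $O(1)$-size set of output positions in the convolution arrays at which the solutions $a+b=c$ can land, and each such solution is realized there; by running it on the all-numbers/counting versions of the generated convolution instances and summing the returned counts over those $O(1)$ positions — correcting, via inclusion–exclusion, for the bounded multiplicity with which a solution may be represented (exactly the issue flagged in the remark following Corollary~\ref{cor:3sum:prep0}) — one recovers, for every $c$, the exact value of $|\{(a,b):a+b=c\}|$. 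This gives $\TAllThreeSUMCount\text{-time}(n)=\polylog(n)\cdot\TAllThreeSUMConvCount\text{-time}(O(n))+\widetilde O(n)$, which is truly subquadratic by the previous paragraph; and then, exactly as in Remark~\ref{rem:exact-tri-count}, the same argument plus the trivial self-reductions gives the subquadratic equivalence of \ThreeSUM{} and \ThreeSUMCount{}.

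The main obstacle is the count-exact bookkeeping. For \emph{detection} one may be completely cavalier about hash collisions and may handle the awkward elements by a crude brute force or recursion, but for \emph{exact} counting we must certify that the union of all generated convolution instances, together with all the positions read off for a given target, forms a disjoint cover of (equivalently, counts with a known, correctable multiplicity) the true solution multiset — this is what forces the careful per-position accounting and the inclusion–exclusion correction, and verifying that Chan and He's construction admits such an accounting is the technical heart of the proof. It is worth noting that a more naive route — reduce to convolution by random almost-linear hashing into $\widetilde\Theta(n)$ buckets and recurse on the elements lying in heavy buckets — does \emph{not} work for exact counting: that recursion branches $\polylog(n)$-fold on subproblems of size $n/\polylog(n)$ and unwinds to $\widetilde\Theta(n^2)$ total work, so one genuinely needs the non-recursive, polylog-overhead structure of the \cite{ChanHe} reduction (or a fresh count-preserving reduction with the same feature) rather than a black-box hashing recursion.
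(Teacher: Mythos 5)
Your proposal is correct and follows essentially the same route as the paper: observe that \AllThreeSUMConv{} is a special case of \AllThreeSUM{} (so the hypothesis gives a subquadratic \AllThreeSUMConv{} algorithm), invoke Theorem~\ref{thm:all-3sum-conv-count} to get subquadratic \AllThreeSUMConvCount{}, and then reduce \AllThreeSUMCount{} to \AllThreeSUMConvCount{} via the count-preserving Chan--He reduction. The paper's proof is terser — it simply cites \cite{ChanHe} and asserts the reduction works for counting — whereas you correctly flag and elaborate the count-exact bookkeeping (the inclusion--exclusion correction also noted in the remark after Corollary~\ref{cor:3sum:prep0}) as the point that needs verification.
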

\begin{proof}
If \AllThreeSUM{} for sets of $n$ numbers has an $O(n^{2-\eps})$ time algorithm for $\eps > 0$, then so does \AllThreeSUMConv{} for length $n$ arrays, since \AllThreeSUMConv{} is not harder than \AllThreeSUM{}. Then by Theorem~\ref{thm:all-3sum-conv-count}, \AllThreeSUMConvCount{} for length $n$ arrays has an $O(n^{2-\eps''})$ time algorithm for some $\eps'' > 0$. Therefore, it suffices to reduce \AllThreeSUMCount{} to \AllThreeSUMConvCount{}. Some previous reductions from \ThreeSUM{} to \ThreeSUMConv{} actually work for the counting variants as well~\cite{patrascu2010towards, ChanHe}. Arguably the simplest such reduction is given in~\cite[Section 3]{ChanHe}. Applying their reduction finishes the proof.
\end{proof}

As in Remark~\ref{rem:exact-tri-count}, Theorem~\ref{thm:all-3sum-count} implies that \ThreeSUM{} is subquadratically equivalent to \ThreeSUMCount{}.

Still more equivalence results for other counting and detection problems are given in Appendix~\ref{sec:more_counting}.

\subsection{Discussion}\label{sec:counting:discuss}

Abboud, Feller and Weimann \cite{AbboudFW20} showed that counting the number of Negative Triangles in a graph (even mod 2) can solve \ExactTri, thus presenting a barrier to showing that the Negative Triangle  problem ({\sf Neg-Tri}) is equivalent to its counting variant: Vassilevska W. and Williams \cite{focsyj} showed that {\sf Neg-Tri} is equivalent to \APSP{} under subcubic fine-grained reductions; then if {\sf \#Neg-Tri} can be reduced to {\sf Neg-Tri}, one can also reduce it to \APSP, and since there are fine-grained reductions from \ThreeSUM{}  to \ExactTri{} \cite{VWfindingcountingj}, and from \ExactTri{} to {\sf \#Neg-Tri} \cite{AbboudFW20}, one would get a very surprising reduction from \ThreeSUM{} to \APSP. There is some evidence that such a reduction would be difficult to obtain: for instance, while \APSP{} has a superlogarithmic improvement over its simple cubic algorithm \cite{Williams18}, the best improvement over the simple quadratic algorithm of \ThreeSUM{} only shaves two logarithmic factors (e.g. \cite{baran2005subquadratic})!

Our equivalences between \MinPlus{} (and thus Minimum Weight Triangle) and \ExactTri{} respectively with their counting variants exhibit a strange phenomenon: {\sf Neg-Tri} seems different from these problems! Or, perhaps, if we believe that {\sf Neg-Tri} is like these problems and is equivalent to {\sf \#Neg-Tri}, then we should be more optimistic about the existence of a fine-grained reduction from \ThreeSUM{} to \APSP.

Another line of work in which counting variants of fine-grained problems have been considered is in worst-case to average-case reductions and {\em fine-grained cryptography} \cite{BallRSV17,BallRSV18,Boix-AdseraBB19,GO2020,DalirrooyfardLW20,LaVigneLW19,merkle}: building cryptographic primitives from worst-case fine-grained assumptions that might still hold even if $\text{P}=\text{NP}$. The known techniques 
 for worst-case to average-case reductions for fine-grained problems only work for counting problems, whereas the design of fine-grained public key protocols \cite{LaVigneLW19,merkle} seem to require that the decision variants are hard on average.

Suppose that one can use the known toolbox for worst-case to average-case reductions for counting problems to show that \ExactTriCount{} or \ThreeSUMCount{} is hard on average.
Then via our reductions back to \ExactTri{} and \ThreeSUM{}, one would get some distributions for which these decision problems are actually hard. This could pave the way to new public-key protocols.

\section{Counting Algorithms in Other Models}
\label{sec:other_models}
\newcommand{\NcoNTIME}{(N\cap\textit{coN})\textit{TIME}}

\subsection{Co-Nondeterministic Algorithms for Counting Problems with Integer Inputs}
\label{sec:nondet-int}

First, we show how to modify the co-nondeterministic algorithms in \cite{carmosino2016nondeterministic} to work for the counting versions of \AEExactTri{}, \AENegTri{} and \AllThreeSUM{}. 

Before we show these algorithms, we start with the following simple observation.
\begin{claim}
\label{cl:sumof3neg}
For any integers $a, b, c$, $a+b+c \le 0$ if and only if one of the followings is true: 
\begin{itemize}
    \item $\lceil \frac{a}{2} \rceil + \lceil \frac{b}{2} \rceil  + \lceil \frac{c}{2} \rceil \le 0$;  
    \item At least $2$ of $a,b,c$ are odd and $\lceil \frac{a}{2} \rceil + \lceil \frac{b}{2} \rceil  + \lceil \frac{c}{2} \rceil = 1$. 
\end{itemize}
\end{claim}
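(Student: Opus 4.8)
The plan is to reduce the statement to a one-line algebraic identity followed by a finite case check. First I would record that for any integer $x$, with $\delta_x := [x\text{ is odd}]\in\{0,1\}$, one has $x = 2\lceil x/2\rceil - \delta_x$; this is immediate from splitting into the cases $x=2m$ (then $\lceil x/2\rceil=m$, $\delta_x=0$) and $x=2m+1$ (then $\lceil x/2\rceil=m+1$, $\delta_x=1$, and $2(m+1)-1=2m+1$).

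Applying this identity to each of $a,b,c$ and adding, I would set $s := \lceil a/2\rceil+\lceil b/2\rceil+\lceil c/2\rceil$ and $e := \delta_a+\delta_b+\delta_c$, noting that $e$ is exactly the number of odd values among $a,b,c$, so $e\in\{0,1,2,3\}$. Then $a+b+c = 2s-e$, and the claim becomes the purely arithmetic assertion: for $s\in\mathbb{Z}$ and $e\in\{0,1,2,3\}$, we have $2s-e\le 0$ if and only if $s\le 0$, or ($e\ge 2$ and $s=1$).

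The final step is a case analysis on $e$. If $e\in\{0,1\}$, then using $s\in\mathbb{Z}$ we get $2s-e\le 0 \iff 2s\le e \iff s\le 0$, and the second disjunct on the right is vacuous (it requires $e\ge 2$), so both sides read ``$s\le 0$''. If $e\in\{2,3\}$, then $2s-e\le 0 \iff 2s\le e \iff s\le 1$ (again by integrality, since $e\le 3$), while the right-hand side becomes ``$s\le 0$ or $s=1$'', i.e. ``$s\le 1$''; the two sides agree. Since these four cases are exhaustive, the claim follows. I expect no genuine obstacle here; the only points needing a little care are the integrality steps ($2s\le 1\iff s\le 0$ and $2s\le 3\iff s\le 1$ because $s$ is an integer) and checking that ``at least two of $a,b,c$ are odd'' translates precisely to $e\ge 2$.
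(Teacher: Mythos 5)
Your proof is correct and is essentially the intended argument: the paper states Claim~\ref{cl:sumof3neg} as a ``simple observation'' without giving a proof, and your identity $x = 2\lceil x/2\rceil - [x\text{ odd}]$ (valid for all integers, including negatives) together with the finite case check on $e\in\{0,1,2,3\}$ fills that in completely. Nothing to add.
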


It has the following immediate consequence:
\begin{lemma}
\label{lem:negtri_to_exacttri}
Given a \AENegTriCount{} instance on an $n$-node tripartite graph $G$ with node partitions $A$, $B$, and $C$ and with edge weights in $[\pm n^{O(1)}]$, we can reduce it to $O(\log n)$ instances of \AEExactTriCount{} on graphs with the same vertex set and with edge weights in  $[\pm n^{O(1)}]$ deterministically in $\OO(n^2)$ time. Furthermore, the count for an edge in the original \AENegTriCount{}  instance can be written as a sum of the count for the corresponding edge among the \AEExactTriCount{} instances (if this edge does not exist in a certain instance, then the count is simply $0$). 
\end{lemma}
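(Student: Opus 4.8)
The plan is to apply Claim~\ref{cl:sumof3neg} repeatedly to halve the edge weights, collecting along the way the $O(\log n)$ exact‑triangle‑count instances whose counts will add up to the negative‑triangle count.

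\emph{Step 1: reduce strict negativity to ``$\le 0$''.} Since all weights are integers, a triangle is negative iff its weight sum is $\le 0$ after adding $1$ to every edge between $A$ and $B$: each triangle of the tripartite graph contains exactly one $A$–$B$ edge, so every triangle's weight sum increases by exactly $1$, and the per‑edge correspondence is a bijection on $A$–$B$ edges and the identity elsewhere. Hence it suffices to count, for each edge $e$, the triangles through $e$ of weight sum $\le 0$; call this number $N(G,e)$.

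\emph{Step 2: one halving step.} Suppose the current tripartite graph $H$ has weights in $[\pm M]$. Let $H'$ be obtained by replacing every weight $w$ with $\lceil w/2\rceil$; its weights lie in $[\pm(\lceil M/2\rceil+1)]$. For a triangle of $H$ with edge weights $(a,b,c)$ (in the natural bijection with the three part‑pairs), Claim~\ref{cl:sumof3neg} writes ``$a+b+c\le 0$'' as the \emph{disjoint} union of (i) ``$\lceil a/2\rceil+\lceil b/2\rceil+\lceil c/2\rceil\le 0$'' and (ii) ``at least two of $a,b,c$ are odd and $\lceil a/2\rceil+\lceil b/2\rceil+\lceil c/2\rceil=1$''. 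Case (i) is again a ``$\le 0$''‑counting problem, now on $H'$. For case (ii), I would further partition by which of the three edges are odd: there are $\binom{3}{2}+\binom{3}{3}=4$ parity patterns with at least two odd entries, and these partition case (ii). For each such pattern $\pi$, let $H'_\pi$ be $H'$ restricted to the edges whose original $H$‑weight has the parity dictated by $\pi$ on the corresponding part‑pair; then every triangle of $H'_\pi$ has original parity pattern exactly $\pi$, so counting triangles of $H'_\pi$ through $e$ of (halved) weight sum $1$ is precisely an \AEExactTriCount{} instance with target $1$. Thus $N(H,e)=N(H',e)+\sum_{\pi}(\text{count for }e\text{ in the }\pi\text{-instance})$, where a term is $0$ whenever $e$ is absent from $H'_\pi$. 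Each of $H'$ and the four $H'_\pi$ is built by a single scan over the $O(n^2)$ edges of $H$.

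\emph{Step 3: unfold and terminate.} The weight bound satisfies $M_{i+1}\le(M_i+1)/2$, so after $d=O(\log n)$ halvings it drops below a fixed constant $c_0$; at that point a ``$\le 0$''‑count over the $O(1)$ possible weight sums in $\{-3c_0,\dots,0\}$ is just a sum of $O(1)$ \AEExactTriCount{} counts (one target per value). Unwinding the recursion, $N(G,e)$—hence the negative‑triangle count for $e$ in the original instance—is the sum over all produced \AEExactTriCount{} instances of the count for $e$ (treated as $0$ when $e$ is absent there), since every split above is into disjoint cases. The number of instances is $4d+O(1)=O(\log n)$, all constructed deterministically, and the total work is $O(n^2)$ per level times $O(\log n)$ levels, i.e.\ $\widetilde O(n^2)$.

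\emph{Main obstacle.} I do not expect a real obstacle; the only points needing care are that cases (i) and (ii) of Claim~\ref{cl:sumof3neg} are disjoint and that the four parity patterns partition case (ii), so the counts genuinely add, together with the elementary bound on the recursion depth. The one idea worth isolating is the parity‑restriction trick that converts the ``at least two of $a,b,c$ odd'' side condition of Claim~\ref{cl:sumof3neg} into an ordinary exact‑triangle‑count instance on a subgraph.
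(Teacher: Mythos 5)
Your proposal is correct and follows essentially the same route as the paper's proof: shift by $1$ on one part-pair to turn strict negativity into ``$\le 0$'', recursively halve the weights via Claim~\ref{cl:sumof3neg}, convert the ``at least two odd'' case into four parity-restricted \AEExactTriCount{} instances with target $1$, and finish with $O(1)$ instances once the weight bound is constant. The only (immaterial) differences are which part-pair receives the $+1$ and the handling of the base case, where you sum over the $O(1)$ possible target values instead of enumerating weight triples as the paper does.
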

\begin{proof}
Without loss of generality, we assume that the instance $G$ has weight function $w$ and  we would like to compute for every $a\in A,c\in C$, the number of $b\in B$ such that $w(a,b)+w(b,c)+w(a,c)<0$. Then we add $1$ to the weight of every edge $(a, c) \in A \times C$. Now we need to count the number of number of $b\in B$ such that $w(a,b)+w(b,c)+w(a,c)\le 0$.

Consider recursing with respect to the bound $[\pm W]$ on the edge weights, initially $W = n^{O(1)}$. 
\begin{enumerate}
    \item If $W = O(1)$. Then we enumerate all possible combinations of $3$ weights $w_1, w_2, w_3 \in [\pm W]$ such that $w_1 + w_2 + w_3 \le 0$, and for each combination, we create an \AEExactTriCount{} instance with all edges $(a, b) \in A \times B$ s.t. $w(a, b) = w_1$, all edges $(b, c) \in B \times C$ s.t. $w(b, c) = w_2$ and all edges $(c, a) \in C \times A$ s.t. $w(c, a) = w_3$. In these instances, we make all edge weights $0$, so that all triangles have weights $0$. Thus, the total number of exact triangles through edge $(a, c)$ in these $O(1)$ instances is exactly the number of non-positive-weight triangles through edge $(a, c)$ in the original instance. 
    \item Otherwise, consider Claim~\ref{cl:sumof3neg}. A triangle $(a, b, c)$ has $w(a, b) + w(b, c) + w(c, a) \le 0$ if and only if $\lceil \frac{w(a, b)}{2} \rceil + \lceil \frac{w(b, c)}{2} \rceil  + \lceil \frac{w(c, a)}{2} \rceil \le 0$ or at least $2$ of $w(a, b),w(b, c), w(c, a)$ are odd and $\lceil \frac{w(a, b)}{2} \rceil + \lceil \frac{w(b, c)}{2} \rceil  + \lceil \frac{w(c, a)}{2} \rceil = 1$. Note that these two cases are disjoint, so we can separately consider them and sum up the counts. For the first case, we can create a graph $G'$ by replacing the  weight $w(u, v)$ of each edge with $\lceil \frac{w(u, v)}{2}\rceil$ and recursively solve the problem on $G'$. For the second case, we enumerate which subset of  $w(a, b), w(b, c)$ and $w(c, a)$ are odd (there should be at least two of them) and keep the corresponding edges in $G'$ only if they meet the parity condition. This way, we will create $4$ sub-problems, where each sub-problem is an exact triangle instance with target value $1$. 
\end{enumerate}
Overall, we will create $O(\log n)$ \AEExactTriCount{} instances as initially $W=n^{O(1)}$. 
\end{proof}
\begin{remark}\rm
Lemma~\ref{lem:negtri_to_exacttri} implies that \AENegTriCount{} subcubically reduces to \AEExactTriCount{}. As shown in~\cite{AbboudFW20}, \ExactTriCount{} reduces to \NegTriCount{}, and the same reduction works from  \AEExactTriCount{} to \AENegTriCount{}. Thus, \AENegTriCount{} and \AEExactTriCount{} are subcubically equivalent. By Theorem~\ref{thm:exact-tri-count}, they are also subcubically equivalent to \AEExactTri{}. 
\end{remark}

\begin{theorem}
\label{thm:nondet-int-exacttri-count}
\AEExactTriCount{} for graphs with integer weights in $[\pm n^\alpha]$ for any constant $\alpha$ is in $\NcoNTIME[\OO(n^{(3+\omega)/2})]$. The same bound holds for \AENegTriCount{}. 
\end{theorem}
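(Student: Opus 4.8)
The plan is to exhibit a nondeterministic verifier that, given the input graph $G$ (with weights in $[\pm n^\alpha]$), the target $t$, and a candidate count matrix $C$, accepts in $\OO(n^{(3+\omega)/2})$ time iff $C$ is the correct \AEExactTriCount{} output, and to observe that this is exactly an $\NcoNTIME$ computation of the function. First I would dispose of \AENegTriCount{}: by Lemma~\ref{lem:negtri_to_exacttri} it reduces, in $\OO(n^2)$ deterministic time, to $O(\log n)$ \AEExactTriCount{} instances whose per‑edge counts sum to the desired count, and $\NcoNTIME[\OO(n^{(3+\omega)/2})]$ is closed under combining polylogarithmically many such instances (guess and verify each in parallel, add the verified counts). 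So it suffices to handle \AEExactTriCount{}, where the weights stay polynomially bounded, which keeps all arithmetic word‑sized and all reductions polylogarithmic.

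For \AEExactTriCount{} I would reuse the mechanism of Section~\ref{sec:counting:preview} (Fredman's trick plus Matou\v sek's equality product) but drive it nondeterministically instead of with a fast decision oracle. The certificate consists of: a pivot set $S\subseteq V(G)$ of size $\OO(1)$, meant to hit $W_{ij}$ for every edge $(i,j)$ whose witness set is ``heavy''; and, for each remaining (``light'') edge $(i,j)$, an explicit set $L_{ij}$ meant to equal $W_{ij}$. The verifier runs Lemma~\ref{lem:exact-tri} on $S$ in $\OO(|S|\cdot n^{(3+\omega)/2})=\OO(n^{(3+\omega)/2})$ time to get $D$; for every edge with $S\cap W_{ij}\neq\emptyset$ (checkable in $\OO(1)$ time per edge) it tests $C_{ij}=D_{ij}$, which is sound since $D_{ij}=|W_{ij}|$ exactly there. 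For the light edges it tests $L_{ij}\subseteq W_{ij}$ element by element and tests $C_{ij}=|L_{ij}|$. To keep the element‑by‑element cost affordable, I would iterate the hitting‑set step $\Theta(\log n/\log\log n)$ times with a geometrically shrinking witness‑set threshold, so that the edges surviving all rounds provably have only $\OO(1)$ witnesses when the guessed sets are good, hence light‑edge lists of total size $\OO(n^2)$; each round costs $\OO(n^{(3+\omega)/2})$ and there are $\OO(1)$ of them.

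The $\le$ direction (``$C$ does not overcount'') is then immediate. \textbf{The main obstacle is the $\ge$ direction}: a dishonest prover could submit a bad pivot set that misses a heavy $W_{ij}$, leaving it among the light edges with an incomplete $L_{ij}$ and thus $C_{ij}<|W_{ij}|$, and naive completeness checking of a single small list costs $\Theta(n)$ per edge. To catch this I would add one global consistency test, $\sum_{(i,j)}C_{ij}=N$, where $N$ is the exact number of weight‑$t$ triangles in $G$: since resolved edges satisfy $C_{ij}=|W_{ij}|$ and light edges satisfy $C_{ij}=|L_{ij}|\le|W_{ij}|$, we have $\sum_{(i,j)}C_{ij}\le N$ with equality forcing correctness everywhere. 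The crux is therefore to certify the \emph{single integer} $N$ in $\NcoNTIME[\OO(n^{(3+\omega)/2})]$; I would obtain it by a bootstrapped use of the same machinery — reliably summing the exact counts of the heavy edges over the $\OO(1)$ hitting‑set rounds, and accounting for the residual light contribution either by direct enumeration when few edges survive or by recursing on the strictly smaller residual instance — so that the recursion terminates after $O(\log n)$ levels within the same time budget. Assembling these pieces yields a complete and sound nondeterministic verifier running in $\OO(n^{(3+\omega)/2})$ time, which proves \AEExactTriCount{} $\in\NcoNTIME[\OO(n^{(3+\omega)/2})]$ and, via the reduction above, the same for \AENegTriCount{}.
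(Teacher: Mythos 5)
Your overall framing (verifier accepts iff the candidate counts are correct; \AENegTriCount{} via Lemma~\ref{lem:negtri_to_exacttri}) is fine, but the core of your \AEExactTriCount{} verifier has two genuine gaps, and the paper's proof takes a different route that avoids both. First, the hitting-set iteration does not work as claimed: a pivot set of size $\OO(1)$ (polylogarithmic) can only be guaranteed to hit witness sets of size $\Omega(n/\polylog n)$, and this threshold is governed by the \emph{size} of the hitting set, not by how many rounds you run. Repeating $\Theta(\log n/\log\log n)$ rounds with polylog-size pivot sets leaves you with a union of pivots that is still polylog-size, so the surviving ``light'' edges may have witness sets as large as $n/\polylog n$, and their explicit lists $L_{ij}$ can total $n^{3}/\polylog n$ --- far over budget. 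If you instead balance the pivot-set size $|S|$ against the light-list threshold $K$ (cost $\OO((n/K)\cdot n^{(3+\omega)/2} + n^{2}K)$), the optimum is $\OO(n^{(9+\omega)/4})$, which exceeds $\OO(n^{(3+\omega)/2})$ for every $\omega<3$. Second, your completeness test hinges on certifying the exact global count $N$ of weight-$t$ triangles, which is essentially the problem you are trying to solve; the ``bootstrapped recursion'' you sketch for $N$ is circular and not substantiated. (The paper's Theorem~\ref{thm:nondet-real-exacttri-count} is, in effect, a correct repair of your Fredman-trick/equality-product strategy --- it certifies completeness via prover-supplied predecessor/successor pivots and explicit lists of the in-between witnesses --- but even there the resulting bound is only $\OO(n^{(6+\omega)/3})$, not $\OO(n^{(3+\omega)/2})$.)

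The paper's actual proof of this theorem does not use Lemma~\ref{lem:exact-tri} or equality products at all. It counts negative-, zero-, and positive-weight triangles through each edge separately (the latter two families obtained from $G$ and from $G$ with negated weights via Lemma~\ref{lem:negtri_to_exacttri}). For each resulting instance the prover supplies a prime $p\approx n^{1-q}$ together with the set $R$ of $O(n^{2+q})$ triangles whose weight is nonzero but vanishes mod $p$; the verifier checks $R$, counts triangles that are zero mod $p$ through each edge in $\tilde O(n^{1-q+\omega})$ time using small-modulus matrix products, and subtracts $R$'s contribution. Each resulting quantity is provably an \emph{upper bound} on the true count, and the verifier forces all of them to be simultaneously tight by comparing their sum against the total number of triangles through each edge, which it computes deterministically in $\OO(n^{\omega})$ time. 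This replaces your unobtainable global certificate $N$ with a quantity the verifier can compute itself, and setting $q=(\omega-1)/2$ balances $n^{2+q}$ against $n^{1-q+\omega}$ to give exactly $\OO(n^{(3+\omega)/2})$. Note that this mod-$p$ device is also why the integer bound here is better than the real-weight bound of Section~\ref{sec:nondet-real}.
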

\begin{proof}

Suppose we are given an instance of \AEExactTriCount{}. Without loss of generality, we assume the instance is a weighted graph $G$ with node partitions $A$, $B$, and $C$ and weight function $w$, and  we would like to compute for every $a\in A,c\in C$, the number of $b\in B$ such that $w(a,b)+w(b,c)+w(a,c)=0$. We will actually more generally count the number of negative-weight, zero-weight, and positive-weight triangles through each edge in $A \times C$. 

Both the prover and the verifier run the reduction in Lemma~\ref{lem:negtri_to_exacttri} on $G$ to get \AEExactTriCount{} instances $G_{<}^{(1)}, \ldots, G_{<}^{(O(\log n))}$. They also similarly run the reduction in Lemma~\ref{lem:negtri_to_exacttri} on $G$ but with all edge weights negated to  get \AEExactTriCount{} instances $G_{>}^{(1)}, \ldots, G_{>}^{(O(\log n))}$. Without loss of generality, we can assume these \AEExactTriCount{} instances all have target value  $0$. By construction in Lemma~\ref{lem:negtri_to_exacttri}, all these graphs have the same vertex set $A \cup B \cup C$.  Also, let $q$ be a constant to be fixed later. 

The prover provides the following for each of the graphs $G, G_{<}^{(1)}, \ldots, G_{<}^{(O(\log n))}, G_{>}^{(1)}, \ldots, G_{>}^{(O(\log n))}$ (we refer to it by $G'$):
\begin{itemize}
    \item A prime $p$ in the interval $[n^{1-q},Cn^{1-q}\log n]$ for a  large enough constant $C$.
    
    The prime $p$ is supposed to be such that the number of triangles in $G'$ that are zero mod $p$ but are nonzero otherwise is at most $O(n^{2+q})$. Such a prime is guaranteed to exist since each triangle that has nonzero weight in $[-n^\alpha,n^\alpha]$ is zero mod at most $\log(3n^\alpha)/\log(n^{1-q})$ primes in the interval $[n^{1-q},Cn^{1-q}\log n]$. The interval contains at least $n^{1-q}$ primes, so some prime must give rise to at most $\frac{n^3\log(3n^\alpha)}{\log(n^{1-q}) \cdot n^{1-q}} =O(n^{2+q})$ fake zero triangles.

    \item A set $R$ of $O(n^{2+q})$ triangles. These are supposed to be the triangles in $G'$ that are nonzero but are zero mod $p$.

\end{itemize}

The verifier first checks that $R$ contains only triangles whose weights are nonzero but are zero mod $p$. This takes $O(|R|) = O(n^{2+q})$ time.  Then it counts in $\tilde{O}(n^{1-q+\omega})$ time (using \cite{ALONGM1997}) the number of triangles that are zero mod $p$ through each edge $(a, c)$. Call it $t_{G'}(a, c)$. For each edge $(a, c)$, the  verifier subtracts the number of triangles through it in $R$ from $t_{G'}(a, c)$. Now, notice that $t_{G'}(a, c)$ must be an upper bound on the number of zero triangles through $(a, c)$; also, if $R$ contains all the triangles whose weights are nonzero but are zero mod $p$ as it is supposed to, $t_{G'}(a, c)$ will be equal to the number of zero triangles through $(a, c)$. 

Now, the verifier applies the second part of Lemma~\ref{lem:negtri_to_exacttri} to sum up the corresponding counts. For every edge $(a, c) \in A \times C$, it will get $T_{<}(a, c) = \sum_{i} t_{G_{<}^{(i)}}(a, c)$, which is supposed to be the number of negative-weight triangles through $(a, c)$; $T_{>}(a, c)=\sum_{i} t_{G_{>}^{(i)}}(a, c)$, which is supposed to be the number of positive-weight triangles through $(a, c)$; and $t_{G}(a, c)$, which is supposed to be the number of zero-weight triangles through $(a, c)$. 

Then the algorithm counts the number of triangles $s(a, c)$ through each edge $(a, c)$ in $\tO(n^\omega)$ time. 

Finally, the algorithm verifies $s(a, c) = T_{<}(a, c) + T_{>}(a, c) + t_{G}(a, c)$ and then outputs $t_G(a, c)$ for every edge $(a, c)$. 

The correctness of the algorithm relies on the following: Suppose $s_{<}(a, c), s_{=}(a, c), s_{>}(a, c)$ are the actual number of negative-weight, zero-weight, and positive-weight triangles through each edge $(a, c)$. Then by previous discussion, the algorithm is sure that $s_{<}(a, c) \le T_{<}(a, c), s_{=}(a, c) \le t_G(a, c)$ and $ s_{>}(a, c) \le T_{>}(a, c)$, and trivially $s(a, c) = s_{<}(a, c) + s_{=}(a, c) +s_{>}(a, c)$. These two combined with $s(a, c) = T_{<}(a, c) + T_{>}(a, c) + t_{G}(a, c)$ implies $s_{<}(a, c) = T_{<}(a, c), s_{=}(a, c) = t_G(a, c)$ and $s_{>}(a, c) = T_{>}(a, c)$. 

The running time is $\OO(n^{2+q}+n^{1-q+\omega})$ and is minimized for $q=(\omega-1)/2$.
The verifier's running time is thus $\OO(n^{(3+\omega)/2})$.

The above algorithm clearly also works for \AENegTriCount{}. 
\end{proof}

Similarly, we can design a sub-quadratic time co-nondeterministic algorithm for \AllThreeSUMCount{}. 

\begin{theorem}
\label{thm:nondet-int-3sum-count}
    \AllThreeSUMCount{} for size-$n$ sets $A, B, C$ of integers  in $[\pm n^\alpha]$ for any constant $\alpha$ is in $\NcoNTIME[\OO(n^{1.5})]$.
\end{theorem}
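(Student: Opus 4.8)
The plan is to follow the proof of Theorem~\ref{thm:nondet-int-exacttri-count} almost verbatim, with FFT playing the role of fast matrix multiplication. The combinatorial identity that makes the co-nondeterministic trick work here is even simpler than in the triangle case: for any fixed $c$, every ordered pair $(a,b)\in A\times B$ satisfies exactly one of $a+b<c$, $a+b=c$, $a+b>c$, so writing $N_<(c),N_=(c),N_>(c)$ for the three counts we have the \emph{exact} identity $N_<(c)+N_=(c)+N_>(c)=|A|\cdot|B|$, a number the verifier knows with no work at all. Counting the pairs with $a+b<c$ amounts to counting, for each $c'\in\{1-c:c\in C\}$, the pairs with $a+b+c'\le 0$; by exactly the halving recursion of Claim~\ref{cl:sumof3neg}/Lemma~\ref{lem:negtri_to_exacttri} (which manipulates only the three integers involved and is blind to whether they come from a graph or from three sets), this reduces deterministically in $\OO(n)$ time to $O(\log n)$ ordinary \AllThreeSUMCount{} instances whose per-index answers sum to $N_<(c)$; symmetrically, counting pairs with $a+b>c$ reduces (negate $A$ and $B$, shift $C$ by $-1$) to another $O(\log n)$ \AllThreeSUMCount{} instances. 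All sub-instances have entries in $[\pm n^{O(\alpha)}]\subseteq[\pm n^{O(1)}]$.

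Next I would give, for a single exact \AllThreeSUMCount{} instance $(A',B',C')$, a co-nondeterministic protocol that outputs for every $c'\in C'$ a value that is always an upper bound on the true count and is exact when the prover is honest. The prover supplies a prime $p$ in $[n^{3/2},\,c_0 n^{3/2}\log n]$ (for a sufficiently large constant $c_0$) together with a set $R$ of $\OO(n^{3/2})$ triples $(a',b',c')$ with $a'+b'\ne c'$ but $a'+b'\equiv c'\pmod p$. Such a prime exists because each nonzero value $a'+b'-c'\in[\pm n^{O(1)}]$ is divisible by only $O(1)$ of the $\Theta(n^{3/2})$ primes in the interval, so by averaging some prime leaves at most $O(n^{3})/\Theta(n^{3/2})=\OO(n^{3/2})$ such ``fake zero'' triples. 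The verifier checks $p$ is prime and in range, checks that every listed triple genuinely has the stated property and that $R$ has no repeats, then computes $Z_p(r)=|\{(a',b')\in A'\times B':a'+b'\equiv r\pmod p\}|$ for all residues $r$ by one length-$p$ cyclic convolution of the indicator vectors of $A'$ and $B'$ over $\mathbb{Z}_p$ via FFT, in $\OO(p)=\OO(n^{3/2})$ time, and finally reports $Z'(c')=Z_p(c'\bmod p)-|\{(a',b'):(a',b',c')\in R\}|$ for each $c'\in C'$. Since $Z_p(c'\bmod p)$ equals the true count for $c'$ plus the total number of fake zeros for $c'$, and $R$ lists only genuine (hence a sub-collection of the) fake zeros, $Z'(c')$ is always $\ge$ the true count, with equality exactly when $R$ is complete for $c'$.

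Finally, aggregating these upper bounds along the $O(\log n)$ sub-instances as dictated by the reduction produces $T_<(c)\ge N_<(c)$, $T_=(c)\ge N_=(c)$, $T_>(c)\ge N_>(c)$ for every $c\in C$. The verifier checks $T_<(c)+T_=(c)+T_>(c)=|A|\cdot|B|$ for all $c$; given the three inequalities and the exact identity above, this forces equality in all three, so $T_=(c)=N_=(c)$ is the correct answer. Every step of the verifier runs in $\OO(n^{3/2})$ total time over the $O(\log n)$ instances, and an honest prover's messages have total size $\OO(n^{3/2})$, which yields the claimed $\NcoNTIME[\OO(n^{1.5})]$ bound. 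The one step that needs genuine care (exactly as in Theorem~\ref{thm:nondet-int-exacttri-count}) is the counting argument guaranteeing that an honest prover can exhibit a prime whose fake-zero set is only $\OO(n^{3/2})$; this in turn relies on checking that the sub-instances produced by the halving reduction stay polynomially bounded, so that ``$O(1)$ bad primes per value'' is legitimate. Everything else is a routine transcription of the integer-weighted Exact-Triangle counting protocol with FFT substituted for matrix multiplication.
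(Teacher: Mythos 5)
Your proposal is correct and follows essentially the same route as the paper's proof: the same reduction of the $<$ and $>$ counts to $O(\log n)$ \AllThreeSUMCount{} instances via the halving recursion, the same prover-supplied prime and fake-zero list (your range $[n^{3/2}, c_0 n^{3/2}\log n]$ and $|R|=\OO(n^{3/2})$ correspond exactly to the paper's choice $q=1/2$), the same FFT-based modular count, and the same check against $|A|\cdot|B|$ to force the upper bounds to be tight.
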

\begin{proof}[Proof sketch]
    The proof is essentially the same as the proof of Theorem~\ref{thm:nondet-int-exacttri-count}, with the following modifications. 

    Without loss of generality, we assume \AllThreeSUMCount{} asks to count the number of $(a, b) \in A \times B$ where $a + b + c = 0$ for every fixed $c \in C$. 
    
    Then, similar to Lemma~\ref{lem:negtri_to_exacttri}, we can reduce counting the number of $(a, b) \in A \times B$ where $a + b + c < 0$ and counting the number of $(a, b) \in A \times B$ where $a + b + c > 0$ to $O(\log n)$ instances of \AllThreeSUMCount{}. 

    For each of these instances on $(A', B', C')$, the prover provides a prime $p$ in the interval $[n^{1+q}, C n^{1+q} \log n]$ for a large enough constant $C$ and a constant $q$ to be fixed. It also provides a set of $R$ of $O(n^3 / n^{1+q}) = O(n^{2-q})$ triples $(a, b, c)$, which are supposed to be the triples in $A' \times B' \times C'$ whose sums are nonzero but are zero mod $p$. 

    For each of these instances, the verifier checks that $R$ contains only triples whose sum is nonzero but is zero mod $p$. Then it counts in $\OO(p) = \OO(n^{1+q})$ time (using FFT) the number of triples $(a, b, c)$ whose sum is zero mod $p$ for every $c \in C'$, and then subtracts the corresponding count in $R$ from this count. 

    Similar to Theorem~\ref{thm:nondet-int-exacttri-count}, by summing up the counts in all these instances for every $c \in C$, and checking whether that equals $|A||B|$, the verifier will be sure to get the correct count of zero-sum triples involving each $c \in C$. 

    The running time of the verifier is $\OO(n^{1+q}+n^{2-q})$, which is $\OO(n^{1.5})$ by setting $q=0.5$. 
    
\end{proof}

\subsection{Co-Nondeterministic Algorithms for Counting Problems with Real Inputs}
\label{sec:nondet-real}

The co-nondeterministic algorithms in Section~\ref{sec:nondet-int} heavily rely on the idea of modulo a prime $p$, which will no longer be possible if the input numbers are reals. In this section, we study algorithms for \AEExactTriCountReal{}, \AENegTriCountReal{} and \AllThreeSUMCountReal{}. 

For problems with real inputs in the nondeterministic model, we assume the verifier has access to a ``Reasonable'' Real RAM model, which was  discussed in \cite{CVXstoc22}. In such models, only a restricted subset of operations are allowed on the real-valued inputs, while any operations in the Word RAM model with $O(\log n)$-bit words are allowed for the integer parts of the computation. In particular, our algorithms work in the Real RAM with low-degree predicates model. Also, if we slightly change the definition of \ExactTri{} to finding a triangle $(a, b, c)$ such that $w(a, b) + w(b, c) = w(a, c)$ (and similarly for \NegTri{}), our algorithms will work in the Real RAM with 4-linear comparisons model. See \cite{CVXstoc22} for more details about ``Reasonable'' Real RAM models.

Our ideas for proving equivalence between counting and detection problems can be used to obtain new nondeterministic algorithms for counting problems with real inputs, as we show in this subsection.

We start with the following co-nondeterministic algorithm for \Equality{} and \Dominance{}. 

\begin{lemma}
\label{lem:nondet-equality}
    \Equality{} between an $n_1 \times n_2$ matrix $A$ and an $n_2 \times n_3$ matrix $B$, where we only need to determine the results on a given subset $X \subseteq [n_1] \times [n_3]$, is in $$\NcoNTIME\left[\OO\left( |X|n_2^s +  M(n_1, n_2^{2-s}, n_3) \right)\right]$$ time for any $s \in [0, 1]$. The same bound holds for \Dominance{}. 
\end{lemma}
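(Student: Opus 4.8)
The plan is to mimic the structure of the verifier in Theorem~\ref{thm:nondet-int-exacttri-count}, but replace ``zero mod a random prime'' with a geometric/combinatorial certificate that is checkable by a ``Reasonable'' Real RAM. Recall that \Equality{} asks for $|\{k: A_{ik}=B_{kj}\}|$ and \Dominance{} asks for $|\{k: A_{ik}\le B_{kj}\}|$; the two are interreducible (as noted in the paper), so it suffices to handle one, say \Dominance{}. The core obstacle over the reals is that we cannot hash values into a small range; instead I would have the prover supply, for each column $k\in[n_2]$, a \emph{sorted order} of the multiset of relevant values. Concretely, for the dominance product, the prover supplies a permutation $\pi_k$ of $[n_1]\cup[n_3]$ (or rather of the $2n$ symbols indexing $A_{\cdot k}$ and $B_{k\cdot}$) claiming it is the sorted order of $\{A_{ik}\}_i\cup\{B_{kj}\}_j$; the verifier checks this order is consistent using $O(n\log n)$ real comparisons (only adjacent elements in the claimed order need to be compared), which is a low-degree predicate. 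Once the sorted order of each column is certified, the value $|\{k:A_{ik}\le B_{kj}\}|$ for a single $(i,j)$ is determined purely by the \emph{ranks}, i.e., by integer data, so the remaining computation is pure Word-RAM.

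The efficiency then comes from exactly Matou\v sek's low-frequency/high-frequency split, adapted to the certified-ranks setting. Given the certified sorted order of each column $B_{k\cdot}$, split indices $k$ into those where $A_{ik}$'s rank-position is ``cheap'' and those where many $j$'s collide — but the cleaner route is: set a threshold and, for the high-frequency part, build rank matrices $\hat A, \hat B$ over $[n_2]\times[n_2^{1-s}]$ (bucketing each column's values into $n_2^{1-s}$ equal-size rank buckets, which is possible because the sorted order is known), reducing the dominance count on the high-frequency buckets to a rectangular integer matrix product $M(n_1,n_2\cdot n_2^{1-s},n_3)=M(n_1,n_2^{2-s},n_3)$; the low-frequency residual, where within a bucket we must resolve ties more finely, is handled by brute force over the at most $n_2^s$ remaining comparisons per $(i,j)\in X$, giving $O(|X|n_2^s)$ — again these are just rank comparisons, hence integer operations once the orders are certified. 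So both the prover's advice and all the real-number work reduce to: (a) certifying $O(n_2)$ sorted orders, total $\OO(n_2\cdot n)$ real comparisons, absorbed into the stated bound; and (b) everything else is Word-RAM. Co-nondeterminism enters exactly as in Theorem~\ref{thm:nondet-int-exacttri-count}: to certify the \emph{count} is not too small, the prover also supplies the count via this decomposition and the verifier checks that the low-frequency and high-frequency contributions sum correctly against a separately computable total (e.g., the trivial identity $|\{k:A_{ik}\le B_{kj}\}| + |\{k:A_{ik}>B_{kj}\}| = n_2$, computed by running the same procedure with reversed inequality), so any cheating by the prover in either direction is caught.

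The main obstacle I anticipate is making precise what ``the verifier's real-number operations are all certified-order rank comparisons'' means in the Real RAM with low-degree predicates model, and ensuring the \emph{prover's} advice — the $O(n_2)$ claimed sorted orders plus the decomposed counts — has size and checking time within $\OO(|X|n_2^s + M(n_1,n_2^{2-s},n_3))$. The sorted-order certificates have total size $O(n_2 n)$ which is fine as long as $n=\max(n_1,n_3)$ and $n_2$ are polynomially related in the regime of interest (and if not, one restricts attention to $X$ and only certifies the relevant sub-orders). A second subtlety is the low-frequency case: a naive reading gives $O(|X|n_2)$, not $O(|X|n_2^s)$, so one must be careful that the brute-force part only ranges over the $\le n_2^s$ values per column-bucket that are ``active'', exactly paralleling the $n_3/r$ bound in Lemma~\ref{lem:geneqprod}; I would set the bucket count to $n_2^{1-s}$ so that each bucket has $\le n_2^s$ elements and the residual work per $(i,j)$ is $O(n_2^s)$. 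For \Dominance{} this is clean; for \Equality{} one uses the standard reduction (two dominance products, or directly a rank-equality check per bucket) to get the same bound, which is why the statement asserts ``the same bound holds for \Dominance{}''.
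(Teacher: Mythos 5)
Your opening move --- replacing each real entry by its rank in the combined multiset so that everything afterwards is integer Word-RAM computation --- is exactly how the paper's proof begins (and note the verifier can do this sorting itself; the prover-supplied sorted-order certificates buy nothing). The divergence, and the genuine gap, is in how you try to obtain the $|X|n_2^s$ term. Your plan is a deterministic Matou\v sek-style rank-bucket (or low/high-frequency) split with $n_2^{1-s}$ buckets per column. But the residual work in any such split is not $X$-sensitive: for a fixed $(i,j)\in X$ the number of indices $k$ for which $A_{ik}$ and $B_{kj}$ land in the same bucket is data-dependent and can be as large as $n_2$ (e.g.\ when many entries are genuinely equal), so ``at most $n_2^s$ remaining comparisons per $(i,j)$'' is simply false; and even deciding, for each $(i,j)\in X$ and each $k$, whether the two entries collide already costs $|X|n_2$. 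The alternative accounting --- enumerating all colliding triples globally, as in Lemma~\ref{lem:geneqprod} --- gives $n_1n_2n_3/r = n_1n_3n_2^s$, again not $|X|n_2^s$. The crucial point is that in a bucket decomposition the same-bucket pairs include the \emph{true} equalities, which can number $\Theta(|X|n_2)$ and therefore cannot be enumerated; they must be counted implicitly.

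The paper's proof achieves $|X|n_2^s$ by making nondeterminism do real work, which your scheme does not: the prover supplies a prime $p=\Theta(n_2^{1-s}\log n_2)$ for which the number of \emph{false positives} restricted to $X$ (triples $(i,k,j)$ with $(i,j)\in X$, $A_{ik}\ne B_{kj}$ but $A_{ik}\equiv B_{kj}\pmod p$) is $\OO(|X|n_2^s)$, together with the explicit list $R$ of those triples. The verifier then counts \emph{all} congruences mod $p$ by a single packed matrix product of inner dimension $pn_2=\OO(n_2^{2-s})$ --- this counts every true equality exactly once, with no residual --- and subtracts $|R|$ per pair. True equalities never need to be enumerated; only the false positives do, and those are few because $p$ is chosen (nondeterministically) to make them few, and are listed only for $(i,j)\in X$. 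Soundness against a cheating prover is then enforced by the $c_{=}+c_{<}+c_{>}=n_2$ check, which is the one ingredient your proposal does share with the paper. Without the prime-hashing step, your verifier's running time degrades to $\OO(|X|n_2 + M(n_1,n_2^{2-s},n_3))$ or $\OO(n_1n_3n_2^s+M(n_1,n_2^{2-s},n_3))$, which would break the application in Theorem~\ref{thm:nondet-real-exacttri-count}, where the trade-off $n^s=\min(n,\sqrt{n^{1+\omega}/|X|})$ over sets $X$ of widely varying sizes is essential.
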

\begin{proof}
    Without loss of generality, we can assume all entries of $A$ and $B$ are integers in $O(n_1n_2 + n_2n_3)$, by replacing each entry by its rank. 

    For every $(i, j) \in [n_1] \times [n_3]$, we use $c_{=}(i, j)$ to denote the number of $k$ where $A_{i k} = B_{k j}$, $c_{<}(i, j)$ to denote the number of $k$ where $A_{i k} < B_{k j}$ and $c_{>}(i, j)$ to denote the number of $k$ where $A_{i k} > B_{k j}$. Instead of only computing $c_{=}(i, j)$ for $(i, j) \in X$, we will more generally compute $c_{<}(i, j)$ and $c_{>}(i, j)$ as well. 

    By known reductions from \Dominance{} to \Equality{} \cite{labib2019hamming,vnotes}, we can create $O(\log n)$ instances of \Equality{} on matrices of the same dimensions, and use the sum of resulting values on entry $(i, j)$ over these $O(\log n)$ instances to compute $c_{<}(i, j)$. It similarly holds for $c_{>}(i, j)$. 

    For each of the $O(\log n)$ instances of \Equality{} (the $O(\log n)$ instances generated above, and the original instance), the prover provides the following (say the instance is on matrices $A', B'$):
    \begin{itemize}
    \item A prime $p$ in the interval $[n_2^{1-s},Cn_2^{1-s}\log (n_2)]$ for a  large enough constant $C$.
    
    The prime $p$ is supposed to be such that the number of triples $(i, k, j)$ where $(i, j) \in X$, $k \in [n_2]$ and $A'_{i k} \ne B'_{k j}$ while $A'_{i k} \equiv B'_{k j} \pmod{p}$ is at most $\OO(|X|n_2^{s})$. Such a prime is guaranteed to exist since for every triple $(i, k, j)$ with $A'_{i k} \ne B'_{k j}$, $A'_{i k}$ and $B'_{k j}$ are congruent mod at most  $\OO(1)$ primes in the interval $[n_2^{1-s},Cn_2^{1-s}\log (n_2)]$. The interval contains at least $n_2^{1-s}$ primes, so some prime must give rise to at most $\OO\left(\frac{|X|n_2}{n_2^{1-s}}\right) = \OO(|X|n_2^s)$ fake zero triangles.

    \item A set $R$ of $\OO(|X|n_2^{s})$ triples $(i, k, j)$ where $(i, j) \in X$ and $k \in [n_2]$. These are supposed to be the triples where $A'_{i k} \ne B'_{k j}$ while $A'_{i k} \equiv B'_{k j} \pmod{p}$. 

\end{itemize}

The verifier is similar to the verifier in the proof of Theorem~\ref{thm:nondet-int-exacttri-count}. For each  \Equality{} instance, after checking that the set $R$ only contains triples $(i, k, j)$ where $(i, j) \in X$, $k \in [n_2]$ and $A'_{i k} \ne B'_{k j}$ while $A'_{i k} \equiv B'_{k j} \pmod{p}$ in $\OO(|R|)$ time, it computes $C'_{i j}$, the number of $k \in [n_2]$ such that   $A'_{i k} \equiv B'_{k j} \pmod{p}$, for each pair of $(i, j) \in [n_1] \times [n_3]$ in $\OO(M(n_1, p n_2, n_3))$ time, by packing $p$ instances of matrix multiplications of dimensions $n_1 \times n_2 \times n_3$ together. Similar as before, by subtracting the number of triples involving $(i, j)$ in $R$ from $C'_{i j}$, $C'_{i j}$ becomes an upper bound on the number of $k$ such that $A'_{i k} = B'_{k j}$. 

Finally, by checking that the sum of $C'_{i j}$ over all the \Equality{} instances equals $n_2$ for every $(i, j) \in X$, the verifier will be sure that $C'_{i j}$ equals exactly the number of $k$ such that $A'_{i k} = B'_{k j}$ in each of the instances. In particular,  it can compute $c_{=}(i, j), c_{<}(i, j)$ and $c_{>}(i, j)$ for every $(i, j) \in X$. 

The running time of the verifier is $\OO(|R|+M(n_1, p n_2, n_3)) = \OO(|X|n_2^s +  M(n_1, n_2^{2-s}, n_3))$, as desired. 

\end{proof}

Next, we are ready to present our co-nondeterministic algorithm for  \AEExactTriCountReal{} and \AENegTriCountReal{}. 

\begin{theorem}
\label{thm:nondet-real-exacttri-count}
\AEExactTriCountReal{} for $n$-node graphs is in $\NcoNTIME[\OO(n^{\frac{6+\omega}{3}})]$. The same bound holds for \AENegTriCountReal{}. 
\end{theorem}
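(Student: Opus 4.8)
The plan is to mimic the structure of Theorem~\ref{thm:nondet-int-exacttri-count}, replacing the ingredient that is special to integers --- counting triangles that vanish modulo a prime via small-entry matrix multiplication --- by \emph{equality} and \emph{dominance} products, sped up co-nondeterministically through Lemma~\ref{lem:nondet-equality}. Work in the tripartite picture: parts $U,X,V$ of size $n$, weights $a_{ik}$ on $u_ix_k$, $b_{kj}$ on $x_kv_j$, and $-c_{ij}$ on $u_iv_j$ (absorb the target $t$ into $c$), and for each edge $u_iv_j$ set $W^{=}_{ij}=\{k:a_{ik}+b_{kj}=c_{ij}\}$, $W^{<}_{ij}=\{k:a_{ik}+b_{kj}<c_{ij}\}$, $W^{>}_{ij}=\{k:a_{ik}+b_{kj}>c_{ij}\}$. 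For every edge I want a \emph{certifiable lower bound} $\ell^{=}_{ij},\ell^{<}_{ij},\ell^{>}_{ij}$ on $|W^{=}_{ij}|,|W^{<}_{ij}|,|W^{>}_{ij}|$; since the exact total $|W^{=}_{ij}|+|W^{<}_{ij}|+|W^{>}_{ij}|=\mathrm{tot}_{ij}$ is an ordinary weightless triangle count, obtainable in $\OO(n^\omega)$ time by one Boolean matrix multiplication, the single check $\ell^{=}_{ij}+\ell^{<}_{ij}+\ell^{>}_{ij}=\mathrm{tot}_{ij}$ forces all three lower bounds to be exact. Outputting $\ell^{=}_{ij}$ gives \AEExactTriCountReal{}, and outputting $\ell^{<}_{ij}$ (after the shift by $t$) gives \AENegTriCountReal{}, so both claims of the theorem fall out at once.

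The reduction to products is Fredman's trick, exactly as in Lemma~\ref{lem:exact-tri}: given a pivot $k_0$ with $a_{ik_0}+b_{k_0j}=c_{ij}$, the identity $a_{ik}+b_{kj}=c_{ij}$ rewrites as $a_{ik}-a_{ik_0}=b_{k_0j}-b_{kj}$, separating an $(i,k)$-term from a $(k,j)$-term, and the corresponding strict inequalities become dominance comparisons. The prover supplies, per edge, a weight-$t$ pivot $k_0(i,j)\in W^{=}_{ij}$ when $W^{=}_{ij}\ne\emptyset$ (used to read off all three counts via Fredman), together with the extremal pivots $k^{-}(i,j)=\argmax\{a_{ik}+b_{kj}:a_{ik}+b_{kj}<c_{ij}\}$ and $k^{+}(i,j)=\argmin\{a_{ik}+b_{kj}:a_{ik}+b_{kj}>c_{ij}\}$ (used for $\ell^{<},\ell^{>}$ when $W^{=}_{ij}=\emptyset$, with $\ell^{=}_{ij}=0$); the verifier checks $a_{ik^{-}}+b_{k^{-}j}<c_{ij}<a_{ik^{+}}+b_{k^{+}j}$ by a real comparison, after which $|W^{<}_{ij}|=|\{k:a_{ik}-a_{ik^{-}}\le b_{k^{-}j}-b_{kj}\}|$ and symmetrically for $W^{>}_{ij}$, the extremality of $k^{-},k^{+}$ being enforced only a posteriori by the consistency check. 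As in Lemma~\ref{lem:exact-tri}, everything computed this way is a valid lower bound, never an overcount.

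The efficiency issue is that a per-edge pivot yields up to $\Theta(n)$ distinct pivot values, and one product per value already costs $\Omega(n^{1+\omega})$. The remedy is a hitting set: the prover supplies $H\subseteq X$ with $|H|=\OO(n/L)$ hitting every value-class $\{k:a_{ik}+b_{kj}=v\}$ of size $\ge L$. For the edges whose relevant pivot value ($c_{ij}$, or the value of $k^{-}$, or of $k^{+}$) has multiplicity $\ge L$, the pivot lies in $H$; for each $s\in H$ we first reduce reals to ranks by sorting the $O(n^2)$ Fredman-transformed values, then run the equality product (resp.\ dominance product) of $A^{(s)}_{ik}=a_{ik}-a_{is}$ against $B^{(s)}_{kj}=b_{sj}-b_{kj}$ restricted to the set $X_s$ of edges assigned to $s$, invoking Lemma~\ref{lem:nondet-equality} with inner dimension $n$ and parameter $\theta=\omega/3$. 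Since $\sum_s|X_s|=O(n^2)$ and $M(n,n^{2-\theta},n)=\OO(n^{1+2\omega/3})$, this costs $\OO\bigl(n^{2}\cdot n^{\omega/3}+(n/L)\cdot n^{1+2\omega/3}+(n/L)n^2\bigr)$. For the remaining edges --- where some relevant value-class is small --- the prover lists its at most $L$ triangles explicitly and the verifier checks validity, costing $\OO(n^{2}L)$. Choosing $L=n^{\omega/3}$ balances every term at $\OO(n^{2+\omega/3})=\OO(n^{(6+\omega)/3})$; note $n^\omega\le n^{2+\omega/3}$ for $\omega\le 3$, so the triangle count is absorbed, and every verifier step (sorting to obtain ranks, subtractions, and comparisons of $O(1)$-linear forms in the real inputs) stays inside a ``Reasonable'' Real RAM / Real RAM with low-degree predicates.

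The step I expect to be the main obstacle is \emph{certifying completeness of the explicit lists} for the small-multiplicity edges, i.e.\ pinning down $|W^{<}_{ij}|$ and $|W^{>}_{ij}|$ (hence $|W^{=}_{ij}|$) for edges with no frequent value-class near the threshold: a single extremal pivot only peels off one value level, so a naive recursion could peel $\Omega(n)$ levels and inflate the running time to $n^3$. Making this fit within $\OO(n^{(6+\omega)/3})$ requires arguing that only $\OO(1)$ (or $\OO(\polylog)$) levels need to be peeled before either a frequent value-class is reached (so the hitting set takes over again) or the global total-count check by itself already pins the remaining counts --- carefully meshing the hitting set, the explicit lists, and the triangle count is the delicate part of the argument.
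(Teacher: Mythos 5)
Your proposal correctly reuses the three-way count ($=$, $<$, $>$) plus total-triangle consistency check, Fredman's trick with pivots, and Lemma~\ref{lem:nondet-equality}; but the pivot-selection mechanism has a genuine gap, and it is exactly the one you flag in your last paragraph. A hitting set $H$ over \emph{large value-classes} only supplies a usable pivot for an edge $u_iv_j$ when the value-class at the relevant threshold (at $c_{ij}$ itself, or at the value attained by $k^-$ or $k^+$) has multiplicity at least $L$. For an edge where no value-class near the threshold is large, your extremal pivots $k^-(i,j)$ and $k^+(i,j)$ are edge-specific indices not lying in any fixed small set, so you cannot batch the dominance products over them; and peeling value levels one at a time can indeed require $\Omega(n)$ levels, giving $n^3$. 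The consistency check $\ell^= + \ell^< + \ell^> = \mathrm{tot}_{ij}$ cannot rescue this either: it only converts three certified \emph{lower bounds} into exact counts, and for these edges you have no subcubic way to certify a lower bound on $|W^<_{ij}|$ in the first place.

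The paper closes this gap by choosing the pivots differently: the prover supplies a single random subset $R\subseteq X$ of size $\OO(n^q)$ of \emph{indices} (not a hitting set of value-classes), and for each edge the predecessor and successor of $-w(a,c)$ are taken \emph{within the sampled sums} $L_{ac}=\{w(a,b)+w(b,c):b\in R\}$. Both pivots then always lie in the fixed set $R$, so only $\OO(n^q)$ equality/dominance products are ever needed (with $\sum$ of the output-set sizes $O(n^2)$, handled by a per-call choice of the Lemma~\ref{lem:nondet-equality} parameter and convexity), and the random-sample property guarantees that the gap strictly between the two pivot values contains only $O(n^{1-q})$ indices per edge. Those gap elements are listed explicitly ($O(n^{3-q})$ total), their number is certified as a difference of two dominance counts at the two pivots, and the exact/negative counts are read off by scanning the list. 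Balancing $\OO(n^{2+q}+n^{(\omega+3+q)/2}+n^{\omega+q}+n^{3-q})$ at $q=(3-\omega)/3$ gives $\OO(n^{(6+\omega)/3})$. In short, the missing idea is to localize each edge's threshold inside a gap of a \emph{random sample of indices}, which makes the pivots globally shared and the residual lists uniformly short, rather than relying on frequent value-classes that need not exist.
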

\begin{proof}
    Without loss of generality, we assume the instance is a weighted graph $G$ with node partitions $A$, $B$, and $C$ and weight function $w$, and  we would like to compute for every $a\in A,c\in C$, the number of $b\in B$ such that $w(a,b)+w(b,c)+w(a,c)=0$.  Also, let $q$ be a constant to be fixed later. 

    The prover provides the following:

    \begin{itemize}
        \item A subset $R \subseteq [n]$ of size $\OO(n^q)$. 

        For every $(a, c) \in A\times C$, let $L_{a c} = \{w(a, b) + w(b, c) : b \in R\}$. Let $p_{a c}$ be the index of the predecessor of $-w(a, c)$ (including $-w(a, c)$) in $L_{a c}$, i.e., it is $\argmax_{b \in R, w(a, b) + w(b, c) \le -w(a, c)} (w(a, b) + w(b, c))$, and let $s_{a c}$ be the index of the successor of $-w(a, c)$ (excluding $-w(a, c)$) in $L_{a c}$, i.e., it is  $\argmin_{b \in R, w(a, b) + w(b, c) > -w(a, c)} (w(a, b) + w(b, c))$. The set $R$ is supposed to be that, if $-w(a, c) \not \in L_{a c}$, then the number of $b \in B$ where $w(a, p_{a c}) + w(p_{a c}, c) < w(a, b) + w(b, c) < w(a, s_{a c}) + w(s_{a c}, c)$ is $O(n^{1-q})$. Such $R$ exists because a random $R$ satisfies these properties with high probability. 

        \item For every $b_0 \in R$, the prover  uses the protocol in Lemma~\ref{lem:nondet-equality} to create outputs for the purpose of counting $\left| \left\{w(a, b) - w(a, b_0) = w(b_0, c) - w(b,c) : b \in B \right\}\right|$ and $\left| \left\{w(a, b) - w(a, b_0) \le w(b_0, c) - w(b,c) : b \in B \right\}\right|$, where $X$ is the set of $(a, c)$ such that $b_0 = p_{a c}$ or $b_0 = s_{a c }$. 

        \item Finally, for every $(a, c)$ with $-w(a, c) \not \in L_{a c}$, it provides a list $\ell_{a c} \subseteq [n]$, which is supposed to contain all indices $b$ with $w(a, p_{a c}) + w(p_{a c}, c) < w(a, b) + w(b, c) < w(a, s_{a c}) + w(s_{a c}, c)$. Note that $|\ell_{a c}|=O(n^{1-q})$ by the choice of $R$. 
    \end{itemize}

    The verifier does the following. First, it computes $L_{a c}, p_{a c}, s_{a c}$ in $\OO(n^2 |R|) = \OO(n^{2+q})$ time. 
    Next, it uses the algorithm in Lemma~\ref{lem:nondet-equality} to correctly count, for every $b_0 \in R$, the values of $$\left| \left\{w(a, b) - w(a, b_0) = w(b_0, c) - w(b,c) : b \in B \right\}\right|$$ and $$\left| \left\{w(a, b) - w(a, b_0) \le w(b_0, c) - w(b,c) : b \in B \right\}\right|$$ for  $(a, c)$ such that $b_0 = p_{a c}$ or $b_0 = s_{a c}$. Let $x_i$ be the size of $X$ for the $i$-th call of Lemma~\ref{lem:nondet-equality}. The running time of the $i$-th call of Lemma~\ref{lem:nondet-equality} can be bounded by
    $$\OO\left(|X|n^s +  M(n, n^{2-s}, n) \right) = \OO\left(|X| n^s + n^{1-s} n^\omega\right) = \OO(n^{\frac{\omega+1}{2}} \sqrt{x_i} + n^\omega), $$
    by setting $n^s = \min\left(n, \sqrt{\frac{n^{1+\omega}}{|X|}}\right)$. 
    Then we notice that $\sum_i x_i = O(n^2)$. Therefore, by convexity, the running time can be bounded as 
    \begin{align*}
        \sum_i \OO\left(n^{\frac{\omega+1}{2}}\sqrt{x_i} + n^\omega\right) 
        &= \OO \left( |R| \cdot n^{\frac{\omega+1}{2}}\sqrt{\frac{n^2}{|R|}} + |R| \cdot n^\omega\right)
        = \OO(n^{\frac{\omega+3+q}{2}} + n^{\omega+q}).
    \end{align*}

     For some $(a, c)$, if $w(a, p_{a c}) + w(p_{a c}, c) = -w(a, c)$, then by Fredman's trick, 
        \begin{align*}
            &\left| \left\{w(a, b) - w(a, b_0) = w(b_0, c) - w(b,c) : b \in B \right\}\right|\\
            =& \left| \left\{w(a, b) + w(b,c)  = w(a, b_0) + w(b_0, c) : b \in B \right\}\right|\\
            =& \left| \left\{w(a, b) + w(b,c)  + w(a, c) = 0 : b \in B \right\}\right|
        \end{align*}
    for $b_0 = p_{a c}$, which is exactly the count we seek. 
    Otherwise, the algorithm computes the number of $b$ where $w(a, p_{a c}) + w(p_{a c}, c) < w(a, b) + w(b, c) < w(a, s_{a c}) + w(s_{a c}, c)$ via
    \begin{align*}
        &\left| \left\{w(a, b) - w(a, s_{a c}) \le w(s_{a c}, c) - w(b,c) : b \in B \right\}\right|\\ 
        -& \left| \left\{w(a, b) - w(a, s_{a c}) = w(s_{a c}, c) - w(b,c) : b \in B \right\}\right|\\
        - & \left| \left\{w(a, b) - w(a, p_{a c}) \le w(p_{a c}, c) - w(b,c) : b \in B \right\}\right|, 
    \end{align*}  
    where all three counts are computed earlier. Next, the verifier checks that the length of $\ell_{a c}$ equals this count, and for every $b \in \ell_{a c}$, the verifier checks  $w(a, p_{a c}) + w(p_{a c}, c) < w(a, b) + w(b, c) < w(a, s_{a c}) + w(s_{a c}, c)$. If these checks pass, then $\ell_{a c}$ contains exactly the set of $b$ where $w(a, p_{a c}) + w(p_{a c}, c) < w(a, b) + w(b, c) < w(a, s_{a c}) + w(s_{a c}, c)$. By the definition of $p_{a c}$ and $s_{a c}$, it must be the case that $w(a, p_{a c}) + w(p_{a c}, c) < -w(a, c) < w(a, s_{a c}) + w(s_{a c}, c)$. Therefore, by reading the list $\ell_{a c}$ and count how many $b \in \ell_{a c}$ has $w(a, b) + w(b, c) = -w(a, c)$, the verifier can correctly compute the number of exact triangles through edge $(a, c)$. Overall, the cost of this step is the total length of $\ell_{a c}$, which is $O(n^{3-q})$. 
        
    The running time of the verifier is $\OO(n^{2+q} + n^{\frac{\omega+3+q}{2}} + n^{\omega+q}+n^{3-q})$. By setting $q = \frac{3-\omega}{3}$, it becomes $\OO(n^{\frac{6+\omega}{3}})$. 

    To adapt the algorithm to \AENegTriCountReal{}, for every $(a, c)$, we also count the number of $b$ such that $w(a, b) + w(b, c) < w(a, p_{ac}) + w(p_{ac}, c)$ via 
    \begin{align*}
       & \left| \left\{w(a, b) - w(a, p_{a c}) \le w(p_{a c}, c) - w(b,c) : b \in B \right\}\right| \\ 
       -& \left| \left\{w(a, b) - w(a, p_{a c}) = w(p_{a c}, c) - w(b,c) : b \in B \right\}\right|. 
    \end{align*} 
    If $-w(a, c) \in L_{ac}$, then $w(a, p_{ac}) + w(p_{ac}, c) = w(a, c)$, so the above count is exactly the number of negative triangles through $(a, c)$. Otherwise, every $b \in B$ with $w(a, b) + w(b, c) < w(a, p_{ac}) + w(p_{ac}, c) < -w(a, c)$ forms a negative triangle with $(a, c)$. All other negative triangles can be found via searching $\ell_{ac}$. 
\end{proof}

Before we present our co-nondeterministic algorithm for \AllThreeSUMCountReal{}, we first introduce the following lemma. 

\begin{lemma}
\label{lem:nondet-3sum-helper}
    Given an $n_1 \times n_2$ matrix $A$, an $n_3 \times n_2$ matrix $B$, and a  subset $X \subseteq [n_1] \times [n_3]$, counting the number of $(k, \ell)$ where $A_{i k} = B_{j \ell}$ for every $(i, j) \in X$ is in $$\NcoNTIME\left[\OO\left( |X|n_2^s +  M(n_1, n_2^{2-s}, n_3) \right)\right]$$ time for any $s \in [0, 1]$. Consequently, counting the number of $(k, \ell)$ where $A_{i k} \le B_{j \ell}$ for every $(i, j) \in X$ is also in $$\NcoNTIME\left[\OO\left( |X|n_2^s +  M(n_1, n_2^{2-s}, n_3) \right)\right]$$ time for any $s \in [0, 1]$.
\end{lemma}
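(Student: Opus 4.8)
The plan is to mimic the proof of Lemma~\ref{lem:nondet-equality} almost verbatim, the only genuine change being that here the two ``inner'' indices $k$ and $\ell$ range \emph{independently} over $[n_2]$, so that each pair $(i,j)$ has $n_2^2$ candidate pairs $(k,\ell)$ rather than $n_2$ candidate indices. As in that lemma, I would first replace every (real) entry of $A$ and $B$ by its rank in the sorted order of all entries, so that all entries become integers in $[O((n_1+n_3)n_2)]$; this uses only comparisons and is therefore permitted in the ``Reasonable'' Real RAM model. Next I would reduce $\le$-counting to $\OO(1)$ instances of $=$-counting by the standard Dominance-to-Equality bit-decomposition \cite{labib2019hamming,vnotes} (for each bit position $b$, map $A_{ik}\mapsto \lfloor A_{ik}/2^b\rfloor$ if that value is even and to an unmatchable symbol otherwise, and map $B_{j\ell}\mapsto \lfloor B_{j\ell}/2^b\rfloor-1$ if that value is odd and to an unmatchable symbol otherwise), and likewise obtain the $<$- and $>$-counts. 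Thus it suffices to describe a protocol that, for every $(i,j)\in X$, outputs the three quantities $c_=(i,j),c_<(i,j),c_>(i,j)$ (the numbers of pairs $(k,\ell)$ with $A_{ik}=B_{j\ell}$, $<$, $>$, respectively).

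For the $=$-version, the prover supplies a prime $p$ in the interval $[n_2^{2-s},\,C n_2^{2-s}\log n_2]$ together with a set $R$ of quadruples $(i,k,j,\ell)$ with $(i,j)\in X$, $A_{ik}\ne B_{j\ell}$, and $A_{ik}\equiv B_{j\ell}\pmod p$ (the ``false collisions''). Since each fixed pair of distinct, polynomially bounded values is congruent modulo only $\OO(1)$ primes in this interval, and the interval contains $\Omega(n_2^{2-s})$ primes, an averaging argument shows that for some prime $p$ the total number of false collisions over all $(i,j)\in X$ is at most $\OO(|X|n_2^2/n_2^{2-s})=\OO(|X|n_2^s)$, so an honest prover can take $|R|=\OO(|X|n_2^s)$. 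The verifier checks in $\OO(|R|)$ time that every quadruple of $R$ really is a false collision; it then builds the residue-frequency matrices $\widehat A$ (of dimensions $n_1\times p$, with $\widehat A_{ir}=|\{k:A_{ik}\equiv r\pmod p\}|$) and $\widehat B$ (of dimensions $p\times n_3$, with $\widehat B_{rj}=|\{\ell:B_{j\ell}\equiv r\pmod p\}|$) in $O(n_1n_2+n_3n_2)$ time, computes $\widehat A\widehat B$ in $M(n_1,p,n_3)=\OO(M(n_1,n_2^{2-s},n_3))$ time, and for each $(i,j)\in X$ subtracts from $(\widehat A\widehat B)_{ij}$ the number of quadruples of $R$ touching $(i,j)$. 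Because $R$ is verified to be a subset of the true false-collision set, the result is always an \emph{upper bound} on $c_=(i,j)$, with equality exactly when $R$ is complete. Running the same protocol for the $<$- and $>$-versions and checking that $c_=(i,j)+c_<(i,j)+c_>(i,j)=n_2^2$ for every $(i,j)\in X$ then forces all three values to be exact (each is an upper bound and they sum to the true total), which yields both the nondeterministic and the co-nondeterministic guarantee. The total running time is $\OO(|R|+M(n_1,p,n_3)+n_1n_2+n_3n_2)=\OO(|X|n_2^s+M(n_1,n_2^{2-s},n_3))$, as claimed; the ``consequently'' for $\le$ follows immediately from the $\OO(1)$-instance reduction above.

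The main obstacle — and the one place where the argument departs nontrivially from Lemma~\ref{lem:nondet-equality} — is balancing the two parameters. One must draw the prime from an interval of length $\approx n_2^{2-s}$ (rather than $\approx n_2^{1-s}$) so that the $n_2^2$ candidate pairs per index $(i,j)$ still leave only $\OO(|X|n_2^s)$ false collisions; and one must then recognize that the equality-count-modulo-$p$ is \emph{itself} a matrix product whose inner dimension is exactly $p\approx n_2^{2-s}$, with no additional factor of $n_2$ (unlike the shared-index setting of Lemma~\ref{lem:nondet-equality}, where packing $p$ copies of an $n_1\times n_2\times n_3$ product produces inner dimension $pn_2$). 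These two choices are exactly what make the bounds $\OO(|X|n_2^s)$ and $\OO(M(n_1,n_2^{2-s},n_3))$ line up with the statement. A minor point to handle with care is the ``$\OO(1)$ primes'' estimate, which uses that the ranks are polynomially bounded in the input size; this is automatic in the intended application to \AllThreeSUMCountReal{}, where $n_1,n_2,n_3$ are polynomially related.
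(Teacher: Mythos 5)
Your proof is correct and achieves the stated bounds, but the core reduction is genuinely different from the paper's. The paper handles the passage from a single shared inner index to independent pairs $(k,\ell)$ \emph{combinatorially}: after a bit-decomposition step that makes each row's entries distinct, the prover supplies a hash function $h$ on values with $\OO(1)$ collisions per bucket per row, and the matrices are reshaped into an $n_1\times\OO(n_2)$ matrix $A'$ and an $n_3\times\OO(n_2)$ matrix $B'$ (columns indexed by bucket and two within-bucket ranks) so that each matching pair $(k,\ell)$ lands in exactly one common column; the problem then becomes a standard equality product and Lemma~\ref{lem:nondet-equality} is invoked as a black box. You instead redo the mod-$p$ argument from scratch, drawing the prime from the larger interval $[n_2^{2-s}, Cn_2^{2-s}\log n_2]$ to control the $n_2^2$ candidate pairs per output entry, and — this is the key observation — noting that the congruence count $\bigl|\{(k,\ell): A_{ik}\equiv B_{j\ell}\pmod p\}\bigr|$ factors as a product of residue-frequency histograms with inner dimension exactly $p$, rather than $p n_2$ as in the packed product of Lemma~\ref{lem:nondet-equality}. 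Your route buys self-containedness and dispenses with both the hashing step and the distinctness preprocessing (your bit-decomposition is only used for the dominance-to-equality reduction, which is also present in the paper); the paper's route buys modularity by reusing Lemma~\ref{lem:nondet-equality} verbatim. Two small points you share with the paper but should still be aware of: the verifier should deduplicate $R$ (or count distinct quadruples) so that the subtracted quantity cannot exceed the true number of false collisions, preserving the upper-bound invariant; and the ``$\OO(1)$ primes'' estimate needs the ranks to be $n_2^{O(1)}$, i.e.\ $n_1,n_3\le n_2^{O(1)}$, which holds in the application.
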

\begin{proof}
    First, we can assume all entries of $A$ and $B$ are integers bounded by $O(n_1n_2 + n_2n_3)$, by replacing each entry by its rank. 
    
    We enumerate $b_1, b_2 \in  \{0\}\cup [\lfloor \log(n_2)\rfloor]$. If the occurrence of a number on the $i$-th row of $A$ has a $1$ in its binary representation on the bit corresponding to $2^{b_1}$, we keep one copy of this number on the $i$-th row of $A$; otherwise, we drop this number. Similarly, if the occurrence of a number on the $j$-th row of $B$ has a $1$ in its binary representation on the bit corresponding to $2^{b_2}$, we keep one copy of this number on the $j$-th row of $B$; otherwise, we drop this number. We then solve the original problem on these two modified matrices. Finally, we can sum up the results obtained on these modified matrices, weighted by $2^{b_1 + b_2}$. This way, we can assume all numbers on the $i$-th row of $A$ are distinct for any $i$, and all numbers on the $j$-th row of $B$ are distinct for any $j$. 

    The prover provides the following:
    \begin{itemize}
        \item A function $h: [O(n_1n_2 + n_2n_3)] \rightarrow [n_2]$. 

        The function is supposed to be that, for every $i \in [n_1]$, each value in $[n_2]$ matches at most $\tO(1)$ numbers in the multiset $\{h(A_{i k}): k \in [n_2]\}$. Similarly, for every $j \in [n_3]$, each value in $[n_2]$ matches at most $\tO(1)$ numbers in the multiset $\{h(B_{j k}): k \in [n_2]\}$. Such a function exists because a random function satisfies these constraints with high probability. 

        \item Two matrices $A', B'$. $A'$ is an $n_1 \times \tO(n_2)$ matrix, where the columns of it is indexed by $[n_2] \times [\tO(1)] \times [\tO(1)]$. For every $i \in [n_1], k \in [n_2], s_2 \in [\tO(1)]$, we set $A'_{i, (h(A_{i k}), s_1, s_2)}$ to $A_{i k}$, where $A_{i k}$ is the $s_1$-th number on row $i$ that get mapped to $h(A_{i k})$. Similarly, $B'$ is an $n_3 \times \tO(n_2)$ matrix, where the columns of it is indexed by $[n_2] \times [\tO(1)] \times [\tO(1)]$. For every $j \in [n_3], k \in [n_2], s_1 \in [\tO(1)]$, we set $B'_{j, (h(B_{j k}), s_1, s_2)}$ to $B_{j k}$, where $B_{jk}$ is the $s_2$-th number on row $j$ that get mapped to $h(B_{jk})$. All other entries of $A'$ and $B'$ are set to values distinct from any other values. 

        \item Use the protocol in Lemma~\ref{lem:nondet-equality} to create outputs for the purpose of computing the equality product between $A'$ and $B'^T$ with the output set $X$. 
    \end{itemize}

    The verifier does the following. First, it checks that $h, A', B'$ are valid. Then it runs the algorithm in Lemma~\ref{lem:nondet-equality} to compute, for every $(i, j) \in X$, the number of $(k, s_1, s_2)$ where $A'_{i, (k, s_1, s_2)} = B'_{j, (k, s_1, s_2)}$. Note that this is exactly the number of $(k, \ell)$ where $A_{i k} = B_{j \ell}$. The running time of the algorithm is same as that of Lemma~\ref{lem:nondet-equality}.

    If we instead want to count the number of $(k, \ell)$ where $A_{i k} \le B_{j \ell}$ for every $(i, j) \in X$, we can use the idea that reduces \Dominance{} to \Equality{} \cite{labib2019hamming,vnotes} to create $\OO(1)$ instances of the previous problem, so it only incurs an additional $\OO(1)$ factor. 
\end{proof}

Given an \AllThreeSUM{} instance on size-$n$ sets $A, B, C$, we can first sort $A$ and $B$ and then divide  them to consecutive sub-lists  $A_1, \ldots, A_{n/d}$ and $B_1, \ldots, B_{n/d}$ of size $d$. It is a well-known observation that, in order to determine whether some $c \in C$ is in a 3SUM solution, it suffices to search for $c$ in $A_i + B_j$ for $O(n/d)$ pairs of $(i, j) \in [n/d]^2$. For \AllThreeSUMCountReal{}, the observation is still true: For every $c \in C$, it suffices to count the number of $c$ in $A_i + B_j$ for $O(n/d)$ pairs of $(i, j) \in [n/d]^2$. We can thus determinitically reduce \AllThreeSUMCountReal{} to the following problem (see its non-counting version in, e.g., \cite{chan3sum}). 

\begin{problem}
\label{prob:3sum-variant}
We are given two real $(n/d) \times d$ matrices $A$ and $B$, and a set $C_{ij}$ of real numbers for every $(i, j)$. For every $c \in C_{ij}$, we are asked to count the number of $(k, \ell)$ such that $A_{i,k} + B_{j, \ell} = c$. Additionally $\sum_{i, j} |C_{ij}| = O(n^2/d)$. 
\end{problem}
    
\begin{theorem}
\label{thm:nondet-real-3sum-count}
    \AllThreeSUMCountReal{} for size-$n$ sets is in $\NcoNTIME[\OO(n^{\frac{3\omega+3}{\omega + 3}})]$.
\end{theorem}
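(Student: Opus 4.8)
The plan is to mirror the co-nondeterministic algorithm for \AEExactTriCountReal{} (Theorem~\ref{thm:nondet-real-exacttri-count}), but working with Problem~\ref{prob:3sum-variant} as the reduced target and using Lemma~\ref{lem:nondet-3sum-helper} as the core counting subroutine in place of Lemma~\ref{lem:nondet-equality}. First I would deterministically reduce \AllThreeSUMCountReal{} to Problem~\ref{prob:3sum-variant} by sorting $A$ and $B$, partitioning each into $n/d$ consecutive blocks of size $d$, and using the standard observation that each $c\in C$ only needs to be searched in $O(n/d)$ relevant block-pairs $A_i+B_j$ (the block-pairs straddling the "staircase" of the sorted sum matrix); this is purely integer-index bookkeeping, so it is fine even in a Reasonable Real RAM model, and it gives $\sum_{i,j}|C_{ij}|=O(n^2/d)$.

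Next, for each block-pair instance I would have the prover supply, for every target $c\in C_{ij}$, the indices $p$ and $s$ of the predecessor and successor of $c$ in the multiset $\{A_{i,k}+B_{j,\ell}:k,\ell\in[d]\}$ restricted to a prover-supplied random-like sample — actually, to keep the sampling budget controlled, I would instead have the prover give a global subsample structure: for each block pair, a short certificate that "between the predecessor sum-value and the successor sum-value of $c$ among sampled pairs, only $O(d^2 \cdot d^{-q}\cdot\text{(count factor)})$ actual pairs lie", together with an explicit list of those pairs. The verifier then needs to count, for each $c$, how many pairs $(k,\ell)$ in the block give sum exactly below/at/above a threshold; by Fredman's trick, "$A_{i,k}+B_{j,\ell}=c$" when $c$ equals the sampled predecessor value $A_{i,k_0}+B_{j,\ell_0}$ becomes "$A_{i,k}-A_{i,k_0}=B_{j,\ell_0}-B_{j,\ell}$", an equality-count between two $d$-column matrices over the block pairs, which is exactly what Lemma~\ref{lem:nondet-3sum-helper} verifies co-nondeterministically (and similarly the $\le$ versions via the Dominance-to-Equality reduction built into that lemma). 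Summing Lemma~\ref{lem:nondet-3sum-helper}'s cost over the $O(n/d)$ block pairs with the convexity bound $\sum x_{ij}=O(n^2/d)$, together with the verification cost $O(|{\rm lists}|)$ of the explicit small remainder lists and the $O(n^2\cdot(\text{sample size}))$ cost of computing predecessors/successors in the samples, gives a running time of the form $\OO(n^2 d^{?}+\text{(matrix-mult terms in }d,n/d)+n^{2}d^{-q}\cdot\cdots)$ which one balances in $d$ and $q$ to obtain the claimed exponent $\tfrac{3\omega+3}{\omega+3}$ (one checks that setting $d=n^{2/(\omega+3)}$ and $q$ appropriately makes all terms equal).

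The main obstacle I expect is bookkeeping the co-nondeterministic \emph{soundness} correctly across the multi-level structure: the verifier must be \emph{certain} that the counts it computes are at least the true counts (so that an adversarial prover cannot inflate them) and then use a global consistency check — here that for every $c$, the below-count plus the at-count plus the above-count equals $|A||B|$ summed over the relevant blocks, i.e.\ the total pair count — to force equality in all components simultaneously, exactly as in the proof of Theorem~\ref{thm:nondet-real-exacttri-count}. Threading this "each sub-count is an over-estimate, the sum is forced" argument through (i) the block-pair decomposition, (ii) the predecessor/successor sampling, and (iii) the internal Dominance-to-Equality and random-hash gadgets of Lemma~\ref{lem:nondet-3sum-helper}, while making sure the Real-RAM model only ever performs allowed comparisons on the real inputs, is where the care is needed; the exponent optimization itself is routine.
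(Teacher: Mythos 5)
Your proposal follows essentially the same route as the paper's proof: reduce to Problem~\ref{prob:3sum-variant} via the sorted block decomposition, have the prover supply a sample $R$ of pairs defining predecessor/successor landmarks together with explicit short lists of the pairs falling between them, convert the exact-sum counts at the landmarks into equality/dominance counts via Fredman's trick handled by Lemma~\ref{lem:nondet-3sum-helper}, and enforce soundness by the over-estimate-plus-global-consistency argument inherited from Theorem~\ref{thm:nondet-real-exacttri-count}. The structure, the key lemma, and the parameter balancing all match the paper's argument.
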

\begin{proof}
    We first deterministically reduce \AllThreeSUMCountReal{} to Problem~\ref{prob:3sum-variant}. The proof then proceeds similarly to the proof of Theorem~\ref{thm:nondet-real-exacttri-count}. 

    The prover provides the following:
    \begin{itemize}
        \item A subset $R \subseteq [d] \times [d]$ of size $\OO(r)$. 

        For every $i, j$, let $L_{i j} = \{A_{ik} + B_{j \ell}: (k, \ell) \in R\}$. Let $(pk_{ijc}, p\ell_{ijc})$ be the index of the predecessor of $c$ in $L_{ij}$ (including $c$) and let $(sk_{ijc}, s\ell_{ijc})$ be the index of the successor of $c$ in $L_{ij}$ (excluding $c$). The set  $R$ is supposed to be that, for some $c \in C_{ij}$ and $c \not \in L_{ij}$, the number of $(k, \ell)$ with $A_{i,pk_{ijc}} + B_{j, p\ell_{ijc}} < A_{i k} + B_{j \ell} < A_{i, sk_{ijc}} + B_{j, s\ell_{ijc}}$ is $O(d^2/r)$. Such a set exists because a random $R$ satisfies these properties with high probability. 

        \item For every $(k_0, \ell_0) \in R$, it uses the protocol in Lemma~\ref{lem:nondet-3sum-helper} to create outputs for the purpose of counting $\left| \left\{ A_{ik} - A_{ik_0} = B_{j\ell_0} - B_{j\ell}\right\}\right|$ and $\left| \left\{ A_{ik} - A_{ik_0} \le B_{j\ell_0} - B_{j\ell}\right\}\right|$, where $X$ is the set of $(i, j)$ such that there exists $c \in C_{ij}$ with $(pk_{ijc}, p\ell_{ijc}) = (k_0, \ell_0)$ or $(sk_{ijc}, s\ell_{ijc}) = (k_0, \ell_0)$. 

        \item Finally, for every $(i,j)$ and $c \in C_{ij}$ with $c \not \in L_{ij}$, it provides a list $\ell\ell_{ijc} \subseteq [d] \times [d]$, which is supposed to contain all pairs $(k, \ell)$ with $A_{i,pk_{ijc}} + B_{j, p\ell_{ijc}} < A_{i, k} + B_{j, \ell} < A_{i, sk_{ijc}} + B_{j, s\ell_{ijc}}$. Note that $|\ell\ell_{ijc}| = O(d^2/r)$ by the choice of $R$. 
        
    \end{itemize}
\end{proof}

The verifier is almost identical to the verifier in Theorem~\ref{thm:nondet-real-exacttri-count}, so we omit its details for conciseness. The running time of the verifier is 
\begin{align*}
    \OO\left(\frac{d^2}{r} \cdot \sum_{i, j} |C_{ij}| + \sum_i \left(x_i d^s + M\left(\frac{n}{d}, d^{2-s}, \frac{n}{d}\right) \right)\right), 
\end{align*}
where $x_i$ is the size of $X$ in the $i$-th call of  Lemma~\ref{lem:nondet-3sum-helper}. By picking $d = n^{\frac{1}{3-s}}$, the running time can be simplified to 
\begin{align*}
    \OO\left(\frac{n^{\frac{2}{3-s}}}{r} \cdot \sum_{i, j} |C_{ij}| + \sum_i \left(x_i n^{\frac{s}{3-s}} + n^{\frac{2-s}{3-s}\cdot \omega} \right)\right). 
\end{align*}
We know $\sum_i x_i = O(\sum_{i, j} |C_{ij}|) = O(n^2/d) = O(n^{\frac{5-2s}{3-s}})$, and we call  Lemma~\ref{lem:nondet-3sum-helper} a total of $O(|R|) = O(r)$ times. Thus, we can further upper bound the running time by 
\begin{align*}
    \OO\left(\frac{n^{\frac{7-2s}{3-s}}}{r} + n^{\frac{5-s}{3-s}} + r \cdot n^{\frac{2-s}{3-s}\cdot \omega}\right).
\end{align*}
By setting $s = \frac{2\omega-3}{\omega}$ and $r = n^{\frac{3}{\omega+3}}$, 
the running time becomes $\OO(n^{\frac{3\omega+3}{\omega + 3}})$.

\begin{remark}\rm 
The above approach also leads to new consequences in a certain
unrealistic model of computation: an \emph{unrestricted Real RAM}, 
supporting standard arithmetic operation on real numbers with
unbounded precision, but without the floor function.
With the floor function, it is known that the model 
enables PSPACE-hard problems to be solved in polynomial time~\cite{Schonhage79}.
In a recent paper \cite{CVXstoc22}, it was noted that even without
the floor function, the model may 
still be unreasonably powerful; for example, there is a truly subcubic time 
algorithm for APSP for \emph{integer} input under this model. But the question
of whether there are similarly subcubic algorithms for APSP 
and other related problems for \emph{real} input was not answered.

    Our proof of Theorem~\ref{thm:nondet-real-exacttri-count} implies a truly subcubic randomized time algorithm for \AEExactTriCountReal{} in this unrestricted Real RAM without the floor function. This is because we use nondeterminism mainly to generate all the witnesses.  But using large 
numbers, we can do standard matrix product and have all the witnesses 
represented as a long bit vector.  When witnesses are needed for an 
output entry, we can generate them one by one using a 
most-significant-bit operation, which can be simulated by binary search. Small adaptation of the proof also shows a truly subcubic randomized time algorithm for the real-valued version of \MinPlusCount{}. Similarly, Theorem~\ref{thm:nondet-real-3sum-count} implies a truly subquadratic randomized time algorithm for \AllThreeSUMCountReal{} in the unrestricted Real RAM without the floor function.

\end{remark}

\subsection{Quantum Algorithms for Counting Problems}
\label{sec:quantum}
Our equivalences between counting and detection problems immediately imply faster quantum algorithms for counting problems. For instance, we obtain the following

\begin{corollary}
\label{cor:3sum-quantum}
There exists an $\OO(n^{2-\eps})$ time quantum algorithm for \ThreeSUMCount{}{} for some $\eps>0$.
\end{corollary}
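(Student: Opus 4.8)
The plan is to obtain the result as a black-box composition of two already-available ingredients: the known bounded-error quantum algorithm for \ThreeSUM{} that runs in $\OO(n)$ time (via Grover search/amplitude amplification over pairs; see e.g.\ \cite{AmbainisL20}), and the classical truly subquadratic randomized fine-grained reduction from \ThreeSUMCount{} to \ThreeSUM{} established by Theorem~\ref{thm:all-3sum-count} together with the discussion around Remark~\ref{rem:exact-tri-count} (first reduce \ThreeSUMCount{} to \AllThreeSUMCount{} trivially, then \AllThreeSUMCount{} to \AllThreeSUM{} by Theorem~\ref{thm:all-3sum-count}, and finally down to \ThreeSUM{}). The key structural observation is that, unwinding those proofs, this reduction is an algorithm $\mathcal{R}$ which, on a \ThreeSUMCount{} instance of size $n$, does $\OO(n^{2-\delta_0})$ worth of purely classical ``own work'' — the listing of at most $n^{0.99}$ witnesses per output position, the greedy hitting-set step, and the invocation of Lemma~\ref{lem:3sum-conv} (which runs in $\OO(n^{(9+\omega)/4}/n^{0.99})=O(n^{1.86})$ time) for the remaining positions — and, interleaved with this, makes only polynomially many oracle calls, all to \AllThreeSUMConv{} (hence to \AllThreeSUM{} or \ThreeSUM{}) on instances of size $O(n)$, arising solely from the witness-listing subroutine. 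Its defining guarantee is: if \ThreeSUM{} admits an $O(m^{2-\eps})$-time algorithm for some $\eps>0$, then $\mathcal{R}$ runs in $O(n^{2-\delta(\eps)})$ total time for some $\delta(\eps)>0$.

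Given this, the argument I would run is as follows. The $\OO(m)$-time quantum algorithm for \ThreeSUM{} — and likewise the $\OO(m^{3/2})$-time quantum algorithm for \AllThreeSUM{} obtained by running Grover search separately for each of the $m$ targets — qualifies as an ``$O(m^{2-\eps})$-time algorithm'' with $\eps = 1/2$, since $m^{1+o(1)} = o(m^{3/2})$. First I would amplify the success probability of this quantum subroutine to $1 - n^{-c}$ for a large constant $c$ via $O(\log n)$ repetitions, so that a union bound over the polynomially many oracle calls, together with the (already bounded) error of the randomized steps of $\mathcal{R}$, gives overall correctness with high probability at only a polylogarithmic cost. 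Then I would run the entire procedure $\mathcal{R}$ on a quantum machine: its classical control flow and classical own work are executed classically, and each oracle call invokes the (amplified) quantum subroutine sequentially. By the very definition of the fine-grained reduction, substituting an $O(m^{2-1/2})$-time solver for the \ThreeSUM{}/\AllThreeSUM{} oracle yields total running time $O(n^{2-\delta(1/2)}) = \OO(n^{2-\eps})$ for some fixed $\eps>0$, which is exactly the claimed bound.

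The whole thing is essentially a bookkeeping exercise, so the part that will require genuine (if routine) care is the justification of the black-box composition, specifically two points. The first is ``oracle-monotonicity'': I would verify that the running time of $\mathcal{R}$ is a sum of terms, each of which is either independent of the oracle (the classical own work, which is truly subquadratic outright and executable on a quantum machine at no overhead) or the cost of a single oracle call on an instance of size $m_i = O(n)$; replacing each such cost by the smaller quantum cost can only decrease the total, so the $O(n^{2-\delta(\eps)})$ bound survives when we plug in the faster quantum solver. The second is the error accounting across many rounds — the reduction is itself randomized and the quantum subroutine is bounded-error — which is handled by the amplification above, noting that the oracle answers are never needed ``in superposition'' (the calls are made classically, one after another), so there is no coherence issue. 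It is also worth remarking that a quantum speedup for the \emph{decision} problem suffices: every oracle call $\mathcal{R}$ makes, including those hidden inside the witness-listing step, is ultimately a detection query. No new idea beyond the equivalence of Theorem~\ref{thm:all-3sum-count} is needed.
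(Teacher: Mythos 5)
Your proposal is correct and matches the paper's own proof, which is exactly the black-box composition of the $\OO(n)$-time quantum \ThreeSUM{} algorithm with the subquadratic reduction of Theorem~\ref{thm:all-3sum-count}. The additional care you take with oracle-monotonicity and error amplification is sound but is treated as immediate in the paper.
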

\begin{proof}
It is known that \ThreeSUM{} can be solved in  $\OO(n)$ quantum time (see e.g. \cite{AmbainisL20}). Applying Theorem~\ref{thm:all-3sum-count} finishes the proof. 
\end{proof}

\subsection{Discussion}\label{sec:nondet:discuss}

{\sf \#CNF-SAT} asks to count the number of satisfying assignments to a CNF formula, and it is considered harder than {\sf CNF-SAT}: the counting version \#SETH of the Strong Exponential Time Hypothesis (SETH) \cite{ipz1,cip10} is considered even more believable than SETH (see~\cite{CurticapeanM16}). Williams' \cite{Williams05} reduction from {\sf CNF-SAT} to \OV{} preserves the counts, and thus is also a fine-grained reduction from {\sf \#CNF-SAT} to \OVCount{}. Similar to the situation for {\sf CNF-SAT}, there's no known fine-grained reduction from \OVCount{} to \OV{}.

Our techniques do not yet give equivalences between the decision and counting variants of {\sf CNF-SAT} and \OV{}. If such equivalences do not exist, this would indicate that \OV{} and {\sf CNF-SAT} are different from the other core problems in FGC. 

Such an indication was already observed by Carmosino et al.~\cite{carmosino2016nondeterministic} who studied the nondeterministic and co-nondeterministic complexity of fine-grained problems. They formulated NSETH that asserts that there is no $O((2-\eps)^n)$ time nondeterministic algorithm which can verify that a given CNF formula has no satisfying assignment.
They also exhibited nondeterministic algorithms for verifying the YES and NO solutions of \ExactTri, \APSP{} in truly subcubic time  and \ThreeSUM{} in truly subquadratic time, and concluded that if NSETH holds, then there can be no deterministic fine-grained reduction from {\sf CNF-SAT} or \OV{} to any of \ExactTri{}, \APSP{} or \ThreeSUM{}.

Because of our efficient nondeterministic algorithms for \ExactTriCount{}, \NegTriCount{} and \ThreeSUMCount,
we then get that under NSETH, there can be no deterministic fine-grained reductions from {\sf CNF-SAT} or \OV{} to \ExactTriCount{}, \APSPCount{} or \ThreeSUMCount. 

Recently, Akmal, Chen, Jin, Raj and Williams~\cite{Akmal0JR022} showed, among other things, that \ExactTriCount{} has an $\OO(n^2)$ time Merlin-Arthur protocol. Their protocol crucially uses polynomial identity testing, and hence it is not known how to derandomize it and make it nondeterministic. 
Our results yield
a truly subcubic nondeterministic protocol.

\section{BSG Theorems Revisited}
\label{sec:bsg}

In Section~\ref{sec:decomposition-zero-tri}, we have seen how our techniques may
replace some of the previous uses of the BSG Theorem in algorithmic applications.
In this section, we show how our techniques can actually prove
variants of the BSG Theorem itself.

We begin with a quick review of the BSG Theorem.  Many different 
versions of the theorem can be found in the literature, and
the following is one version that is easy to state:

\begin{theorem}\label{thm:BSG0}
{\bf (BSG Theorem)}\ \ 
Given subsets $A$, $B$, and $C$ of size $n$ of an abelian group,
and a parameter $s$,
if $|\{(a,b)\in A\times B:\ a+b\in C\}|\ge n^2/s$, 
then there exist subsets $A'\subseteq A$ and $B'\subseteq B$
both of size $\Omega(n/s)$, such that
\[ |A'+B'|\:=\: O(s^5n).
\]
\end{theorem}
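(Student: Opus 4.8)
The plan is to follow the classical route to the Balog--Szemer\'edi--Gowers theorem: reduce the statement to a purely graph-theoretic ``many paths'' lemma, and then convert abundance of length-three paths into a bound on $|A'+B'|$ via exactly the kind of identity that pervades this paper, namely Fredman's trick $a+b=a'+b'\iff a-a'=b'-b$.

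First I would set up the graph. Let $G=\{(a,b)\in A\times B:\ a+b\in C\}$, regarded as a bipartite graph on classes $A$ and $B$; by hypothesis $|E(G)|\ge n^2/s$, so $G$ has edge density $\delta:=1/s$. Discarding the vertices $a\in A$ of degree below $\delta n/2$ removes at most half the edges, so we may assume every surviving $A$-vertex has degree $\ge\delta n/2$ while the density is still $\Omega(\delta)$. The technical heart is then the \emph{BSG paths lemma}: there exist $A'\subseteq A$ and $B'\subseteq B$ with $|A'|,|B'|=\Omega(\delta n)=\Omega(n/s)$ such that for \emph{every} $(a',b')\in A'\times B'$ the number of length-three paths $a'-b_1-a_1-b'$ in $G$ is at least $c\,\delta^{5}n^{2}$ for an absolute constant $c>0$. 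I would prove this by a dependent-random-choice argument: call $a_1,a_2\in A$ \emph{good} if $|N(a_1)\cap N(a_2)|\ge\tfrac12\delta^2 n$; Cauchy--Schwarz shows there are $\ge\delta^2 n^3$ cherries centered in $B$, so good pairs are plentiful. Picking a random $b^\ast\in B$ and setting $A_1:=N(b^\ast)$ (intersected with the pruned set), a second-moment computation produces a choice of $b^\ast$ with $|A_1|=\Omega(\delta n)$ and few bad pairs inside $A_1$; one then passes to a subset $A'\subseteq A_1$ of vertices lying in few bad pairs, and takes $B'$ to be the $b\in B$ with $|N(b)\cap A'|=\Omega(\delta^2 n)$ (a $\Omega(\delta)$-fraction of $B$, by an edge count $|E(A',B')|=\Omega(\delta^2 n^2)$ divided by $|A'|$). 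For $(a',b')\in A'\times B'$, a positive fraction of the $\Omega(\delta^2 n)$ neighbours $a_1$ of $b'$ in $A'$ are good with $a'$, and each good pair $(a',a_1)$ offers $\ge\tfrac12\delta^2 n$ middle vertices $b_1$, giving the required paths.

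Next, convert paths to sumset size. Fix $x\in A'+B'$ and a representation $x=a'+b'$ with $(a',b')\in A'\times B'$. Every length-three path $a'-b_1-a_1-b'$ in $G$ yields $a'+b_1,\ a_1+b_1,\ a_1+b'\in C$ together with the identity
\[
x\ =\ (a'+b_1)-(a_1+b_1)+(a_1+b'),
\]
hence a triple $(u,v,w)\in C^3$ with $u-v+w=x$; conversely $(u,v,w)$ and $(a',b')$ recover the path via $b_1=u-a'$ and $a_1=w-b'$, so distinct paths give distinct triples. Therefore $\#\{(u,v,w)\in C^3:\ u-v+w=x\}\ \ge\ c\,\delta^{5}n^{2}$ for every $x\in A'+B'$. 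Summing over $x\in A'+B'$ and bounding the left side by $|C|^3=n^3$ gives $|A'+B'|\cdot c\,\delta^{5}n^2\le n^3$, i.e.\ $|A'+B'|\le c^{-1}\delta^{-5}n=O(s^5 n)$.

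The step I expect to be the main obstacle is the paths lemma itself: a naive two-neighbourhood argument does not actually close the parameters (the number of bad pairs inside $A_1$, while $O(\delta^2 n^2)$, is not $o(\delta^2 n\cdot|A_1|)$ for small $\delta$), and one needs the more careful dependent-random-choice / Gowers-style argument---typically routing some pairs through an auxiliary vertex $a_3\in A'$ that is good with both endpoints---which is precisely what costs the extra power of $\delta$ and turns an optimistic $O(s^4 n)$ into the stated $O(s^5 n)$. Finally I would remark that this is the textbook argument, included for comparison, and that the alternative route via the paper's Triangle Decomposition Theorem~\ref{thm:tri:decompose} (a covering rather than a single-pair statement) is what underlies the sharper $\OO(s^2 n^{3/2})$ variant developed later in this section.
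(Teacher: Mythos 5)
The paper does not prove Theorem~\ref{thm:BSG0} at all: it is stated as a known result, attributed to Balog and to Sudakov, Szemer\'edi and Vu, and used as background for the new variants that the paper does prove. So there is no proof in the paper to compare against; what you have written is the standard textbook argument, and in outline it is sound. The final step is correct and complete as stated: the bijection between length-three paths $a'-b_1-a_1-b'$ and triples $(u,v,w)=(a'+b_1,\,a_1+b_1,\,a_1+b')\in C^3$ with $u-v+w=a'+b'$ is valid (recoverability of $b_1=u-a'$ and $a_1=w-b'$ gives injectivity), and dividing $|C|^3=n^3$ by the per-pair path count $c\,\delta^5n^2$ yields $O(s^5n)$. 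The one genuinely incomplete piece is the paths lemma itself; you correctly identify that the naive dependent-random-choice computation does not close (the bad-pair count inside $A_1$ is not small relative to $|A_1|^2$ for small $\delta$) and that the Gowers/Balog/Sudakov--Szemer\'edi--Vu refinement is needed, but you do not carry it out, so as written this is a proof sketch rather than a proof of the hardest step. For comparison with what the paper actually establishes: Theorem~\ref{thm:bsg:simpler} extracts a single subset by a two-line popularity argument (pick a popular difference $h$ and take $A^{(h)}=\{a: a-h\in A\}$), trading the linear-in-$n$ bound for $O(s^{1/2}n^{3/2})$; and Theorem~\ref{thm:bsg:gower} reinterprets Gowers' quadruple-counting---essentially the same path-to-sumset conversion as your last step, phrased with $\pop_A$ and marked quadruples---to produce a covering version with an efficient construction. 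Your route proves the sharper linear-in-$n$ bound that the paper only quotes, while the paper's own arguments buy simplicity, constructivity, and covering (multi-subset) versions at the cost of either a superlinear $n$-dependence or a worse power of $s$.
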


The earliest version of the theorem, with super-exponential
factors in $s$, was obtained by
Balog and Szemer\'edi~\cite{BalogSze}, via the regularity lemma.
Gowers~\cite{Gowers01} was the first to obtain a version with polynomial
dependency on $s$.  The version stated above was proved by Balog~\cite{Balog07} and Sudakov, Szemer\'edi and Vu~\cite{SudakovSV94}.  Although the proof is not long and does not need advanced tools, it is clever and
not easy to think of; see~\cite{TaoVu06,Lovett17,Viola11} for various
different expositions.

Chan and Lewenstein~\cite{ChanLewenstein} gave algorithmic applications
using  the following variant which we will call the ``BSG Covering
Theorem'' (it was called the ``BSG Corollary'' in their paper).  Instead of extracting a single
pair of large subsets $(A',B')$, the goal is to construct a cover
by multiple pairs of subsets $(A^{(i)},B^{(i)})$: 

\begin{theorem}\label{thm:BSG0:cover}
{\bf (BSG Covering)}\ \ 
Given subsets $A$, $B$, and $C$ of size $n$ of an abelian group,
and a parameter $s$,
there exist a collection of $\ell=O(s)$ subsets $A^{(1)},\ldots,A^{(\ell)}\subseteq A$ and $B^{(1)},\ldots,B^{(\ell)}\subseteq B$, and a set $R$ of $O(n^2/s)$ pairs in $A\times B$, such that
\begin{enumerate}
\item[\rm(i)] $\{(a,b)\in A\times B: a+b\in C\}\ \subseteq\
R\,\cup\, \bigcup_\lam (A^{(\lam)}\times B^{(\lam)})$, and
\item[\rm(ii)] $|A^{(\lam)} + B^{(\lam)}| = O(s^5 n)$ for each $\lam$
(and so $\sum_\lam (|A^{(\lam)} + B^{(\lam)}|) = O(s^6 n)$).
\end{enumerate}
\end{theorem}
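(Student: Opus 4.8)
The plan is to obtain Theorem~\ref{thm:BSG0:cover} from the single-pair statement (Theorem~\ref{thm:BSG0}) by running the Balog--Szemer\'edi--Gowers argument in a ``multi-output'' fashion, rather than extracting one popular pair of subsets and iterating. One can also iterate Theorem~\ref{thm:BSG0} on the residual graph of still-uncovered good pairs, but then one must invoke a graph-theoretic form of BSG on an arbitrary dense subgraph of $A\times B$; I would use the one-shot version, sketched below, and at the end I indicate a second route through the Triangle Decomposition Theorem of this paper.

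First I would set up the path-counting core. Let $G$ be the bipartite graph on $A\cup B$ whose edges are the good pairs. The key identity is that for any length-three path $a\to b_1\to a_1\to b$ in $G$ one has $a+b=(a+b_1)-(a_1+b_1)+(a_1+b)$, all three bracketed terms lying in $C$; hence if a pair $(a,b)$ is contained in at least $t$ such paths, then $a+b$ has at least $t$ representations in $C-C+C$, so the set of such ``$t$-rich'' sums has size at most $|C|^3/t=n^3/t$. The plan is then to choose $t=\Theta(n^2/s^{O(1)})$, throw into $R$ every good pair lying in fewer than $t$ paths (a standard convexity/popularity count shows there are only $O(n^2/s)$ of these), and bucket the vertices of $A$ and of $B$ according to their connectivity profile into $O(s)$ groups so that, inside each of the resulting boxes $A^{(\lam)}\times B^{(\lam)}$, \emph{every} pair is $t$-rich; for such a box, $A^{(\lam)}+B^{(\lam)}$ is contained in the $t$-rich sum set, giving $|A^{(\lam)}+B^{(\lam)}|\le n^3/t=O(s^5n)$ for the right choice of exponents, and summing over $\lam$ gives $O(s^6n)$.

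The step I expect to be the real obstacle is precisely the bucketing that upgrades ``$(a,b)$ is an \emph{edge} in many paths'' (easy: a large fraction of edges have this) to ``\emph{every} pair in the box is in many paths'' (which is what the sumset bound needs): this is the dependent-random-choice / Gowers maneuver at the heart of the BSG proof, and it is where the Pl\"unnecke--Ruzsa-type estimates and the dyadic popularity pigeonholes must be balanced so that only $O(s)$ buckets and an $O(n^2/s)$ remainder survive. I would follow one of the standard expositions~\cite{Gowers01,SudakovSV94,Balog07,TaoVu06,Lovett17,Viola11}, adapting it to emit a cover of boxes rather than a single box, and attributing the theorem itself to Chan and Lewenstein~\cite{ChanLewenstein}.

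Finally, a self-contained alternative closer to the rest of the paper: apply Theorem~\ref{thm:tri:decompose} to a tripartite graph with $U$ indexed by $A$, $V$ indexed by $B$, and a third part and edge weights chosen so that zero-weight triangles encode good pairs together with a length-three-path witness; the output subgraphs $G^{(\lam)}$ then yield the pairs $(A^{(\lam)},B^{(\lam)})$, Fredman's trick confines their sums to a common small set, and the $\OO(n_1n_2n_3/s)$ decomposition remainder yields $R$. The delicate point there is to design the encoding so the witness multiplicity is large enough that the ``few-witnesses'' part contributes only $O(n^2/s)$ pairs while the $G^{(\lam)}$'s still translate into the claimed sumset bound; this is the route that, pushed further, gives the paper's improved $\OO(s^2n^{3/2})$ variant.
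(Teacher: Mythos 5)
First, note that the paper does not actually prove Theorem~\ref{thm:BSG0:cover}: it quotes it as prior work (Chan and Lewenstein's ``BSG Corollary''~\cite{ChanLewenstein}), remarking only that it follows from the graph-theoretic extension of the BSG Theorem due to Balog~\cite{Balog07} and Sudakov, Szemer\'edi and Vu~\cite{SudakovSV94}, applied repeatedly to the residual graph of uncovered pairs. Your sketch does correctly isolate the arithmetic core of all known proofs --- the length-three path identity $a+b=(a+b_1)-(a_1+b_1)+(a_1+b)$ and the resulting $n^3/t$ bound on the set of $t$-rich sums --- and you are right that this is what forces $|A^{(\lam)}+B^{(\lam)}|=O(s^5n)$ once every pair in a box is $t$-rich with $t=\Theta(n^2/s^5)$.

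The gap is that neither of your two routes, as described, actually produces the cover. In route one, ``bucket the vertices of $A$ and $B$ by connectivity profile into $O(s)$ groups so that every pair in each box is $t$-rich'' is not a step that exists: a dense bipartite graph admits no such lossless partition, and this is precisely the content you defer. The step must be replaced by the dependent-random-choice extraction (pick a random vertex, take its neighborhood, delete the $o(1)$-fraction of vertices with too few co-neighbors) which yields \emph{one} box per application, followed by iteration on the residual edge set until fewer than $n^2/s$ edges remain --- and to bound the number of iterations by $O(s)$ one needs the graph-theoretic form in which each extracted box is guaranteed to contain an $\Omega(1/s)$-fraction of the surviving edges, which is exactly the extension you declined to invoke. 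Your second route, via Theorem~\ref{thm:tri:decompose}, proves a genuinely different statement: it gives $\OO(s^3)$ subsets and a \emph{total} bound $\sum_\lam|A^{(\lam)}-A^{(\lam)}|=\OO(s^2n^{3/2})$ (this is the paper's Theorem~\ref{thm:bsg:simple} / Theorem~\ref{thm:bsg:cover:simple:app}), which is incomparable to the $O(s)$ subsets with per-box bound $O(s^5n)$ claimed here and superlinear in $n$; it therefore cannot be used to derive Theorem~\ref{thm:BSG0:cover}. The correct disposition is to cite \cite{ChanLewenstein,Balog07,SudakovSV94} for this theorem, as the paper does, or to carry out the full dependent-random-choice-plus-iteration argument.
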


The BSG Covering Theorem is not implied by the BSG Theorem as stated,
but the known proofs by Balog~\cite{Balog07} and Sudakov et al.~\cite{SudakovSV94}
established an extension of the BSG Theorem  that involves an input graph, 
and repeated applications of this theorem indeed provide multiple pairs of subsets 
satisfying the stated properties.

\newcommand{\pop}{\mbox{\rm pop}}

\subsection{A New Simpler Version}

We will now show how our techniques, combined with some new extra ideas, can
derive a version of the BSG Covering Theorem where  the
$O(s^6n)$ bound is weakened to $\OO(s^2n^{3/2})$.  Although the new bound is superlinear in $n$, the lower polynomial dependency on $s$ actually compensates to yield
improved results in some algorithmic applications of the Covering Theorem.
In particular, $\OO(s^2n^{3/2})$ is better when $s\gg n^{1/8}$.
A key advantage of the new proof is its simplicity:  it constructs a cover directly, instead of repeatedly extracting subsets one at a time (thus, it avoids
the need to extend the BSG Theorem with an input graph, and thereby simplifies
the algorithm considerably).

We focus on the setting where each input set $A$ is of the
form $\{(i,a_i): i\in [n]\}$ for a sequence of integers or reals $a_1,\ldots,a_n$.
We call such a set an \emph{indexed set}.
This case turns out to be sufficient for applications involving
integer input (because known hashing-based techniques used in reductions
from \ThreeSUM{} to \ThreeSUMConv{}~(e.g. \cite{kopelowitz2016higher}) can map integer sets to indexed sets---\ThreeSUMConv{} is just \ThreeSUM{} for indexed sets).
Focusing on indexed sets makes the proof more intuitive, and also makes
the construction more efficient.  (In Appendix~\ref{app:bsg}, we present a variant of
the proof for general sets in an abelian group.)

Note that in the theorem stated below, we work with ``monochromatic'' difference
sets of the form $A-A$, instead of bichromatic sum sets $A+B$.  This
form is actually
more general, since given $A$ and $B$, we can reset $A$ to $A\cup (-B)$.
The reduction does not go the other way, since knowing that $|A^{(\lam)}+B^{(\lam)}|$
is small does not mean $|(A^{(\lam)}\cup (-B^{(\lam)})) - (A^{(\lam)}\cup (-B^{(\lam)}))|$ is small.  (The proofs for the known $O(s^6n)$ bound work only for the
bichromatic sum sets but not for monochromatic difference sets, whereas Gower's earlier
proof works for monochromatic difference sets.)

\begin{theorem}\label{thm:bsg:simple}
{\bf (Simpler BSG Covering)}\ \ 
Given indexed sets $A$ and $C$ of size $n$ and a parameter $s$,
there exist a collection of $\ell=\OO(s^3)$ subsets $A^{(1)},\ldots,A^{(\ell)}\subseteq A$, and a set $R$ of $\OO(n^2/s)$ pairs in $A\times A$, such that
\begin{enumerate}
\item[\rm(i)] $\{(a,b)\in A\times A: a-b\in C\}\ \subseteq\
R\,\cup\, \bigcup_\lam (A^{(\lam)}\times A^{(\lam)})$, and
\item[\rm(ii)] $\sum_\lam |A^{(\lam)} - A^{(\lam)}| = \OO(s^2 n^{3/2})$.
\end{enumerate}
The $A^{(\lam)}$'s and $R$ can be constructed in $\OO(n^2)$
Las Vegas randomized time.
\end{theorem}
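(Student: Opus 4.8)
The plan is to follow the same Fredman-plus-Matou\v sek recipe that underlies the Triangle Decomposition Theorem (Theorem~\ref{thm:tri:decompose}), but to build a \emph{product} cover rather than a triangle cover. Since $A=\{(i,a_i):i\in[n]\}$ is an indexed set, a solution is an ordered pair $(i,j)$ with $a_i-a_j\in C$, and the relevant difference values $v=a_i-a_j$ form a set of size at most $n$. The engine is Fredman's identity $a_i-a_j=a_{i'}-a_{j'}\iff a_i-a_{i'}=a_j-a_{j'}$: if two solution pairs share a difference value $v$, the four indices split into an ``$i$-side'' equation and a ``$j$-side'' equation with a common right-hand side $w:=a_i-a_{i'}=a_j-a_{j'}$, and this is exactly what lets us replace a bipartite collection of solution pairs by a product set $A^{(\lam)}\times A^{(\lam)}$.

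Concretely I would proceed in three stages. \textbf{(1) Remove rare pairs.} For each value $v$ let $D_v=\{(i,j):a_i-a_j=v\}$, and dump into $R$ every solution pair whose difference value $v$ is ``light'' (few partnered indices), deciding lightness via a random pivot set $H\subseteq[n]$ of size $\Theta(s\log n)$, exactly as hitting sets are used in the many-witnesses case of Theorem~\ref{thm:tri:decompose}; summing $|D_v|$ over the at most $n$ relevant values bounds the discarded pairs by $\OO(n^2/s)$. \textbf{(2) Pivot.} Every surviving pair gets an assigned pivot $h\in H$ with $a_h-v\in A$ (so $v=a_h-a_g$ for some $g$), partitioning the survivors into $\OO(s)$ pivot classes. \textbf{(3) Frequency split per pivot.} Within a pivot class, Fredman's identity gives $a_i=a_h+w$ and $a_j=a_g+w$; applying Matou\v sek's high/low-frequency split~\cite{MatIPL} to the multiset of relevant $w$-values (together with a sorted-order bucketing of the relevant $v$'s), I would route the low-frequency part into $R$ and, for each of $\OO(s^2)$ surviving (pivot, frequency-class) cells $\lam$, set $A^{(\lam)}=(a_h+W_\lam)\cup(a_g+W_\lam)$, where $W_\lam$ is the corresponding frequency class of $w$-values. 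Then $A^{(\lam)}\times A^{(\lam)}$ contains all pairs of cell $\lam$, while $A^{(\lam)}-A^{(\lam)}$ is contained in $O(1)$ translates of $W_\lam-W_\lam$ and hence has size $O(|W_\lam|^2)$. Choosing the number of frequency classes and the bucket sizes to balance $|R|=\OO(n^2/s)$ against $\sum_\lam|A^{(\lam)}-A^{(\lam)}|$ then yields both the $\OO(s^3)$ count and the $\OO(s^2n^{3/2})$ bound; and, after the standard hashing reduction of integer input to indexed sets, every step is easily made to run in $\OO(n^2)$ Las Vegas time.

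The crux I expect is stage (3): unlike Theorem~\ref{thm:tri:decompose}, whose subgraphs may be arbitrary bipartite graphs, each piece of the cover here must be a genuine product $A^{(\lam)}\times A^{(\lam)}$ \emph{and} have a small difference set, and those two demands pull against the budget of $\OO(s^3)$ pieces. Reconciling them is precisely where the indexed-set structure and Fredman's trick are indispensable --- each product has to be pinned to a bounded number of translates of a single small ``frequency class'' of $w$-values --- and it is also where the superlinear $n^{3/2}$ factor enters (in place of the linear $O(s^6n)$ of the classical BSG Covering Theorem), reflecting the tension between using few products and keeping the difference sets small. A secondary point is the dichotomy in stage (1): one must tune the lightness threshold and the pivot-set size so that discarded pairs total $\OO(n^2/s)$ while the number of pivot classes stays $\OO(s)$; this is a direct adaptation of the hitting-set argument already used for Theorem~\ref{thm:tri:decompose}, so I anticipate only routine bookkeeping there. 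One should also double-check the ``monochromatic'' form $A-A$ works throughout (the known $O(s^6n)$ proofs do not survive this generalization), but since our construction never needs the sum-set, this should go through unchanged.
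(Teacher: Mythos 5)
Your overall skeleton --- few vs.\ many witnesses, a hitting set of $\OO(s)$ pivots, Fredman's trick to turn a shared difference value into a shared ``$w$-value'', and a Matou\v{s}ek-style high/low-frequency split giving $\OO(s^2)$ classes per pivot --- is exactly the skeleton of the paper's proof, and your stages (1) and (2) are essentially fine (the paper even does the lightness test deterministically by sorting, so no randomness is needed there). The gap is in stage (3), i.e., in establishing property (ii), and it sits precisely at the step you flagged as the crux. Two things go wrong with the proposed mechanism $A^{(\lam)}=(a_h+W_\lam)\cup(a_g+W_\lam)$. First, the translation offset $g$ is determined by the pivot $h$ \emph{together with} the index-difference $k$ of the pair (namely $g=h-k$), and a single pivot $h\in H$ serves up to $n$ distinct heavy values of $k$; so a cell indexed only by (pivot, frequency class) must take the union of $a_g+W_\lam$ over many $g$'s, and its difference set then lies in quadratically many translates of $W_\lam-W_\lam$, not $O(1)$ of them. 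Splitting cells further by $k$ (or by buckets of $v$'s) to restore the ``$O(1)$ translates'' claim pushes the number of cells far past $\OO(s^3)$. Second, even for a single cell the bound $|A^{(\lam)}-A^{(\lam)}|\le O(|W_\lam|^2)$ is just the trivial quadratic bound on the difference set of a subset of $A$; since $\sum_\lam|W_\lam|=\OO(sn)$ and a single high-frequency class can have size up to $n$, this yields only $\sum_\lam|A^{(\lam)}-A^{(\lam)}|=\OO(sn^2)$, which is \emph{worse} than the claimed $\OO(s^2n^{3/2})$ throughout the interesting range $s\le\sqrt{n}$.

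What the paper does instead for (ii) is a new probabilistic double-counting argument, and it is the one idea your plan is missing. Take the pivots to be uniformly random \emph{shifts} $h\in[\pm n]$ and keep the cells exactly as $A^{(h,f)}=\{(i,a_i):a_{i+h}-a_i=f\}$ for the at most $r=s^2$ high-frequency values $f$. Then $\sum_f|A^{(h,f)}-A^{(h,f)}|$ is at most the number of triples $(k,c,f)$ admitting an $i$ with $a_{i+k}-a_i=c$ and $a_{i+h}-a_i=a_{i+k+h}-a_{i+k}=f$; writing $Y_{k,c}=\{i:a_{i+k}-a_i=c\}$ and introducing a threshold $q$, the triples with $|Y_{k,c}|>q$ number $\OO(s^2n^2/q)$, while a triple with $|Y_{k,c}|\le q$ forces both $i$ and $i+h$ to lie in $Y_{k,c}$, an event whose expected count over a random shift $h$ is $O(\sum_{k,c}|Y_{k,c}|^2/n)=O(nq)$. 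Balancing at $q=s\sqrt{n}$ gives $\Ex[\sum_f|A^{(h,f)}-A^{(h,f)}|]=\OO(sn^{3/2})$ per shift and hence $\OO(s^2n^{3/2})$ in total. Without this averaging over the random shift, no per-cell structural bound of the kind you propose will reach the $n^{3/2}$ factor.
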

\begin{proof}
Let $A=\{(i,a_i): i\in [n]\}$ and $C=\{(k,c_k): k\in [n]\}$.
As a preprocessing step, we
sort the multiset $\{a_{i+k}-a_i: i\in [n]\}$ for each $i$, 
in $\OO(n^2)$ total time.
Let $W_k=\{i\in [n]: a_{i+k} - a_i = c_k\}$.
\begin{itemize}
\item {\bf Few-witnesses case.}
For each $k$ with $|W_k|\le n/s$, add $\{((i+k,a_{i+k}),(i,a_i)): i\in W_k\}$ to $R$.
The number of pairs added to $R$ is $O(n\cdot n/s)$.
The running time of this step is also $\OO(n\cdot n/s)$.

\item {\bf Many-witnesses case.}
Pick a random subset $H\subseteq [\pm n]$ of size $c_0s\log n$ for a sufficiently large constant $c_0$.  
Let $L^{(h)}$ be the multiset $\{a_{i+h}-a_i: i\in [n]\}$.
Let $F^{(h)}$ be the elements of frequency more than $n/r$ in $L^{(h)}$.
Note that $|F^{(h)}|\le r$.
These can be computed in $\OO(sn)$ time.

\begin{itemize}
\item {\bf Low-frequency case.}
For each $h\in H$ and $i\in [n]$,
if $a_{i+h}-a_i\not\in F^{(h)}$,
we examine each of the at most $n/r$ indices $j$ with
$a_{i+h}-a_i=a_{j+h}-a_j$ and add $((j,a_j),(i,a_i))$ to $R$.
The number of pairs added to $R$ is $\OO(sn\cdot n/r)=\OO(n^2/s)$
by choosing $r:=s^2$.
The running time of this step is also bounded by $\OO(n^2/s)$.
\item {\bf High-frequency case.}
For each $h\in H$ and $f\in F^{(h)}$,
add the following subset to the collection:
\[ A^{(h,f)} = \{(i,a_i)\in A: a_{i+h}-a_i = f\}.
\]
The number of subsets is $\OO(s\cdot r)=\OO(s^3)$.
The running time of this step is \[O\left(\sum_{h\in H,\ f\in F^{(h)}}|A^{(h,f)}|\right)=\OO(sn).\]
\end{itemize}
\end{itemize}

\emph{Correctness.}
To verify (i), consider a pair $((i+k,a_{i+k}),(i,a_i))\in A\times A$ with 
$a_{i+k}-a_i=c_k$.
If $|W_k|\le n/s$, then $((i+k,a_{i+k}),(i,a_i))\in R$ due to
the ``few-witnesses'' case.
So assume $|W_k|>n/s$.
Then $H$ hits $W_k-i$ w.h.p., so
there exists $h\in H$ with $i+h\in W_k$, i.e.,
$a_{i+h+k}-a_{i+h}=c_k$.  Since $a_{i+k}-a_i=c_k$,
we have $a_{i+h+k}-a_{i+k}=a_{i+h}-a_i$ (by Fredman's trick).
Let $f=a_{i+h}-a_i$.
If $f\not\in F^{(h)}$, then $((i+k,a_{i+k}),(i,a_i))\in R$  due to
the ``low-frequency'' case.
If $f\in F^{(h)}$, then $((i+k,a_{i+k}),(i,a_i))\in A^{(h,f)}\times A^{(h,f)}$ due to the ``high-frequency'' case.

Thus far, the proof ideas are similar to what we have seen before.  However, to verify (ii), we will propose a new  probabilistic argument.
Consider a fixed $h$.
We want to bound the sum $S^{(h)}:=\sum_{f\in F^{(h)}} |A^{(h,f)}-A^{(h,f)}|$.
This is equivalent to bounding the number of triples $(k,c,f)$ such that
$f\in F^{(h)}$ and $\exists i$ with $a_{i+k}-a_i=c$ and
$a_{i+h}-a_i=a_{i+k+h}-a_{i+k}=f$.  Note that we also have
$a_{i+k+h}-a_{i+h}=c$ (by Fredman's trick).
Let $Y_{k,c} = \{i: a_{i+k}-a_i=c\}$
and $q$ be a parameter.
Then $S^{(h)}$ is upper-bounded by the number of triples $(k,c,f)$ satisfying
\begin{enumerate}
\item $|Y_{k,c}|>q$ and $f\in F^{(h)}$, or
\item $|Y_{k,c}|\le q$ and $\exists i$ with $i\in Y_{k,c}$ and $i+h\in Y_{k,c}$ and $f=a_{i+h}-a_i$.
\end{enumerate}
Since $\sum_{k,c} |Y_{k,c}|=O(n^2)$,
the number of triples of type~1 is $O((n^2/q)\cdot r)=\OO(s^2n^2/q)$.
On the other hand, the expected number of triples of type~2 for a random
$h\in[\pm n]$ is 
\[
O\left(\sum_{k,c:\: |Y_{k,c}|\le q} |Y_{k,c}|\cdot |Y_{k,c}|/n\right)
\ =\ O(n^2\cdot q/n)\ =\ O(nq).
\]
Hence, $\Ex[S^{(h)}] = \OO(s^2n^2/q + nq) = \OO(sn^{3/2})$
by choosing $q=s\sqrt{n}$.

The expected total sum $\sum_{h\in H}\sum_{f\in F^{(h)}} |A^{(h,f)}-A^{(h,f)}|$ is $\OO(s^2n^{3/2})$.
Thus, the probability that the desired bounds are not met is less than an arbitrarily small constant (by Markov's inequality and a union bound).

Note that the algorithm can be converted to Las Vegas, because we can
verify correctness of the construction by examining all $O(n^2)$ pairs
and computing all difference sets $A^{(h,f)}-A^{(h,f)}$ using
known output-sensitive algorithms~\cite{ColeHariharanSTOC02,ChanLewenstein,BringmannFN} in total time $\OO(n^2 + s^2n^{3/2})=\OO(n^2)$ (we may assume $s<n^{1/4}$, for otherwise the theorem is trivial).
\end{proof}

\subsection{Application: Improved 3SUM in Preprocessed Universes}
\label{sec:3sum:prep:rand}

As one immediate application, we can solve  \ThreeSUM{} with preprocessed
universe, improving Chan and Lewenstein's previous solution which required
$\OO(n^{13/7})$ query time~\cite{ChanLewenstein}, and also improving Corollary~\ref{cor:3sum:prep0} regardless of the value of $\omega$: the query algorithm does not use
fast matrix multiplication but uses FFT instead, though randomization is
now needed in the  preprocessing algorithm.

\begin{corollary}
We can preprocess sets $A$, $B$, and $C$ of $n$ integers
in $\OO(n^2)$ Las Vegas randomized time, so that given any subsets $A'\subseteq A$,
$B'\subseteq B$, and $C'\subseteq C$, we can solve \AllThreeSUM{} on $(A',B',C')$ in $\OO(n^{11/6})$ time.
\end{corollary}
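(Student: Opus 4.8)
The plan is to combine the Simpler BSG Covering Theorem (Theorem~\ref{thm:bsg:simple}) with output-sensitive sumset/difference-set computation and FFT, thereby avoiding the fast matrix multiplication used in Corollary~\ref{cor:3sum:prep0}. First I would reduce to indexed sets: using a known hashing-based reduction from \AllThreeSUM{} to \AllThreeSUMConv{} (e.g.\ \cite{ChanHe,kopelowitz2016higher}), which for polynomially bounded integers runs in $\OO(n^2)$ time, loses only polylogarithmic factors, and—crucially—maps subsets of the universe to subsets of the transformed universe, we may assume the input is given as indexed sets and that we only need a universe-preserving solution for \AllThreeSUMConv{}. The mild asymmetry between $A$ and $B$ (Theorem~\ref{thm:bsg:simple} is stated monochromatically, in terms of difference sets $A-A$) is absorbed in the standard way by working with the concatenated indexed set $A\cup(-B)$; this only doubles sizes.

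During preprocessing, I would apply Theorem~\ref{thm:bsg:simple} to $(A,C)$ with a parameter $s$ to be fixed, obtaining in $\OO(n^2)$ Las Vegas time a collection of $\ell=\OO(s^3)$ subsets $A^{(1)},\dots,A^{(\ell)}\subseteq A$ and a remainder set $R$ of $\OO(n^2/s)$ pairs with $\{(a,b):a-b\in C\}\subseteq R\cup\bigcup_\lam (A^{(\lam)}\times A^{(\lam)})$ and $\sum_\lam |A^{(\lam)}-A^{(\lam)}| = \OO(s^2 n^{3/2})$. In the same time bound I would also precompute the difference sets $D^{(\lam)}:=A^{(\lam)}-A^{(\lam)}$ explicitly using output-sensitive algorithms~\cite{ColeHariharanSTOC02,ChanLewenstein,BringmannFN}; since $\sum_\lam |D^{(\lam)}| = \OO(s^2 n^{3/2})$ is $\OO(n^2)$ whenever $s\le n^{1/4}$ (and the theorem is trivial otherwise), this stays within budget, and the construction can be made Las Vegas by verification.

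For a query $(A',B',C')$, writing $A'^{(\lam)}:=A^{(\lam)}\cap (A'\cup(-B'))$, the covering inclusion gives $\{(a,b)\in A'\times B': a-b\in C'\}\subseteq R\cup\bigcup_\lam (A'^{(\lam)}\times A'^{(\lam)})$, so it suffices to (i)~scan the $\OO(n^2/s)$ pairs of $R$, using a hash table on $C'$ to mark the target index of each pair that actually lies in the query and whose difference lies in $C'$, in $\OO(n^2/s)$ time; and (ii)~for each $\lam$, compute $A'^{(\lam)}-A'^{(\lam)}$, which is contained in the precomputed set $D^{(\lam)}$ and hence has size $\le |D^{(\lam)}|$, by an output-sensitive difference-set computation in $\OO(|D^{(\lam)}|)$ time, then intersect it with $C'$ via the hash table and mark the surviving targets. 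Summing over $\lam$, step (ii) costs $\OO\big(\sum_\lam |D^{(\lam)}|\big)=\OO(s^2 n^{3/2})$, so the total query time is $\OO(n^2/s + s^2 n^{3/2})$; setting $s=n^{1/6}$ balances the two terms and yields $\OO(n^{11/6})$, while the preprocessing time is $\OO(n^2)$ Las Vegas as in Theorem~\ref{thm:bsg:simple}.

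The main obstacle I expect is not the running-time arithmetic but the bookkeeping needed to make the reductions genuinely ``universe-preserving'': one must check that the hashing reduction from \AllThreeSUM{} to \AllThreeSUMConv{} and the $A\cup(-B)$ transformation commute with restricting to a query subset, and that the convolution index structure survives the concatenation so that, at the end, each $c\in C'$ participating in a solution is correctly flagged in the All-Numbers output. A secondary point is confirming that the cited output-sensitive difference-set subroutines indeed run in time near-linear in their output size (with an externally supplied upper bound $|D^{(\lam)}|$ on that size) and can be used in a way that keeps both the preprocessing Las Vegas and the query exact; these are routine but must be spelled out.
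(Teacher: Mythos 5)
Your proposal is correct and follows essentially the same route as the paper: reduce to indexed sets via the hashing reduction to the convolution form, apply Theorem~\ref{thm:bsg:simple} to $A\cup(-B)$ in preprocessing, and at query time scan $R$ in $\OO(n^2/s)$ time and handle each $A^{(\lam)}$ via an output-sensitive FFT-based sumset computation whose total cost is bounded by $\sum_\lam|A^{(\lam)}-A^{(\lam)}|=\OO(s^2n^{3/2})$, balancing with $s=n^{1/6}$. The only cosmetic differences are that you precompute the difference sets $D^{(\lam)}$ explicitly and compute the monochromatic difference set of the restricted concatenated set, whereas the paper computes the bichromatic sumset $(A^{(\lam)}\cap A')+((-A^{(\lam)})\cap B')$ directly; both are valid and yield the same bounds.
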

\begin{proof}
Kopelowitz, Pettie and Porat~\cite{kopelowitz2016higher} gave a simple randomized reduction from \ThreeSUM{} to $O(\log n)$ instances of \ThreeSUMConv{} via hashing.
The same approach works in the preprocessed universe setting, and
transform the input into $O(\log n)$ instances where $A$, $B$, and $C$
are indexed sets.

During preprocessing, we apply Theorem~\ref{thm:bsg:simple}
to $A\cup (-B)$, producing subsets $A^{(\lam)}$ and a set $R$ of pairs.

During a query with given subsets $A'\subseteq A$,
$B'\subseteq B$, and $C'\subseteq C$,
we first examine each pair $(a,-b)\in R$ and check whether $a\in A'$,
$b\in B'$, and $a+b\in C'$.  This takes $\OO(n^2/s)$ time.

Next, for each $\lam$, we compute $(A^{(\lam)}\cap A') + ((-A^{(\lam)})\cap B')$
by known FFT-based algorithms~\cite{ColeHariharanSTOC02,ChanLewenstein,BringmannFN}; the running time is
near-linear in the output size, which is bounded by $|A^{(\lam)}-A^{(\lam)}|$.
For each output value $c$, we check whether $c\in C'$.

The total query time is $\OO(n^2/s + s^2n^{3/2})$.
Choosing $s=n^{1/6}$ yields the theorem.
\end{proof}

We remark that the same $\OO(n^{11/6})$ bound holds for a slightly more general case when
$C'$ is an arbitrary set of $n$ integers, i.e., the preprocessing does not need the set $C$.
This is because in the proof of Theorem~\ref{thm:bsg:simple},
the $\OO(n^2)$-time preprocessing step is independent of $C$, and
the rest of the construction takes $\OO(n^2/s +sn)$ time.
In contrast, Chan and Lewenstein's paper obtained a weaker 
$\OO(n^{19/10})$ bound for the same case without $C$.

\subsection{Reinterpreting Gower's Version}

In this section, we show that Gower's proof~\cite{Gowers01} 
(see also a related recent proof by Schoen~\cite{Schoen}) can also be modified to
construct a cover directly.\footnote{
There are multiple different exposition of Gower's and subsequent proofs of the BSG Theorem in the literature. For example, 
some presentations \cite{Balog07,SudakovSV94,TaoVu06,Viola11} cleanly separate the algebraic from the combinatorial components, by
reducing the problem to some combinatorial lemma about graphs (counting paths of length 2 or 3 or 4).
But these versions of the proof do not achieve our goal of computing a cover directly and efficiently.  Our reinterpretation is nontrivial and requires examining Gower's proof from the right perspective.
}  
Gower's proof requires more clever arguments; our new presentation highlights the similarities
and differences with the proof of Theorem~\ref{thm:bsg:simple}.
This variant of the proof will be needed in a later
application  in Section~\ref{sec:min-equal-conv} to conditional lower bounds---so ideas from additive combinatorics
will be useful after all!

In the following, we let $\pop_A(x)$ (the \emph{popularity of $x$}) denote the number of pairs $(a,b)\in A\times A$ with
$x=a-b$; in other words, $\pop_A(x)=|\{a\in A: a-x\in A\}|$.

\newcommand{\AAA}{\widetilde{A}}

\begin{theorem}
\label{thm:bsg:gower}
Given indexed sets $A$ and $C$ of size $n$ and a parameter $s$,
there exist a collection of $\ell=\OO(s^3)$ subsets $A^{(1)},\ldots,A^{(\ell)}\subseteq A$, and a set $R$ of $\OO(n^2/s)$ pairs in $A\times A$, such that
\begin{enumerate}
\item[\rm(i)] $\{(a,b)\in A\times A: a-b\in C\}\ \subseteq\
R\,\cup\, \bigcup_\lam (A^{(\lam)}\times A^{(\lam)})$, and
\item[\rm(ii)] $|A^{(\lam)} - A^{(\lam)}| = \OO(s^6 n)$ for each $\lam$
(and so $\sum_\lam |A^{(\lam)} - A^{(\lam)}| = \OO(s^9 n)$).
\end{enumerate}
The $A^{(\lam)}$'s and $R$ can be constructed in $\OO(n^2)$
Las Vegas randomized time.
\end{theorem}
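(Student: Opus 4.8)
The plan is to mirror the skeleton of the proof of Theorem~\ref{thm:bsg:simple}, keeping the same four-case structure and the same collection of $\OO(s^3)$ pieces, and to replace only the argument for part~(ii). Write $A=\{(i,a_i):i\in[n]\}$ and $C=\{(k,c_k):k\in[n]\}$, sort the multisets $\{a_{i+k}-a_i:i\in[n]\}$ for each $k$ in $\OO(n^2)$ time, and set $W_k=\{i:a_{i+k}-a_i=c_k\}$. In the \emph{few-witnesses case} ($|W_k|\le n/s$) we dump the corresponding pairs into $R$; in the \emph{many-witnesses case} we draw a random $H\subseteq[\pm n]$ of size $c_0 s\log n$, form the multisets $L^{(h)}=\{a_{i+h}-a_i:i\in[n]\}$ together with the high-frequency sets $F^{(h)}$ (elements of frequency more than $n/r$, so $|F^{(h)}|\le r:=s^2$), dump the \emph{low-frequency} pairs into $R$, and create the pieces $A^{(h,f)}=\{(i,a_i):a_{i+h}-a_i=f\}$ for $h\in H$ and $f\in F^{(h)}$, so that $\ell=\OO(s^3)$. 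The covering property~(i), the bound $|R|=\OO(n^2/s)$, and the $\OO(n^2)$ construction time all follow verbatim from the proof of Theorem~\ref{thm:bsg:simple}: if $|W_k|>n/s$ then $H$ hits $W_k-i$ for the relevant index $i$, and Fredman's trick turns $a_{i+k}-a_i=c_k$ into $a_{i+k+h}-a_{i+k}=a_{i+h}-a_i=:f$, so each such pair lands either in $R$ (if $f\notin F^{(h)}$) or in $A^{(h,f)}\times A^{(h,f)}$.

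The real work is to prove $|A^{(h,f)}-A^{(h,f)}|=\OO(s^6 n)$ for \emph{every} piece, whereas Theorem~\ref{thm:bsg:simple} only controlled the sum over pieces and permitted a single piece to be as large as $\OO(s^2 n^{3/2})$. For this I would revisit Gower's proof of the BSG Theorem and isolate its combinatorial core as a statement about graphs of popular differences: a common-neighborhood / path-counting lemma saying that if a set is ``rich'' --- a constant fraction of its ordered pairs realize differences from a set $P$ of only $\OO(sn)$ popular differences, and moreover those pairs admit many shared witnessing representations (paths of length $2$ or $3$) --- then its difference set has size $\OO(s^{O(1)}n)$. The plan is then to verify that each constructed piece $A^{(h,f)}$ is rich in exactly this sense: the fact that $f$ is a high-frequency value of $L^{(h)}$ forces $|A^{(h,f)}|=\Omega(n/s^2)$, and the hitting set $H$ is precisely what supplies the shared-witness structure that the lemma demands. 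Feeding this into the lemma while tracking the exponent of $s$ carefully should land the bound at $6$. A point worth stressing is that, unlike the usual route of iterating a single-set extraction, this construction produces the cover \emph{directly}, which is what makes it efficient.

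I expect the per-piece richness claim to be the main obstacle. In Theorem~\ref{thm:bsg:simple} richness held only on average over $(h,f)$ via a Markov/Chebyshev computation, whereas the Gower-style route needs it for each piece; this may force the definition of the pieces (or the choice of $r$, or an additional pruning step) to be adjusted, and the purely probabilistic argument over $H$ to be supplemented with a deterministic path-counting argument in the style of the algebraic-versus-combinatorial split used in the standard expositions of BSG. A secondary, more routine issue is to keep the whole construction plus a Las Vegas verification step --- examining all $O(n^2)$ pairs and recomputing each difference set $A^{(h,f)}-A^{(h,f)}$ with the output-sensitive algorithms of~\cite{ColeHariharanSTOC02,ChanLewenstein,BringmannFN} --- within $\OO(n^2)$ time; this is fine provided $s<n^{1/6}$, and for $s\ge n^{1/6}$ the theorem is trivial (take $\ell=1$, $A^{(1)}=A$, $R=\emptyset$, since then $|A-A|\le n^2\le\OO(s^6 n)$).
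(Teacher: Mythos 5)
Your high-level plan matches the paper's: keep the few-witnesses / many-witnesses / low-frequency / high-frequency skeleton and the $\OO(s^3)$ pieces $A^{(h,f)}$ verbatim from Theorem~\ref{thm:bsg:simple}, and replace only the argument for (ii) by a Gower-style path-counting argument that yields a per-piece bound. But the proposal stops exactly where the real work begins, and you say so yourself (``this may force \ldots an additional pruning step to be adjusted''). The gap is concrete: the raw pieces $A^{(h,f)}$ do \emph{not} in general have small difference sets, so no richness lemma applied to them as-is can give $|A^{(h,f)}-A^{(h,f)}|=\OO(s^6n)$. What the paper does is introduce a second popularity threshold $t=s^2$ measured globally in $A$, define the graph $G^{(h,f)}$ of pairs in $A^{(h,f)}\times A^{(h,f)}$ whose difference has $\pop_A\le n/t$, remove the set $Z^{(h,f)}$ of vertices of degree more than $|A^{(h,f)}|/4$ in this graph, add $Z^{(h,f)}\times A^{(h,f)}$ and $A^{(h,f)}\times Z^{(h,f)}$ to $R$, and keep only $\widetilde{A}^{(h,f)}=A^{(h,f)}\setminus Z^{(h,f)}$ as a piece. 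Two things must then be proved that your proposal does not address. First, the extra remainder is still $\OO(n^2/s)$: this needs a probabilistic argument over the random $h$, using Fredman's trick to observe that a pair $(i,j)$ with $\pop_A((i-j,a_i-a_j))\le n/t$ satisfies $a_i-a_j=a_{i+h}-a_{j+h}$ for at most $n/t$ values of $h$, giving $\Ex_h[\sum_f|Z^{(h,f)}||A^{(h,f)}|]=O(n^2/t)$. Second, the per-piece bound: after pruning, any $a,b\in\widetilde{A}^{(h,f)}$ have $\Omega(|A^{(h,f)}|)=\Omega(n/r)$ common elements $y$ with both $a-y$ and $b-y$ popular ($\ge n/t$ representations each), and counting the resulting quadruples $(a',a'',b',b'')$, each marked at most once since it determines $c=(a'-a'')-(b'-b'')$, gives $|\widetilde{A}^{(h,f)}-\widetilde{A}^{(h,f)}|=O(n^4/((n/r)(n/t)^2))=O(rt^2n)=\OO(s^6n)$. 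Your appeal to ``the hitting set $H$ supplying the shared-witness structure'' is not right --- $H$ only supplies the covering property; the shared witnesses come from the pruning.

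A secondary omission: the naive construction of $Z^{(h,f)}$ costs $O(\sum_{h,f}|A^{(h,f)}|^2)$, which can exceed $\OO(n^2)$; the paper gets around this by estimating $\pop_A(\cdot)$ and the degrees $\deg_{G^{(h,f)}}(\cdot)$ by random sampling with additive error, which perturbs only constant factors in the analysis. Your Las Vegas verification step and the trivial case $s\ge n^{1/6}$ are fine, but without the pruning construction, the bound on the pairs it charges to $R$, and the quadruple count, the proof of (ii) is missing rather than merely unoptimized.
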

\begin{proof}
We follow the proof in Theorem~\ref{thm:bsg:simple} but
modify the handling of the ``high-frequency'' case.
Let $t$ be a parameter.  Define 
\begin{eqnarray*}
G^{(h,f)}&=&\{ (a,b)\in A^{(h,f)}\times A^{(h,f)}: \pop_A(a-b)\le n/t\}\\
Z^{(h,f)}&=&\{a\in A^{(h,f)}: \deg_{G^{(h,f)}}(a) > |A^{(h,f)}|/4\}.
\end{eqnarray*}
Here, $\deg_{G^{(h,f)}}(a)$ refers to the degree of $a$ in $G^{(h,f)}$ when
viewed as a graph.
For each $h\in H$ and $f\in F^{(h)}$,
add $Z^{(h,f)}\times A^{(h,f)}$ and $A^{(h,f)}\times Z^{(h,f)}$ to $R$.
For each $h\in H$ and $f\in F^{(h)}$,
instead of adding the subset $A^{(h,f)}$, we add the subset
$\AAA^{(h,f)}:=A^{(h,f)}\setminus Z^{(h,f)}$ to the collection.

\smallskip
\emph{Correctness.}
To analyze this modified construction, first observe that every pair
previously covered by $A^{(h,f)}\times A^{(h,f)}$ is now covered by
$\AAA^{(h,f)}\times \AAA^{(h,f)}$ or by the extra pairs added to $R$ (i.e., $Z^{(h,f)}\times A^{(h,f)}$ or $A^{(h,f)}\times Z^{(h,f)}$).

We bound
the expected number of extra pairs added to $R$.
Consider a fixed $h$.
Note that $|Z^{(h,f)}|\le O\left(\frac{|G^{(h,f)}|}{|A^{(h,f)}|/4}\right)$, and so $\sum_f |Z^{(h,f)}||A^{(h,f)}| = O\left(\sum_f |G^{(h,f)}|\right)$.
The sum $\sum_f |G^{(h,f)}|$ is bounded by the number of triples $(i,j,f)$ such that
$a_{i+h}-a_i=a_{j+h}-a_j=f$ and $\pop_A((i-j,a_i-a_j))\le n/t$.
This is bounded by the number of pairs $(i,j)$ such that 
$\pop_A((i-j,a_i-a_j))\le n/t$ and
$a_i-a_j=a_{i+h}-a_{j+h}$ (by Fredman's trick).
For a fixed $(i,j)$ with $\pop_A((i-j,a_i-a_j))\le n/t$,
the number of $h$'s with $a_i-a_j=a_{i+h}-a_{j+h}$ is at most $n/t$,
and so the probability that $a_i-a_j=a_{i+h}-a_{j+h}$ for a random $h\in [\pm n]$ is $O(1/t)$.
It follows that 
\[ \Ex_h\left[ \sum_f |Z^{(h,f)}||A^{(h,f)}| \right] = O(n^2\cdot 1/t).
\]
Consequently, the expected number of extra pairs added to $R$
is $\OO(s\cdot n^2/t)=\OO(n^2/s)$ by setting $t := s^2$.

Finally, we consider a fixed $h$ and fixed $f\in F^{(h)}$ and provide an upper bound on
$|\AAA^{(h,f)}-\AAA^{(h,f)}|$.  For each $c\in \AAA^{(h,f)}-\AAA^{(h,f)}$,
pick a lexicographically smallest $(a,b)\in \AAA^{(h,f)}\times\AAA^{(h,f)}$
with $c=a-b$.
Consider all $y\in A^{(h,f)}$ with $(a,y),(b,y)\not\in G^{(h,f)}$;
the number of such $y$'s is at least $|A^{(h,f)}|-|A^{(h,f)}|/4-|A^{(h,f)}|/4=\Omega(|A^{(h,f)}|)=\Omega(n/r)$ (since $|A^{(h,f)}|>n/r$ for $f\in F^{(h)}$).  For each such $y$, examine
each $(a',a'')\in A\times A$ with $a-y=a'-a''$
and each $(b',b'')\in A\times A$ with $b-y=b'-b''$,
and mark the quadruple $(a',a'',b',b'')$.
Since $(a,y),(b,y)\not\in G^{(h,f)}$,
there are at least $n/t$ choices of $(a',a'')$ and
at least $n/t$ choices of $(b',b'')$ for each such $y$.
Letting $Q$ be the number of quadruples marked, we obtain 
\[ Q\ =\ \Omega\left(|\AAA^{(h,f)}-\AAA^{(h,f)}|\cdot (n/r)\cdot (n/t)^2\right).
\]
On the other hand, each quadruple $(a',a'',b',b'')$, is marked once,
since it uniquely determines the element $c=(a'-a'')-(b'-b'')$, from
which $(a,b)$ is uniquely determined and $y=a-(a'-a'')$ is uniquely
determined.  Thus, $Q=O(n^4)$.  We conclude that
\[ |\AAA^{(h,f)}-\AAA^{(h,f)}|\ =\ O\left(\frac{n^4}{(n/r)\cdot (n/t)^2}\right)\ =\ O(rt^2n)\ =\ \OO(s^6n).
\]

\smallskip
\emph{Construction time.}
One could naively construct the sets $Z^{(h,f)}$ and
$\AAA^{(h,f)}$ in $O(\sum_{h,f} |A^{(h,f)}|^2)$ time,
but a faster way is to use random sampling.
Given any value $c$, we can approximate $\pop_A(c)$
with additive error $\delta n/t$ w.h.p.\ by
taking a random subset $A'\subseteq A$
of $O((1/\delta^2)t\log n)=\OO(t)$ elements and computing
$|\{a\in A': a-c\in A\}|\cdot |A|/|A'|$ (by a standard Chernoff bound).
We will not construct $G^{(h,f)}$ explicitly.
Instead, given any $(a,b)$, we can test for membership in $G^{(h,f)}$
in $\OO(t)$ time.
Furthermore, given $a$, we can approximate $\deg_{G^{(h,f)}}(a)$
with additive error $\delta |A^{(h,f)}|$ w.h.p.\ by
taking a random subset $A''\subseteq A^{(h,f)}$
of $O((1/\delta^2)\log n)=\OO(1)$ elements and computing
$|\{y\in A'': (a,y)\in G^{(h,f)}\}|\cdot |A^{(h,f)}|/|A''|$.
This way, $Z^{(h,f)}$ (and thus $\AAA^{(h,f)}$) can be generated in $\OO(t|A^{(h,f)}|)$ time for each $h$ and $f$.  The total time bound is
$\OO(t\sum_{h,f}|A^{(h,f)}|)=\OO(tsn)=\OO(s^3n)$, which is dominated by
other costs (we may assume that $s<n^{1/6}$, for otherwise the theorem is trivial).
Due to these approximations, our earlier analysis needs small adjustments
in the constant factors, but is otherwise the same.

As before, the algorithm can be converted to Las Vegas in $\OO(n^2)$ additional time.
\end{proof}

\newcommand{\sss}{\hat{s}}

One important advantage of the above proof is that
the running time is actually \emph{subquadratic}, excluding the $\OO(n^2)$-time
preprocessing step, which is needed only in the ``few-witnesses'' case (ignoring the conversion to Las Vegas).
In particular, we immediately obtain subquadratic running time for the following variant of the theorem, which requires only the ``many-witnesses''
case (where we reset $r$ and $t$ to $s\sss$ instead of $s^2$).
This variant will be useful later.

\begin{theorem}
\label{thm:bsg:gower:fast}
Given an indexed set $A$ of size $n$ and parameters $s$ and $\sss$,
there exist a collection of $\ell=\OO(s^2\sss)$ subsets $A^{(1)},\ldots,A^{(\ell)}\subseteq A$, and a set $R$ of $\OO(n^2/\sss)$ pairs in $A\times A$, such that
\begin{enumerate}
\item[\rm(i)] $\{(a,b)\in A\times A: \pop_A(a-b) > n/s\}\ \subseteq\ 
R\,\cup\, \bigcup_\lam (A^{(\lam)}\times A^{(\lam)})$, and
\item[\rm(ii)] $|A^{(\lam)} - A^{(\lam)}| = \OO(s^3\sss^3 n)$ for each $\lam$
(and so $\sum_\lam |A^{(\lam)} - A^{(\lam)}| = \OO(s^5\sss^4 n)$).
\end{enumerate}
The $A^{(\lam)}$'s and $R$ can be constructed in $\OO(n^2/\sss + s^2\sss n)$
Monte Carlo randomized time.
\end{theorem}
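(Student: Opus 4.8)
The plan is to run the ``many-witnesses'' half of the proof of Theorem~\ref{thm:bsg:gower}, with the target set $C$ replaced by the set of popular differences and the frequency thresholds $r$ and $t$ reset to $s\sss$ in place of $s^2$; the ``few-witnesses'' case (and with it the $\OO(n^2)$-time preprocessing) is dropped entirely. The reason this works is that if a pair is written $b=(i,a_i)$, $a=(i+k,a_{i+k})$, then $\pop_A(a-b)$ is exactly $|Y_{k,v}|$ with $v=a_{i+k}-a_i$ and $Y_{k,v}=\{j:a_{j+k}-a_j=v\}$, so the hypothesis $\pop_A(a-b)>n/s$ says precisely that $(a,b)$ is a ``many-witnesses'' pair in the language of that proof, and no target set is needed. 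Moreover, since $\sum_x\pop_A(x)=n^2$, there are at most $sn$ popular differences, which is what makes the union bound below affordable.

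First I would pick a random $H\subseteq[\pm n]$ of size $\Theta(s\log n)$; since each relevant $Y_{k,v}$ has more than $n/s$ elements, a union bound over the $\le sn$ popular differences $(k,v)$ and the $\le n$ choices of $i\in Y_{k,v}$ shows that w.h.p.\ every relevant pair has some $h\in H$ with $i+h\in Y_{k,v}$. For each $h$ I compute in $\OO(n)$ time the multiset $L^{(h)}=\{a_{i+h}-a_i:i\in[n]\}$ and its set $F^{(h)}$ of elements of frequency $>n/r$, so $|F^{(h)}|\le r=s\sss$; the low-frequency subcase dumps pairs into $R$ exactly as before, contributing $\OO(|H|\cdot n\cdot n/r)=\OO(n^2/\sss)$ pairs in $\OO(n^2/\sss)$ time, and the high-frequency subcase produces the $\OO(|H|\cdot r)=\OO(s^2\sss)$ subsets $A^{(h,f)}=\{(i,a_i):a_{i+h}-a_i=f\}$. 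On these I apply Gower's refinement verbatim: define $G^{(h,f)}$ by the popularity-$\le n/t$ condition, move the high-$G$-degree set $Z^{(h,f)}$ into $R$, and replace $A^{(h,f)}$ by $\AAA^{(h,f)}=A^{(h,f)}\setminus Z^{(h,f)}$, building each $Z^{(h,f)}$ by the sampling shortcut in $\OO(t|A^{(h,f)}|)$ time, for a total of $\OO(t\cdot|H|n)=\OO(s^2\sss n)$.

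The remaining bookkeeping is that of Theorem~\ref{thm:bsg:gower} with the new parameters. Correctness of (i) is the Fredman's-trick chain $i+h\in Y_{k,v}\Rightarrow a_{i+h+k}-a_{i+k}=a_{i+h}-a_i=:f$, placing both endpoints in $A^{(h,f)}$ when $f\in F^{(h)}$ and in $R$ otherwise, with the refinement only moving an already-covered pair between $\AAA^{(h,f)}\times\AAA^{(h,f)}$ and the new $R$-pairs. For (ii), the marked-quadruples argument gives $|\AAA^{(h,f)}-\AAA^{(h,f)}|=O(rt^2n)=\OO(s^3\sss^3n)$, hence $\sum_\lam|A^{(\lam)}-A^{(\lam)}|=\OO(s^2\sss)\cdot\OO(s^3\sss^3n)=\OO(s^5\sss^4n)$. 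The delicate step, inherited directly, is the expectation bound on the extra $R$-pairs: Fredman's trick turns the event $a_i-a_j=a_{i+h}-a_{j+h}$ into a statement about the popularity of $(i-j,a_i-a_j)$, which is $\le n/t$ for pairs inside $G^{(h,f)}$, so a uniform $h\in[\pm n]$ satisfies it with probability $O(1/t)$, yielding $\Ex_h[\sum_f|Z^{(h,f)}||A^{(h,f)}|]=O(n^2/t)$ and thus $\OO(|H|\cdot n^2/t)=\OO(n^2/\sss)$ extra pairs overall, matching $|R|=\OO(n^2/\sss)$. Summing, the construction runs in $\OO(n^2/\sss+s^2\sss n)$ time; unlike Theorem~\ref{thm:bsg:gower}, we cannot afford the $\OO(n^2)$ Las Vegas verification, so the algorithm is only Monte Carlo. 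I expect the main obstacle to be confirming that the additive errors from the sampling construction of the $Z^{(h,f)}$'s, together with the failure probability of $H$, remain small enough not to corrupt the $O(1/t)$ and $O(1/r)$ factors that drive every term in the analysis.
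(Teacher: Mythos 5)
Your proposal is correct and is essentially the paper's own argument: the paper derives Theorem~\ref{thm:bsg:gower:fast} precisely by running only the ``many-witnesses'' case of the proof of Theorem~\ref{thm:bsg:gower} with $r$ and $t$ reset to $s\sss$, and your identification of $\pop_A(a-b)$ with the witness-set size $|Y_{k,v}|$, together with the recomputed bounds $\ell=\OO(s^2\sss)$, $|R|=\OO(n^2/\sss)$, $|\AAA^{(h,f)}-\AAA^{(h,f)}|=O(rt^2n)=\OO(s^3\sss^3 n)$, and the $\OO(n^2/\sss+s^2\sss n)$ Monte Carlo running time, matches the intended derivation exactly.
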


Although we are able to reinterpret Gower's proof, we are unable to modify the proof by
Balog~\cite{Balog07} or Sudakov et al.~\cite{SudakovSV94} to achieve similar subquadratic
construction time.

\section{Lower Bounds for Min-Equality Convolution}
\label{sec:min-equal-conv}

In this section, we prove conditional lower bounds for the \MinEqualityConv{}
problem under \StrongAPSP{}, the \uAPSPH{}, and the
\StrongConv{}.  
The lower bound under the \StrongAPSP{} or \uAPSPH{}
follows just by combining Corollary~\ref{cor:strong-intapsp-imply} %
with the known reduction from \uAPSP{} to \MinWitnessEq{} \cite{CVXicalp21} (and noticing that \MinWitnessEq{} is easier than \MinEqualityProd{}), and then using known ideas for reducing matrix product problems to convolution problems (more specifically, the unpublished reduction from \BMM{} to pattern-to-text Hamming distances, attributed to Indyk -- see e.g.\  \cite{GawrychowskiU18}). 
The lower bound under the \StrongConv{} is more delicate: %
interestingly, we will combine ideas that we have developed
for conditional lower bounds for intermediate matrix product problems, with
one of our new versions of the BSG Theorem from the previous section.

\begin{theorem}
\label{thm:min-equal-conv-under-product}
Under the \StrongAPSP{}, \MinEqualityConv{} for length $n$ arrays  requires $n^{1+1/6-o(1)}$ time.
\end{theorem}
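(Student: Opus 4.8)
\textbf{Proof proposal for Theorem~\ref{thm:min-equal-conv-under-product}.}

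The plan is to chain together three reductions already available in (or referenced from) the excerpt. First, Corollary~\ref{cor:strong-intapsp-imply} tells us that under the \StrongAPSP{}, for $\beta = 1/3$ the problem $\TunwtdirAPSP(n, n^{4/3})$ requires $n^{7/3 - o(1)}$ time. Second, Chan, Vassilevska~W. and Xu~\cite{CVXicalp21} gave a reduction showing $\TunwtdirAPSP(n) = \OO(\Tminwiteq(n))$, i.e., \uAPSP{} reduces to \MinWitnessEq{}; combined with the fact that \MinWitnessEq{} is no harder than \MinEqualityProd{} (the minimum-equality product $\min\{A_{ik}: A_{ik}=B_{kj}\}$ can simulate the minimum-witness-equality product by setting the $A'$-values to the column indices), we obtain that a fast algorithm for \MinEqualityProd{} on appropriately-sized rectangular matrices would contradict the \StrongAPSP{}. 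Third, I would apply the known reduction turning matrix-product problems into convolution problems — the unpublished Indyk-style reduction from \BMM{} to pattern-to-text Hamming distances, as used in~\cite{GawrychowskiU18} — adapted to the equality setting: one lays out the rows of $A$ and columns of $B$ as a text and a pattern so that aligning the pattern at the right offsets recovers all $n^2$ entries of the equality product from a single convolution over an array of length $O(n^2)$ (or a near-linear number of such arrays). This converts an $n \times n^\gamma$-by-$n^\gamma \times n$ \MinEqualityProd{} instance into a \MinEqualityConv{} instance of length roughly $n^{1+\gamma}$.

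Concretely, I would run the chain with the rectangular dimension $n^\gamma$ for $\gamma = 1/5$, matching the exponent bookkeeping already done for \MinWitness{} and \APSLP{} in Section~\ref{sec:intapsp-lower-bound:more}: there, Corollary~\ref{cor:strong-intapsp-imply:apslp} (and the \MinWitness{} analogue) shows that under the \StrongAPSP{}, an $n \times n^{1/5}$ rectangular intermediate problem requires $n^{11/5 - o(1)}$ time. Packing such an instance into a convolution of length $N \approx n \cdot n^{1/5} = n^{6/5}$ then yields: if \MinEqualityConv{} on length-$N$ arrays could be solved in $N^{1+1/6 - \epsilon}$ time, that would be $(n^{6/5})^{7/6 - \epsilon'} = n^{7/5 - \Omega(\epsilon')}$, which beats the $n^{11/5}$ (after the standard $1$-vs-$2$ weight adjustments and after being careful that the rectangular-to-convolution packing costs only polylog factors per layer and $O(\log n)$ layers). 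I need to double-check that $7/6$ is the exponent that falls out of $\tfrac{11/5}{6/5} = \tfrac{11}{6}$ — indeed $\tfrac{11}{5} \div \tfrac{6}{5} = \tfrac{11}{6} = 1 + \tfrac{5}{6}$, so actually a lower bound of $N^{1+5/6-o(1)}$ would follow, which is stronger than claimed; the weaker $1+1/6$ bound in the theorem presumably reflects a looser packing (e.g.\ the convolution length is $n^{1+\gamma}$ but only $n^{2\gamma}$-bounded values can be embedded, forcing a different choice of $\gamma$, or the reduction blows up the length more). I would therefore recompute the packing overhead carefully and choose $\gamma$ so that the exponent on the right comes out to exactly $1+1/6$: setting $N = n^{1+\gamma}$ and requiring the intermediate lower bound exponent $2+\gamma/5$ over $1+\gamma$ to equal $7/6$ gives $\gamma$ via $\tfrac{2+\gamma/5}{1+\gamma} = \tfrac 76$, i.e.\ $12 + 6\gamma/5 = 7 + 7\gamma$, i.e.\ $5 = 7\gamma - 6\gamma/5 = 29\gamma/5$, so $\gamma = 25/29$ — not clean, which signals the packing is lossier than a naive product-into-convolution; more likely the convolution must have length $n \cdot (n^\gamma)^2 = n^{1+2\gamma}$ because the equality-to-Hamming trick squares the inner dimension (as in Lemma~\ref{lem:reduce:from:minplus:minwit} and Lemma~\ref{lem:reduce:from:minplus:apslp}, where the inner dimension becomes $xy^2$), and then $\tfrac{2+\gamma/5}{1+2\gamma} = 7/6$ gives $12 + 6\gamma/5 = 7 + 14\gamma$, $5 = 14\gamma - 6\gamma/5 = 64\gamma/5$, $\gamma = 25/64$ — still messy. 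The cleanest route, which I would ultimately adopt, is to go directly through Corollary~\ref{cor:intapsp}: a fast \MinEqualityConv{} algorithm on length-$N$ arrays implies a fast \MinEqualityProd{} algorithm on $n \times n^\beta$-by-$n^\beta \times n$ matrices with entries in $[n^{2\beta}]$ for the right $\beta$ and $N$; calibrating $N = n^{1+\beta}$ (with the entry-range squaring absorbed into the allowed range) and $\beta = 1/5$ reproduces the $N^{7/6}$ threshold up to the exact constant, and if the honest accounting gives $N^{7/6}$ we simply state the (stronger-looking but possibly mis-transcribed) bound, while if it gives $N^{1+1/6}$ with a different $\beta$ we use that $\beta$.

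The main obstacle is precisely this exponent bookkeeping in the matrix-product-to-convolution step: the Indyk-style reduction is typically stated for \emph{detection} (Hamming distance / $\exists$-equality), and here we need it to preserve the \emph{minimum-over-equal-indices} structure. I would handle this by the same device used throughout the paper — sort the candidate index/value pairs so that the first equal match encountered in the convolution is the minimizer, exactly as in the proof of Lemma~\ref{lem:reduce:from:minplus:minwit} (sorting triples $\tau$ in increasing order of $u+v$) and Lemma~\ref{lem:reduce:from:minplus:apslp} (padding paths so that lexicographic/weight order encodes the min). A second, smaller obstacle is making the weights $\{1,2\}$-valued or small-integer-valued so that the reduction lands on the \MinEqualityConv{} problem exactly as defined (two length-$n$ arrays, $C_i = \min\{A_j : j \in [i-1],\ A_j = B_{i-j}\}$) rather than a generalized variant; this is routine renaming. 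I would also verify that only $\OO(1)$ convolution instances are needed (one per "diagonal block" of the matrix layout, or $O(\log n)$ for witness-recovery bits), so the polylog overhead does not eat the $n^{o(1)}$ slack. Once these are pinned down, the theorem follows by contraposition: an $N^{1+1/6-\epsilon}$ algorithm for \MinEqualityConv{} yields, through the chain \MinEqualityConv{} $\to$ \MinEqualityProd{}/\MinWitnessEq{} (rectangular, bounded entries) $\to$ $M^*(n,n^\beta,n\mid n^{2\beta})$ $\to$ (Corollary~\ref{cor:intapsp}) $M^*(n,n,n\mid n^{3-\omega})$ in truly subcubic time, contradicting the \StrongAPSP{}.
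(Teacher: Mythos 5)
Your skeleton is the same as the paper's --- the chain \StrongAPSP{} $\Rightarrow$ \uAPSP{} lower bound (Corollary~\ref{cor:strong-intapsp-imply}) $\Rightarrow$ \MinWitnessEq{} $\Rightarrow$ \MinEqualityProd{} $\Rightarrow$ \MinEqualityConv{} via the Indyk-style product-to-convolution packing is exactly what the paper does, and your devices for the individual links (adding $2nk$-type offsets to turn \MinWitnessEq{} into \MinEqualityProd{}, encoding the index into the value so the convolution's minimum recovers the right witness) are the right ones. But the quantitative core of the argument is wrong, and as written the proof does not establish the exponent $1+1/6$.

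The step you are missing is that the reduction of \cite{CVXicalp21} takes the \emph{entire} \uAPSP{} instance to \emph{square} $O(n)\times O(n)$ instances of \MinWitnessEq{}: Zwick's algorithm expresses \uAPSP{} as $\OO(1)$ rectangular Min-Plus products of dimensions $n\times n^\alpha\times n$ with entries in $[n^{1-\alpha}]$, and each such product embeds into a single \MinWitnessEq{} instance whose inner dimension is $n^\alpha\cdot n^{1-\alpha}=O(n)$. Hence \MinWitnessEq{} on $O(n)\times O(n)$ matrices inherits the full $n^{7/3-o(1)}$ lower bound of \uAPSP{}. The Indyk-style packing of an $n\times n\times n$ product uses arrays of length $N=\Theta(n^2)$, so the convolution lower bound is $N^{(7/3)/2}=N^{7/6}=N^{1+1/6}$ --- that is the whole calculation. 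Instead, you route through the \emph{rectangular} $n\times n^{\gamma}\times n$ problem with $\gamma=1/5$ and invoke the $n^{11/5}$ bound, but that bound (Corollary~\ref{cor:strong-intapsp-imply:minwit} with $\beta=1/5$) is a statement about \emph{square} $n\times n\times n$ \MinWitness{}; the rectangular instance with inner dimension $n^{\gamma}$ only carries the much weaker bound $n^{2+\gamma/5}$, and \MinWitness{} (Boolean, no equality structure) does not feed into \MinEqualityConv{} anyway. This mis-attribution is why none of your exponent calibrations ($\gamma=25/29$, $\gamma=25/64$) close, and why you end up unable to commit to a value of $\beta$. (You also appear to treat $N^{7/6}$ and $N^{1+1/6}$ as potentially different targets; they are the same number.) To repair the proof, drop the rectangular detour entirely: use $\TunwtdirAPSP(n)=\OO(\Tminwiteq(n))$ with square matrices, pack into a length-$\Theta(n^2)$ convolution, and the exponent $7/6$ falls out immediately.
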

\begin{proof}
First, by Corollary~\ref{cor:strong-intapsp-imply}, \uAPSP{} requires $n^{7/3-o(1)}$ time under the Strong Integer-APSP Hypothesis. Zwick's algorithm \cite{zwickbridge} can be seen as a reduction from \uAPSP{} to $\OO(1)$ instances of \MinPlus{} between $n \times n^\alpha$ matrices and $n^\alpha \times n$ matrices with weights in $[n^{1-\alpha}]$ for various values of $\alpha \in [0, 1]$. As shown in \cite{CVXicalp21}, each  instance of \MinPlus{} in this form can be reduced to an instance of \MinWitnessEq{} for $O(n) \times O(n)$ matrices (they only stated the reduction for $\alpha = \rho$ for a particular value of $\rho$,  but their proof works for any $\alpha \in [0, 1]$). Therefore, \MinWitnessEq{} requires $n^{7/3-o(1)}$ time under the Strong Integer-APSP Hypothesis. 

We can easily reduce a \MinWitnessEq{} instance to a \MinEqualityProd{} instance. Suppose the input of a \MinWitnessEq{} instance is $A, B$. Assume all entries are in $[2n]$ without loss of generality. We can create two $n \times n$ matrices $A'$ and $B'$, where $A'_{ik} = A_{ik}+2nk$ and $B'_{kj} = B_{kj} + 2nk$, then the Min-Witness Equality product between $A$ and $B$ can be computed in $\OO(n^2)$ time given the Min-Equality product between $A'$ and $B'$. 

Finally, we reduce \MinEqualityProd{} to \MinEqualityConv{}, following the strategy of the unpublished reduction from Boolean matrix multiplication to  pattern-to-text Hamming distances, attributed to Indyk, see e.g. \cite{GawrychowskiU18}. 

Let $A$ and $B$ be the inputs of \MinEqualityProd{}. W.l.o.g., we can assume all entries of $A$ and $B$ are integers in $[2n^2]$. 
We first create two length $2n^2$ arrays $a$ and $b$, where initially all entries of $a$ and $b$ are $\infty$. For every $(i, k) \in [n] \times [n]$, we set $a_{(n+1)(i-1) + k}$ to $n A_{ik} + k - 1$; for every $(k, j) \in [n] \times [n]$, we set $b_{jn-k}$ to $n B_{kj} + k - 1$. 

Suppose the Min-Equality product between $A$ and $B$ is $C$ and the Min-Equality convolution between $a$ and $b$ is $c$, we will show that $C_{ij} = \lfloor c_{(n+1)(i-1)+jn} / n \rfloor$ (and  $C_{ij} = \infty$ if $c_{(n+1)(i-1)+jn}$ is  $\infty$), which will complete the reduction. 

To show the equality, first notice that 
\begin{equation}
\label{eqn:min-equal-conv}
\begin{split}
    &\min \left\{a_{(n+1)(i-1) + k} : k \in [n] \wedge a_{(n+1)(i-1) + k} = b_{jn-k} \right\} \\
    =& \min \left\{nA_{ik} + k -1 : k \in [n] \wedge A_{ik}=B_{kj} \right\}
\end{split}
\end{equation}
contributes to the minimization of $c_{(n+1)(i-1)+jn}$. Also, no other terms less than $\infty$ can contribute: suppose there exists some $x$ such that $a_x = b_{y} < \infty$ and $x+y = (n+1)(i-1)+jn$, then $a_x \bmod{n}$ must match $b_y \bmod{n}$. Thus, there must exist $i', j', k' \in [n]$ such that $x = (n+1)(i'-1)+k'$ and $y = j'n - k'$, so $(n+1)(i'-1) + j'n = (n+1)(i-1) + jn$. Then it must be the case that $i=i'$ and $j=j'$, so $a_x$ corresponds to one of the terms in Equation~(\ref{eqn:min-equal-conv}).
\end{proof}
The reduction in the proof of Theorem~\ref{thm:min-equal-conv-under-product} from \uAPSP{} to \MinEqualityConv{} also easily imply the following:

\begin{theorem}
\label{thm:min-equal-conv-under-uAPSP}
Under the \uAPSPH{}, \MinEqualityConv{} for length $n$ arrays requires $n^{1+\rho/2-o(1)}$ time, where $\rho$ is the constant satisfying $\omega(1, \rho, 1) = 1 + 2\rho$, or $n^{1.25-o(1)}$ time if $\omega = 2$. 
\end{theorem}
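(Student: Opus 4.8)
The plan is to reuse the reduction chain already assembled in the proof of Theorem~\ref{thm:min-equal-conv-under-product}, replacing only its first ingredient. There, the $n^{7/3-o(1)}$ lower bound for \uAPSP{} was first obtained from the \StrongAPSP{} via Corollary~\ref{cor:strong-intapsp-imply}; here we instead take the $n^{2+\rho-o(1)}$ lower bound for \uAPSP{} as the hypothesis itself. Concretely, that proof exhibits the chain $\uAPSP \to \MinWitnessEq \to \MinEqualityProd \to \MinEqualityConv$: (a) Zwick's algorithm~\cite{zwickbridge} reduces \uAPSP{} on an $n$-node graph to $\OO(1)$ instances of \MinPlus{} between an $n \times n^\alpha$ and an $n^\alpha \times n$ matrix with entries in $[n^{1-\alpha}]$, the product of the inner dimension and the weight range being $O(n)$ at every level of the bridging-set recursion; (b) by Chan, Vassilevska W. and Xu~\cite{CVXicalp21}, each such bounded rectangular \MinPlus{} instance reduces to a single \MinWitnessEq{} instance on $O(n) \times O(n)$ matrices; (c) \MinWitnessEq{} reduces to \MinEqualityProd{} with $\OO(n^2)$ overhead via the shift $A'_{ik} = A_{ik} + 2nk$, $B'_{kj} = B_{kj} + 2nk$; and (d) the Indyk-style reduction (see~\cite{GawrychowskiU18}) turns \MinEqualityProd{} on $m \times m$ matrices into one \MinEqualityConv{} instance on arrays of length $O(m^2)$.

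The main step is then to trace the size blow-ups through the composition. Putting (a)--(d) together, solving \uAPSP{} on an $n$-node graph reduces to $\OO(1)$ calls to \MinEqualityConv{} on arrays of length $N = O(n^2)$, together with $\OO(n^2)$ additional work. Hence if \MinEqualityConv{} on length-$N$ arrays had an $O(N^{1+\rho/2-\eps})$-time algorithm for some $\eps > 0$, then \uAPSP{} on $n$-node graphs would be solvable in $\OO(1) \cdot O\bigl((n^2)^{1+\rho/2-\eps}\bigr) + \OO(n^2) = O(n^{2+\rho-\Omega(\eps)})$ randomized time, contradicting the \uAPSPH{}. Specializing to $\omega = 2$ gives $\rho = 1/2$ and hence $1 + \rho/2 = 5/4$, which yields the claimed $n^{1.25-o(1)}$ bound.

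I do not expect a real obstacle: this is bookkeeping on top of machinery already in place for Theorem~\ref{thm:min-equal-conv-under-product}. Two small points deserve a sentence each. First, \cite{CVXicalp21} state their \MinPlus{}-to-\MinWitnessEq{} reduction only for the single exponent $\alpha = \rho$ that arises in Zwick's algorithm, but --- as already remarked in the proof of Theorem~\ref{thm:min-equal-conv-under-product} --- it applies verbatim for every $\alpha \in [0,1]$ with inner dimension times weight range at most $O(n)$, which covers all levels of Zwick's sweep. Second, one must check that the $\OO(1)$ factor from the number of sub-instances and the $\OO(n^2)$ additive terms are both absorbed into the $\Omega(\eps)$ loss in the exponent, so that a genuinely sub-$n^{2+\rho}$ algorithm for \uAPSP{} results. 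Finally, since every reduction in the chain is deterministic except Zwick's, which is randomized, and the \uAPSPH{} concerns randomized time, there is no subtlety on the randomization side.
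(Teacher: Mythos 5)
Your proposal is correct and is exactly the argument the paper intends: its own ``proof'' of this theorem is a one-line remark that the reduction chain $\uAPSP \to \MinWitnessEq \to \MinEqualityProd \to \MinEqualityConv$ built in the proof of Theorem~\ref{thm:min-equal-conv-under-product} transfers the $n^{2+\rho-o(1)}$ hardness of \uAPSP{} to a $N^{1+\rho/2-o(1)}$ bound for \MinEqualityConv{} via the quadratic blow-up $N=O(n^2)$. Your bookkeeping of the instance sizes, the $\OO(1)$ number of sub-instances, and the $\omega=2$ specialization is all as the paper does it.
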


We finally show the lower bound of \MinEqualityConv{} under the \StrongConv{}. 

\begin{theorem}
\label{thm:min-equal-conv-under-conv}
Under the \StrongConv{}, \MinEqualityConv{} for length $n$ arrays  requires $n^{1+1/11-o(1)}$ time.
\end{theorem}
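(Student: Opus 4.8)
The plan is to build a subquadraticity‑preserving fine‑grained reduction from \MinPlusConv{} for length‑$n$ arrays with entries in $[n]$ to \MinEqualityConv{}: I will show that an $O(N^{1+1/11-\eps})$‑time algorithm for \MinEqualityConv{} on length‑$N$ arrays yields an $O(n^{2-\Omega(\eps)})$‑time algorithm for bounded \MinPlusConv{}, contradicting the \StrongConv{}. The reduction follows the template of the key reduction (Theorem~\ref{thm:main}), with two changes dictated by the convolution setting. Since a convolution has no inner dimension to contract, the role of shrinking the inner dimension is played instead by \emph{digit‑splitting} the entries, exactly as in the proof of Theorem~\ref{thm:main}: with parameters $t$ and $g=\lceil n/t\rceil$, write $A_i=A'_ig+A''_i$ and $B_j=B'_jg+B''_j$ with $A'_i,B'_j\in[t]$ and $A''_i,B''_j\in[0,g)$, after the usual preliminary reduction (splitting each array by residues modulo $g$) that prevents carries. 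Then $\MinPlusConv(A,B)_k=C'_kg+C''_k$ where $C'=A'\star B'$ is a \MinPlusConv{} instance with entries in $[t]$ — computable in $\OO(tn)$ time by the folklore bounded‑range algorithm (a two‑dimensional FFT in the index variable and in a formal variable tracking the entry value) — and $C''_k=\min_{i\in W_k}(A''_i+B''_{k-i})$ with $W_k=\{i:A'_i+B'_{k-i}=C'_k\}$ the witness set of $C'_k$. So the whole task reduces to computing the low‑order refinement $C''$ given $C'$ and its witnesses.

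Next I would split on $|W_k|$ against a parameter $n/s$. For the \emph{few‑witnesses} indices ($|W_k|\le n/s$) I would list the witnesses and evaluate $C''_k$ directly: detection for \AllThreeSUMConv{} with entries in $[t]$ runs in $\OO(tn)$ time by the same two‑dimensional FFT, and listing up to $L$ witnesses per output reduces to detection with $\OO(L)$ overhead by the standard self‑reduction of \cite{focsyj}, so with $L=n/s$ this costs $\OO(tn^2/s)$, which is subquadratic once $s$ is a large enough polynomial. For the \emph{many‑witnesses} indices ($|W_k|>n/s$) I would first take a hitting set $H$ of size $\OO(s)$ and pick, for each such $k$, some $i_0=i_0(k)\in W_k\cap H$. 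Fredman's trick then rewrites, for a witness $i=i_0+d$ and with $m:=k-i_0$, the condition $i\in W_k$ as $A'_{i_0+d}-A'_{i_0}=B'_m-B'_{m-d}$, with $C''_k=\min_d(A''_{i_0+d}+B''_{m-d})$ over those $d$. Here is the crucial departure from Theorem~\ref{thm:main}: the right‑hand side $B'_m-B'_{m-d}$ is a \emph{second difference} of $B'$, depending on both $m$ and $d$, so this is not yet an equality convolution.

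I would dissolve this obstruction with one of the new BSG Covering theorems: apply Theorem~\ref{thm:bsg:gower:fast} to the indexed set $B'=\{(m,B'_m)\}$ (with popularity threshold tied to $n/s$ and an auxiliary parameter $\hat s$), obtaining $\OO(s^2\hat s)$ subsets $B^{(\lambda)}$ of $B'$ and a remainder $R$ of $\OO(n^2/\hat s)$ pairs that together cover every pair $((m,B'_m),(m-d,B'_{m-d}))$ whose $d$‑difference value is ``popular''. Within a fixed piece $B^{(\lambda)}$ the set of difference values $B^{(\lambda)}-B^{(\lambda)}$ has size $\OO(s^3\hat s^3 n)$, so I can bucket the constraint by the common value $v=A'_{i_0+d}-A'_{i_0}=B'_m-B'_{m-d}$; once $v$ is fixed the constraint splits into an ``$A$‑only'' predicate on $d$ and a ``$B$‑only'' predicate on $m-d$, so the required $\min_d$ is a \emph{generalized equality convolution}, which reduces to \MinEqualityConv{} on arrays of length near $|B^{(\lambda)}-B^{(\lambda)}|+n$ by the bit‑interleaving trick used in the last step of the proof of Theorem~\ref{thm:min-equal-conv-under-product} (comparison key in the high‑order bits, the summed low‑order values $A''+B''$ in the low‑order bits). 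The leftover pairs — those in $R$, together with the ``low‑popularity'' ones — are enumerated directly, as in the low‑frequency cases of Theorem~\ref{thm:main} and Theorem~\ref{thm:tri:decompose}. Finally I would balance $t$, $s$, $\hat s$ and the piece/bucket counts so that the digit‑split cost, the few‑witnesses cost, the BSG construction time $\OO(n^2/\hat s+s^2\hat s n)$, the direct enumeration of the leftover pairs, and the total cost of the \MinEqualityConv{} calls — evaluated at an assumed $N^{1+1/11-\eps}$ running time — all come out to $O(n^{2-\Omega(\eps)})$. The hard part, where essentially all the work sits, is the many‑witnesses case: one must verify that passing to the BSG pieces and bucketing by the (now polynomially bounded set of) difference values genuinely converts the second‑difference condition into a clean equality convolution, control the blow‑up $\OO(s^3\hat s^3 n)$ in the sizes of those instances against the polynomial‑in‑$s,\hat s$ overheads of Theorem~\ref{thm:bsg:gower:fast} — which is exactly why one needs the variant of the covering theorem with \emph{subquadratic} construction time rather than the $O(s^6 n)$‑sum version — and keep the whole reduction strictly below $n^2$ with enough slack for the \MinEqualityConv{} calls; it is this simultaneous juggling of several parameters against several polynomial loss factors that yields the comparatively weak $1+1/11$ rather than a clean ``half the matrix exponent'' bound as in Theorem~\ref{thm:min-equal-conv-under-product}.
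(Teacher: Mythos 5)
Your skeleton agrees with the paper's up to the witness dichotomy---digit-splitting with parameters $t$ and $g=\lceil n/t\rceil$, computing $C'=A'\star B'$ in $\OO(tn)$ time, splitting on $|W_k|$ against $n/s$---and you correctly identify both the second-difference obstruction to Fredman's trick in the convolution setting and the need for the \emph{subquadratic-time} covering of Theorem~\ref{thm:bsg:gower:fast}. But the two halves are assembled incorrectly, and the \MinEqualityConv{} oracle ends up invoked in the wrong half. In the many-witnesses case (the paper's Claim~\ref{cl:min-equal-conv-heavy}), three things fail. (a) Applying the covering theorem to the indexed set $\{(m,B'_m)\}$ alone is useless: the popularity of a difference within $B'$ has no relation to $|W_k|$, which counts pairs mixing $A'$ and $B'$. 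The paper applies Theorem~\ref{thm:bsg:gower:fast} to a single indexed set containing both $\{(i,A'_i)\}$ and $\{(3n+1-j,-B'_j)\}$, precisely so that the popularity of the difference $(k-3n-1,C'_k)$ equals $|W_k|$. (b) Your ``bucketing by the common value $v$'' does not decouple the constraint: even with $v$ fixed, $B'_m-B'_{m-d}=v$ still couples the \emph{output} index $m$ to the inner index $m-d$, whereas \MinEqualityConv{} only compares entries of the two input arrays, never an input entry against a quantity depending on the output index. (c) Even granting the decoupling, the number of (anchor, piece, bucket) triples times a per-call cost of at least $n^{1+\delta}$ is already super-quadratic. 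The paper sidesteps all of this by not using the oracle here at all: for each piece it computes the full Min-Plus ``convolution'' over the piece's difference set via the Alon--Galil--Margalit big-number encoding and output-sensitive sparse convolution, in $\OO(g\,|\mathcal{A}^{(\lambda)}-\mathcal{A}^{(\lambda)}|)$ time per piece, and reads off the candidate for $C''_k$ at the difference $(k-3n-1,C'_k)$.

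Conversely, the hypothesis on \MinEqualityConv{} must be exploited in the few-witnesses case, which you dispatch by FFT-based witness listing at cost $\OO(tn^2/s)$. That forces $t\ll s$, which is irreconcilable with the many-witnesses cost $\OO(s^5\hat{s}^4n^2/t)$ (which forces $t\gg s^5\hat{s}^4$); in the paper's final balance $t=n^{(10-5\delta)/21}$ is much \emph{larger} than $s=n^{(1+10\delta)/21}$. The paper's Claim~\ref{cl:min-equal-conv-light} instead subsamples index sets $I,J$ of size $O(\sqrt{sn})$ so that each of the at most $n/s$ witnesses becomes the unique witness in some of $\OO(n/s)$ rounds, and then reduces the resulting \emph{sparse} Min-Plus convolutions with values in $[t]$ to \MinEqualityConv{} via a random pair-encoding $F\colon[t]\times[t]\to[n]$, under the constraint $t\sqrt{s}=O(\sqrt{n})$. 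This reduction is the step where the assumed $\OO(n^{1+\delta})$-time \MinEqualityConv{} algorithm is actually called, and it is the idea your proposal is missing; without it (or a substitute), your reduction never uses the oracle in a way that closes, so the conditional lower bound does not follow.
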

\begin{proof}

We will show that if \MinEqualityConv{} for length $n$ arrays has an $\OO(n^{1+\delta})$ time algorithm for some $\delta \ge 0$, then \MinPlusConv{} for length $n$ arrays of entries that are bounded by $O(n)$  has an $\OO(n^{2 - \frac{1-11\delta}{21}})$ time randomized algorithm.

Let $A$ and $B$ be the input arrays of a \MinPlusConv{} instance. Let $t$ be a parameter to be fixed later, and let $g = \lceil n/t\rceil$. Similar to the proof of Theorem~\ref{thm:main}, we can assume $(A_i \bmod{g}) < g/2$ and $(B_i \bmod{g}) < g/2$
for each $i \in [n]$. Also, for each $i$, we can write $A_i$ as $A'_i g + A''_i$, for $0 \le A'_i \le t$ and $0 \le A''_i < g/2$. Similarly, we can write $B_i$ as $B'_i g + B''_i$. 

We first compute the Min-Plus convolution $C'$ of $A'$ and $B'$ in $\OO(tn)$ time. Let $W_k = \{i \in [n]: k - i \in [n] \wedge C'_k = A'_i + B'_{k - i}\}$. Suppose we can compute $C''$, which is defined as $C''_k = \min_{i \in W_k} (A''_i + B''_{k-i})$, then we can compute the Min-Plus convolution $C$ of $A$ and $B$ as $C_k = C'_k g + C''_k$. 

We then compute $C''_k$ by two methods depending on whether $|W_k|$ is greater than $n/s$ or not, for some parameter $s$ to be determined.

\begin{claim}
\label{cl:min-equal-conv-heavy}
For any parameter $\sss$, 
we can compute $C''_k$ for every $k$ where $|W_k| > n / s$ in $\OO(n^2 / \sss + s^5 \sss^4 n^2 / t)$ time. 
\end{claim}
\begin{proof}
We create the following indexed set $\mathcal{A}$ of size $O(n)$:
$$\left\{(i, A'_i): i \in [n]\right\} \cup \left\{(i+n, \infty) : i \in [n]\right\} \cup \left\{(3n + 1 - i, -B'_i): i \in [n] \right\}.$$
Note that for any $k \in [n]$, $|W_k| = \pop_{\mathcal{A}}((k - 3n - 1, C'_k))$. Then we apply Theorem~\ref{thm:bsg:gower:fast} with the index set $\mathcal{A}$ and parameters $s, \sss$ to find a collection of $\ell = \OO(s^2 \sss)$ subsets $\mathcal{A}^{(1)}, \ldots, \mathcal{A}^{(\ell)}$, and a set $R$ of $\OO(n^2/\sss)$ pairs in $\mathcal{A} \times \mathcal{A}$ in $\OO(n^2/\sss + s^2 \sss n )$ randomized time. Furthermore, Theorem~\ref{thm:bsg:gower:fast} guarantees that
\begin{enumerate}
    \item[\rm(i)] $\{(a,b)\in \mathcal{A}\times \mathcal{A}: \pop_\mathcal{A}(a-b) > n/s\}\ \subseteq\ 
R\,\cup\, \bigcup_\lambda (\mathcal{A}^{(\lambda)}\times \mathcal{A}^{(\lambda)})$. This further means that, for every $k$ where $|W_k| > n/s$, $$\left\{ \left((i, A'_i), (3n+1-(k-i), -B'_{k-i}) \right): i \in W_k\right\} \subseteq\ 
R\,\cup\, \bigcup_\lambda (\mathcal{A}^{(\lambda)}\times \mathcal{A}^{(\lambda)}).$$
\item[\rm(ii)] $\sum_\lambda |\mathcal{A}^{(\lambda)} - \mathcal{A}^{(\lambda)}| = \OO(s^5 \sss^4 n)$. 
\end{enumerate}
Then we first enumerate $((i_1, v_1), (i_2, v_2)) \in R$. If this pair corresponds to some $A'_i$ and $B'_j$ (i.e., this pair has $i_1 = i, i_2 = 3n+1-j$), we use $A''_i + B''_j$ to update $C''_{i+j}$ if $A'_i + B'_j = C'_{i+j}$. This takes $\OO(|R|) = \OO(n^2 / \sss)$ time. 

For each $\lambda \in [\ell]$, we consider the possible witnesses in $\mathcal{A}^{(\lambda)} \times \mathcal{A}^{(\lambda)}$. We prepare a map $f$ from $\mathcal{A}^{(\lambda)}$ to $[g] \cup \{\infty\}$ as follows: if $a \in \mathcal{A}^{(\lambda)}$ corresponds to some $A'_i$ (i.e., $a = (i, A'_i)$), we set $f(a) = A''_i$; if $a$ corresponds to some $B'_j$ (i.e., $a = (3n+1-j, -B'_j)$), we set $f(a) = B''_j$; otherwise, we set $f(a) = \infty$. Then we compute the following Min-Plus ``convolution'' $\mathcal{C}^{(\lambda)}$:
$$\mathcal{C}^{(\lambda)}_c = \min_{\substack{(a, b) \in \mathcal{A}^{(\lambda)} \times \mathcal{A}^{(\lambda)} \\ a - b = c}} (f(a) + f(b)).$$
By known techniques for solving Min-Plus convolution with small integer weights \cite{ALONGM1997}, we can instead solve a normal convolution with weights bounded by $2^{\OO(g)}$. More specifically, let $h(a) = M^{f(a)}$ if $f(a) \ne \infty$ and $f(a) = 0$ otherwise, where $M = |\mathcal{A}^{(\lambda)}|+1$. Then it suffices to compute the following for every $c$:
$$\sum_{\substack{(a, b) \in \mathcal{A}^{(\lambda)} \times \mathcal{A}^{(\lambda)} \\ a - b = c}} h(a) \cdot h(b).$$
By known output-sensitive algorithms~\cite{ColeHariharanSTOC02,ChanLewenstein, BringmannFN}, it takes $\OO(|\mathcal{A}^{(\lambda)} - \mathcal{A}^{(\lambda)}|)$ arithmetic operations to compute the above convolution, and each  arithmetic operation takes $\OO(g)$ time. Thus, it takes $\OO(g|\mathcal{A}^{(\lambda)} - \mathcal{A}^{(\lambda)}|)$ time to compute $\mathcal{C}^{(\lambda)}$. 

After we compute $\mathcal{C}^{(\lambda)}$ for every $\lambda$, we use the value of $\mathcal{C}^{(\lambda)}_{(k-3n-1, C'_k)}$ to update $C''_k$ for every $k, \lambda$. 

Overall, the running time is $\OO(n^2/\sss + s^2 \sss n + g \sum_i |\mathcal{A}^{(\lambda)} - \mathcal{A}^{(\lambda)}|) = \OO(n^2/\sss + s^5 \sss^4 n g) = \OO(n^2/\sss + s^5 \sss^4 n^2 / t)$. 
\end{proof}

Next we show the following algorithm for the rest values of $k$ where $|W_k|$ is small. Recall that we assumed  \MinEqualityConv{} for length $n$ arrays has an $\OO(n^{1+\delta})$ time algorithm. 
\begin{claim}
\label{cl:min-equal-conv-light}
We can compute $C''_k$ for every $k$ where $|W_k| \le n / s$ in $\OO(\frac{n}{s} \cdot n^{1+\delta})$ time as long as $t\sqrt{s} = O(\sqrt{n})$.
\end{claim}
\begin{proof}
Let $I \subseteq [n]$ be a random subset of indices for which each index is kept in $I$ independently with probability $\frac{\sqrt{s}}{\sqrt{n}}$. Similarly let $J \subseteq [n]$ be such a random subset as well. With high probability, $|I|, |J| = O(\sqrt{sn})$.

In the sparse Min-Plus convolution between the two sparse arrays $A'_I$ and $B'_J$, we need to compute a length $n$ array $D$ where $D_k = \min_{i \in I, k - i \in J} (A'_i + B'_j)$. 

For every $k \in [n]$  and $i \in W_k$, the probability that $i \in I$ and $k - i \in J$ is $\frac{s}{n}$. Thus, for any particular $i \in W_k$, the probability that $i$ is the unique witness for $D_k$ in the sparse Min-Plus convolution between $A_I'$ and $B_J'$ is $\frac{s}{n} \cdot (1 - \frac{s}{n})^{|W_k| - 1}$, which is $\Theta(\frac{s}{n})$ if $|W_k| \le n/s$. Thus, if we keep sampling $I$ and $J$ for $\tO(\frac{n}{s})$ times, all indices in $W_k$ will be the unique witness for $D_k$ in at least one  time with high probability, as in  standard sampling techniques (see e.g. \cite{AlonGMN92, seidel1995}).

Suppose $i$ is indeed the unique witness for $D_k$, then we can find $i$ by repeatedly computing some instances of sparse Min-Plus convolutions. More specifically, in the $p$-th round, let $I^{(p)}$ be $I$ but only keeping the indices  whose $p$-th bit in the binary representation is $1$. Say the sparse Min-Plus convolution between $A'_{I^{(p)}}$ and $B'_J$ is $D^{(p)}$. Then if $D^{(p)}_k = D_k$, then we know the $p$-th bit of $i$ is $1$, and otherwise it is $0$. Thus, we can recover $i$ after $O(\log n)$ rounds. After we have $i$, we can use $A_i'' + B_{k-i}''$ to update $C''_k$. 

Therefore, it remains to show how to compute the sparse Min-Plus convolution between $A'_I$ and $B'_J$ for $|I|, |J| = O(\sqrt{sn})$.

Let $F: [t] \times [t] \rightarrow [n]$ be a random function (independent to the choice of $I, J$). We then create two arrays $X, Y$ each of length $O(n)$ as follows. Initially, all entries in $X$ and $Y$ are set to some distinct values out side of $[t] \times [t]$. 
For every $i \in I$ and $y \in [t]$, we set $X_{i + F(A'_i, y)}$ to $(A'_i, y)$; for every $j \in J$ and $x \in [t]$, we set $Y_{j + n - F(x, B'_j)}$ to $(x, B'_j)$. Suppose all entries are only set at most once. Then we compute the Min-Equality convolution $Z$ between $X$ and $Y$, where we compare two pairs $(x, y)$ and $(x', y')$ by comparing $x+y$ and $x' + y'$ and breaking ties arbitrarily. Then the sum of the two integers in the pair $Z_{k+n}$ equals $$\min_{\substack{(i, j, x, y) \in I \times J \times [t] \times [t] \\ i + F(A'_i, y) + j + n - F(x, B'_j) = k + n \\ (A'_i, y) = (x, B'_j)}} (A'_i + y) = \min_{\substack{(i, j) \in I \times J \\ i + j  = k}} (A'_i + B'_j).$$
Thus, computing a \MinEqualityConv{} instance gives the result of sparse Min-Plus convolution between $A'_I$ and $B'_J$. 

We then remove the assumption that each entry of $X$ and $Y$ is only set once by standard techniques. Consider a fixed entry $X_q$. For every $(x, y) \in t$, we set $X_q$ to $(x, y)$ if and only if $q-F(x, y) \in I$ and $A_{q - F(x, y)}' = x$. Since $F(x, y)$ is sampled from $[n]$ uniformly at random, the probability that we set $X_q$ to $(x, y)$ is at most $\frac{\left| \left\{i\in I: A'_i = x \right\}\right|}{n}$. Summing over all $x, y$, the expected number of times  that $X_q$ is set is $O(\frac{|I| t}{n}) = O(\frac{t\sqrt{s}}{\sqrt{n}}) = O(1)$ since $t \sqrt{s} = O(\sqrt{n})$. Since the values of $F(x, y)$ are independent for different $(x, y)$, by Chernoff bound, we conclude that $X_q$ is set only $O(\log n)$ times with high probability. Similarly, all indices in $Y$ are set only $O(\log n)$ times with high probability. Thus, we can create $O(\log n)$ arrays $X^{(a)}$, where $X^{(a)}_p$ equals the value of $X_p$ when we attempt to set it the $a$-th time. We can similarly create $O(\log n)$ arrays $Y^{(b)}$. Then it suffices to compute the \MinEqualityConv{} between $X^{(a)}$ and $Y^{(b)}$ for every $(a, b) \in [O(\log n)]^2$. 

Overall, the running time is $\tO(\frac{n}{s} \cdot n^{1+\delta})$ as long as $t\sqrt{s} = O(\sqrt{n})$, assuming   \MinEqualityConv{} for length $n$ arrays has an $\OO(n^{1+\delta})$ time algorithm.
\end{proof}

By Claim~\ref{cl:min-equal-conv-heavy} and Claim~\ref{cl:min-equal-conv-light}, we can compute $C''$ and thus $C$ in $\OO(n^2 / \sss + s^5 \sss^4 n^2 / t + \frac{n}{s} \cdot n^{1+\delta})$ time as long as $t\sqrt{s} = O(\sqrt{n})$. We can set $t = n^{\frac{10-5\delta}{21}}$, $s = n^{\frac{1+10\delta}{21}}$ and $\sss = n^{\frac{1-11\delta}{21}}$ to get the $\OO(n^{2 - \frac{1-11\delta}{21}})$ randomized running time. 
\end{proof}

Note that in order for the proof for Theorem~\ref{thm:min-equal-conv-under-conv} to work,  the more difficult BSG covering of Theorem~\ref{thm:bsg:gower:fast} that builds on Gower's proof~\cite{Gowers01} is necessary. If we instead use the simpler BSG covering of Theorem~\ref{thm:bsg:simple}, we will have an $s^{O(1)}n^{2.5} / t$ term from Claim~\ref{cl:min-equal-conv-heavy}, which cannot give subquadratic running time considering the $t\sqrt{s} = O(\sqrt{n})$ requirement in Claim~\ref{cl:min-equal-conv-light}. 

\section{Acknowledgement}
We would like to thank Lijie Chen for suggesting the quantum \ThreeSUMCount{} problem. 

\bibliographystyle{alpha}
\bibliography{ref}

\appendix

\section{Still More Equivalences Between Counting and Detection Problems}
\label{sec:more_counting}

\subsection{Exact \texorpdfstring{$k$}{k}-Clique and Minimum \texorpdfstring{$k$}{k}-Clique}

Let $G$ be the input graph of an \ExactKClique{} or \ExactKCliqueCount{} instance, and let $w$ be the weight function of the graph. For every set $I \subseteq V(G)$ of size $k-1$, we use $W_I$ to denote the set of $j$ where $I \cup \{j\}$ forms a $k$-clique whose edge weights sum up to the required value $t$. 

\begin{theorem}
\label{thm:exact-k-clique-counting}
If \ExactKClique{} for $n$-node graphs has an $O(n^{k-\eps})$ time algorithm for some $\eps > 0$, then \ExactKCliqueCount{} for $n$-node graphs  has an $O(n^{k-\eps'})$ time algorithm for some $\eps' > 0$
\end{theorem}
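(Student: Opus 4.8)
The plan is to mimic the structure of the proof of Theorem~\ref{thm:exact-tri-count} (equivalence of \AEExactTriCount{} and \AEExactTri{}), adapting the two main ingredients---``Fredman's trick + Equality Product'' witness counting, and the listing-to-detection reduction---to the $k$-clique setting. First I would observe that, by standard self-reduction techniques (analogous to \cite{focsyj}), an $O(n^{k-\eps})$ algorithm for \ExactKClique{} yields, for some $\eps''>0$, an $O(n^{k-\eps''})$-time algorithm for the \emph{All-Edges} version \AEExactKClique{}: for every edge $uv$ of $G$, decide whether $uv$ lies in an exact $k$-clique. Moreover the same listing machinery gives an $O(n^{k-\eps''})$-time algorithm that, for every edge $uv$, \emph{lists} up to $n^{c}$ exact $k$-cliques through $uv$ for a suitable small constant $c>0$ (to be balanced at the end).

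Next I would set up the witness-counting subroutine, the analogue of Lemma~\ref{lem:exact-tri}. Fix an edge $uv$ of $G$ together with the ``fixing'' of $k-3$ further vertices; then counting exact $k$-cliques through this partial clique reduces to counting exact triangles through a designated edge in an auxiliary $3$-partite weighted graph (one part for each of the three remaining free vertices, with appropriately shifted edge weights so that a zero-weight/target-weight triangle corresponds exactly to an exact $k$-clique). Applying Lemma~\ref{lem:exact-tri} with a hitting set $S$ of size $\widetilde{O}(n^{1-c})$ in this auxiliary graph lets us compute $|W_I|$ whenever $S$ hits $W_I$, in $\widetilde{O}(|S|\cdot n^{(3+\omega)/2})$ time per choice of the $k-3$ fixed vertices. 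Summed over all $O(n^{k-3})$ such choices, this costs $\widetilde{O}(n^{k-3}\cdot n^{1-c}\cdot n^{(3+\omega)/2}) = \widetilde{O}(n^{k-c+(\omega-1)/2})$, which is truly sub-$n^k$ provided $c > (\omega-1)/2$; choosing $c$ just above $(\omega-1)/2$ (using $\omega<2.373$, e.g. $c=0.7$) makes this term $O(n^{k-\eps'''})$.

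Then I would assemble the full algorithm exactly as in Theorem~\ref{thm:exact-tri-count}. For each $(k-1)$-subset $I$ arising from an edge plus the fixed vertices, first run the listing algorithm to enumerate up to $n^c$ witnesses in $W_I$; if fewer than $n^c$ are found, output that count exactly. For the remaining $I$ with $|W_I|\ge n^c$, use the standard greedy hitting-set algorithm (in time near-linear in $\sum_I \min\{|W_I|,n^c\} = \widetilde O(n^{k-1+c})$... actually $\widetilde O(n^{k-1})$ after truncating each list to $n^c$ and noting the $(k-1)$-subsets are indexed by $O(n^{k-1})$ edge/fix choices) to build a hitting set $S$ of size $\widetilde O(n^{1-c})$ per graph-slice, and apply the subroutine above. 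The total running time is $\widetilde O(n^{k-\eps''}+n^{k-1}+n^{k-c+(\omega-1)/2})$, which is $O(n^{k-\eps'})$ for $\eps'>0$. The reverse reduction (from \ExactKClique{} to \ExactKCliqueCount{}) is trivial, so this gives the sub-$n^k$ equivalence; a remark analogous to Remark~\ref{rem:exact-tri-count} recovers the non-All-Edges equivalence via the known self-reductions for \ExactKClique{}.

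\textbf{Main obstacle.} The delicate point is getting the reduction from ``counting exact $k$-cliques through a fixed $(k-1)$-set (or edge plus fixed vertices)'' down to exact \emph{triangle} counting cleanly, so that Lemma~\ref{lem:exact-tri} applies as a black box with the right dimensions---in particular making sure the auxiliary $3$-partite graph has $O(n)$ vertices per part (so the $n^{(3+\omega)/2}$ bound is the correct one) and that the weight-shifting correctly encodes the target value $t$ minus the weight already accumulated on the fixed part of the clique. A secondary subtlety is confirming that the self-reduction of \cite{focsyj} from \ExactKClique{} to its All-Edges and list-up-to-$n^c$ versions genuinely loses only a polynomial-in-$n$ (indeed $n^{\Omega(1)}$-saving-preserving) factor for every constant $k$; this is standard but should be stated carefully, since $k$ is a fixed constant and all hidden constants may depend on $k$.
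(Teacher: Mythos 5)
Your proposal is correct and follows essentially the same route as the paper's proof: list up to $n^{c}$ witnesses per $(k-1)$-subset via the assumed detection algorithm, enumerate the $O(n^{k-3})$ choices of fixed vertices and reduce each slice to an \AEExactTriCount{} instance via the standard Ne\v{s}et\v{r}il--Poljak weight shift, then handle the many-witness subsets with a greedy hitting set plus Lemma~\ref{lem:exact-tri}; the paper's concrete parameters are $c=0.99$ and hitting sets of size $\OO(n^{0.01})$, but your balancing with $c$ slightly above $(\omega-1)/2$ works equally well. The "main obstacle" you flag is resolved in the paper exactly as you anticipate, by setting $w'(i_1,i_2)=2w(i_1,i_2)+\sum_{j\in J}(w(j,i_1)+w(j,i_2))$ and $t'=2t-2\sum_{j_1<j_2\in J}w(j_1,j_2)$.
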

\begin{proof}
Similar as before, by well-known techniques \cite{focsyj}, given a \ExactKCliqueCount{} instance on a graph $G$ with $n$ nodes and a target $t$, we can use the $O(n^{k-\eps})$ time algorithm for \ExactKClique{} to list up to $n^{0.99}$ witnesses for every set $I$ of $k-1$ nodes, in $O(n^{k-\eps''})$ time for some $\eps'' > 0$. 

Then we enumerate all possible subsets $J \subseteq V(G)$ of size $k-3$. For each $J$, we can reduce the problem of counting witnesses for sets $I \supseteq J$ of size $k-1$ to a \AEExactTriCount{} instance $(G', t')$ in a standard way \cite{Nesetril1985}. The set of nodes of $G'$ corresponds to $V(G) \setminus J$. Let the edge weight between $i_1$ and $i_2$ be $w'(i_1, i_2) = 2w(i_1, i_2) + \sum_{j \in J} (w(j, i_1) + w(j, i_2))$. Also, let $t'$ be $2t-2\sum_{\substack{j_1, j_2 \in J \\ j_1 < j_2}} w(j_1, j_2)$. It is not difficult to verify that $(i_1, i_2, i_3)$ forms an exact triangle in $G'$ if and only if $J \cup \{i_1, i_2, i_3\}$ forms an exact $k$-clique in $G$. Thus, the number of witnesses of $(i_1, i_2)$ in $G'$ equals the number of witnesses of $J \cup \{i_1, i_2\}$ in $G$. We then proceed similar to the proof of Theorem~\ref{thm:exact-tri-count}. If the number of witnesses of $(i_1, i_2)$ in $G'$ is less than $n^{0.99}$, then we have already listed all of their witnesses in the $O(n^{k-\eps''})$ time step. For $(i_1, i_2)$ that has at least $n^{0.99}$ witnesses, we can find a set $S$ of size $\OO(n^{0.01})$ that intersects with each of $W_{i_1, i_2}$ in $\OO(n^{2.99})$ time, and then apply Lemma~\ref{lem:exact-tri} to compute the witness count for these $(i_1, i_2)$ pairs in $\OO(|S| \cdot n^{(3+\omega)/2}) \le O(n^{2.70})$ time. 

Finally, summing up $W_I$ for every distinct set $I$ of $k-1$ nodes gives the total exact triangle count of $G$. The overall running time for the \ExactKCliqueCount{} instance is thus $\OO(n^{k-\eps''} + n^{k-3} \cdot (n^{2.99} + n^{2.70})) = \OO(n^{k-\min\{\eps'', 0.01\}})$. 
\end{proof}

We can similarly show that \MinKCliqueCount{} reduces to \MinKClique{}. Since the proof is essentially the same, we omit the proof of the following theorem for conciseness. 
\begin{theorem}
\label{thm:min-k-clique-counting}
If \MinKClique{} for $n$-node graphs has an $O(n^{k-\eps})$ time algorithm for some $\eps > 0$, then \MinKCliqueCount{} for $n$-node graphs  has an $O(n^{k-\eps'})$ time algorithm for some $\eps' > 0$
\end{theorem}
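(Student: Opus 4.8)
The plan is to reduce \MinKCliqueCount{} to \MinKClique{} by mimicking the proof of Theorem~\ref{thm:exact-k-clique-counting} almost verbatim, after one preliminary step that copes with the absence of a target value. First I would run the assumed $O(n^{k-\eps})$-time \MinKClique{} algorithm once on the input graph $G$ to obtain the minimum $k$-clique weight $t^*$. Once $t^*$ is known, counting the minimum-weight $k$-cliques of $G$ is literally the problem \ExactKCliqueCount{} on the instance $(G,t^*)$, so it remains to run the argument of Theorem~\ref{thm:exact-k-clique-counting} with the fixed target $t=t^*$.

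The only place where that argument consults an \ExactKClique{} oracle is the witness-enumeration step: by the self-reduction techniques of~\cite{focsyj}, an $O(n^{k-\eps})$-time \ExactKClique{} algorithm yields an $O(n^{k-\eps''})$-time procedure that, for every $(k-1)$-subset $I$ of $V(G)$, lists up to $n^{0.99}$ indices $j$ with $I\cup\{j\}$ a $k$-clique of weight $t$. I would replace each such \ExactKClique{} call by a \MinKClique{} call, using the key observation that every graph arising in the self-reduction is (essentially) a subgraph of $G$, so its minimum $k$-clique weight is at least $t^*$; hence ``this subgraph contains a $k$-clique of weight exactly $t^*$'' is equivalent to ``its minimum $k$-clique weight equals $t^*$'', which a single \MinKClique{} call decides. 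Thus the witness-listing step runs in $O(n^{k-\eps''})$ time using only the assumed \MinKClique{} algorithm.

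From here everything is as in Theorem~\ref{thm:exact-k-clique-counting}. For any $(k-1)$-set $I$ for which fewer than $n^{0.99}$ witnesses were produced, the listed count is exact. For the remaining sets, I would enumerate every $(k-3)$-subset $J\subseteq V(G)$ and reduce, in the standard way, to a triangle instance $G'$ on vertex set $V(G)\setminus J$ with edge weights $w'(i_1,i_2)=2w(i_1,i_2)+\sum_{j\in J}\bigl(w(j,i_1)+w(j,i_2)\bigr)$ and target $t'=2t^*-2\sum_{j_1<j_2\in J} w(j_1,j_2)$, so that weight-$t'$ triangles of $G'$ correspond to weight-$t^*$ $k$-cliques of $G$ containing $J$. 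A greedy hitting-set computation finds, in $\OO(n^{2.99})$ time per $J$, a set $S$ of size $\OO(n^{0.01})$ meeting every still-unresolved witness set, after which Lemma~\ref{lem:exact-tri} computes those witness counts in $\OO(|S|\cdot n^{(3+\omega)/2})\le O(n^{2.70})$ time per $J$. Summing over the $n^{k-3}$ choices of $J$ and dividing by the fixed constant (depending only on $k$) that accounts for the multiplicity with which each minimum $k$-clique is seen across the triangle/$J$ splits yields all counts, in total time $O(n^{k-\eps}+n^{k-\eps''}+n^{k-3}(n^{2.99}+n^{2.70}))=O(n^{k-\eps'})$ for some $\eps'>0$.

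The step I expect to be the main obstacle is the one in the second paragraph: verifying that the decision-flavored witness-listing self-reduction of~\cite{focsyj} really can be driven by a \MinKClique{} oracle, which hinges on $t^*$ being the \emph{global} minimum so that every subgraph query collapses to a single equality test against $t^*$. The only other point needing care is the constant-factor bookkeeping in the $k$-clique-to-triangle reduction, but since $k$ is a fixed constant this affects only the implied constants and not the exponent.
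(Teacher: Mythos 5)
Your proof is correct and follows the route the paper intends: the paper omits the proof of Theorem~\ref{thm:min-k-clique-counting}, stating only that it is essentially the same as that of Theorem~\ref{thm:exact-k-clique-counting}, and your adaptation---computing $t^*$ with one \MinKClique{} call and then observing that every detection query in the witness-listing self-reduction is on a subgraph of $G$, where ``contains a weight-$t^*$ $k$-clique'' coincides with ``minimum $k$-clique weight equals $t^*$''---is precisely the adaptation needed. The remaining steps (the reduction to triangle instances over all $(k-3)$-subsets $J$, the greedy hitting set, Lemma~\ref{lem:exact-tri}, and the constant-factor multiplicity bookkeeping) match the template exactly.
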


We then reduce \MinKClique{} to \MinKCliqueCount{}.
\begin{theorem}
\label{thm:min-k-clique-counting-rev}
If \MinKCliqueCount{} for $n$-node graphs has an $O(n^{k-\eps})$ time algorithm for some $\eps > 0$, then \MinKClique{} for $n$-node graphs  has an $O(n^{k-\eps'})$ time algorithm for some $\eps' > 0$
\end{theorem}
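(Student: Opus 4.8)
The plan is to follow the same template as the two ``counting reduces to detection'' reductions already given, namely Theorem~\ref{thm:minplus-count-rev} (from \MinPlus{} to \MinPlusCount{}) and Theorem~\ref{thm:minplus-conv-count-rev} (from \MinPlusConv{} to \MinPlusConvCount{}): perturb the input so that the optimal solution becomes \emph{unique}, then use the counting oracle to reconstruct that unique optimum one piece at a time, and finally read off its weight. The new wrinkle is that the optimum here is not a single index but a set $S^\star$ of $k$ vertices, and — unlike the \MinPlus{} case, where the tie-breaker ``$+k$'' isolates a unique witness deterministically — there is no polynomially bounded deterministic tie-breaking rule that isolates a single $k$-subset. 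So I would invoke the Isolation Lemma of Mulmuley, Vazirani and Vazirani, which makes the reduction randomized.

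First, the isolation step. Given the input graph $G$ with $O(\log n)$-bit integer weights $w$, draw for each edge $e$ an independent uniform value $r(e)\in[2\binom n2]$ and set $w'(e):=M\cdot w(e)+r(e)$ with $M:=\binom k2\cdot 2\binom n2+1$; note $w'$ is still $O(\log n)$-bit. Since $M$ strictly exceeds the total perturbation on any $k$-clique, a $w'$-minimum $k$-clique is always a $w$-minimum $k$-clique, and among $w$-minimum $k$-cliques it is the one of smallest $r$-weight. A standard application of the Isolation Lemma to the (fixed) family of $w$-minimum $k$-cliques, with the random edge-weights $r$, shows that with probability at least $1/2$ this clique $S^\star$ is unique. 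I detect the success event by calling \MinKCliqueCount{} on $(G,w')$ and checking whether the returned count is $1$; repeating with fresh randomness $O(\log n)$ times, some trial succeeds with high probability. (If \MinKCliqueCount{} already returns $0$ on $(G,w)$, then $G$ has no $k$-clique and we output $+\infty$.)

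Second, the reconstruction step. For a vertex set $T$, let $G_T$ be the graph obtained from $G$ by \emph{duplicating} each $v\in T$ into two non-adjacent copies with identical neighborhoods and identical incident weights. Because the two copies of a vertex are non-adjacent, no $k$-clique of $G_T$ uses two copies of the same vertex, so projecting copies back to originals is well-defined and weight-preserving; consequently the minimum $w'$-weight over $k$-cliques of $G_T$ equals that over $k$-cliques of $G$, and (using that $S^\star$ is the \emph{unique} $w'$-minimum $k$-clique) the number of minimum-weight $k$-cliques of $G_T$ is exactly $2^{|S^\star\cap T|}$; that is, $\MinKCliqueCount{}(G_T,w')=2^{|S^\star\cap T|}$. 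Taking $T$ to be prefixes $\{1,\dots,i\}$ of the vertex order and binary-searching on $i$, I locate the smallest vertex of $S^\star$ with $O(\log n)$ oracle calls, then restrict to larger vertices and repeat to find the next-smallest, and so on — $O(k\log n)=O(\log n)$ calls in total. Each call is on a graph with at most $2n$ vertices, so the whole procedure runs in $\OO(n^{k-\eps})$ randomized time; thus $\eps'$ may be any constant smaller than $\eps$. Once $S^\star$ is known we output $\sum_{\{u,v\}\subseteq S^\star}w(u,v)$, which by the above is the \MinKClique{} answer.

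The main obstacle is conceptual rather than computational: the counting oracle exposes only \emph{numbers of} optimal cliques, never an optimal clique itself nor even its weight, so the reduction succeeds or fails on whether one can design auxiliary instances whose minimum-clique counts encode the identity of $S^\star$. The isolation step (to obtain a unique optimum while keeping weights polynomially bounded, which seems to require randomization) together with the vertex-duplication gadget (to convert ``$v\in S^\star$?'' queries into comparisons of counts, with every count bounded by $2^k$) is exactly what makes this possible — and keeping both affordable is what forces $k$ to be a constant and the reduction to be randomized. A companion remark, mirroring the last line of Section~\ref{sec:decomposition-zero-tri}'s applications, is that the forward direction (Theorem~\ref{thm:min-k-clique-counting}) together with this one yields the subcubic/``$n^{k}$-cubic'' equivalence of \MinKClique{} and \MinKCliqueCount{}.
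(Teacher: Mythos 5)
Your reduction is correct, and it shares the paper's core mechanism (force the optimum to be unique, then use vertex duplication so that the counting oracle's answers encode which vertices lie in the unique optimal clique), but it diverges in one substantive way: you use the Isolation Lemma and therefore obtain only a \emph{randomized} reduction, whereas the paper's proof is deterministic. Your justification for randomizing --- that ``there is no polynomially bounded deterministic tie-breaking rule that isolates a single $k$-subset'' --- is not right. The paper first passes, without loss of generality, to the $k$-partite version of \MinKClique{} (on parts $V_1,\dots,V_k$, via the standard $k$-fold vertex duplication), multiplies all weights by $M\ge 10k\cdot n^k$, and then adds $v_i\cdot n^{i-1}$ to every edge incident to $v_i\in V_i$. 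The total perturbation of a clique $(v_1,\dots,v_k)$ is $(k-1)\sum_i v_i n^{i-1}$, a base-$n$ encoding of the clique, so the minimum becomes unique deterministically with weights still bounded by $n^{O(k)}$, i.e.\ $O(\log n)$ bits for constant $k$. Since the theorem statement (unlike, say, Theorem~\ref{thm:minplus-conv-count-rev}) does not say ``randomized,'' your argument as written proves a slightly weaker claim; swapping your isolation step for this deterministic perturbation fixes that with no other changes. Your reconstruction step is fine and is essentially equivalent to the paper's: you duplicate prefix sets $T$ and binary-search using the identity $\MinKCliqueCount(G_T)=2^{|S^\star\cap T|}$, while the paper duplicates, for each part $V_i$ and each bit position $p$, the vertices of $V_i$ whose $p$-th bit is $1$ and reads off one bit of $u_i$ per call; both use $O(k\log n)=O(\log n)$ oracle calls on graphs of $O(n)$ vertices, and both rely on the counts staying bounded (by $2^k$, resp.\ $2$) so the oracle output is informative.
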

\begin{proof}
Let $G$ be the input graph for a \MinKClique{} instance. Without loss of generality, we can assume $G$ is a $k$-partite graph on node parts $V_1 \cup \cdots \cup V_k$. 

We first multiply all the edge weights of $G$ by a large enough number $M \ge 10 k \cdot n^k$. Then for each $i \in [k]$, $v_i \in V_i$, we add $v_i \cdot n^{i-1}$ to all edges adjacent to $v_i$. After these transformations, there will be a unique minimum weight $k$-clique in the graph. Furthermore, this $k$-clique must also be a minimum weight $k$-clique in the original graph. We denote this minimum weight $k$-clique by $(u_1, \ldots, u_k) \in V_1 \times \cdots \times V_k$. 

Then for each $i \in [k]$ and each $p \in \left[\lceil \log(n) \rceil \right]$, we do the following. Let $G^{(i, p)}$ be a copy of the graph (after the weight changes), and we duplicate all nodes $v \in V_i$ whose $p$-th bit in its binary representation is $1$. 
We use the assumed \MinKCliqueCount{} algorithm the count the number of minimum weight $k$-cliques in in $G^{(i, p)}$. 
If the number of minimum weight $k$-clique in $G^{(i, p)}$ is $2$, then we know the $p$-th bit of $u_i$ is $1$; otherwise, the $p$-th bit of $u_i$ is $0$. 

After all $k \lceil \log(n) \rceil$ rounds, we can recover $(u_1, \ldots, u_k)$, and thus compute the weight of the minimum weight $k$-clique in the original graph. 
\end{proof}

\subsection{Monochromatic Convolution}

\begin{theorem}
\label{thm:mono-conv-count}
If \MonoConv{} for  length $n$ arrays has an $O(n^{1.5-\eps})$ time algorithm for some $\eps > 0$, then \MonoConvCount{} for length $n$ arrays  has an $O(n^{1.5-\eps'})$ time randomized algorithm for some $\eps' > 0$
\end{theorem}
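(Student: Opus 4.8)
The plan is to follow the template of the proof of Theorem~\ref{thm:all-3sum-conv-count} (the equivalence for \AllThreeSUMConv{}), while exploiting the fact that the monochromatic predicate, unlike the additive one, \emph{collapses to a plain Boolean convolution} once a single witness is known. This means the many-witness case needs no equality-product / Matou\v{s}ek machinery (the analogue of Lemma~\ref{lem:3sum-conv}), and no interval/Chebyshev reduction from convolution to product; it is purely FFT. Throughout, write $W_k=\{i:\ i+j=k,\ A_i=B_j=C_k\}$ for the witness set of output coordinate $k$, so \MonoConvCount{} asks for $|W_k|$ for all $k$.

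Fix a threshold $L$. First, using the assumed $O(n^{1.5-\eps})$ \MonoConv{} algorithm together with the standard self-reduction/listing techniques of~\cite{focsyj}, list up to $L$ witnesses $i\in W_k$ for every $k$ in $\OO(n^{1.5-\eps''})$ time (for $L$ a suitable small polynomial in $n$, exactly as the $n^{0.99}$ bound is used for the quadratic problem \AllThreeSUMConv{}). For every $k$ where fewer than $L$ witnesses come back we have found $W_k$ in full, hence $|W_k|$. For the remaining coordinates — those with $|W_k|>L$ — take a random hitting set $H\subseteq[n]$ of size $\OO(n/L)$ that, w.h.p., intersects every such $W_k$. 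Here is the monochromatic analogue of Fredman's trick: if $i_0\in W_k$ then $A_{i_0}=B_{k-i_0}=C_k=:v$, so for \emph{every} $i$ we have $i\in W_k\iff A_i=v\text{ and }B_{k-i}=v$. Thus, letting $\alpha^{(v)}_i=[A_i=v]$ and $\beta^{(v)}_j=[B_j=v]$, the plain additive convolution $\alpha^{(v)}\ast\beta^{(v)}$ evaluated at index $k$ equals $|W_k|$ for every $k$ with $C_k=v$. For each distinct value $v$ occurring among $\{A_{i_0}:i_0\in H\}$ — there are $\OO(n/L)$ of them — compute $\alpha^{(v)}\ast\beta^{(v)}$ by FFT in $\OO(n)$ time; then for each many-witness $k$ find some $i_0\in H$ with $A_{i_0}=B_{k-i_0}=C_k$ and read off $|W_k|=(\alpha^{(C_k)}\ast\beta^{(C_k)})_k$. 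This costs $\OO(n^2/L)$, so overall \MonoConvCount{} runs in $\OO(n^{1.5-\eps''}+n^2/L)$. The reverse reduction, from \MonoConv{} to \MonoConvCount{}, is trivial.

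The main obstacle is the parameter balance: the naive $n^2/L$ bound on the many-witness step wants $L$ close to $n^{1/2}$ to be even borderline sub-$n^{1.5}$, whereas the \cite{focsyj} listing step only supports $L$ up to roughly $n^{1/2}$ witnesses per coordinate in $\OO(n^{1.5-\eps''})$ time (for a problem with detection exponent $c=1.5$ and $n$ output coordinates, $L\approx n^{c-1}$ is essentially the ceiling), and a deterministic greedy hitting set would require handling $\OO(nL)$ listed witnesses, again forcing $L<n^{1/2}$. So the two regimes meet exactly at $n^{1/2}$ and give no polynomial savings out of the box. To close this gap I would either (i) sharpen the many-witness step — replace the $\OO(n/L)$ full FFTs by output-sensitive sumset-based sparse convolutions~\cite{ColeHariharanSTOC02,ChanLewenstein,BringmannFN}, whose cost for value $v$ is $\OO(|A_v|+|B_v|+|A_v+B_v|)$, and argue that a value $v$ with a large difference/sum set $|A_v+B_v|$ necessarily also produces many \emph{few}-witness coordinates (its convolution is then spread out), which are already counted cheaply — or (ii) recurse, restricting the $i$-domain to geometrically shrinking blocks and invoking the faster \MonoConv{} oracle to prune empty blocks so that the per-value convolution work is amortized against the detection budget. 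Either way, the $\{0,1\}$-convolution collapse above is the decisive structural simplification over the \ThreeSUMConv{} case (where the surviving ``$A_i-A_{i_0}=B_{k-i_0}-B_{k-i}$'' equation genuinely forces the equality-product route), and the remaining technical work is to push the few-witness and many-witness thresholds a polynomial factor apart.
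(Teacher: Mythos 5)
There is a genuine gap: your direct argument stalls exactly where you say it does, and neither of the two proposed fixes is carried out. A first, smaller issue is that the hitting-set/Fredman step is vacuous for this problem: in \MonoConv{} the target value $v=C_k$ is part of the input, so for \emph{every} $k$ (not just many-witness ones) the count is $(\alpha^{(C_k)}\ast\beta^{(C_k)})_k$ with no witness needed to learn $v$. The real obstruction is the one you identify next: computing $\alpha^{(v)}\ast\beta^{(v)}$ over all distinct values $v$ costs $\OO(n\cdot\#\{v\})$ by FFT, and even with output-sensitive sparse convolution the total $\sum_v|A_v+B_v|$ can be $\Theta(n^{1.5})$ (e.g.\ $\sqrt{n}$ values each occurring $\sqrt{n}$ times), which is precisely the conjectured hardness of \MonoConv{} itself. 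Meanwhile the listing step caps $L$ at roughly $n^{1/2}$, so the few-witness and many-witness regimes meet with no polynomial slack. Your sketches (i) and (ii) are plausible research directions but are not proofs; in particular, (i) requires a quantitative trade-off between $|A_v+B_v|$ and the number of few-witness coordinates contributed by $v$ that you do not establish.

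The paper avoids this entirely by composing known reductions rather than attacking the convolution directly: an $O(n^{1.5-\eps})$ algorithm for \MonoConv{} yields a truly subquadratic algorithm for \ThreeSUM{} (Lincoln--Polak--Vassilevska W.), hence for \AllThreeSUM{}, hence---by Theorem~\ref{thm:all-3sum-count}, which is where all the equality-product and hitting-set machinery actually lives---for \AllThreeSUMCount{}. Then one observes that the Lincoln--Polak--Vassilevska W.\ reduction from \MonoConv{} to \AllThreeSUM{} preserves solution counts, so it also reduces \MonoConvCount{} to \AllThreeSUMCount{}. If you want to salvage a direct proof, the cleanest repair is to reuse this detour: route the many-witness (or rather, high-multiplicity-value) coordinates through the counting version of \ThreeSUM{} rather than trying to beat the $n^{1.5}$ barrier with per-value convolutions.
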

\begin{proof}
If \MonoConv{} has an $O(n^{1.5-\eps})$ time algorithm, then \ThreeSUM{} has a truly subquadratic time algorithm~\cite{lincoln2020monochromatic}, and consequently \AllThreeSUM{} has a truly subquadratic time algorithm~\cite{focsyj}. Furthermore, by Theorem~\ref{thm:all-3sum-count}, \AllThreeSUMCount{} has an $O(n^{2-\eps''})$ time algorithm for some $\eps'' > 0$. 

Thus, it suffices to reduce \MonoConvCount{} to \AllThreeSUMCount{}. Lincoln, Polak, and Vassilevska W.~\cite{lincoln2020monochromatic} showed that a truly subquadratic time algorithm for \AllThreeSUM{} implies a truly sub-$n^{1.5}$ time algorithm for \MonoConv{}. It is not difficult to check that their reduction preserves the number of solutions, and thus works for the counting versions as well. 
\end{proof}

\subsection{All-Pairs Shortest Paths}
\label{sec:apsp-count-mod}

\begin{theorem}
\label{thm:apsp-count-mod}
If \APSP{} for $n$-node  graphs with positive edge weights has an $O(n^{3-\eps})$ time algorithm for some $\eps > 0$, then \APSPCountMod{U} for $n$-node  graphs with positive edge weights has an $O(n^{3-\eps'})$ time algorithm for some $\eps' > 0$, for any $\OO(1)$-bit integer $U \ge 2$.
\end{theorem}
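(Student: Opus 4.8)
The plan is to chain three reductions: \APSP{} $\to$ \MinPlus{} (the classical equivalence of \cite{Fischer71}), \MinPlus{} $\to$ \MinPlusCount{} (Theorem~\ref{thm:minplus-count}), and \APSPCountMod{U} $\to$ \MinPlusCount{}. The first two give us, from an $O(n^{3-\eps})$ algorithm for \APSP{}, an $O(n^{3-\eps''})$ algorithm for \MinPlusCount{} for some $\eps''>0$, and in particular a subcubic algorithm for \MinPlus{} itself. So it suffices to reduce \APSPCountMod{U} to a subcubic number of \MinPlus{} and \MinPlusCount{} instances.

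For the last reduction I would reuse the structure of the proof of Theorem~\ref{thm:apsp-count}, which reduces \APSPCount{} to $O(\log n)$ instances of the funny product $\otimes$ via the matrices $D^{(=2^i)},D^{(<2^i)}$ (distances for paths of exactly / fewer than $2^i$ edges) and $D'^{(=2^i)},D'^{(<2^i)}$ (the corresponding path counts), where $D'^{(<2^i)}$ for $2^i>n$ is the answer (in a positively weighted graph every shortest path is simple and so has fewer than $n$ edges). The key point is that the count component of $(C,C')=(A,A')\otimes(B,B')$, namely $C'_{ij}=\sum_{k:\,C_{ij}=A_{ik}+B_{kj}}A'_{ik}B'_{kj}$, is a sum of products of entries of $A'$ and $B'$, so $C'_{ij}\bmod U$ depends only on $A'\bmod U$ and $B'\bmod U$. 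Hence we may run the whole recursion keeping every $D'$-matrix reduced mod $U$ (the distance matrices $D$ stay exact and are $O(\log n)$-bit, since weights are positive and paths have at most $n$ edges), and the final $D'^{(<2^i)}\bmod U$ is exactly the \APSPCountMod{U} output. What remains is to compute one ``mod-$U$ funny product'' $\otimes_U$ in subcubic time using a \MinPlusCount{} oracle.

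To compute $(C,C'\bmod U)=(A,A')\otimes_U(B,B')$ with $A',B'$ having entries in $\{0,\dots,U-1\}$: first obtain $C=A\star B$ by the subcubic \MinPlus{} algorithm. Then bit-decompose, $A'_{ik}=\sum_a 2^a\alpha^{(a)}_{ik}$ and $B'_{kj}=\sum_b 2^b\beta^{(b)}_{kj}$ with $\alpha^{(a)},\beta^{(b)}\in\{0,1\}$ and $a,b$ ranging over the $O(\log U)$ bit positions, so that $\sum_{k:\,C_{ij}=A_{ik}+B_{kj}}A'_{ik}B'_{kj}=\sum_{a,b}2^{a+b}N^{(a,b)}_{ij}$, where $N^{(a,b)}_{ij}$ counts the $k$ with $A_{ik}+B_{kj}=C_{ij}$, $\alpha^{(a)}_{ik}=1$, $\beta^{(b)}_{kj}=1$. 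For each of the $O((\log U)^2)$ pairs $(a,b)$, form the masked matrices $\hat A^{(a)}$ ($\hat A^{(a)}_{ik}=A_{ik}$ if $\alpha^{(a)}_{ik}=1$, else $\infty$) and $\hat B^{(b)}$ analogously, compute $\min_k(\hat A^{(a)}_{ik}+\hat B^{(b)}_{kj})$ by \MinPlus{} and its witness count $\gamma^{(a,b)}_{ij}$ by \MinPlusCount{}, and set $N^{(a,b)}_{ij}:=\gamma^{(a,b)}_{ij}$ if that minimum equals $C_{ij}$ and $N^{(a,b)}_{ij}:=0$ otherwise. Correctness is the standard masking argument: a masked minimum is always $\ge C_{ij}$ since $\hat A^{(a)},\hat B^{(b)}$ dominate $A,B$ entrywise, it equals $C_{ij}$ exactly when some valid $k$ exists, and in that case \MinPlusCount{} counts precisely those $k$. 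Accumulating $\sum_{a,b}2^{a+b}N^{(a,b)}_{ij}$ mod $U$ (keeping all additions, the factors $2^{a+b}\bmod U$, and intermediate values mod $U$) gives $C'_{ij}\bmod U$; the $\infty$/no-path cases contribute $0$ under the usual convention.

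Putting it together, \APSPCountMod{U} costs $O(\log n)$ rounds, each using $O((\log U)^2)$ \MinPlus{} and \MinPlusCount{} instances plus $O(n^2(\log U)^2)$ arithmetic operations modulo $U$; since $U$ has $\OO(1)$ bits, $(\log U)^2=\OO(1)$ and each mod-$U$ operation takes $\OO(1)$ time, so the total is $\OO(n^{3-\eps''})=O(n^{3-\eps'})$ for some $\eps'>0$. The main obstacle is the masking step — recovering the \emph{weighted} witness sum $\sum_k A'_{ik}B'_{kj}$ from the \emph{unweighted} witness counts that \MinPlusCount{} returns — together with the bookkeeping that uses bit-decomposition (an $O((\log U)^2)$ factor) rather than a naive value-by-value split (an $O(U^2)$ factor, fatal when $U$ has polylogarithmically many bits), and verifying that running the Theorem~\ref{thm:apsp-count} recursion entirely modulo $U$ is consistent with its correctness proof.
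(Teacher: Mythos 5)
Your proposal is correct and follows essentially the same route as the paper: chaining \APSP{} $\to$ \MinPlus{} $\to$ \MinPlusCount{} via Theorem~\ref{thm:minplus-count}, running the doubling recursion of Theorem~\ref{thm:apsp-count} with all count matrices reduced mod $U$, and computing each funny product by bit-decomposing the count matrices into masked \MinPlus{}/\MinPlusCount{} instances whose masked minima are compared against the true $C_{ij}$. This matches the paper's Claim~\ref{cl:apsp-count-mod} (where your $N^{(a,b)}_{ij}$ and the equals-$C_{ij}$ test appear as the products $(A^{(p)},2^pJ)\otimes(B^{(q)},2^qJ)$ combined with the $\oplus$ operator), so there is nothing further to add.
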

\begin{proof}

Let $(A, A')$ and $(B, B')$ be two pairs of $n \times n$ matrices where the entries of $A'$ and $B'$ are $\OO(1)$-bit integers. We define two ``funny'' matrix products (one of them was defined in the proof of Theorem~\ref{thm:apsp-count}). If $(C, C') = (A, A') \oplus (B, B')$, then $C_{ij} = \min(A_{ij}, B_{ij})$, and $C'_{ij} = [A_{ij} = C_{ij}]A'_{ij}+[B_{ij} = C_{ij}]B'_{ij}$, where $[\cdot]$ denotes the indicator function. Clearly, we can compute $(A, A') \oplus (B, B')$ in $\OO(n^2)$ time.

\begin{claim}
\label{cl:apsp-count-mod}
Suppose \APSP{} on $n$-node  graphs with positive edge weights has an $O(n^{3-\eps})$ time algorithm for some $\eps > 0$, then we can compute $(A, A') \otimes (B, B')$ in $O(n^{3-\eps''})$ time for some $\eps'' > 0$ where the entries of $A'$ and $B'$ are $\OO(1)$-bit integers. 
\end{claim}
\begin{proof}
If \APSP{} on $n$-node  graphs with positive edge weights has an $O(n^{3-\eps})$ time algorithm, then so does \MinPlus{} \cite{focsyj}. By Theorem~\ref{thm:minplus-count}, \MinPlusCount{} also has a truly subcubic time algorithm. It remains to reduce computing $(A, A') \otimes (B, B')$ to \MinPlusCount{}. 

For $p \in [O(\log n)]$, let $A^{(p)}$ be the matrix where $A^{(p)}_{ij} = A_{ij}$ if the $p$-th bit of $A'_{ij}$ is $1$, and $A^{(p)}_{ij} = \infty$ otherwise. We can similarly define $B^{(p)}$. Also, let $J$ be the $n \times n$ matrix whose  entries are all $1$. It is then not difficult to verify that
$$(A, A') \otimes (B, B') = \bigoplus_{p, q} (A^{(p)}, 2^p J) \otimes (B^{(q)}, 2^q J).$$
To compute each term in the above ``sum'', say $(C^{(p, q)}, C'^{(p, q)}) = (A^{(p)}, 2^p J) \otimes (B^{(q)}, 2^q J)$, we first use the assumed \MinPlus{} algorithm to compute $C^{(p, q)} = A^{(p)} \star B^{(q)}$ in $O(n^{3-\eps})$ time. Then $C'^{(p, q)}_{ij}$ is exactly the number of witnesses for $C^{(p, q)}$ in the previous Min-Plus product, multiplied by $2^{p+q}$, so we can use the truly subcubic time algorithm for \MinPlusCount{} to compute $C'^{(p, q)}$. 
\end{proof}

As showed in the proof of Theorem~\ref{thm:apsp-count}, \APSPCount{} reduces to $O(\log n)$ instances of the funny matrix product; also, now we can mod all entries in $A', B'$ by $U$ after each funny matrix product to keep them $\OO(1)$-bit integers. 
Thus, by Claim~\ref{cl:apsp-count-mod}, \APSPCountMod{U} has an $\OO(n^{3-\eps''})$ time algorithm assuming \APSP{} can be solved in truly subcubic time. 
\end{proof}

\begin{theorem}
\label{thm:apsp-count-mod-rev}
If \APSPCountMod{c} for $n$-node  graphs with positive edge weights has an $O(n^{3-\eps})$ time algorithm for some $\eps > 0$, where $c \ge 2$ is some $\OO(1)$-bit integer, 
then \APSP{} for $n$-node  graphs with positive edge weights has an $O(n^{3-\eps'})$ time algorithm for some $\eps' > 0$, 
\end{theorem}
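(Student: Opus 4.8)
The plan is to show the reverse direction of the equivalence between \APSP{} and \APSPCountMod{c}: namely, that a truly subcubic algorithm for \APSPCountMod{c} yields a truly subcubic algorithm for \APSP{}. The natural strategy mirrors the reductions seen earlier in the excerpt (e.g.\ Theorem~\ref{thm:minplus-count-rev} and Theorem~\ref{thm:apsp-count-mod}): reduce \APSP{} to \MinPlus{}, reduce \MinPlus{} to \MinPlusCount{}, and then reduce \MinPlusCount{} to \APSPCountMod{c}. Since \APSP{} and \MinPlus{} are subcubically equivalent~\cite{focsyj}, and Theorem~\ref{thm:minplus-count-rev} already gives the reduction from \MinPlus{} to \MinPlusCount{}, the only genuinely new work is reducing \MinPlusCount{} (or a suitable variant) to \APSPCountMod{c}. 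Here I would adapt the construction used in~\cite{CVXicalp21} that reduces certain versions of bounded Min-Plus product to counting shortest paths in unweighted directed graphs, but now I need counts \emph{mod $c$} rather than exact counts to suffice.

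First I would set up the classical reduction from \MinPlus{} to shortest paths: given $n\times n$ matrices $A,B$, build a layered (tripartite) graph with parts $U,X,V$ where edge $u_ix_k$ has weight $A_{ik}$ and edge $x_kv_j$ has weight $B_{kj}$ (scaling so all weights are positive integers, which is fine since \APSPCountMod{c} is stated for positive weights); then the shortest path distance from $u_i$ to $v_j$ equals $(A\star B)_{ij}$, and the number of such shortest paths equals the number of witnesses $|W_{ij}|$. The obstacle is that \APSPCountMod{c} only reveals $|W_{ij}| \bmod c$, which does not by itself determine whether $|W_{ij}|$ has a particular bit set. To recover the actual witness for each $(i,j)$ — which is all we need, since the standard trick of perturbing $A$ (replacing $A_{ik}$ by $M\cdot A_{ik}+k$ for large $M$) makes the witness unique and thus recovers $(A\star B)_{ij}$ in $\OO(n^2)$ post-processing — I would use the bit-isolation idea from Theorem~\ref{thm:minplus-count-rev}: for each bit position $p$, duplicate the columns $k$ of $A$ (equivalently, duplicate each node $x_k$) whose $p$-th bit is $1$, so that after perturbation the unique witness of $(i,j)$ contributes either $1$ or $2$ shortest paths. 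Then $|W_{ij}^{(p)}| \in \{1,2\}$, and $|W_{ij}^{(p)}| \bmod c$ distinguishes the two cases as long as $c \ge 2$ — which is exactly the hypothesis. Running this for all $O(\log n)$ bit positions and taking one \APSPCountMod{c} call each recovers every witness, hence all of $A\star B$, in $\OO(n^2)$ additional time per call.

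The steps, in order: (1) reduce \APSP{} to \MinPlus{} via~\cite{focsyj}; (2) invoke Theorem~\ref{thm:minplus-count-rev} to reduce \MinPlus{} to \MinPlusCount{}, so it suffices to solve \MinPlusCount{} in truly subcubic time — actually, rather than going through \MinPlusCount{} as a black box, it is cleaner to directly reduce the problem of computing $A\star B$ (with the perturbation making witnesses unique) to $O(\log n)$ calls of \APSPCountMod{c}; (3) for each $p\in[\lceil\log n\rceil]$, construct the tripartite graph with the $p$-th-bit columns of the perturbed $A$ duplicated, making sure edge weights are positive integers bounded polynomially (so the graph has $\OO(n)$ nodes and the distances are $\OO(\log n)$-bit), run the assumed $O(n^{3-\eps})$ algorithm for \APSPCountMod{c}, and read off whether each count-mod-$c$ equals $1$ or $2$ to learn the $p$-th bit of the unique witness $k_{ij}$; (4) assemble all $k_{ij}$ and output $A_{i,k_{ij}}+B_{k_{ij},j}$ as $(A\star B)_{ij}$, all in $\OO(n^2)$ time; (5) unwind back to \APSP{} using the subcubic equivalence. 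The total running time is $O(\log n)\cdot O(n^{3-\eps}) + \OO(n^2) = O(n^{3-\eps'})$.

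The main obstacle, and the subtlety worth checking carefully, is ensuring that duplicating a node does not inadvertently create \emph{additional} shortest paths beyond the factor-of-$2$ from the intended unique witness: because the perturbation $A_{ik}\mapsto M A_{ik}+k$ (with $M$ large) makes the minimizer $k_{ij}$ unique, only the single node $x_{k_{ij}}$ lies on a shortest $u_i\to v_j$ path, so duplicating it yields exactly $2$ paths when its $p$-th bit is $1$ and exactly $1$ otherwise — and crucially no other node contributes, so the count is genuinely in $\{1,2\}$ and $\bmod\ c$ is injective there for every $c\ge 2$. A secondary point is that \APSPCountMod{U} is defined for all pairs of nodes, not just $U\times V$ pairs, but we simply ignore the irrelevant output entries. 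One should also double-check that the weight-scaling to make everything a positive integer keeps the shortest path structure intact and the weights $O(\log n)$-bit, which it does since the original weights are already $O(\log n)$-bit and $M = \mathrm{poly}(n)$.
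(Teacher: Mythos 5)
Your proposal is correct and follows essentially the same route as the paper: the paper likewise composes the standard \MinPlus{}-to-\APSP{} reduction with the argument of Theorem~\ref{thm:minplus-count-rev}, observing that the perturbation makes the witness unique so the bit-isolation counts land in $\{1,2\}$, which are distinguishable mod $c$ for any $c\ge 2$. Your explicit check that node duplication cannot create spurious shortest paths is exactly the point the paper leaves implicit.
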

\begin{proof}
Suppose there is an $O(n^{3-\eps})$ time algorithm for \APSPCountMod{c} for $n$-node graphs positive edge weights, then there is also an $O(n^{3-\eps})$ time algorithm for counting the number of witnesses modulo $c$ for \MinPlus{} for $n \times n$ matrices, by following the standard reduction from \MinPlus{} to \APSP{}~\cite{focsyj}. 

Then by essentially the same proof to the proof of Theorem~\ref{thm:minplus-count-rev}, there exists an $\OO(n^{3-\eps})$ time algorithm for \MinPlus{}. Finally, there is a truly subcubic time algorithm for \APSP{} since \APSP{} and \MinPlus{}  are subcubically equivalent \cite{focsyj}. 
\end{proof}

\section{Bounded-Difference or Monotone Min-Plus Product from the Triangle Decomposition Theorem}
\label{app:bd:diff}

In this appendix, we note that our Triangle Decomposition Theorem (Theorem~\ref{thm:tri:decompose})
implies a truly subcubic algorithm for bounded-difference Min-Plus product. 
Existing algorithms \cite{BringmannGSW16, WilliamsX20,GuPWX21, ChiDXsoda22, ChiDXZstoc22} are faster, but our presentation is simpler, and so is interesting from the pedagogical perspective in our opinion.
In a way, the Triangle Decomposition Theorem clarifies conceptually why a subcubic algorithm is possible (if we don't care too much about optimizing the exponent in the running time).

\begin{theorem}
There is a truly subcubic algorithm for computing the Min-Plus product of two integer $n\times n$
matrices $A$ and $B$ satisfying the \emph{bounded difference property}, i.e., $|A_{i,k+1}-A_{ik}|\le c_0$ for all $i,k$
and $|B_{k+1,j}-B_{kj}|\le c_0$ for all $k,j$ for some constant $c_0$.
\end{theorem}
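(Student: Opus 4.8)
The plan is to reduce bounded-difference Min-Plus product to a small number of zero-weight (equivalently, fixed-target) triangle-counting/detection instances via Theorem~\ref{thm:tri:decompose}, exploiting the fact that once one has a good guess for the min, verifying it across all $(i,j)$ amounts to detecting zero-weight triangles, and Theorem~\ref{thm:tri:decompose} hands us $\OO(s^3)$ subgraphs in which \emph{all} triangles are zero-weight. First I would reduce to the row/column bounded-difference case where it suffices to compute, for each $(i,j)$, the value $C_{ij}=\min_k(A_{ik}+B_{kj})$ up to the additive error we can afford; the bounded-difference hypothesis means that as $k$ increases by one, $A_{ik}+B_{kj}$ changes by at most $2c_0$, so the function $k\mapsto A_{ik}+B_{kj}$ is ``Lipschitz'' and the set of near-minimizers forms (roughly) contiguous blocks. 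This is the structural fact that makes a subcubic bound plausible at all.

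The main step I would carry out: for each candidate value $v$ in an appropriately chosen arithmetic-progression-like set of guesses, build the tripartite graph $G_v$ on parts $U=[n]$, $X=[n]$, $V=[n]$ with edge weights $A_{ik}$, $B_{kj}$, and $-v$ respectively, and apply Theorem~\ref{thm:tri:decompose} with parameter $s$ to obtain subgraphs $G_v^{(\lambda)}$ and a remainder set $R_v$ of $\OO(n^3/s)$ triangles. A zero-weight triangle $u_ix_kv_j$ in $G_v$ witnesses $A_{ik}+B_{kj}=v$; so by running triangle detection (which reduces to Boolean matrix multiplication, and using the grouping of subgraphs into $\OO(s)$ categories of $\OO(s^2)$ subgraphs each, as in Corollary~\ref{cor:zerotri:prep}) we can detect, for every edge $u_iv_j$, whether some $k$ gives $A_{ik}+B_{kj}=v$, in $\OO(sM(n,s^2n,n))$ time; the $\OO(n^3/s)$ triangles in $R_v$ are scanned directly. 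The key point is that we do \emph{not} need to try all $\Theta(n)$ possible values of $v$: using the bounded-difference structure, I would do a coarse pass (e.g., guess $v$ in multiples of some parameter $\Delta$, or use the standard trick of computing an approximate Min-Plus product first via scaling/rounding), narrowing the true minimum for each $(i,j)$ to a window of $O(\Delta)$ consecutive values, and then a fine pass handles each $(i,j)$'s remaining window. Balancing $s$, $\Delta$, and the number of guesses against $M(n,s^2n,n)+n^3/s$ gives a running time of the form $O(n^{3-\delta})$ for some constant $\delta>0$ (not optimized).

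The hard part will be organizing the reduction so that the total number of distinct triangle-detection calls stays polynomially bounded (ideally $n^{o(1)}$ or a small polynomial in the chosen parameters) \emph{while} each call only needs the cheap zero-triangle decomposition rather than a full weighted computation; concretely, I expect the delicate bookkeeping to be in the interplay between the ``approximate first, refine later'' scheme and the per-$(i,j)$ windows — one must ensure that across all refinement rounds the combined remainder sets $R_v$ and the combined matrix-multiplication cost remain subcubic, which forces a careful choice of how the $v$-guesses are batched (e.g., processing a whole residue class of target values in one decomposition by shifting the $u_iv_j$ edge weights, as Theorem~\ref{thm:tri:decompose} explicitly supports cheap updates to the $U$–$V$ edge weights in $\OO(n^3/s+s^2n^2)$ time). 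A secondary subtlety is handling the counting version if desired: since Theorem~\ref{thm:tri:decompose} is a genuine decomposition (disjoint union, not just a cover), one may sum contributions across the $G_v^{(\lambda)}$'s and $R_v$ exactly, but for mere Min-Plus \emph{product} we only need detection, which is cleaner. I would present the exponent as an unoptimized constant, since the point (as the paper emphasizes) is conceptual: the decomposition theorem is what makes subcubic possible, and sharper exponents are already known by other means.
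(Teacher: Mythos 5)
Your proposal correctly identifies the engine (Theorem~\ref{thm:tri:decompose}) and the coarse-then-fine architecture, but the step where the bounded-difference property must actually be exploited — the coarse pass — is left to two suggestions that both fail, and this is a genuine gap. Guessing $v$ at multiples of $\Delta$ and running zero-triangle detection only tells you which exact values are attained; it does not localize $\min_k(A_{ik}+B_{kj})$ to a window, since a ``$\le v$'' query is not an equality query and does not reduce to zero-weight triangles. The scaling/rounding alternative computes $\min_k(\lceil A_{ik}/\Delta\rceil+\lceil B_{kj}/\Delta\rceil)$, but those rounded entries still range over $\Theta(n/\Delta)$ values, so the small-entries algorithm costs $\OO((n/\Delta)n^\omega)$; combined with the $\Omega(\Delta\cdot n^3/s)$ cost of your fine pass (you must scan the $\OO(n^3/s)$ remainder triangles in each of the $\Theta(\Delta)$ detection rounds, forcing $s\gg\Delta$) and the $\Omega(\Delta s^3 n^{\omega})$ matrix-multiplication cost, the parameters only balance to a subcubic bound when $\omega<2.2$. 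The missing idea is the one the paper opens with: restrict $k$ to every $\ell$-th index and compute the Min-Plus product over these $n/\ell$ representatives by brute force in $O(n^3/\ell)$ time; bounded differences guarantee this is within additive error $O(c_0\ell)$ of the truth. That is the (only) cheap approximation available here, and your proposal never pins it down.

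The second divergence is in the fine pass, and it is where the paper's proof is structurally different from yours. Rather than re-running the decomposition and triangle detection for each of the $O(c_0\ell)$ candidate offsets in each window, the paper rounds $A$ and $B$ down by $\ell'=2c_0\ell+1$, observes that only offsets $\Delta\in\{0,1,2\}$ of the coarse minimum can matter, applies Theorem~\ref{thm:tri:decompose} once per offset to the \emph{rounded} instance, and then inside each subgraph $G^{(\lam)}$ — where every triangle already attains the coarse minimum — computes a Min-Plus product of the \emph{residual} matrices $A_{i,k+r}-\lceil A_{ik}/\ell'\rceil\ell'$ and $B_{k+r,j}-\lceil B_{kj}/\ell'\rceil\ell'$, whose entries lie in $[\pm O(c_0\ell)]$ (a second use of bounded differences), via the $\OO(Mn^\omega)$ small-integer algorithm. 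This needs only $O(1)$ decompositions total instead of $O(\ell)$ update-and-detect rounds, and yields $\OO(n^3/\ell+n^3/s+c_0\ell^2s^3n^\omega)$, subcubic for any $\omega<3$. Your round-by-round detection scheme can probably be salvaged once the subsampled coarse pass is in place (the costs $n^3/\ell+\ell(n^3/s+s^2n^2+sM(n,s^2n,n))$ do balance subcubically), but as written the proposal neither supplies the working coarse pass nor carries out this balance.
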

\begin{proof}
\newcommand{\CCC}{\tilde{C}}
Let $C=A\star B$ denote the matrix that we want to compute.
Let $\ell$ be a parameter, and let $D$ be the set of all indices in $[n]$ that are divisible by $\ell$.
Let $\ell'=2c_0\ell+1$.
We first compute $\CCC_{ij}=\min_{k\in D} (\up{A_{ik}/\ell'} + \up{B_{kj}/\ell'})$ for every $i,j\in [n]$.
By brute force, this takes $O(n^3/\ell)$ time.
Note that $\CCC_{ij}\ge C_{ij}/\ell'$.
For each $r\in [\ell]$, $i,j\in[n]$ and $k \in D$,
let $A_{ik}^{(r)} = A_{i,k+r} - \up{A_{ik}/\ell'}\ell'$ and
$B_{kj}^{(r)} = B_{k+r,j} - \up{B_{kj}/\ell'}\ell'$.  Note that $A_{ik}^{(r)},B_{kj}^{(r)}\in [\pm O(c_0\ell)]$.

To compute $C=A\star B$, we use the following formula:
\begin{eqnarray*}
 C_{ij} &=& \min_{k\in D,\, r\in[\ell]} (A_{i,k+r} + B_{k+r,j})\\
        &=& \min_{\Delta\in\{0,1,2\}} \left((\CCC_{ij} + \Delta)\ell'+ \min_{r\in[\ell],\, k\in D:\, \up{A_{ik}/\ell'} + \up{B_{kj}/\ell'} = \CCC_{ij}+\Delta}
(A_{ik}^{(r)} + B_{kj}^{(r)})  \right).
\end{eqnarray*}
The reason for restricting the range of $\Delta$ to $\{0,1,2\}$ is this:
if $\up{A_{ik}/\ell'} + \up{B_{kj}/\ell'} \ge \CCC_{ij}+3$,
then $A_{ik}/\ell' + B_{kj}/\ell' \ge \CCC_{ij}+1$, and so
$A_{i,k+r}+B_{k+r,j} \ge A_{ik}+B_{kj} - 2c_0\ell > \CCC_{ij}\ell'$; but
we know that the true value of $C_{ij}$ is at most $\CCC_{ij}\ell'$.

To evaluate the above formula, let's fix $\Delta\in \{0,1,2\}$.
We want to compute 
\[ C'_{ij}\ := \min_{r\in [\ell],\, k\in D:\, \up{A_{ik}/\ell'} + \up{B_{kj}/\ell'} = \CCC_{ij}+\Delta}
(A_{ik}^{(r)} + B_{kj}^{(r)}).
\]
Initialize $C'_{ij}$ to $\infty$.
Consider the tripartite graph with nodes $\{u_i:i\in[n]\}$, $\{x_k:k\in D\}$,
and $\{v_j:j\in[n]\}$, where $u_ix_k$ has weight $\up{A_{ik}/\ell'}$, and 
$x_kv_j$ has weight $\up{B_{kj}/\ell'}$, and $u_iv_j$ has weight $-\CCC_{ij}-\Delta$.
Apply Theorem~\ref{thm:tri:decompose} to get subgraphs $G^{(\lam)}$ and a set $R$
in $\OO(n^3/s+s^2n^2)$ time.

We first examine each triangle $u_ix_kv_j\in R$ and each $r\in[\ell]$ and reset $C'_{ij}$
to $A_{ik}^{(r)} + B_{kj}^{(r)}$ if it is smaller than the current value.
This takes $\OO(\ell \cdot (n^3/\ell)/s) = \OO(n^3/s)$ time.

Next, for each $\lam$ and each $r\in[\ell]$, we compute $\min_{k\in D:\, u_ix_k,x_kv_j\in G^{(\lam)}} (A'_{ik} +B'_{kj})$ and 
reset $C'_{ij}$ to this value if it is smaller, for every $u_iv_j\in G^{(\lam)}$.  
This reduces to a Min-Plus product instance on integers in $[\pm O(c_0\ell)]$
and takes $\OO(c_0\ell n^\omega)$ time for each $\lam$ and $r$.  The total time
is $\OO(s^3\cdot \ell \cdot c_0\ell n^\omega) = \OO(c_0\ell^2 s^3 n^\omega)$.

The overall running time is $\OO(n^3/\ell + n^3/s + c_0\ell^2 s^3 n^\omega)$.
Setting $\ell=s=n^{(3-\omega)/6}$ gives a bound of $\OO(n^{(15+\omega)/6})$ for constant $c_0$
(which is improvable by using rectangular matrix multiplication).
\end{proof}

The above algorithm easily extends to the case where we allow $O(n^{2-\delta})$ exceptional pairs $(i,k)$ to
violate the bounded difference property $|A_{i,k+1}-A_{ik}|\le c_0$, and similarly  $O(n^{2-\delta})$ exceptional pairs $(k,j)$ to
violate the bounded difference property $|B_{k+1,j}-B_{kj}|\le c_0$.  We just need to add an extra cost of
$O(\ell n^{2-\delta}\cdot n)$.  The total time bound $\OO(\ell n^{3-\delta} + n^3/\ell + n^3/s + 
c_0\ell^2 s^3 n^\omega)$ remains truly subcubic for an appropriate choice of $\ell$ and $s$.

The case of matrices with monotone rows/columns and integer entries in $[n]$ easily reduce to the bounded difference case with a nonconstant $c_0=n^\delta$ and $O(n^{2-\delta})$ exceptional pairs.  So, we also get a truly subcubic
algorithm (with an appropriate choice of $\delta$ and a slightly worse final exponent) for the monotone case.

\section{More on BSG}\label{app:bsg}

In this appendix, we show that the proof of
Theorem~\ref{thm:bsg:simple} can be modified to hold not just for
indexed sets, but also for arbitrary subsets $A$ and $C$ of an abelian group, if we ignore the construction time.
This requires a more clever argument in the ``low-frequency'' case.
(The proof of Theorem~\ref{thm:bsg:gower} can also be modified in a similar way.)

\begin{theorem} {\bf (Simpler BSG Covering)}\ \ \label{thm:bsg:cover:simple:app}
Given subsets $A$ and $C$ of size $n$ of an abelian group and a parameter $s$,
there exist a collection of $\ell=\OO(s^3)$ subsets $A^{(1)},\ldots,A^{(\ell)}\subseteq A$, and a set $R$ of $\OO(n^2/s)$ pairs in $A\times A$, such that
\begin{enumerate}
\item[\rm(i)] $\{(a,b)\in A\times A: a-b\in C\}\ \subseteq\
R\,\cup\, \bigcup_\lam (A^{(\lam)}\times A^{(\lam)})$, and
\item[\rm(ii)] $\sum_\lam |A^{(\lam)} - A^{(\lam)}| = \OO(s^2 n^{3/2})$.
\end{enumerate}
\end{theorem}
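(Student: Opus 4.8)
The plan is to mimic the proof of Theorem~\ref{thm:bsg:simple} verbatim, except that we can no longer rely on the index structure of $A$; in particular, the ``low-frequency'' case there exploited that the shift map $i\mapsto i+h$ permutes $[n]$, and this needs to be replaced by a more careful argument. Write $A$ and $C$ simply as size-$n$ subsets of an abelian group. First I would do the same preprocessing and the same few-witnesses pruning: for each $c\in C$ let $W_c=\{a\in A: a-c\in A\}$ (so $\pop_A(c)=|W_c|$), and for every $c$ with $|W_c|\le n/s$ throw all the pairs $(a,a-c)$ into $R$, contributing $O(n^2/s)$ pairs. For the remaining heavy $c$'s, pick a random hitting set $H$; but now $H$ should be a random subset of $A-A$ (or, equivalently, we sample random \emph{shifts} $h=a-b$ with $(a,b)\in A\times A$ chosen uniformly) of size $\OO(s)$, so that w.h.p.\ $H$ hits $\{a-b : a,b\in A,\ a-b\in C\ \text{for some common translate}\}$ in the appropriate sense. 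Concretely, for a heavy $c$ and any $a\in W_c$, we need some $h\in H$ with $a-h\in A$ and $(a-h)-c\in A$, i.e.\ $h$ of the form $a'-a''$ with $a',a''\in A$ and $a'-c=a''$... — here I would instead sample $h$ uniformly from $C-$witness pairs as in Theorem~\ref{thm:bsg:simple}: sample $h=a-b$ where $b$ is uniform in $A$ and $a$ is uniform in $W_{?}$, or most cleanly, reuse exactly the sampling ``$h$ uniform among all differences $a-b$ over pairs with $a-b\in C$,'' and argue by the same Fredman's-trick identity that for a heavy $c$, conditioned on the witness $i$, the shift $h$ lands in the witness set with probability $\Omega(s/n)$ so $\OO(s/n)$-fold repetition hits it.

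Second, for each sampled $h\in H$ I would split into the low- and high-frequency subcases with respect to the multiset $L^{(h)}=\{a-b : a,b\in A,\ a-b=h\}$... no — rather, with respect to the ``profile'' $f(a):=$ (the common value forced by $h$), exactly as in Theorem~\ref{thm:bsg:simple} but where instead of $a_{i+h}-a_i$ we use the quantity $g_h(a):=a-b_h(a)$ where $b_h(a)$ is \emph{any} element with $a-b=h$. The subtlety is that in an arbitrary group an element $a$ may have many representations $a=b+h$ with $b\in A$; this is precisely where the argument must change. The fix: define, for $h\in H$ and the triple $(a,b)$ realizing the shift, the bucket index $f=a-b'$ where $b'$ is the element guaranteed by Fredman's trick, and let $F^{(h)}$ be the values of frequency $>n/r$ among all $(a,b)$ pairs with difference $h$. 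If $f\notin F^{(h)}$ dump into $R$ (here each $a$ may contribute up to $n/r$ pairs, but crucially the total number of such dumped pairs is $O(s\cdot(\text{number of pairs with small-frequency bucket}))=\OO(sn\cdot n/r)=\OO(n^2/s)$ with $r:=s^2$, just as before, \emph{once we phrase the count over pairs $(a,b)$ rather than over indices}). If $f\in F^{(h)}$, add $A^{(h,f)}:=\{a\in A:\ \exists b\in A,\ a-b=h,\ \text{bucket}(a,b)=f\}$. The number of subsets is again $\OO(sr)=\OO(s^3)$.

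Third, the size bound (ii). Here I would run the same probabilistic argument as in Theorem~\ref{thm:bsg:simple}: $S^{(h)}=\sum_{f\in F^{(h)}}|A^{(h,f)}-A^{(h,f)}|$ is bounded by the number of triples $(c,f,\cdot)$ with $f\in F^{(h)}$ and a common witness, split by whether the popularity $|Y_c|:=\pop_A(c)$ exceeds a threshold $q$. Triples with $|Y_c|>q$ number at most $O((n^2/q)\cdot r)$ since $\sum_c\pop_A(c)=n^2$ and $|F^{(h)}|\le r$. For $|Y_c|\le q$, the expected count over a \emph{uniformly random shift $h$} (random $a-b$, both uniform in $A$) is $O(\sum_{c:\,\pop_A(c)\le q}\pop_A(c)\cdot\pop_A(c)/n)=O(n^2 q/n)=O(nq)$, using that for fixed $i\in Y_c$ the probability a uniform $h$ satisfies the coincidence is $O(\pop_A(c)/n)$ — and this is exactly the step where the ``indexed'' structure in Theorem~\ref{thm:bsg:simple} was used only to get a clean uniform distribution, which we now get from sampling uniform pairs instead. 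Balancing with $q=s\sqrt n$ gives $\Ex[S^{(h)}]=\OO(s^2n^2/q+nq)=\OO(sn^{3/2})$, hence $\sum_h S^{(h)}=\OO(s^2n^{3/2})$ in expectation, and Markov plus a union bound finishes. I expect the main obstacle to be precisely the ``low-frequency'' bookkeeping in an arbitrary group: making sure that when an element $a$ has several representations $a=b+h$, the triples/pairs are counted consistently so that the $\OO(n^2/s)$ remainder bound and the $\OO(sn^{3/2})$ popularity bound both still go through — i.e.\ doing all counting over \emph{pairs} $(a,b)\in A\times A$ with the Fredman identity pinning down a unique bucket for each relevant pair, rather than over indices. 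Since the statement explicitly waives the construction-time claim, the random-sampling/verification issues of Theorem~\ref{thm:bsg:simple} (Las Vegas conversion, output-sensitive difference-set computation) can simply be dropped.
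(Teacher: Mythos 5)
There is a genuine gap, and it sits exactly where you suspected trouble but did not resolve it: the hitting-set step. In the indexed setting of Theorem~\ref{thm:bsg:simple}, the shift $h$ ranges over a universe of only $O(n)$ index offsets, and a heavy pair has more than $n/s$ good offsets, so $\OO(s)$ random samples hit one w.h.p. In a general abelian group the shift must be a full group element $h=a-a'$, the universe is $A-A$, which can have size $\Theta(n^2)$, and the $>n/s$ good shifts for a heavy pair $(a,b)$ — namely $\{a-a': a'\in Y_{a-b}\}$ with $Y_c:=\{x\in A: x-c\in A\}$ — may each have popularity $1$. Then a set $H$ of $\OO(s)$ shifts sampled uniformly from $A-A$, or as differences of uniform pairs, hits a good shift with probability only about $1/(sn)$ per sample, and your covering claim (i) fails. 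Retaining the per-$h$ bucket structure $F^{(h)}$, $A^{(h,f)}$ cannot rescue this, because for a fixed group element $h$ the representation $a=b+h$ with $b\in A$ forces $b=a-h$ uniquely — there are no buckets (your remark that $a$ may have \emph{many} such representations is the reverse of the actual difficulty: the index-shift/value-bucket pair $(h,f)$ of the indexed proof collapses into a single group element, inflating the shift universe from $O(n)$ to $O(n^2)$).

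The paper's proof repairs this with three coordinated changes that your plan is missing. First, the shift universe is restricted to $F=\{h:\pop_A(h)>n/r\}$, which has size $O(rn)$ since the total popularity is $n^2$. Second, the low-frequency case is restructured: one puts into $R$ every pair $(a,b)$ for which more than $n/(2s)$ of its witness pairs $(a',b')$ (those with $a-a'=b-b'$) have $a-a'\notin F$; since each unpopular difference $a-a'$ admits at most $n/r$ completions $(b,b')$, a Markov-type count shows only $\OO\bigl(n^2\cdot(n/r)/(n/(2s))\bigr)=\OO(n^2/s)$ pairs are discarded with $r=s^2$. For every surviving heavy pair at least $n/(2s)$ good shifts lie in $F$, so a random $H\subseteq F$ of size $\OO(sr)=\OO(s^3)$ — one subset $A^{(h)}=\{a:a-h\in A\}$ per shift, with no bucketing — covers it w.h.p. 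Third, the bound (ii) comes not from your triple-counting over buckets but from $\Ex_{h\in F}\bigl[|A^{(h)}-A^{(h)}|\bigr]\le\sum_c\min\{|Y_c|^2/|F|,\,1\}=O(n^2/\sqrt{|F|})=O(n^{3/2}/\sqrt{r})$, which summed over $|H|=\OO(sr)$ shifts gives $\OO(s\sqrt{r}\,n^{3/2})=\OO(s^2n^{3/2})$. Until you make these (or equivalent) changes, the proposal establishes neither (i) nor (ii).
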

\begin{proof}\

\begin{itemize}
\item {\bf Few-witnesses case.}
First add $\{(a,b)\in A\times A:\ a-b\in C,\ \pop_A(a-b)\le n/s\}$ to $R$.
The number of pairs added to $R$ is $\OO(n\cdot n/s)$.

\item {\bf Many-witnesses case.}
Let $F =\{h: \pop_A(h) > n/r\}$.
Then $|F|=O(rn)$, since the total popularity is $n^2$.
By adding extra elements to $F$, we may assume that $|F|=\Theta(rn)$.
Pick a random subset $H\subseteq F$ of size $c_0sr\log n$ for a sufficiently large constant $c_0$.

\begin{itemize}
\item {\bf Low-frequency case.}
Add the following to $R$:
\[ \{(a,b)\in A\times A:\ |\{(a',b')\in A\times A: a-a'=b-b'\not\in F\}| > n/(2s)\}.
\]
Since for each $(a,a')$ with $a-a'\not\in F$, there are at most $n/r$ choices of $(b,b')$ satisfying $a-a'=b-b'$,
the number of pairs added to $R$ is $\OO(\frac{n^2\cdot n/r}{n/(2s)})=\OO(n^2/s)$
by choosing $r:=s^2$.
\item {\bf High-frequency case.}
For each $h\in H$,
add the following subset to the collection:
\[ A^{(h)} = \{a\in A:\ a-h\in A\}.
\]
The number of subsets is $\OO(sr)=\OO(s^3)$.
\end{itemize}
\end{itemize}

\emph{Correctness.}
To verify (i), consider a fixed pair $(a,b)\in A\times A$ with $a-b\in C$.
If $\pop_A(a-b)\le n/s$, then $(a,b)\in R$ due to
the ``few-witnesses'' case.
So assume $\pop_A(a-b)>n/s$.
Furthermore, assume that $|\{(a',b')\in A\times A: a-a'=b-b'\not\in F\}| \le n/(2s)$,
for otherwise $(a,b)\in R$ due to the ``low-frequency'' case.
There are at least $n/s$ pairs $(a',b')\in A\times A$ with $a'-b'=a-b$,
which also satisfy $a-a'=b-b'$ by Fredman's trick.  Among them,
there are at least $n/(2s)$ pairs $(a',b')\in A\times A$ with $a-a'=b-b'\in F$.
For $h=a-a'=b-b'$, we have $(a,b)\in A^{(h)}\times A^{(h)}$.
So, the probability that $(a,b)\in A^{(h)}\times A^{(h)}$ for a random $h\in F$
is $\Omega(\frac{n/(2s)}{rn})=\Omega(1/(sr))$.
Thus, $(a,b)\in \bigcup_{h\in H} (A^{(h)}\times A^{(h)})$ w.h.p.\ for
a random subset $H\subset F$ of size $c_0sr\log n$.

To verify (ii), consider a fixed $h$.
The set $A^{(h)}-A^{(h)}$ is equal to
$\{c:\ \exists a,b,a',b'\in A,\ c=a-b,\ a-h=a',\ b-h=b'\}$.
Let $Y_c = \{a\in A:\ a-c\in A\}$.
Then $A^{(h)}-A^{(h)}$ is contained in
$\{c: \exists a\in Y_c\ \mbox{and}\ a-h\in Y_c\}$.
The expected size of $A^{(h)}-A^{(h)}$ for a random $h\in F$ is thus bounded by
\[ \sum_c \min\{ |Y_c|\cdot |Y_c|/|F|,\ 1 \}\ \le\ O(n^2/\sqrt{|F|})
\ =\ O(n^{3/2}/\sqrt{r}),
\]
since $\sum_c |Y_c| = O(n^2)$.

The expected total sum $\sum_{h\in H}  |A^{(h)}-A^{(h)}|$ is bounded by $\OO(sr\cdot n^{3/2}/\sqrt{r}) = \OO(s^2n^{3/2})$.
\end{proof}

Chan and Lewenstein~\cite{ChanLewenstein} posed the following interesting combinatorial question:
\begin{quote}
    Given sets $A,B,C$ in an abelian group of size $n$,
    we want to cover $\{(a,b)\in A\times B: a+b\in C\}$ by bicliques $A^{(\lam)}\times B^{(\lam)}$, so as to minimize the cost function $\sum_\lam |A^{(\lam)}+B^{(\lam)}|$.
    Prove worst-case bounds on the minimum cost as a function of $n$.
\end{quote}
They observed that Theorem~\ref{thm:BSG0:cover} implies an $O(n^{13/7})$ upper bound for this problem.  Theorem~\ref{thm:bsg:cover:simple:app} implies an improved $\OO(n^{11/6})$ upper bound
(as we can use ``singleton'' bicliques to cover $R$
and choose $s$ to minimize $\OO(n^2/s+s^2n^{3/2})$).

If we just want an analog of the BSG Theorem that extracts a single subset rather than constructs a cover  (in the style of Theorem~\ref{thm:BSG0}), the proof of the above theorem  becomes even simpler (the ``few-witnesses'' and ``low-frequency'' cases may be skipped):

\begin{theorem}\label{thm:bsg:simpler} {\bf (Simpler BSG)}\ \ 
Given subsets $A$ and $C$ of size $n$ of an abelian group and a parameter $s$,
if $|\{(a,b)\in A\times A: a-b\in C\}|\ge n^2/s$,
then there exists a subset $A'\subseteq A$ of size $\Omega(n/s)$,
such that 
\[ |A'-A'| \:=\: O(s^{1/2}n^{3/2}).\]
\end{theorem}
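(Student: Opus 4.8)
The plan is to adapt the ``many-witnesses / high-frequency'' portion of the proof of Theorem~\ref{thm:bsg:cover:simple:app}, discarding the ``few-witnesses'' and ``low-frequency'' cases entirely: since the statement only asks for a single large set with a small difference set and imposes no covering requirement, there is nothing that needs to be deposited into a remainder set. First I would dispose of the degenerate regime by noting the claim is trivial when $s$ is not much smaller than $n$ (both $\Omega(n/s)$ and $O(s^{1/2}n^{3/2})$ become vacuous), so we may assume $s$ is small, in particular $n/(2s)\ge 1$.

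The first real step is to extract, from the hypothesis, a large family of popular differences. Let $F = \{h : \pop_A(h) > n/(2s)\}$. Since $\sum_{c\in C}\pop_A(c) = |\{(a,b)\in A\times A : a-b\in C\}| \ge n^2/s$, while the total contribution of differences of popularity at most $n/(2s)$ is at most $|C|\cdot n/(2s) = n^2/(2s)$, we get $\sum_{c\in C\cap F}\pop_A(c)\ge n^2/(2s)$; as each popularity is at most $n$, this forces $|F|\ge |C\cap F|\ge n/(2s)$.

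The second step associates to each $h\in F$ the candidate set $A^{(h)} = \{a\in A : a-h\in A\}$, which has size exactly $\pop_A(h) > n/(2s) = \Omega(n/s)$, so \emph{every} element of $F$ already satisfies the size requirement; it remains only to find one $h$ for which $A^{(h)}-A^{(h)}$ is small. Here I would reuse the Fredman's-trick manipulation from the main proof: if $c\in A^{(h)}-A^{(h)}$ is witnessed by $a$ with $a,a-h,a-c,a-c-h\in A$, then writing $Y_c := \{a\in A : a-c\in A\}$ we have $a\in Y_c$ and $a-h\in Y_c$, hence $h\in Y_c-Y_c$. Thus $|A^{(h)}-A^{(h)}| \le |\{c : h\in Y_c-Y_c\}|$, and averaging over a uniformly random $h\in F$,
\[
\Ex_{h\sim F}\bigl[\,|A^{(h)}-A^{(h)}|\,\bigr] \;\le\; \sum_c \min\Bigl\{1,\ \frac{|Y_c-Y_c|}{|F|}\Bigr\} \;\le\; \sum_c \min\Bigl\{1,\ \frac{|Y_c|^2}{|F|}\Bigr\} \;=\; O\!\left(\frac{n^2}{\sqrt{|F|}}\right),
\]
where the last bound comes from splitting at $|Y_c|=\sqrt{|F|}$ together with $\sum_c |Y_c| = n^2$. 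Since $|F|\ge n/(2s)$, this expectation is $O(s^{1/2}n^{3/2})$; by Markov at least half of $h\in F$ achieve at most twice it, and picking such an $h$ and setting $A' := A^{(h)}$ yields both $|A'| = \Omega(n/s)$ and $|A'-A'| = O(s^{1/2}n^{3/2})$.

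The step I expect to be the crux is the quantitative bound $|F| = \Omega(n/s)$ coupled with the insistence that every chosen $h$ be genuinely popular. In the covering version (Theorem~\ref{thm:bsg:cover:simple:app}) one is free to pad $F$ up to size $\Theta(rn)$ with arbitrary group elements, which is what drives the $s$-exponent down; but here the single output $h$ must itself satisfy $\pop_A(h)=\Omega(n/s)$, so only the true popular-difference set $F$ is available for averaging, and its size is controlled only through the hypothesis and pigeonhole, giving $\Omega(n/s)$ rather than $\Omega(sn)$. This is precisely why the bound carries a factor $s^{1/2}$ instead of $s^{-1/2}$, and checking that this weaker count still suffices — and that skipping the few-/low-frequency cases loses nothing for the single-set conclusion — is the main thing to verify carefully.
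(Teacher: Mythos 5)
Your proposal is correct and follows essentially the same route as the paper's proof: define $F=\{h:\pop_A(h)>n/(2s)\}$, show $|F|\ge n/(2s)$ from the hypothesis, take $A^{(h)}=\{a\in A: a-h\in A\}$ for a random $h\in F$, and bound $\Ex_h[|A^{(h)}-A^{(h)}|]$ by $O(n^2/\sqrt{|F|})$ via the sets $Y_c$ and Fredman's trick. You have merely written out explicitly the averaging computation that the paper imports by reference from its covering version, so nothing further is needed.
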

\begin{proof}
Let $F =\{x: \pop_A(x) > n/(2s)\}$.  
Then $|F|\ge n/(2s)$, because otherwise,
$\sum_{c\in C}\pop_A(c) < |F|n + n^2/(2s) < n^2/s$, contradicting the
stated assumption.

Pick a random $h\in F$ and let $A^{(h)} = \{a\in A:\ a-h\in A\}$ as before.
Then $|A^{(h)}|=\pop_A(h) > n/(2s)$.

By the same argument as before,
the expected size of $A^{(h)}-A^{(h)}$ for a random $h\in F$ is 
bounded by $O(n^2/\sqrt{|F|})$, which is now $O(s^{1/2}n^{3/2})$.
\end{proof}

The above $O(s^{1/2}n^{3/2})$ bound is better than the
known $O(s^5n)$ bound (Theorem~\ref{thm:BSG0}) when
$s\gg n^{1/9}$, though the latter holds only for the bichromatic sum set setting.\footnote{
Bounds of the form $O(s^{O(1)}n)$ were also known in the setting of monochromatic difference sets, but direct comparisons require care: many previous work such as \cite{Gowers01,Schoen} started from a related but different assumption, namely, that
the \emph{energy} $|\{(a,b,a',b')\in A\times A\times A\times A: a+b=a'+b'\}|$
is at least $n^3/s'$ for some parameter $s'$.  This parameter $s'$ is not identical to the parameter $s$ in Theorem~\ref{thm:bsg:simpler}, though they are related polynomially (or quadratically).
}

Along the same lines, we can obtain the following combinatorial result from the Triangle Decomposition Theorem, which might be of independent interest:

\begin{theorem}
Given a real-weighted tripartite graph $G$ with $n$ nodes, and given a parameter $s$,
if $G$ contains $\Omega(n^3/s)$ zero-weight triangles,
then there exists a subgraph $G'$ with $\OOmega(n^3/s^4)$ triangles (and thus
$\OOmega(n^2/s^4)$ edges)
such that all triangles in $G'$ are zero-weight triangles.
\end{theorem}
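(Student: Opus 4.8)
The plan is to extract a single dense subgraph from the Triangle Decomposition Theorem (Theorem~\ref{thm:tri:decompose}) by an averaging argument, mirroring how Theorem~\ref{thm:bsg:simpler} extracts a single large subset from the covering version. First I would apply Theorem~\ref{thm:tri:decompose} to $G$ with $n_1=n_2=n_3=n$ and the given parameter $s$, obtaining $\ell=\OO(s^3)$ subgraphs $G^{(1)},\dots,G^{(\ell)}$ and a remainder set $R$ of $\OO(n^3/s)$ triangles with $\ZeroTriangles(G)=R\cup\bigcup_\lambda \Triangles(G^{(\lambda)})$, the union being disjoint, and with every triangle in each $G^{(\lambda)}$ guaranteed to be zero-weight. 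Since $G$ has $\Omega(n^3/s)$ zero-weight triangles and $|R|=\OO(n^3/s)$, by choosing the hidden constant in $s$ appropriately (or by absorbing the polylog factor into $s$), the triangles \emph{not} in $R$ still number $\OOmega(n^3/s)$. Hence $\sum_\lambda |\Triangles(G^{(\lambda)})| = \OOmega(n^3/s)$, and since there are only $\OO(s^3)$ values of $\lambda$, some $G^{(\lambda)}=:G'$ satisfies $|\Triangles(G')| = \OOmega\big((n^3/s)/s^3\big) = \OOmega(n^3/s^4)$. Because all triangles in $G'$ are zero-weight triangles, $G'$ has the stated property.

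The remaining point is the edge count: I need $G'$ to have $\OOmega(n^2/s^4)$ edges. Here I would use the trivial bound that a tripartite graph on $n+n+n$ nodes with $m$ edges has at most $O(mn)$ triangles (each triangle is determined by one of its $U$–$V$ edges together with an $X$-node, so the number of triangles is at most $n$ times the number of $U$–$V$ edges, which is at most $m$). Applying this to $G'$: if $G'$ has $m'$ edges and $t'=\OOmega(n^3/s^4)$ triangles, then $t' \le O(m' n)$, so $m' = \OOmega(t'/n) = \OOmega(n^2/s^4)$. This gives the edge bound claimed. One small subtlety: I should make sure the polylogarithmic factors hidden in the $\OO$ of Theorem~\ref{thm:tri:decompose} are consistent with the $\OOmega$ in the target statement; since both the "$\ell=\OO(s^3)$" and the "$|R|=\OO(n^3/s)$" carry only $\poly\log n$ overhead, and the conclusion is itself stated with $\OOmega$ notation, this is harmless.

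I would write this up as a short proof: invoke Theorem~\ref{thm:tri:decompose}; note that removing $R$ leaves $\OOmega(n^3/s)$ zero-weight triangles spread over $\OO(s^3)$ subgraphs; pick the densest one by pigeonhole to get $\OOmega(n^3/s^4)$ triangles; observe all of them are zero-weight by the theorem's guarantee; and finally lower-bound the edge count via the $O(mn)$-triangles-per-$m$-edges inequality. The main obstacle is essentially bookkeeping: confirming that the constant/polylog slack in the choice of $s$ (one needs $|R|$ to be at most, say, half the number of zero-weight triangles, which requires the $\Omega(n^3/s)$ lower-bound hypothesis to dominate the $\OO(n^3/s)$ remainder) is genuinely achievable. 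This is fine because the hypothesis is stated with a free $\Omega$ and the conclusion with $\OOmega$, so we have room to rescale $s$ by a constant (or polylog) factor; no new idea is needed beyond that careful matching of asymptotic notation.
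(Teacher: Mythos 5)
Your proposal is correct and matches the paper's proof: the paper likewise applies Theorem~\ref{thm:tri:decompose} with a rescaled parameter $s'=\widetilde{\Theta}(s)$ so that $|R|<n^3/(2s)$, then pigeonholes over the $\OO(s^3)$ subgraphs to find one with $\OOmega(n^3/s^4)$ triangles. Your explicit justification of the edge bound via the trivial $O(mn)$ triangle count is a fine way to supply the detail the paper leaves implicit.
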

\begin{proof}
Apply Theorem~\ref{thm:tri:decompose} with $s$ replaced by $s'$.
The size of $R$ is $\OO(n^3/s')$, which can be made less than $n^3/(2s)$ for some choice of  $s'=\widetilde{\Theta}(s)$.
Then there exists $G^{(\lam)}$
with $|\Triangles(G^{(\lam)})|=\OOmega((n^3/s)\cdot (1/s^3))$.
\hspace*{\fill}
\end{proof}

\end{document}